\newtheoremstyle{definition}
{3pt} 
{3pt} 
{} 
{} 
{\bfseries} 
{.} 
{.5em} 
{} 
\newtheorem{thm}{Theorem}[section]
\newtheorem{theorem}[thm]{Theorem}
\newtheorem{lemma}[thm]{Lemma}
\newtheorem{prop}[thm]{Proposition}
\newtheorem{proposition}[thm]{Proposition}
\newtheorem{cor}[thm]{Corollary}
\newtheorem{defn}[thm]{Definition}
\newtheorem{conjecture}[thm]{Conjecture}
\newtheorem{remark}[thm]{Remark}
\theoremstyle{definition}
\newcommand{\Ai}{\mathrm{Ai}}
\newcommand{\Airy}{\text{Airy}}
\renewcommand{\Re}{\text{Re}}
\renewcommand{\i}{\textbf{i}}
\newcommand{\E}{\mathbb{E}}
\newcommand{\R}{\mathbb{R}}
\newcommand{\eps}{\varepsilon}
\newcommand*{\defeq}{\mathrel{\vcenter{\baselineskip0.5ex \lineskiplimit0pt
                     \hbox{\scriptsize.}\hbox{\scriptsize.}}}%
                     =}
\newcommand{\ur}{
	\begin{tikzpicture}
		[scale=.7,very thick]
		\def\dd{.3}
		\draw (0,0)--++(0,\dd)--++(\dd,0);
\end{tikzpicture}\hspace{.8pt}}
\newcommand{\corner}[1][]{\begin{tikzpicture}[baseline=(current bounding box.center)]
\draw[thick] (0.5,.8)--(.8,.8);
\draw[thick] (1,1)--(1,1.3);
\node at (1,.8) {$\mathrm{O}$} ;
\node at (1,1.5) {$\mathrm{H}$} ;
\node at (.3,.8) {$\mathrm{H}$} ;
\end{tikzpicture} }
\newcommand{\other}[1][]{\begin{tikzpicture}[baseline=(current bounding box.center)]
\draw[thick] (1.5,.8)--(1.2,.8);
\draw[thick] (1,1)--(1,1.3);
\node at (1,.8) {$\mathrm{O}$} ;
\node at (1,1.5) {$\mathrm{H}$} ;
\node at (1.7,.8) {$\mathrm{H}$} ;
\end{tikzpicture} }
\newcommand{\straight}[1][]{
\begin{tikzpicture}[baseline={([yshift={-\ht\strutbox *1.15}]current bounding box.north)}]
\draw[thick] (0.5,1)--(.8,1);
\draw[thick] (1.2,1)--(1.5,1);
\node at (1,1) {$\mathrm{O}$} ;
\node at (1.7,1) {$\mathrm{H}$} ;
\node at (.3,1) {$\mathrm{H}$} ;
\end{tikzpicture} }
\newcommand{\upright}[1][]{\begin{tikzpicture}[baseline=(current bounding box.center)]
\draw[thick] (1.5,.8)--(1.2,.8);
\draw[thick] (1,.6)--(1,.3);
\node at (1,.8) {$\mathrm{O}$} ;
\node at (1,.1) {$\mathrm{H}$} ;
\node at (1.7,.8) {$\mathrm{H}$} ;
\end{tikzpicture} }
\newcommand{\up}[1][]{\begin{tikzpicture}[baseline=(current bounding box.center)]
\draw[thick] (1,1)--(1,1.3);
\draw[thick] (1,.6)--(1,.3);
\node at (1,.8) {$\mathrm{O}$} ;
\node at (1,.1) {$\mathrm{H}$} ;
\node at (1,1.5) {$\mathrm{H}$} ;
\end{tikzpicture} }
\thanks{}
\title{six-vertex Model and Random Matrix Distributions}
\date{\today}
\author{Vadim Gorin}
\address[Vadim Gorin]{University of California, Berkeley}
\email{vadicgor@gmail.com}
\author{Matthew Nicoletti}
\address[Matthew Nicoletti]{Massachusetts Institute of Technology}
\email{mnicolet@mit.edu}
\thanks{We are grateful to Alice Guionnet and Karol Kozlowski, who organized the ENS Lyon school ``The multiple
facets of the six-vertex model'': lecturing there led us to writing this text. We thank Alexei Borodin for helpful discussions. We thank the anonymous referee for their comments. The work of V.G.\ was partially supported by the NSF grant DMS - 2152588.}
\begin{document}
\maketitle

\begin{abstract}
We survey the connections between the six-vertex (square ice) model of 2d statistical mechanics and random matrix theory.  We highlight the same universal probability distributions appearing on both sides, and also indicate related open questions and conjectures. We present full proofs of two asymptotic theorems for the six-vertex model: in the first one  the Gaussian Unitary Ensemble and GUE-corners process appear; the second one leads to the Tracy-Widom distribution $F_2$. While both results are not new, we found shorter transparent proofs for this text. On our way we introduce the key tools in the study of the six-vertex model, including the Yang-Baxter equation and the Izergin-Korepin formula.
\end{abstract}


\section{Introduction}

\sloppy

This text is about connections between two seemingly unrelated subareas of the probability theory and we start by introducing these two areas.

\subsection{Random configurations of square ice}

The six-vertex model is one of the celebrated lattice models of statistical mechanics. Its study was initiated by Pauling in 1935~\cite{pauling1935structure} with the goal of computing numeric characteristics (e.g., the residual entropy) of the real-world ice: the six-vertex model can be seen as a model for ``two-dimensional ice'' because its state space can be defined as allowed configurations of water molecules arranged in a finite piece of the square lattice.

On the infinite lattice~$\mathbb{Z}^2$ the definition of the model begins with the placement of atoms~``$\mathrm{O}$" at the vertices and atoms~``$\mathrm{H}$" at the edges connecting adjacent vertices of the grid. A configuration of the model is a matching of these atoms into molecules of water,~$\mathrm{H}_2 \mathrm{O}$, such that each atom is matched and each~$\mathrm{O}$ is matched to two neighboring~$\mathrm{H}$'s out of four. Note that there are ${4 \choose 2} = 6$ ways to choose two neighbors out of the four, hence, there are six possible local configurations of a molecule at a vertex, leading to the name ``six-vertex model". We assign six weights~$ a_1,a_2,b_1,b_2,c_1,c_2$ to the six allowed local configurations.

\begin{figure}[h]
\centering
\includegraphics[scale=.8]{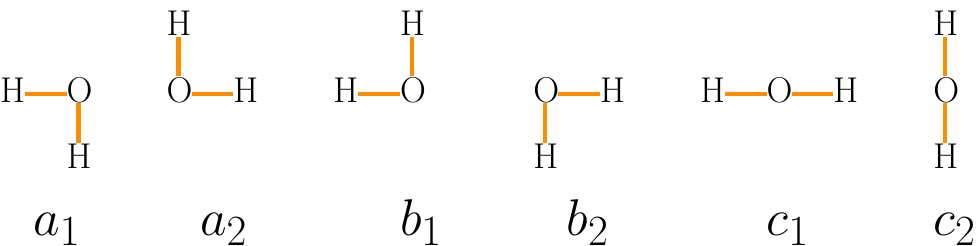}
\caption{The six local configurations and their weights.}
\label{fig:six_vs_weights}
\end{figure}

These six configurations and their weights are shown in Figure~\ref{fig:six_vs_weights}. Focusing on a finite subdomain of the infinite grid, the configuration space consists of choices of local configurations at each lattice site of this subdomain which are consistent with each other. See Figure~\ref{fig:six_v_configs} for an example. There are finitely many configurations in a finite domain and we would like to assign probabilities to them. For a configuration or state $S$, we set
\begin{equation}\label{eqn:bm}
\text{Prob}(S) = \frac{1}{Z} \prod_{(i,j)} \text{weight}(i,j; S),
\end{equation}
where the product goes over all vertices in our domain and the weight is the local weight of the configuration $S$ at site~$(i, j)$, so in particular ``weight" is one of the six numbers $ a_1,a_2,b_1,b_2,c_1,c_2$. The normalization constant $Z$ in \eqref{eqn:bm} is called the \emph{partition function}; it is defined in such a way as to guarantee that the probabilities sum up to $1$.

There are thousands of papers in mathematical and
theoretical physics literature devoted to
the study of the six-vertex model. The
early influential results concerned the
computation of the partition function $Z$
for the model on the torus going back to
the seminal work of Lieb in 1967~\cite{Lieb1967SixVertex}; see also
the extensive survey of Lieb and Wu~\cite{ferroelectric_models}.
 Many other developments in the study
 of the six-vertex model are covered
 in the classical 1982 textbook by Baxter~\cite{baxter2007exactly}, and we
 also refer to \cite{reshetikhin2010lectures} for a more recent review. Remarkably, in the last ten years the (planar) square ice was claimed to be observed experimentally, see the Nature article~\cite{algara2015square} and discussion in \cite{zhou2015observation}.

\medskip

The central mathematical question we are concerned with is: If a domain is very large, how does the random configuration typically look like?

\begin{figure}[t]
\centering
\includegraphics[width=0.42\linewidth]{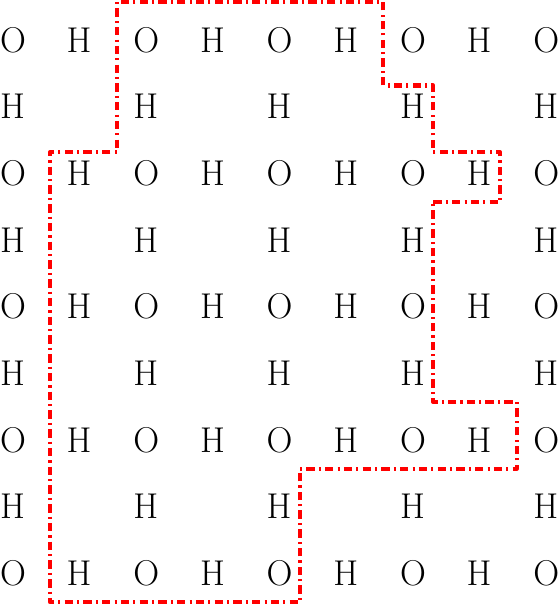}\hfill
\includegraphics[width=0.42\linewidth]{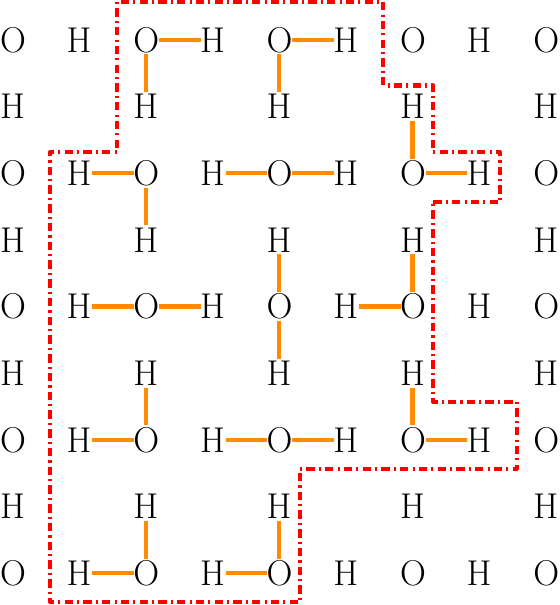}
\caption{A domain for the six-vertex model, and a consistent choice of local configurations, i.e. a state of the six-vertex model in this domain.}
\label{fig:six_v_configs}
\end{figure}

\subsection{Random matrices}

Let us consider an $N \times N$ self-adjoint random matrix
 $$A =
\begin{pmatrix}
a_{1 1} & \cdots & a_{1 N} \\
\vdots & & \vdots \\
a_{N 1} & \cdots & a_{N N}
\end{pmatrix},
$$
meaning that we specify (in arbitrary way) the distribution of matrix elements, subject to the constraint of being self-adjoint. The matrix $A$ has $N$ random (real) eigenvalues $\lambda_1\leq \lambda_2\leq \dots\leq \lambda_N$. The study of the distribution of these eigenvalues and their asymptotic properties as $N\to\infty$ is one of the central mathematical topics of the random matrix theory. This is a huge area and there are many surveys and textbooks on the subject, including~\cite{MehtaRMT, ZSRMTbook, PSRMTbook, AndersonGuionnetZeitouniBook, Forrester-LogGas}.

One of the classical choices for~$a_{i j}$ is to set~$A = \frac{1}{2}\left(X + X^*\right)$, where~$X$ is a random matrix whose entries are i.i.d.\ complex Gaussian random variables~$\mathcal{N}(0,1) + \i \mathcal{N}(0, 1)$, with independent real and imaginary parts. In particular, in this case~$a_{i j}$ are all independent, except for the constraint~$a_{i j} = \overline{a_{j i}}$. This ensemble is called the \emph{Gaussian Unitary Ensemble} (GUE), and it has underlying structure which allows for an exact calculation of its eigenvalue distribution.

The algebraic structure underlying the model has allowed for a very precise analysis of many properties of this eigenvalue distribution in the limit as~$N \rightarrow \infty$. One asymptotic result is the distributional convergence of the largest eigenvalue~$\lambda_N$:
\begin{equation}\label{eqn:TWF2}
 \frac{\lambda_N - 2 \sqrt{N}}{N^{-\frac{1}{6}}} \stackrel{d}{\rightarrow} \xi,
\end{equation}
where~$\xi$ is distributed according to the~\emph{Tracy-Widom GUE distribution}, the distribution function of which we will denote by~$F_2$. The function~$F_2$ can be represented as a Fredholm determinant of a certain operator with kernel known as the~\emph{Airy kernel}, or in terms of solutions to a certain Painlev\'e differential equation, as was shown by Tracy and Widom~\cite{tracy1994level}. For a precise definition of~$F_2$, see the Appendix~\ref{app:pt_proc}.

Another important distribution often arising in random matrix theory is the \emph{Tracy-Widom GOE distribution}~$F_1$. This is the limiting distribution function of the appropriately rescaled largest eigenvalue of a matrix sampled from the~\emph{Gaussian Orthogonal ensemble}, and it also has expressions in terms of Fredholm determinants or a Painlev\'e transcendant~\cite{tracy1996orthogonal}. In this case, the random matrix is real symmetric: We let~$A = \frac{1}{2}\left( Y + Y^* \right)$, where~$Y$ has independent~$\mathcal{N}(0,2)$ entries. Using this matrix, the distribution function~$F_1$ is defined by the same limit transition~\eqref{eqn:TWF2}.

The phenomenon known as \emph{random matrix universality} predicts (and in many cases there are rigorous proofs) that the $F_2$ and $F_1$ distributions govern asymptotics of largest eigenvalues for many other ensembles of random complex Hermitian and real symmetric matrices, respectively.

\bigskip

The central message of this article is that these two topics, 2d lattice models in statistical mechanics on one hand and random matrix theory on the other hand, are very related.

\subsection{Examples: Domain wall boundary conditions}
\label{subsec:ex}

\begin{figure}[t]
\begin{center}
\includegraphics[width=0.5\linewidth]{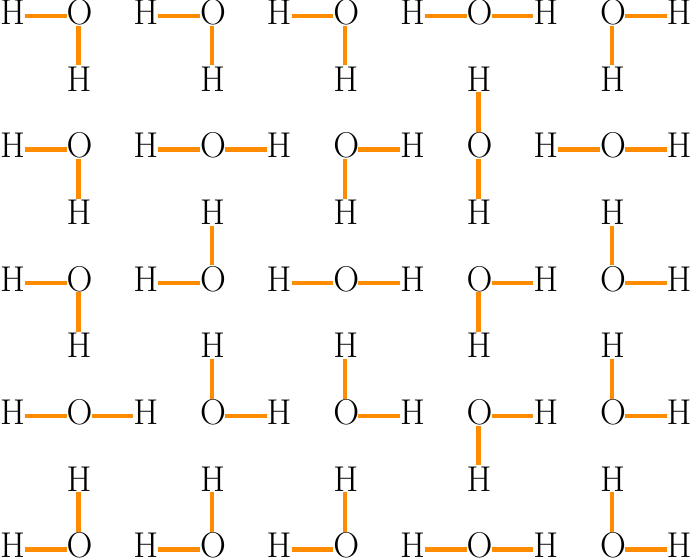}
\caption{Domain wall boundary conditions for the six-vertex model, with~\mbox{$N=5$}.\label{fig:six_v_dwbc}}
\end{center}
\end{figure}

The asymptotic behavior of physical
and probabilistic quantities in
the six-vertex model are very sensitive to boundary
conditions.
One natural choice of boundary conditions (perhaps, the simplest one) is~\emph{domain wall boundary conditions}, which corresponds to choosing an~$N \times N$ square in the lattice; see Figure~\ref{fig:six_v_dwbc}. There are two possible orientations of the square and we choose the one, in which the left and right boundaries of the square consist of $\mathrm{H}$ atoms and the top and bottom boundaries consist of alternating $\mathrm{H}$ and $\mathrm{O}$ atoms. In particular, there are $N^2$ atoms $\mathrm{O}$ and $2 N^2$ atoms $\mathrm{H}$ inside the square.

We leave the proof of the following combinatorial lemma as an exercise for the reader.
\begin{lemma}
\label{Ex_1} For any configuration of the six-vertex model with domain wall boundary conditions, on level $k$ from the bottom, the number~$m$ of horizontal molecules \straight satisfies $1 \leq m \leq k$.
\end{lemma}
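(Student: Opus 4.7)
My plan is to rephrase the model in the standard arrow/path picture. Orient every edge by drawing its arrow to point toward the O that claims the H on that edge; the ice rule becomes the condition that at every vertex exactly two arrows point in and two point out. Declare a horizontal edge to carry a \emph{path} iff its arrow points right, and a vertical edge to carry a path iff its arrow points up; then the ice rule becomes path conservation (paths in $=$ paths out) at every vertex. A direct case check identifies the horizontal molecule \straight with the unique vertex type having a path on the $L$ and $T$ edges and no path on the $R$ or $B$ edges, i.e., a single path entering from the left and turning upward. In this language DWBC reads: every left boundary edge carries a path, no right boundary edge does, every top boundary edge does, no bottom boundary edge does; so $N$ paths enter from the left and exit through the top.

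Let $p_i(k)$ (respectively $q_i(k)$) be the $\{0,1\}$-indicator that the horizontal edge between columns $i$ and $i{+}1$ in row $k$ (respectively the vertical edge between rows $k$ and $k{+}1$ in column $i$) carries a path, with boundary values $p_0(k) = 1$, $p_N(k) = 0$, $q_i(0) = 0$, $q_i(N) = 1$. Path conservation at the vertex $(i,k)$ reads $p_{i-1}(k) + q_i(k-1) = p_i(k) + q_i(k)$; summing over $i = 1, \dots, N$, the $p$'s telescope and leave $V(k) := \sum_{i=1}^N q_i(k) = V(k-1) + 1$, hence $V(k) = k$.

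The upper bound is immediate: every \straight in row $k$ contributes a distinct path on its top edge, so $m \le V(k) = k$. For the lower bound, the binary sequence $p_0(k), p_1(k), \dots, p_N(k)$ starts at $1$ and ends at $0$, so it must contain at least one descent $p_{j-1}(k) = 1$, $p_j(k) = 0$. At the vertex $(j,k)$ path conservation then forces $q_j(k) - q_j(k-1) = 1$, so $q_j(k-1) = 0$ and $q_j(k) = 1$; reading off the four incidences shows this vertex is precisely of type \straight, giving $m \ge 1$. The one delicate step is getting the DWBC arrow orientations right in the path translation; once that is done the proof reduces to one telescoping identity and one ``descent must exist'' observation.
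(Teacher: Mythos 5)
Your proof is correct and complete. Note that the paper does not actually prove Lemma \ref{Ex_1}: it is left as an exercise, with only the $k=1$ case sketched (the ``change of corner type must happen somewhere'' observation). Your descent argument $p_{j-1}(k)=1$, $p_j(k)=0$ is exactly the path-language version of that hint, and the telescoping flux identity $V(k)=k$ is the natural ingredient that supplies the upper bound, so your write-up can be viewed as the intended solution. One caveat worth flagging: the dictionary you construct (``arrow toward the claiming $\mathrm{O}$'', so that \straight\ becomes the vertex with a path on the left and top edges, and DWBC becomes ``enter on the left, exit on the top'') differs by a lattice symmetry from the conventions the paper fixes later, where \straight\ is identified with the $c_1$ vertex \ur\ and DWBC means paths enter along the bottom and exit on the right (Figure \ref{fig:corresp}, Section \ref{sec:dwbc}). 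Since you derive your dictionary consistently from the molecule definitions and translate both the boundary conditions and the vertex type within it, the argument is valid as stated; but a reader cross-checking against Section \ref{Section_IK} should be aware of the mismatch. Under the paper's conventions the lower bound is the same flux count ($\#c_1-\#c_2=1$ across row $k$), while the upper bound comes out slightly differently: a $c_2$ vertex in row $k$ requires a column with no path entering from below, and there are only $k-1$ such columns, so $\#c_1=\#c_2+1\le k$. Either way the conclusion $1\le m\le k$ is convention-independent, as it must be.
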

As an example, for~$k=1$ there is exactly one horizontal molecule of that form. Indeed, near the bottom left along the boundary, the molecule is one type of corner~\corner, and near the bottom right it is the other type,~\other. The change between the types must happen somewhere, and this gives the position of the unique molecule~\straight in the first line.

\begin{figure}
\centering
\includegraphics[scale=1.2]{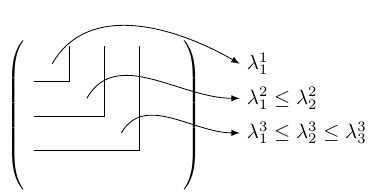}
\caption{An illustration of the eigenvalues~$\{\lambda_i^k\}_{1 \leq i \leq k \leq K}$ of the GUE corners process, with~$K=3$.}
\label{fig:corners}
\end{figure}

We now state several theorems describing the asymptotics of a random configuration of the six-vertex model with  domain wall boundary conditions. In order to do so, we must first define the~\emph{GUE corners process}. Define $X$ as an $N \times N$ matrix of independent $\mathcal{N}(0,1) + \i \mathcal{N}(0,1)$ elements. Then let $M = \frac{1}{2}(X + X^*)$ be a GUE random matrix. Define $\lambda_{i}^k$ to be the~$i^{th}$ eigenvalue of top left $k \times k$ corner of $M$; see Figure~\ref{fig:corners} for an illustration. The joint distribution~$\{\lambda_{i}^k\}_{1 \leq i \leq k \leq K}$ is a marginal of the GUE corners process. The size~$N$ of this matrix does not have to be large; it only needs to be larger than $K$. For varying~$K$, these distributions are consistent, so these finite dimensional marginals define a random collection of random variables $\{\lambda_{i}^k\}_{i \leq k}$, which is called the GUE corners process.

\begin{remark} It is a nice exercise in linear algebra for the reader to show that in the setting of the previous paragraph,~$\lambda_i^{k+1} \leq \lambda_i^k \leq \lambda_{i+1}^{k+1}$ almost surely. Hint: Use the variational characterization of the eigenvalues of a Hermitian matrix, also known as the Courant-Fischer-Weyl min-max principle.
\end{remark}

\begin{theorem}[Johansson-Nordenstam~\cite{johansson2006eigenvalues}] \label{thm:JNeig}
Suppose $a_1 = a_2 = b_1 = b_2 =1$, $c_1 c_2 = 2$. As $N \rightarrow \infty$, there are exactly $k$ horizontal molecules~\straight in row $k$ with probability tending to $1$. Let $x^k_1 < x^k_2 < \cdots < x^k_k$ be their coordinates. Then in the sense of convergence of finite-dimensional distributions, we have
$$\lim_{N \rightarrow \infty} \left\{ \frac{x_i^k - \gamma_1 N}{\gamma_2 \sqrt{N}} \right\}_{1 \leq i \leq k} \stackrel{d}{=} \{ \lambda_{i}^k \}_{1 \leq i \leq k},$$
with $\gamma_1=\gamma_2=\frac{1}{2}$ and $\{ \lambda_{i}^k \} \stackrel{d}{=} \text{GUE corners process}$.
\end{theorem}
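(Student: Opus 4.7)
\emph{Proof proposal.} The condition $c_1c_2 = 2 = a_1 a_2 + b_1 b_2$ places the model at the \emph{free-fermion point}, where the six-vertex measure admits an exact determinantal description. My plan is to (i) reduce the joint law of the horizontal-molecule coordinates $\{x^k_i\}$ to a multilevel determinantal point process with an explicit kernel, (ii) recognize the kernel as a Krawtchouk kernel with parameter $p=\tfrac12$, and (iii) extract the GUE corners limit by steepest descent.

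For step (i), I would combine the Izergin--Korepin formula with the Yang--Baxter equation to attach spectral parameters to each row and column of the $N\times N$ domain; differentiating the partition function in these parameters produces the joint correlations of the horizontal molecules $\straight$ in closed form. Equivalently, at the free-fermion point there is a standard bijection between six-vertex configurations with DWBC and non-intersecting lattice paths (equivalently, domino tilings of the Aztec diamond), under which the $\straight$ molecules on row $k$ become the positions of $k$ non-colliding particles on the $k$-th time slice. The Lindström--Gessel--Viennot and Eynard--Mehta theorems then deliver a determinantal correlation kernel $K_N(x,k;y,\ell)$ for $\{x^k_i\}_{1\le i\le k\le K}$.

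For step (ii), the homogeneous free-fermion weights correspond to independent Bernoulli$(\tfrac12)$ random walks, so the Eynard--Mehta kernel is the extended Krawtchouk kernel at $p=\tfrac12$, with a double-contour integral representation. For step (iii), I would carry out steepest descent on this contour integral under the centering $x = \tfrac12 N + \tfrac12 \sqrt{N}\,u$: the two critical points of the action coalesce precisely at the tangency point $\gamma_1=\tfrac12$ of the arctic ellipse with the bottom boundary at rate $\sqrt{N}$, and a local rescaling of the contours around the merging saddle produces the extended Hermite kernel, which is the correlation kernel of the GUE corners process. Convergence of the kernel uniformly on compact sets in $(u,k)$ upgrades to convergence of the finite-dimensional distributions of the particle configuration.

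The main obstacle is step (iii): one must justify the contour deformation through the coalescing saddles uniformly in the level index $k\le K$, control tails so that pointwise convergence of the kernel implies distributional convergence, and verify that the limit is the \emph{multilevel} extended Hermite kernel (governing the full GUE corners process) rather than just its top-row specialization. A technically lighter route, which is likely the ``shorter transparent'' proof alluded to by the authors, is to bypass the Krawtchouk identification altogether and match joint moments of linear statistics of $\{x^k_i\}$ directly against those of the GUE corners process, producing both sides from the contour integrals afforded by the Izergin--Korepin formula.
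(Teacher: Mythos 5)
Your route is essentially the original Johansson--Nordenstam argument: map to non-intersecting paths/Aztec tilings, use Eynard--Mehta to obtain an extended Krawtchouk kernel, and do steepest descent to extract the extended Hermite kernel. That would work, but it is genuinely different from the proof in this paper, and it is heavier precisely where you flag the obstacle. The paper avoids kernels and saddle-point analysis altogether. Starting from the Izergin--Korepin determinant, specialized at $t=-1$ via the Cauchy determinant to a closed-form product (Lemma \ref{Lem:Cauchy_det}, Eq.\ \eqref{eqn:ffdet}), it computes the joint (two-sided) Laplace transform of the row-wise $b$-vertex counts $\eta_1,\dots,\eta_k$ by taking ratios of partition functions with perturbed spectral parameters $y_j=e^{\i z_j/\sqrt N}$; Taylor expansion shows $(\eta_1,\dots,\eta_k)$ converge to i.i.d.\ Gaussians (Lemmas \ref{lem:ind_gauss}--\ref{lem:eta_conv}). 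It then identifies $\eta_j$ with the successive trace differences of the Gelfand--Tsetlin pattern formed by the empty-edge positions $\Xi^j_i$ (Lemma \ref{lem:diag_conv}), and observes that the six-vertex Gibbs property forces conditional uniformity of the pattern below a fixed top row in the limit (Lemma \ref{lem:tightness}). The final step is purely algebraic: a Gelfand--Tsetlin pattern satisfying the uniform Gibbs property whose diagonal increments are i.i.d.\ standard Gaussians must be the GUE corners process, shown by reconstructing a unitarily-invariant Hermitian matrix from the top row and matching Fourier transforms. In short: you propose to establish convergence of the full correlation kernel; the paper instead computes only the $k$ scalar "diagonal entry" observables (easy from the IK formula) and lets the soft Gibbs/interlacing structure do the rest. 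Your approach yields stronger local information (the explicit extended kernel), but the paper's is shorter and transparently avoids the uniform-in-$k$ contour manipulations and tail control you correctly identify as the main technical burden; your closing suggestion of "matching joint moments of linear statistics via IK" is close in spirit, but the actual observable used is the row $b$-vertex count, not linear statistics of the $x^k_i$, and the closure argument is via Gibbs conditional uniformity rather than moment matching.
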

The choice of weights in the theorem above is an instance of the six-vertex model at the~\emph{free fermionic point}, which means that~$\Delta \defeq \frac{a_1 a_2 + b_1 b_2 - c_1 c_2}{2 \sqrt{a_1 a_2 b_1 b_2}}$, an important parameter of the weights, satisfies~$\Delta = 0$ (see Baxter's book~\cite[Chapter 8]{baxter2007exactly} for a discussion of the parameter~$\Delta$). At the free fermionic point, the model has the structure of a~\emph{determinantal point process}; see Appendix~\ref{app:pt_proc} for a brief definition.

Rigorous results about the model away from the free fermion point are significantly more difficult to obtain. The next theorem is one of the few examples of such a result.\footnote{For a joint extension of Theorems \ref{thm:JNeig} and \ref{thm:GP}, see upcoming \cite{Gorin_Liechty_2023}.}
\begin{theorem}[Gorin~\cite{gorin2014alternating}, Gorin-Panova~\cite{GorinPanova2012_full}] \label{thm:GP}
The conclusion of Theorem~\ref{thm:JNeig} remains true for $a_1 = a_2 = b_1 = b_2 = c_1 = c_2 = 1$ (the uniform measure). As $N \rightarrow \infty$, there are exactly $k$ horizontal molecules~$\straight$ in row $k$,~$x_1^k < \cdots < x_k^k$, with probability tending to $1$. Furthermore,
$$\lim_{N \rightarrow \infty}  \left \{ \frac{x_i^k - \gamma_1 N}{\gamma_2 \sqrt{N}}  \right \}_{1 \leq i \leq k} \stackrel{d}{=} \{ \lambda_{i}^k \}_{1 \leq i \leq k},$$
with $\gamma_1= \frac{1}{2}$, $\gamma_2=\sqrt{\frac{3}{8}}$ and $\{ \lambda_{i}^k \} \stackrel{d}{=} \text{GUE corners process}$.
\end{theorem}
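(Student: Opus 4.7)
My plan is to reduce the theorem to an explicit asymptotic analysis of a closed-form density for the horizontal molecules on row $k$ (and jointly on several consecutive rows). Under the standard bijection between uniform six-vertex configurations on an $N\times N$ DWBC square and alternating sign matrices, the horizontal molecules \straight on row $k$ correspond exactly to the $k$-th row of the associated monotone triangle, so the positions $x_1^k<\cdots<x_k^k$ satisfy the deterministic interlacing $x_i^{k+1}\leq x_i^k\leq x_{i+1}^{k+1}$, mirroring the GUE corners structure from the remark preceding the theorem.

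First I would apply the Izergin-Korepin determinant formula, inserting extra rows of vertex operators via the Yang-Baxter equation to localize the computation on rows $1,\dots,K$, to produce an explicit formula for $\text{Prob}(x_i^k=s_i^k;\, 1\leq i\leq k\leq K)$ at the combinatorial point $a=b=c=1$ (i.e.\ $\Delta=\tfrac12$). For a single row the expected shape is a Vandermonde factor $\prod_{i<j}(s_j^k-s_i^k)$ times simple binomial-type factors in $s_i^k$ and $N-s_i^k$, normalized by a ratio of ASM counts; joint formulas on several rows can be assembled either from a multi-row extension of Izergin-Korepin or from the Schur-type symmetric-function expansions that happen to be available at $\Delta=\tfrac12$. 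Next I would specialize $s_i^k=\tfrac{N}{2}+\sqrt{3N/8}\,y_i^k$ and apply Stirling's formula uniformly on compact subsets of the $y_i^k$. The binomial factors should produce a Gaussian weight $\exp\!\bigl(-\sum_i (y_i^k)^2\bigr)$, pinning down $\gamma_1=\tfrac12$ and $\gamma_2=\sqrt{3/8}$ through the location and curvature of the relevant saddle point, while the Vandermonde contributes $\prod_{i<j}(y_j^k-y_i^k)$ up to an $N$-power absorbed by the normalization. The resulting limit of the single-row density is $\propto\prod_{i<j}(y_j^k-y_i^k)^2 e^{-\sum_i (y_i^k)^2}$, which is exactly the eigenvalue density of the $k\times k$ GUE corner; the joint formula analogously assembles into the full GUE corners density.

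For the full corners process, joint convergence follows either directly from the multi-row density computation above, or from single-row convergence combined with the deterministic interlacing, which forces any subsequential limit to be supported on interlacing arrays with the prescribed GUE row-marginals, hence equal to the GUE corners process. A tightness argument using the deterministic containment $x_i^k\in[1,N]$ together with the Gaussian tails upgrades pointwise density convergence to convergence in distribution. The main obstacle is the first step: away from the free fermion point of Theorem~\ref{thm:JNeig} there is no determinantal shortcut, so the explicit density must be produced by the Izergin-Korepin/Yang-Baxter technology introduced in this paper, and its denominator (the celebrated ASM count $\prod_{j=0}^{N-1}(3j+1)!/(N+j)!$) must be expanded by Stirling in tandem with the numerator so that the matching Gaussian and Vandermonde factors come out cleanly without spurious powers of $N$.
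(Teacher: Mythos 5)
There is a genuine gap at the very first and central step: the explicit finite-$N$ joint density you assume for the positions $x_1^k<\cdots<x_k^k$ at the uniform point (``Vandermonde factor times simple binomial-type factors, normalized by a ratio of ASM counts'') does not exist, and it cannot be produced by the Izergin--Korepin/Yang--Baxter technology in the way you suggest. Product-form (orthogonal-polynomial-ensemble) densities for the row positions are a special feature of the free-fermion point $\Delta=0$, which is precisely how Johansson--Nordenstam handled Theorem~\ref{thm:JNeig}; at $a=b=c=1$ one has $\Delta=\tfrac12$, the model is not determinantal, and the law of row $k\geq 2$ of the associated monotone triangle involves counting extensions of monotone triangles, which is not given by any product formula (nor by a Vandermonde-times-binomial expression that could Stirling-expand into $\prod_{i<j}(y_j-y_i)^2e^{-\sum_i y_i^2}$). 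The Izergin--Korepin determinant gives the (inhomogeneous) partition function, i.e.\ a single generating quantity; what one can extract from it, by perturbing a few spectral parameters, is the Laplace transform of linear statistics such as the number of $b$-vertices in the first rows --- not the joint positional density. This is exactly the route of the actual proofs in \cite{gorin2014alternating, GorinPanova2012_full} (and of the paper's own proof of Theorem~\ref{thm:GUE_fluct} at $\Delta=0$): prove that the per-row counts of $b$-vertices are asymptotically i.i.d.\ Gaussian, which at the uniform point requires the asymptotics of the Izergin--Korepin determinant away from the Cauchy-reducible case (handled in the cited works via its identification with a Schur polynomial at the combinatorial point and asymptotics of normalized Schur functions), and then upgrade to the GUE corners process.

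Your fallback argument for the multi-row statement is also not valid as stated: knowing that each row marginal converges to the $k\times k$ GUE eigenvalue law and that the rows interlace deterministically does \emph{not} force a subsequential limit to be the GUE corners process --- there are many interlacing couplings with those marginals. The missing ingredient is the conditional uniformity (uniform Gibbs property) of the array given its top row, inherited asymptotically from the six-vertex Gibbs property, together with tightness; combined with the Gaussianity of the diagonal-type sums, this characterizes the limit as GUE corners via the unitarily invariant matrix construction (as in Section~\ref{Section_GUE} and Lemma~\ref{lem:tightness}). So the workable skeleton is the Laplace-transform-plus-Gibbs route, not a pointwise limit of an explicit density; the density simply is not available at the uniform point.
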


\begin{figure}[t]
\begin{center}
\includegraphics[width=0.5\linewidth]{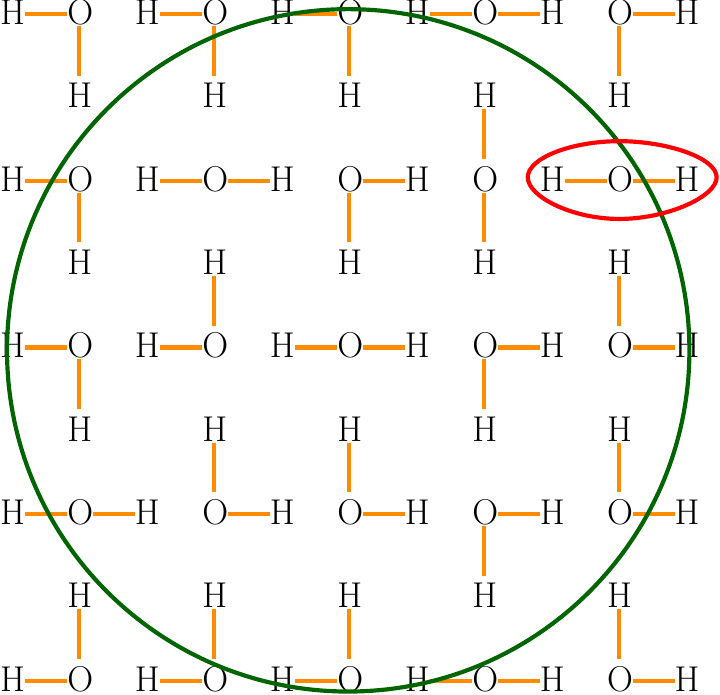}
\caption{The setting of Theorem~\ref{thm:circle}.\label{fig:circle}}
\end{center}
\end{figure}

Other related theorems about asymptotic behavior of random six-vertex configurations with domain wall boundary conditions include the following:

\begin{theorem}[Johansson~\cite{Johansson2005arctic}]\label{thm:circle}
Suppose~$a_1 = a_2 = \cdots = b_2 =1$, $c_1 c_2 = 2$.
\begin{enumerate}
\item As $ N  \rightarrow \infty$, positions of~\straight and~\up vertices stay inside the circle inscribed in the square (cf.\ Figure \ref{fig:circle}) with overwhelming probability\footnote{In more details, for each $\epsilon>0$, the probability of the event ``all vertices of these two types are inside the $(\epsilon N)$--neighborhood of the inscribed circle'' tends to $1$ as $N\to\infty$.}.
\item Let~$\frac{1}{2} < \alpha < 1$, and~$k = \lfloor  \alpha N  \rfloor$. If~$x_{\text{max}}^k$ denotes the horizontal coordinate of the rightmost vertex on the line~$y = k$ which is not of type~$b_2$,~\upright, then we have
\begin{align*}
\lim_{N\to\infty}\frac{x_{\text{max}}^k - \gamma_3(\alpha) N}{\gamma_4(\alpha) N^{\frac{1}{3}}} \stackrel{d}{=} \xi_{\text{GUE}},
\end{align*}
\end{enumerate}
where~$\xi_{\text{GUE}}$ is distributed according to~$F_2$, the Tracy-Widom GUE distribution. In the expression above,~
$\gamma_3(\alpha) =  \frac{1}{2}  +  \sqrt{\alpha - \alpha^2}$ and~
$\gamma_4(\alpha) = \frac{1}{2}\left(  \frac{(1-2 \alpha)^2}{\sqrt{(1-\alpha) \alpha}} \right)^{\frac{1}{3}}$.

\end{theorem}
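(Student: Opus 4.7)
The plan is to leverage the free fermion condition $\Delta = 0$ at the chosen weights, which endows the model with determinantal structure, and then to carry out saddle-point analysis on an explicit correlation kernel. First I would convert six-vertex configurations with DWBC into collections of non-intersecting lattice paths by treating the $c$-vertices as turns and the $a,b$-vertices as straight segments; this is the standard bijection underlying the equivalence, at this point, of the six-vertex model with tilings of the Aztec diamond. Either directly via Lindstr\"om--Gessel--Viennot or as a free-fermion specialization of the Izergin--Korepin formula, the partition function reduces to a single $N \times N$ determinant, and the locations of the $c$-type vertices form a determinantal point process on $\{1, \dots, N\}^2$. Its correlation kernel $K_N$ admits a double contour integral representation with Krawtchouk-type weights, which is the natural input for asymptotics.

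For part (1), I would analyze the diagonal $K_N(x,y;x,y)$ with $(x,y) = (\lfloor \xi N \rfloor, \lfloor \eta N \rfloor)$ by steepest descent. The saddle-point equation is quadratic in the integration variable, and its discriminant changes sign precisely as $(\xi, \eta)$ crosses the inscribed circle $(\xi - \tfrac{1}{2})^2 + (\eta - \tfrac{1}{2})^2 = \tfrac{1}{4}$. This yields a bulk density equal to $0$ in the four frozen corners and $1$ in the inscribed disk (with the assignment matching the vertex type being tracked), and the determinantal variance bound $\operatorname{Var}(\#\text{points in } A) \leq \mathbb{E}(\#\text{points in } A)$ upgrades this density statement to the high-probability confinement in part (1).

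For part (2), fix $\alpha \in (\tfrac{1}{2}, 1)$; the point $(\gamma_3(\alpha), \alpha)$ lies exactly on the circle, so the two saddles coalesce. A sharper steepest-descent analysis in the Airy scaling $x = \gamma_3(\alpha) N + \gamma_4(\alpha) N^{1/3} s$ should yield
\[
\gamma_4(\alpha) N^{1/3}\, K_N(x, k; x', k) \longrightarrow K_{\Airy}(s, s')
\]
uniformly on compact sets, with the specific constants $\gamma_3(\alpha), \gamma_4(\alpha)$ emerging from matching the cubic Taylor expansion of the phase at the double saddle. Trace-class and Hadamard estimates then lift the pointwise convergence to convergence of Fredholm determinants, and the gap-probability identity $\Pr(x_{\max}^k \leq T) = \det(I - K_N)_{\ell^2((T,\infty))}$ delivers the Tracy--Widom limit $F_2(s)$.

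The main obstacle is this last step. One has to choose contours through the coalescing saddle along the correct steepest-descent directions, verify that the cubic expansion of the phase reproduces exactly the stated constants $\gamma_3(\alpha)$ and $\gamma_4(\alpha)$, and obtain decay estimates uniform in the Airy window that are strong enough to control the Fredholm determinant on the infinite interval $(T, \infty)$. Identifying ``$b_2$-absence'' with ``particle presence'' under the chosen bijection, so that the rightmost non-$b_2$ vertex becomes the rightmost particle of the determinantal process, requires some care but is a combinatorial rather than analytic issue.
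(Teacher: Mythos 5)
The paper does not actually prove Theorem~\ref{thm:circle}: right after the statement the authors note that Johansson's theorem is established for domino tilings of the Aztec diamond, and they simply invoke the local, weight-preserving correspondence between Aztec diamond tilings and the $\Delta = 0$ six-vertex model with DWBC (referring to \cite{ferrari2006domino}) to transfer the conclusion. What you propose, by contrast, is to carry out the analytic argument directly --- and the outline you give is essentially the proof in the cited Johansson paper, transposed to the six-vertex language: non-intersecting paths via the $c$-vertex bijection, a determinantal process with a double-contour Krawtchouk-type kernel, saddle-point analysis giving the inscribed circle where the two saddles merge, and Airy scaling at the merged saddle with a Fredholm-determinant conclusion. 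The main ideas are correct and complete at the level of a sketch; the survey's route via the Aztec diamond buys only that the heavy steepest-descent work has already been done in the reference, while your direct route has the pedagogical benefit of staying in the six-vertex setting throughout but is not substantively different.

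One place where the sketch is too optimistic: the inequality $\operatorname{Var}(N_A) \le \mathbb{E}[N_A]$ alone does not yield the uniform confinement asserted in part~(1), which is a simultaneous statement over $\sim N^2$ sites. What one actually uses is that in the frozen region (outside the $\epsilon N$-neighborhood of the circle) the saddle-point analysis shows the correlation kernel, and hence the one-point density, decays exponentially in $N$; then $\mathbb{E}[\#\text{particles outside}]$ is exponentially small and Markov's inequality suffices. Alternatively, (1) can be obtained from the $N^{1/3}$ edge fluctuation result of part~(2) plus a union bound over the $O(N)$ rows, using the fast decay of the Tracy--Widom tails. Either way you need the quantitative decay coming out of the steepest-descent analysis rather than the crude variance bound. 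The remaining items you flag --- contour choice through the coalescing saddle, matching the cubic phase to produce $\gamma_3,\gamma_4$, and trace-class estimates for the Fredholm determinant on $(T,\infty)$ --- are the standard technical checklist, and you have identified them correctly.
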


The previous theorem implies that for large~$N$, outside of the inscribed circle and in a neighborhood of the top right corner of the square, one only sees type~$b_2$ molecules with high probability. By various symmetries of the model, in the bottom right, bottom left, and top left corners, the molecules of types~$a_2$,~$b_1$,~$a_1$, respectively, are dominant. Furthermore, in these other three corners, the border of the corresponding cluster of molecules along a fixed horizontal slice will have a similar asymptotic behavior, but with different constants~$\gamma_3$ and~$\gamma_4$.

\begin{theorem}[Johansson~\cite{Johansson2005arctic}]\label{thm:Ltriangle}
Suppose $a_1 = \cdots = b_2 = 1$, $c_1 c_2 = 2$. Cut off a $L \times L$ triangle from the top right of the square, as in Figure \ref{fig:Ltriangle}. Define $\xi_N$ as the largest $L$ such that there are only type~$b_2$ molecules,~\upright, inside the triangle. Then there are explicit constants~$\gamma_5, \gamma_6$, such that~$\xi_N$ satisfies
$$\lim_{N\to\infty}\frac{\xi_N  - \gamma_5 N}{\gamma_6 N^{\frac{1}{3}}} \stackrel{d}{=} \xi_{\text{GOE}}, $$
where~$\xi_{\text{GOE}}$ is distributed according to~$F_1$, the Tracy-Widom GOE distribution.
\end{theorem}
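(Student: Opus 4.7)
The strategy is to exploit the determinantal point process (DPP) structure at the free-fermionic point $\Delta=0$. At the prescribed weights ($a_i=b_i=1$, $c_1c_2=2$), the positions of non-$b_2$ vertices form a DPP whose correlation kernel $K_N$ admits an explicit double-contour integral expression via the Izergin-Korepin formula or, equivalently, via the classical bijection with uniformly random domino tilings of the Aztec diamond of order $N$. The event $\{\xi_N\ge L\}$ is then the gap event
\begin{equation*}
P(\xi_N\ge L)=\det\bigl(I-K_N\,\mathbf{1}_{T_L}\bigr),
\end{equation*}
where $T_L$ is the top-right triangle of legs $L$, cut off by the hypotenuse $x+y=2N-L$; equivalently, up to lower-order terms, $\xi_N=2N-\max\{x+y:(x,y)\text{ is non-}b_2\}$.

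A short geometric calculation gives $\gamma_5=1-1/\sqrt{2}$: this is the unique value for which the hypotenuse $x+y=2N-\gamma_5 N$ is tangent to the arctic circle of Theorem~\ref{thm:circle}, with tangency point $(N/2+N/(2\sqrt{2}),N/2+N/(2\sqrt{2}))$ on the diagonal $y=x$. The plan is to perform steepest-descent asymptotics on the double-contour representation of $K_N$ under the scaling $L=\gamma_5 N+\gamma_6 N^{1/3}s$, reading off $\gamma_6$ from the curvature of the arctic circle at the tangency. The output is a scaling limit of $K_N$ whose Fredholm determinant gives the limiting distribution of the rescaled $\xi_N$.

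The main obstacle is identifying this scaling limit as $F_1$ rather than $F_2$. At a generic, transverse intersection of a straight line with the arctic boundary the saddle-point analysis produces the Airy$_2$ kernel and hence $F_2$---this is the mechanism behind Theorem~\ref{thm:circle}(2). In the setting of Theorem~\ref{thm:Ltriangle}, however, the hypotenuse of $T_L$ is tangent---not transverse---to the arctic curve, causing the saddle points in the steepest-descent analysis to coalesce into a degenerate configuration that produces a Pfaffian (rather than purely determinantal) Airy-type kernel in the limit. Via the 6V-Aztec diamond correspondence, this geometry pulls back to a neighborhood of a tangent point of the arctic circle with the boundary of the Aztec diamond---exactly the configuration for which Johansson~\cite{Johansson2005arctic} established $F_1$ asymptotics. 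The hardest step is carrying this asymptotic analysis rigorously: keeping the Pfaffian structure through the degenerate saddle, matching the limiting kernel with a known representation of $F_1$ (for instance the Tracy-Widom formula $F_1(s)^2=F_2(s)\exp\!\bigl(-\int_s^\infty q(u)\,du\bigr)$ with $q$ the Hastings-McLeod Painlev\'e~II solution), and computing $\gamma_6$ explicitly from the local geometry at the tangency.
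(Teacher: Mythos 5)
The paper does not prove Theorem~\ref{thm:Ltriangle}; it states it as a cited result and, in the paragraph that follows, explains only that Johansson's original argument is for domino tilings of the Aztec diamond and that the local weight-preserving correspondence to the $\Delta=0$ six-vertex model transports the result. So there is no "paper's own proof" to match against, only the translation remark. Your framework -- positions of non-$b_2$ vertices as a determinantal point process, the gap event $\{\xi_N\ge L\}$ as a Fredholm determinant, the identity $\xi_N = 2N - \max\{x+y : (x,y)\text{ non-}b_2\}+O(1)$, and the tangency calculation giving $\gamma_5 = 1-1/\sqrt{2}$ -- is all sound as a starting point.

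The serious gap is in the step you flag as "the main obstacle." Your proposed mechanism -- tangency causes saddle points to coalesce, producing a Pfaffian Airy-type kernel, hence $F_1$ -- is not what happens, and the steepest-descent analysis you outline would not produce it. The actual mechanism in \cite{Johansson2005arctic} has two independent ingredients: (i) the boundary curve of the frozen $b_2$-corner, parametrized by position along the antidiagonal and recentered/rescaled appropriately, converges weakly (not just one-point) to the Airy$_2$ process minus a parabola, $\mathcal{A}_2(\tau)-\tau^2$; and (ii) the random variable $\max_{\tau\in\mathbb{R}}\bigl(\mathcal{A}_2(\tau)-\tau^2\bigr)$ is distributed as $2^{1/3}\xi_{\text{GOE}}$. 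Because $\xi_N$ is, up to centering, minus the global maximum of the boundary curve, (i) and (ii) together yield the theorem. Ingredient (ii) is a separate, genuinely nontrivial theorem of Johansson (``Discrete polynuclear growth and determinantal processes,'' 2003), proved via an RSK/symmetrization identity for the PNG droplet, and it does not emerge from a saddle-point computation at the tangency -- indeed the 2D Fredholm determinant you would get from steepest descent is precisely $\mathbb{P}\bigl(\max_\tau(\mathcal{A}_2(\tau)-\tau^2)\le s\bigr)$, and recognizing that this Fredholm determinant equals $F_1(2^{-1/3}s)$ is exactly the content of (ii), not a byproduct of the contour geometry. There is no Pfaffian structure arising from the coalescence of saddles at the tangency point; the Pfaffian (or Painlev\'e) representation of $F_1$ only enters, if at all, on the other side of the identity (ii), as a known description of the target distribution. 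Your proposal is also missing ingredient (i): tightness and functional convergence of the boundary process to Airy$_2$ are needed to justify passing to the max, and a one-point steepest-descent estimate at the tangency cross-section is not enough for that.
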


\begin{figure}[t]
\centering
\includegraphics[width=0.5\linewidth]{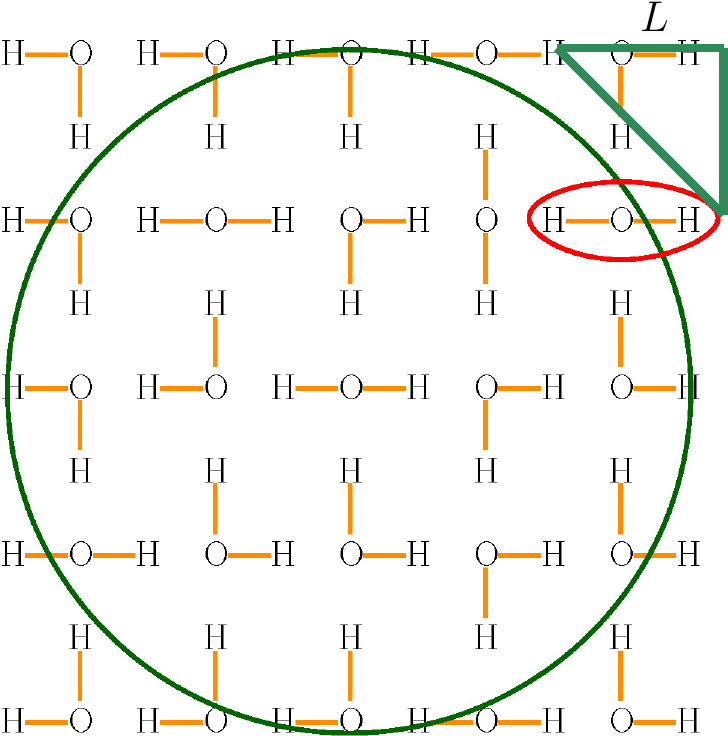}
\caption{The setting of Theorem~\ref{thm:Ltriangle}.}
\label{fig:Ltriangle}
\end{figure}

In both Theorems~\ref{thm:circle} and \ref{thm:Ltriangle}, Johansson actually proves the result for random domino tilings of the ``Aztec diamond'' domain. However, there is a local, weight preserving, many-to-one mapping of domino tilings of the Aztec diamond to six-vertex configurations with domain wall boundary conditions if~$\Delta = 0$ as in the two theorems above. In particular, the mapping is bijective up to certain local modifications of dominos, which in particular may only change the positions of observables being analyzed by at most~$O(1)$. Thus, the results translate immediately to the~$\Delta = 0$ six-vertex model setting. See~\cite{ferrari2006domino} for more details on this correspondence. We refer to Figures~\ref{fig:circle} and~\ref{fig:Ltriangle} for illustrations of the setting of Theorem~\ref{thm:circle} and~\ref{thm:Ltriangle}, respectively.

A very recent development in the study of the six-vertex model with domain wall boundary conditions, away from~$\Delta = 0$, is the following result.

\begin{theorem}[Ayyer-Chhita-Johansson~\cite{ayyer2021goe}]\label{thm:GOE_unif}
Suppose that $a_1 = \cdots = b_2 = c_1 = c_2 = 1$ (the uniform measure). Then the same limit in distribution as in Theorem~\ref{thm:Ltriangle} holds, but with $\gamma_5, \gamma_6$ replaced by new constants.
\end{theorem}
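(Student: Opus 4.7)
The plan is to reduce the question to a ratio of two partition functions and then exploit the algebraic structure available at the uniform point, where the six-vertex model with domain wall boundary conditions is in bijection with alternating sign matrices (ASMs).

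First, I would observe that the event $\{\xi_N \geq L\}$ is exactly the event that every vertex strictly inside the top-right $L \times L$ triangle has local configuration \upright. Because a region frozen to this type fully determines the surrounding H-atoms, this event imposes a deterministic boundary condition along the hypotenuse of the triangle, and configurations with this frozen region are in bijection with configurations of the six-vertex model on the complementary domain with the induced boundary. Hence
$$P(\xi_N \geq L) = \frac{Z_N^{(L)}}{Z_N},$$
where $Z_N$ is the ordinary DWBC partition function and $Z_N^{(L)}$ is the six-vertex partition function on the $N \times N$ square with the $L \times L$ triangle excised. In the uniform case, $Z_N$ is the ASM count and $Z_N^{(L)}$ is a refined ASM enumeration.

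Second, I would apply the Izergin-Korepin determinant formula to the inhomogeneous six-vertex model, expressing both $Z_N$ and $Z_N^{(L)}$ as determinants of simple rational kernels in spectral parameters $\{x_i\},\{y_j\}$, and then pass to the homogeneous limit at $\Delta = \tfrac12$ corresponding to the uniform weights. The resulting ratio admits a multiple contour integral (equivalently, Fredholm determinant) representation in which the cut size $L$ appears as a deformation of the contours or kernel along one boundary of the domain. At this stage one inherits the algebraic identities that underlie the Mills--Robbins--Rumsey and Kuperberg enumerations, and these supply the explicit combinatorial input for the uniform measure that is absent in a generic $\Delta$ analysis.

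Third, I would perform a steepest descent analysis of this contour integral at the tangency scaling $L = \gamma_5 N + \gamma_6 N^{1/3} s$. At a tangency point of the arctic curve with the boundary of the domain, the two saddles of the integrand coalesce and produce an Airy-type kernel; the emergence of $F_1$ rather than $F_2$ comes from the reflection symmetry across the diagonal of the excised triangle, which forces the limiting kernel into its symplectic/Pfaffian Airy-GOE form. The main obstacle is the asymptotic analysis itself in the non-free-fermionic regime $\Delta = \tfrac12$: unlike Theorem~\ref{thm:Ltriangle} at $\Delta = 0$, the model is not a determinantal point process, so the determinantal structure has to be extracted algebraically from the Izergin-Korepin formula and then controlled uniformly along the steepest descent contours, which acquire nontrivial off-diagonal mixing at this value of $\Delta$. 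A secondary task is pinning down $\gamma_5, \gamma_6$ from the critical point data and matching the answer with the Pfaffian representation of $F_1$, which requires tracking the boundary contribution carefully through the Pfaffian-to-determinant squaring that typically converts $F_2^{1/2}$ behavior into $F_1$.
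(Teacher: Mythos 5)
This theorem is stated in the paper with a citation but is not proved there (the survey presents full proofs only of Theorems~\ref{thm:JNeig}/\ref{thm:GUE_fluct} and~\ref{thm:S6Vfluct}), so there is no in-text argument to compare against. Your first reduction is correct: by monotonicity, $\{\xi_N \geq L\}$ is exactly the event that the $L \times L$ corner triangle is entirely of type~$b_2$, which forces a deterministic boundary along the hypotenuse, so $\mathbb{P}(\xi_N \geq L) = Z_N^{(L)}/Z_N$ is an emptiness formation probability.

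The later steps contain gaps that would stop the proof. The Izergin--Korepin formula gives $Z_N$ for the full square, but it does not produce a formula for $Z_N^{(L)}$; obtaining a usable closed form for the emptiness formation probability requires substantial additional work (Colomo and Pronko derived one by degenerating the inhomogeneous IK determinant through a confluent limit, not by ``applying IK to both'' and passing to the homogeneous point). More importantly, your explanation for why $F_1$ rather than $F_2$ appears --- that reflection symmetry across the diagonal forces a Pfaffian Airy kernel --- cannot be the mechanism. The $N\times N$ DWBC square has the same diagonal symmetry in the setting of Theorem~\ref{thm:circle} (which yields $F_2$) and of Theorem~\ref{thm:Ltriangle} (which yields $F_1$); what differs is the observable, not the symmetry. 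The GOE law appears because $\xi_N$ is an \emph{extremal} statistic, the maximal excursion of the frozen boundary toward the corner, and in the free-fermion case this maximum of the rescaled boundary process minus a parabola gives $F_1$ by Johansson's theorem. Away from $\Delta = 0$ there is no determinantal boundary process to maximize over, and extracting $F_1$ from an integral representation of the emptiness formation probability at $\Delta = \tfrac{1}{2}$ is exactly the technical achievement of Ayyer--Chhita--Johansson; the symmetry-to-Pfaffian heuristic you propose would bypass the central difficulty without resolving it.
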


The following conjecture extends all of these results.
\begin{conjecture}\label{conj:GUE}
Theorems \ref{thm:JNeig}--\ref{thm:GOE_unif} extend (perhaps with different values of constants) to all $a_1,a_2,b_1,b_2,c_1,c_2$ satisfying $\Delta  < 1$, and to much more general boundary conditions.
 \end{conjecture}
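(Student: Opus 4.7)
The plan is to decompose Conjecture~\ref{conj:GUE} into its three natural components and attack each with the most developed integrability techniques for the six-vertex model away from the free-fermion point. First, for the GUE-corners statement extending Theorems~\ref{thm:JNeig} and~\ref{thm:GP}, I would pursue the method of moments: since the GUE-corners process is determined by the joint Gaussian moments of the linear statistics $\sum_i f(x_i^k)$ across rows, it suffices to show that, after the $\gamma_1 N,\gamma_2\sqrt{N}$ rescaling, the moments $\mathbb{E}\bigl[\prod_k p_{m_k}(x_1^k,\dots,x_k^k)\bigr]$ converge to their Gaussian Wick values. These moments should be extractable from the Izergin--Korepin determinantal formula for the partition function by first differentiating in appropriately chosen spectral parameters and then substituting indicator weights for the \straight vertex in selected rows.

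Second, for the Tracy-Widom statements extending Theorems~\ref{thm:circle}--\ref{thm:GOE_unif}, the approach would be to write the relevant boundary observable (e.g.\ the probability that an $L\times L$ corner is frozen, or that a given row contains no \straight to the right of a given column) as a ratio $Z_{\mathrm{modified}}/Z_{\mathrm{original}}$ of Izergin--Korepin partition functions with modified domains or row/column weights. Yang--Baxter equivalences should then convert this ratio to a contour integral in the spectral parameters; in favorable cases one hopes for a Fredholm determinant in the bulk-edge regime $\Delta<1$ and a Fredholm Pfaffian at a corner tangency, whose saddle-point analysis at a double critical point would produce the Airy kernel ($F_2$) at smooth points of the arctic curve and a symplectic Airy kernel ($F_1$) at points of tangency with the frozen corner. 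The explicit constants $\gamma_1,\dots,\gamma_6$ would come out of the location and curvature of the critical points, recovering the $\Delta=0$ values of \cite{Johansson2005arctic,johansson2006eigenvalues} as a limiting case.

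The main obstacle, by a wide margin, is that when $\Delta\neq 0$ the six-vertex model with domain wall boundary conditions loses its determinantal/Schur-process structure: the Izergin--Korepin formula is a determinant in the \emph{spectral} parameters rather than in the spatial rows, so the manipulations above produce expressions that are not manifestly Fredholm, and known identities for Schur or Macdonald measures are not directly applicable. Overcoming this likely requires a genuinely new integrable observable. One possible route is to embed domain wall six-vertex into the colored stochastic six-vertex model, where nested-contour moment formulas from coordinate Bethe ansatz are available, and then to analytically continue the stochastic weights into the symmetric regime $\Delta<1$; the delicate point is to track exact formulas through this analytic continuation without losing the positivity-based a priori bounds that justify asymptotic analysis.

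Finally, the part of the conjecture asserting universality across \emph{much more general boundary conditions} seems out of reach by any direct exact computation, and I would only approach it after the domain wall case is settled, through a local comparison argument: identify a Gibbs resampling rule for the six-vertex height function on short strips that preserves the measure, show that the resulting dynamics relaxes on a time scale short compared to the $N^{1/3}$ edge fluctuation scale, and then transport the Tracy--Widom limit from the domain wall case to a neighboring boundary configuration in the spirit of Dyson Brownian motion comparisons in random matrix universality. The hard part will be establishing a quantitative local relaxation estimate for this dynamics when $\Delta<1$ but $\Delta\neq 0$, where no determinantal structure and no uniform concentration tools are presently available; this is where all three obstacles compound, and is the reason Conjecture~\ref{conj:GUE} is expected to be very difficult.
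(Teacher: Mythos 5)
You were asked to prove what the paper explicitly labels a \emph{conjecture}: the paper offers no proof of this statement, only some supporting heuristics and a pointer to numerical evidence in \cite{praehofer2023domain}. You correctly recognize this and, rather than claiming a proof, lay out a research program together with the obstacles that make the problem hard. That is the right response, and your diagnosis of the central obstacle is accurate: the Izergin--Korepin formula is a determinant in the spectral parameters and does not, for general $\Delta\ne 0$, reduce to a Schur-type or determinantal point process structure in the spatial variables, which is exactly what blocks the methods that succeed in the free-fermion and stochastic cases.

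A few remarks on the specific strands of your plan. For the GUE-corners component, the paper's proved cases (Theorems~\ref{thm:JNeig} and~\ref{thm:GP}, established in Sections~\ref{Section_IK}--\ref{Section_GUE} and in \cite{gorin2014alternating,GorinPanova2012_full}) do not use a method of moments for $p_{m_k}(x_1^k,\dots,x_k^k)$. Instead they (i) compute the Laplace transform of the row-wise counts of $b$-vertices by taking a ratio of Izergin--Korepin partition functions with perturbed row spectral parameters, (ii) deduce asymptotic independence and Gaussianity of the diagonal entries $m_{jj}$, and (iii) upgrade to the full GUE-corners process via the uniform Gibbs property and the Olshanski--Vershik classification. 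The perturbed-ratio technique is closer in spirit to your ``differentiate and substitute'' idea than it might appear, and in fact survives at least formally away from $t=-1$; what breaks down is step (i)'s exact evaluation, because the Cauchy-determinant simplification used at $t=-1$ disappears. So if you want to push this strand, the realistic target is an asymptotic (rather than exact) evaluation of the ratio $Z_N(\cdot,e^{iz/\sqrt N},\cdots)/Z_N(\cdot,1,\cdots)$ directly from the Izergin--Korepin determinant for $|t|=1$, $t\neq -1$; steps (ii)--(iii) then carry over unchanged. This is the approach taken in the announced \cite{Gorin_Liechty_2023} cited after Theorem~\ref{thm:GP}.

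For the Tracy--Widom component, your description of the needed output (Fredholm determinant at a smooth arctic point, Fredholm Pfaffian at a tangency) is correct, but your proposed route through modified Izergin--Korepin partition functions has not, to date, produced a Fredholm structure for general $\Delta<1$. The rigorous results that do exist (Theorem~\ref{thm:GOE_unif} and \cite{ayyer2021goe}) rely on special combinatorial identities available at the ice point, not on a contour-integral deformation of the IK determinant, so you should be careful not to present this strand as a direct generalization of known arguments. Finally, for general boundary conditions, your proposed comparison argument through a Gibbs resampling dynamics is plausible in outline, but the paper's discussion below the conjecture emphasizes that even the \emph{limit shape} picture (the arctic curve replacing the inscribed circle) is only partially understood for polygonal domains with $\Delta\ne 0$; without a limit shape and its local curvature data, the scaling constants you would need to transport along the comparison are not yet defined. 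In short: your plan is a reasonable map of the territory and you have correctly identified why the conjecture remains open, but none of the three strands is within reach of current techniques, which is consistent with the paper's decision to state this as a conjecture rather than a theorem.
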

One way to make Conjecture~\ref{conj:GUE}
more precise is to consider the six-vertex model
in a large polygon with axis-parallel sides
(rather than in a square).
Then we expect the appearance of a curve,
similar to the circle of
Figures~\ref{fig:circle} and~\ref{fig:Ltriangle},
such that asymptotically all~$c_1$ and~$c_2$
 molecules are inside of the curve.
 (See~\cite{colomo2010arctic, colomo2010arctic2, Colomo_S_2016, aggarwal2020arctic}
 for examples of such curves).
  On various parts of the curve, the fluctuations of
  extremal molecules of the appropriate types
  should be described by versions of
  Theorems~\ref{thm:JNeig}--\ref{thm:GOE_unif}. For example,
  heuristic calculations and numerical investigations of \cite{praehofer2023domain}
  strongly support the conjecture that with
  uniform weights ($a_1 = a_2 = b_1 = b_2 = c_1 = c_2 = 1$)
  and domain wall boundary conditions,
  the scale of fluctuations of the extremal molecules
  (away from tangency points)
  is $N^{\frac{1}{3}}$, and that these
  fluctuations follow the Tracy-Widom GUE distribution $F_2$; i.e.
  an analog of Theorem~\ref{thm:circle} should hold in this setup.

\bigskip

What if $\Delta > 1$?  This will be discussed as well, later in the notes. The random matrix distributions can appear in another form in this situation, as we outline later in Section \ref{Section_height_function_assy}. As a teaser, we refer to Figure \ref{Figure_teaser} in that section for simulations and pointers to the asymptotic objects
appearing in the six-vertex model with various parameters and boundary conditions.

\subsection{Plan of the next sections}

In the rest of this note, we provide proofs of some of the results connecting the six-vertex model to random matrix distributions.

First, in Sections \ref{Section_IK} and \ref{Section_GUE} we present an argument for a slight generalization of Theorem \ref{thm:JNeig}. Our approach is quite different from the original proof of Johansson and Nordenstam: we rely on the \emph{Izergin-Korepin determinant} --- a celebrated evaluation of the inhomogeneous partition function for the six-vertex model with domain wall boundary conditions. Section \ref{Section_IK} proves this determinantal formula, introducing on our way a central identity in the study of the six-vertex model --- the Yang--Baxter equation. Section \ref{Section_GUE} paves a way from the Izergin-Korepin determinant to the GUE--corners process.

Next, in Sections \ref{Section_Stochastic_6v} and \ref{Section_Schur} we focus on another asymptotic theorem, presenting and proving the result of \cite{BCG6V}, which connects the fluctuations of the height function of the stochastic six-vertex model (corresponding to $\Delta>1$ case) to the Tracy-Widom GUE distribution $F_2$. Again, we do not follow the arguments of \cite{BCG6V}, but give an alternative proof, which is based on a recently discovered generalization of the Izergin-Korepin determinant. Section \ref{Section_Stochastic_6v} proves this generalization, then states the asymptotic theorem of \cite{BCG6V} and discusses the relevant background. Section \ref{Section_Schur} proves this asymptotic theorem by utilizing the determinant, connecting it to the \emph{Schur measures}, and analyzing the latter using difference operators and contour integrals (this particular approach to the asymptotics of the Schur measures is different from the standard approach in the literature). Finally, Appendix \ref{app:pt_proc} collects auxiliary definitions and facts about the Tracy--Widom distribution and its multidimensional generalization known as the Airy point process.

Our ultimate goal is three-fold: we give self-contained proofs of two theorems connecting the six-vertex model to random matrices; we present new proofs of these theorems, which arguably are as short as one can hope for; on our path we highlight various key tools used in the study of the six-vertex model.

\section{Izergin-Korepin Determinant}
\label{Section_IK}
\subsection{Vertex weights and Gibbs property}

Starting from this section, we use an equivalent representation shown in Figure~\ref{fig:corresp} for the six local configurations of the six-vertex model. Thus, if we fix a sub-graph of~$\mathbb{Z}^2$, a \emph{state} of the six-vertex model is a collection of up-right paths in (the sub-graph of)~$\mathbb{Z}^2$ which can meet at a corner but never cross, see Figures \ref{fig:gibbs_height} and \ref{fig:dwbc}.

If we have a finite sub-graph of~$\mathbb{Z}^2$, then \emph{boundary conditions} are a choice of entrance and exit locations for paths oriented to travel north and east. In general, when specifying the particular model we are studying, we will specify:  the sub-graph of~$\mathbb{Z}^2$ on which our states live, the~$6$ weights~$a_1,a_2,b_1,b_2,c_1,c_2$, and the boundary conditions.  Given these specifications, we study the Boltzmann measure on states~$S$ satisfying the boundary conditions and defined by~\eqref{eqn:bm}.

\begin{figure}[t]
\centering
\includegraphics[scale=1.3]{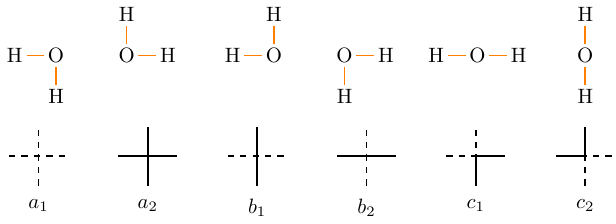}
\caption{The correspondence between the two representations of the six local configurations.\label{fig:corresp}}
\end{figure}

Each Boltzmann measure satisfies the~\emph{Gibbs property}; given a fixed boundary condition for the paths on a sub-graph of the given domain, the conditional distribution of the state~$S$ inside the sub-graph given these boundary conditions is itself given by the Boltzmann measure on this sub-graph.

\begin{figure}[t]
\includegraphics[width=0.9\linewidth]{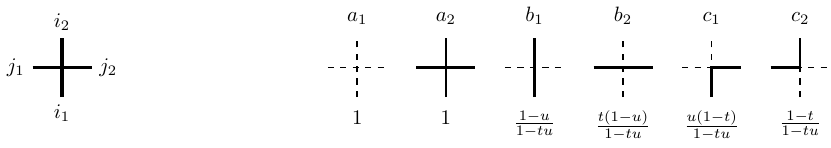}
\caption{The six local configurations, with general weights shown above the vertices and weights parameterized by~$u, t$ shown below. If the spectral parameter is~$u$, the weight of the vertex shown on the left above is denoted by $w_u(i_1, j_1; i_2, j_2)$.}
\label{fig:vertices}
\end{figure}

We claim that the entire family of probability measures on configurations in a fixed domain, using certain \emph{gauge transformations}, which are operations on the weights preserving the Boltzmann measure, can be parameterized by two parameters $u, t$ (rather than the~$6$ parameters~$a_1,a_2,b_1,b_2,c_1,c_2$) as follows. We denote the presence of a path on a vertical or horizontal edge by a $\{0, 1\}$--valued indicator $i$ or $j$, and denote the corresponding vertex weight by $w_u(i_1,j_1;i_2, j_2)$; see Figure \ref{fig:vertices}. We define the weights by

\begin{align}\label{eqn:weights}
		&w_u(0,0;0,0)= a_1 = 1,&
		&w_u(1,1;1,1)= a_2 = 1,&
        \\         \notag
		&w_u(1,0;1,0)= b_1 =  \frac{1 - u}{1 - t u},&
		&w_u(0,1;0,1)  = b_2 =  \frac{t (1 - u) }{1 - t u},&
		\\
		&w_u(1,0;0,1) = c_1 = \frac{u(1 -t)}{1 - t u},&
		&w_u(0,1;1,0)  = c_2 =  \frac{1 - t}{1 - t u}.& \notag
\end{align}

Above $t$ is the \emph{quantum parameter}, which is fixed, and $u$ is the \emph{spectral parameter}, which we will ultimately allow to vary from vertex to vertex, although thus far it has been fixed.

There are other common parameterizations of vertex weights (see, e.g., \cite{baxter2007exactly}), and by changing variables, each of these can be obtained from our parameterization above. The parameterization above has the benefit of giving unified formulas for vertex weights in all regimes (see the discussion after Lemma~\ref{lem:Gauge} for more on the different regimes, which correspond to different possible values of~$\Delta$). Furthermore, the weights~\eqref{eqn:weights} possess three special features which we will exploit:
\begin{itemize}
\item After multiplying by~$1- t u$ they are polynomials in~$t $ and~$u$.
\item The stochasticity condition holds:~$b_1 + c_1 = b_2 + c_2 = 1$.
\item An algebraic relation called the \emph{Yang--Baxter equation} holds, as detailed in Theorem~\ref{thm:YBE}.
\end{itemize}

\begin{figure}[t]
\centering
\includegraphics[width=0.35\linewidth]{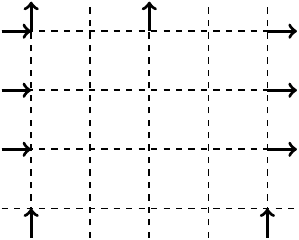} \qquad \qquad \qquad
\includegraphics[width=0.35\linewidth]{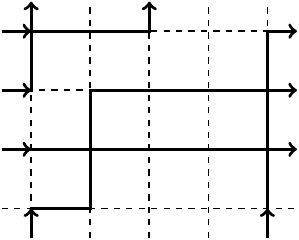}
\caption{An example of boundary conditions for the six-vertex model in a~$4 \times 5$ rectangle. Entrance and exit locations of paths are marked by arrows. On the right is one possible configuration with these boundary conditions.}
\label{fig:gibbs_height}
\end{figure}

The following lemma explains that up to equivalence under gauge transformations, there is only a two parameter family of weight functions. Therefore, there is no loss in generality in working only with weights \eqref{eqn:weights}.

\begin{lemma}\label{lem:Gauge}
If we fix a sub-graph of~$\mathbb{Z}^2$, boundary conditions for the six-vertex model, and six weights~$a_1, a_2, b_1, b_2, c_1, c_2$ which are the same for all vertices~$(i, j)$, then the Boltzmann measure
\begin{equation}\label{eqn:bmeas}
\mu(S) = \frac{1}{Z} \prod_{i, j} \text{wt}(i, j; S)
\end{equation}
only depends on the two parameters $\frac{a_1 a_2}{b_1 b_2}$, $\frac{a_1 a_2}{c_1 c_2}$.
\end{lemma}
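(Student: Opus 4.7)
The plan is to exhibit a four-parameter family of vertex-by-vertex reweightings that preserves the Boltzmann measure on configurations with any fixed boundary; modulo this gauge freedom the measure depends only on the two combinations $a_1 a_2/(b_1 b_2)$ and $a_1 a_2/(c_1 c_2)$. The four parameters will correspond to four configuration-independent quantities: the total number $M$ of vertices, the numbers $V(S)$ and $H(S)$ of occupied vertical and horizontal edges, and $D(S) = N_{c_1}(S) - N_{c_2}(S)$.

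First I would establish the invariance of $M$, $V$, $H$, and $D$. Constancy of $M$ is immediate. For $V$ and $H$, I would invoke flow conservation in the up-right path picture: paths enter and exit only at the boundary, so if $V_y$ denotes the number of paths crossing between rows $y$ and $y+1$, the row-by-row balance yields the recursion $V_y = V_{y-1} + L_y - R_y$ with $L_y, R_y$ the boundary horizontal occupations of row $y$; this determines each $V_y$, and hence $V$, from boundary data alone, while a symmetric column argument handles $H$. Invariance of $D$ follows because, scanning a single row, the horizontal occupation toggles precisely at $c_1$ and $c_2$ vertices, so $N_{c_1}^{(y)} - N_{c_2}^{(y)} = R_y - L_y$.

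Next, I would package the four invariants into a single local rescaling by multiplying the weight at each vertex in local state $(i_1, j_1; i_2, j_2)$ by $\lambda_1 \, \lambda_2^{i_1 + i_2} \, \lambda_3^{j_1 + j_2} \, \lambda_4^{j_1 - j_2}$. Because $\sum_v (i_1+i_2) = 2V - V^{\mathrm{bdry}}$, $\sum_v (j_1+j_2) = 2H - H^{\mathrm{bdry}}$, and $\sum_v (j_1 - j_2) = -D$ are linear combinations of the invariants and boundary constants, the product of these factors over all vertices is a boundary constant, so each state's weight is rescaled by the same scalar and the Boltzmann measure is left unchanged. Reading off the per-type effect shows that this rescaling is the weight transformation
\begin{align*}
&a_1 \to \lambda_1 a_1, \quad a_2 \to \lambda_1 \lambda_2^2 \lambda_3^2 \, a_2, \quad b_1 \to \lambda_1 \lambda_2^2 \, b_1, \\
&b_2 \to \lambda_1 \lambda_3^2 \, b_2, \quad c_1 \to \lambda_1 \lambda_2 \lambda_3 \lambda_4^{-1} \, c_1, \quad c_2 \to \lambda_1 \lambda_2 \lambda_3 \lambda_4 \, c_2,
\end{align*}
which manifestly preserves both $a_1 a_2 / (b_1 b_2)$ and $a_1 a_2 / (c_1 c_2)$.

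To finish, I would note that the four log-exponent vectors $(1,1,1,1,1,1)$, $(0,2,2,0,1,1)$, $(0,2,0,2,1,1)$, $(0,0,0,0,-1,1)$ are linearly independent in $\mathbb{R}^6$, and that $(1,1,-1,-1,0,0)$ and $(1,1,0,0,-1,-1)$ span a complementary two-dimensional subspace; consequently, any two weight sextuples agreeing on the two ratios above belong to the same gauge orbit and induce identical Boltzmann measures. The main obstacle I anticipate is the flow-conservation step needed to show $V$ and $H$ are state-independent; thereafter the argument reduces to bookkeeping over the six vertex types and a small linear-algebra check.
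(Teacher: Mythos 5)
Your proof is correct and takes essentially the same route as the paper's: you identify the same configuration-independent quantities (the vertex count, the numbers of occupied vertical and horizontal edges, and $N_{c_1}-N_{c_2}$) and turn them into a four-parameter family of measure-preserving weight rescalings whose orbits are exactly the level sets of the two ratios $\frac{a_1a_2}{b_1b_2}$ and $\frac{a_1a_2}{c_1c_2}$. The differences are only in bookkeeping: you package the gauge group via local edge factors $\lambda_1\lambda_2^{i_1+i_2}\lambda_3^{j_1+j_2}\lambda_4^{j_1-j_2}$ rather than the paper's separate scalings of $(a_1,a_2)$, $(b_1,b_2)$, $(c_1,c_2)$ and an overall constant, and you make explicit the final linear-algebra step (that the four gauge directions span the joint kernel of the two ratio functionals) which the paper simply asserts.
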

\begin{proof}
Given a particular configuration $S$, we call the number of $a_1$ vertices $N_1$, the number of $a_2$ vertices $N_2$, the number of $b_1$ vertices $N_3$, the number of~$b_2$ vertices~$N_4$, the number of~$c_1$ vertices~$N_5$, and the number of~$c_2$ vertices~$N_6$. Observe that if we have a linear combination of the~$N_i$ whose value is independent of the configuration, we obtain a one parameter family of transformations of the weights that preserve the Boltzmann measure. We have the following conserved quantities and corresponding gauge transformations:
\begin{enumerate}[(a)]
\item Since the number of vertices in the domain is constant,~$\sum_{i=1}^6 N_i = \text{Const}$, so multiplying all six weighs by a constant preserves the measure~\eqref{eqn:bmeas}.

\item Since all paths have fixed entry and exit points, the number of times a particular path turns up minus the number of times it turns right is prescribed by the boundary conditions to be one of the three numbers~$\{-1, 0, 1\}$. This ultimately implies that~$N_5 - N_6 = \text{Const}$. So $(c_1, c_2) \rightarrow (\alpha c_1, \frac{1}{\alpha} c_2)$ preserves the measure~\eqref{eqn:bmeas}.

\item The total number of edges occupied by a path is constant, which implies that
$$2 N_2 + N_3 + N_4 + N_5 + N_6 = \text{Const}.$$
This together with (a) above implies that $N_1 - N_2$ is constant. So $(a_1, a_2) \rightarrow (\alpha a_1, \frac{1}{\alpha} a_2)$ preserves the measure~\eqref{eqn:bmeas}.

\item The total number of vertical edges occupied by paths is fixed;~$N_2 + N_3 + \frac{1}{2}N_5 + \frac{1}{2} N_6 = \text{Const}$. Subtracting $\frac{1}{2}$ of (c) from this, we get that~$N_3 - N_4 = \text{Const}$. So $(b_1, b_2) \rightarrow (\alpha b_1, \frac{1}{\alpha} b_2)$ preserves the measure~\eqref{eqn:bmeas}.

\end{enumerate}
The claim follows from (a)--(d): using these four transformation we can get any six-tuple $a_1,a_2,b_1,b_2,c_1,c_2$ from any other six-tuple with the same values of $\frac{a_1 a_2}{b_1 b_2}$ and $\frac{a_1 a_2}{c_1 c_2}$.
\end{proof}

Lemma~\ref{lem:Gauge} implies that any positive weights~$a_1,a_2, b_1, b_2, c_1, c_2 > 0$ can be brought to the form~\eqref{eqn:weights} via transformations preserving the measure~\eqref{eqn:bmeas}. Note that it may be necessary to take~$u$ and~$t$ to be complex numbers for that. In fact, the formulas~\eqref{eqn:weights} give a positive measure~\eqref{eqn:bmeas} in two situations:
\begin{enumerate}[1)]
\item If both~$t$ and~$u$ are positive reals.
\item If both~$t$ and~$u$ are complex numbers on the unit circle: $u,t\in\mathbb C$ and $|t| = |u| = 1$.
\end{enumerate}

Indeed, in each of these cases we have~$b_1 b_2 > 0$ and~$ c_1 c_2 > 0$, which guarantees positivity of the measure. Plugging~\eqref{eqn:weights} into~$ \Delta =\frac{a_1 a_2 + b_1 b_2 - c_1 c_2}{2 \sqrt{a_1 a_2 b_1 b_2}}$, one computes~$\Delta = \pm \frac{1}{2}(t^{\frac{1}{2}} + t^{-\frac{1}{2}})$, where the sign~$\pm$ arises from the square root in the denominator, and the correct choice is determined by requiring that~$\sqrt{a_1 a_2 b_1 b_2} > 0$. Hence, the first case corresponds to~$|\Delta| \geq 1$ and the second one to~$|\Delta| \leq 1$. In particular,~$\Delta = 0$ when~$t = -1$.

\begin{figure}[t]
\centering
\includegraphics[width=0.35\linewidth]{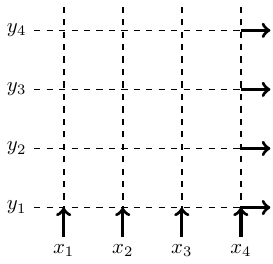}
\qquad \qquad \qquad
\includegraphics[width=0.35\linewidth]{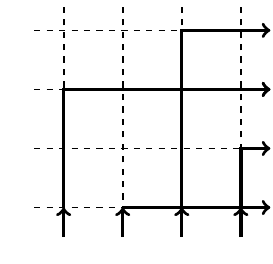}
\caption{An illustration of domain wall boundary conditions for~$N = 4$, and an example of a configuration.}\label{fig:dwbc}
\end{figure}

\subsection{Izergin-Korepin formula and Yang--Baxter equation}
\label{sec:dwbc}

The $N \times N$ square~$[1, N] \times [1, N] \cap \mathbb{Z}^2$ with paths entering at every site of the bottom boundary and exiting at every site of the right boundary is called the \emph{domain wall boundary conditions} (DWBC), see Figure~\ref{fig:dwbc} for an example.

A special feature of DWBC is that in this situation the partition function, which is the normalization constant~$Z$ in~\eqref{eqn:bmeas}, admits an explicit formula. In fact (crucially) the computation also works for inhomogeneous weights obtained by allowing~$u$ in~\eqref{eqn:weights} to depend on the vertex.

We attach spectral parameter $x_i$ to column $i$ and $y_j$ to row $j$, $1\le i,j\le N$, which means that the weight $\text{wt}(i, j; S)$ of a vertex $(i, j)$ uses the weight function $w_{x_i y_j}$, i.e. we use~\eqref{eqn:weights} with~$u = x_i y_j$. We denote by $Z_N(y_1,\dots, y_N; x_1,\dots, x_N;t)$ the inhomogeneous partition function for this model:
$$Z_N(y_1,\dots, y_N; x_1,\dots, x_N;t) \defeq \sum_{S} \prod_{i=1}^N \prod_{j=1}^N  \text{wt}(i, j; S) .$$

\begin{theorem}[Izergin-Korepin Determinant \cite{KorepinNorms},~\cite{IKDet}]\label{thm:IK}
For domain wall boundary conditions, the partition function~$Z_N(x_1,\dots, x_N; y_1,\dots, y_N; t)$ is given by the formula
\begin{equation}\label{eqn:IZK}
Z_N(y_1,\dots, y_N; x_1,\dots, x_N;t)  =  \frac{\prod_{i, j=1}^N (1 - x_i y_j)}{\prod_{i < j} (x_i - x_j)(y_i - y_j)} \det \left[ \frac{(1-t) x_i y_j}{(1-x_i y_j)(1- t x_i y_j)} \right]_{i,j=1}^N .
\end{equation}
\end{theorem}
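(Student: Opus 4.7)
The plan is Korepin's uniqueness strategy: identify properties that determine $Z_N$ uniquely and check them on both sides of~\eqref{eqn:IZK}. Introduce the normalized quantity
\[
\hat Z_N := \frac{1}{\prod_i x_i \prod_j y_j} \prod_{i, j=1}^N (1 - t x_i y_j) \cdot Z_N(y; x; t).
\]
I will establish for $\hat Z_N$: (i) polynomiality in each $x_i$ and in each $y_j$ of degree at most $N-1$; (ii) separate symmetry in $x_1, \ldots, x_N$ and in $y_1, \ldots, y_N$; (iii) a recursion at $x_N y_N = 1$ expressing $\hat Z_N$ in terms of $\hat Z_{N-1}$; (iv) the initial condition $\hat Z_1 = 1 - t$. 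By symmetry (ii), the recursion (iii) fixes $\hat Z_N$, viewed as a polynomial of degree at most $N-1$ in $x_N$, at the $N$ points $x_N \in \{1/y_1, \ldots, 1/y_N\}$, which determines it uniquely; combined with (iv) and induction on $N$ this characterizes $\hat Z_N$. The proof then concludes by verifying (i)--(iv) for the right-hand side of~\eqref{eqn:IZK}.

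Property (i) has two parts. First, after clearing denominators each cleared weight $(1 - tu)w_u$ is a polynomial of degree at most $1$ in $u = x_i y_j$, so $\prod (1 - tx_iy_j)\cdot Z_N$ is a polynomial of degree at most $N$ in each variable. Second, divisibility by $\prod_i x_i \prod_j y_j$ follows from path conservation in DWBC: every row $j$ contains at least one $c_1$ vertex (obtained by tracing the horizontal path that exits at the right boundary of row $j$ backwards until it turns down), and every column $i$ contains at least one $c_1$ vertex (where the path entering at the bottom of column $i$ first turns right); each such $c_1$ vertex contributes a factor $x_i y_j$ to the cleared weight. Property (ii) is the main payoff of the Yang--Baxter equation (Theorem~\ref{thm:YBE}): attaching an auxiliary $R$-matrix carrying two adjacent spectral parameters below the domain and pulling it across using repeated Yang--Baxter moves, one finds that the $R$-matrix enters and exits the boundary with trivial contribution (by the shape of DWBC), and the only net effect is the swap of the two adjacent parameters. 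Property (iv) is immediate, since the only DWBC configuration on a $1 \times 1$ grid is a single $c_1$ vertex, giving $Z_1 = \frac{(1-t)x_1y_1}{1-tx_1y_1}$.

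The heart of the argument is the recursion (iii). When $x_N y_N = 1$ the weights satisfy $b_1(x_Ny_N) = b_2(x_Ny_N) = 0$, and since DWBC already restrict vertex $(N, N)$ to type $c_1$ or $b_2$, at the specialization it must be $c_1$. A cascade of forced moves ensues: the empty left edge of $(N, N)$ forces $(N - 1, N)$ to be type $a_1$, which propagates to freeze the entire row $N$ to the left of $(N, N)$ as $a_1$; the occupied bottom edge of $(N, N)$ forces $(N, N - 1)$ to be type $a_2$, which propagates down column $N$. All frozen vertices have weight $1$ at the specialization (recall $a_1 = a_2 = 1$ identically, and $c_1(1) = 1$), and the free portion is a DWBC on the $(N-1) \times (N-1)$ subdomain with spectral parameters $(y_1, \ldots, y_{N-1}; x_1, \ldots, x_{N-1})$. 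Hence $Z_N|_{x_Ny_N = 1} = Z_{N-1}(y_{<N}; x_{<N}; t)$, which translates via the prefactor in the definition of $\hat Z_N$ into the required recursion. The main obstacle is the corresponding check on the determinantal side: using the partial fraction $K_{ij} = \frac{1}{1 - x_i y_j} - \frac{1}{1 - tx_i y_j}$, the $(N, N)$ entry develops a simple pole at $x_N y_N = 1$ which is exactly cancelled by the vanishing factor $(1 - x_N y_N)$ in the numerator prefactor of~\eqref{eqn:IZK}; cofactor-expanding along the $N$-th row (or column) and carefully matching Vandermonde factors then reproduces the $(N-1)$-dimensional Izergin--Korepin formula, closing the induction.
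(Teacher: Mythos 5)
Your proof is correct and follows essentially the same Korepin-style uniqueness argument as the paper: characterize the partition function by polynomiality and degree, symmetry via Yang--Baxter, a recursion at $x_N y_N = 1$, and an initial condition, then verify the same properties for the determinant. The only cosmetic difference is bookkeeping: you divide out $\prod_i x_i \prod_j y_j$ (justified by the same observation that every row and column must contain a $c_1$ vertex) to reduce the degree in $x_N$ to $N-1$ and pin it down at the $N$ points $1/y_1,\dots,1/y_N$, whereas the paper keeps the degree-$N$ polynomial and uses $x_N=0$ as an $(N{+}1)$-st interpolation point; these are logically equivalent uses of the same vanishing property.
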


\begin{proof}
We proceed by induction on $N$.
\begin{enumerate}[1.]
\item \textbf{Base case:} For $N = 1$, we directly check the equality, which amounts to observing that for $x = x_1, y = y_1$, we have
$$c_1 =  \frac{(1-t) x y}{1 - t x y} = \frac{(1-x y)(1-t)x y}{ (1- x y) (1- t x y)} = Z_1(x; y; t).$$

\item \textbf{Inductive step:} We denote $\boldsymbol{x} = (x_1,\dots, x_N),  \boldsymbol{y} = (y_1, \dots, y_N)$. First we argue that it suffices to show that the LHS and RHS in~\eqref{eqn:IZK}, $Z_N(y_1,\dots, y_N; x_1,\dots, x_N;t)$ and $R_N(y_1,\dots, y_N; x_1,\dots, x_N;t)$, are rational functions satisfying the following properties:
\begin{enumerate}
\item[(a)] Symmetry in variables~$(x_1, \dots, x_N)$, and in variables~$(y_1,\dots, y_N)$.
\item[(b)] The function $Z_N(\boldsymbol{x} ; \boldsymbol{y}  ; t) \cdot \prod_{i, j} (1 - t x_i y_j)$ (resp. $R_N(\boldsymbol{x} ; \boldsymbol{y}  ; t) \cdot \prod_{i, j} (1 - t x_i y_j)$) is a polynomial in $x_1,\dots,x_N, y_1,\dots,y_N$, such that the degree of each variable is at most $N$.
\item[(c)] If $x_N = \frac{1}{y_N}$, then
\begin{align*}
Z_N(\boldsymbol{x} ; \boldsymbol{y} ; t) &= Z_{N-1}(x_1,\dots, x_{N-1}; y_1, \dots, y_{N-1}; t)   \qquad \text{ and} \\
R_N(\boldsymbol{x} ; \boldsymbol{y} ; t) &= R_{N-1}(x_1,\dots, x_{N-1}; y_1, \dots, y_{N-1}; t) .
\end{align*}
\item[(d)] If $x_i = 0$ for any $i$, then~$Z_N(\boldsymbol{x} ; \boldsymbol{y} ; t) = 0$ and~$R_N(\boldsymbol{x} ; \boldsymbol{y} ; t) = 0$.
\end{enumerate}

\smallskip

Suppose we can show that these four properties are satisfied by the LHS and RHS. Then if $P_N$ represents either the LHS or RHS, we have that \mbox{$x_N \mapsto \prod_{i,j}(1-t x_i y_j) P_N(x_1,\dots,x_N; y_1,\dots,y_N;t)$} is a degree $N$ polynomial in $x_N$, and at the $N+1$ values of $x_N$:
$$0,\quad 1/y_1,\quad 1/y_2,\quad \dots,\quad 1/y_N,$$
its values are determined by properties (c) and (d) and by the induction assumption. A degree~$N$ polynomial is uniquely determined by its values at~$N+1$ points, therefore,~$Z_N = R_N$.

Next, we prove that the functions on the LHS and RHS satisfy the four properties.

\smallskip

\textbf{Proof that the RHS of~\eqref{eqn:IZK} satisfies (a)-(d)}:
\begin{enumerate}
\item[(a)] We argue for symmetry in $(x_1,\dots, x_N)$, as the symmetry in $y$'s is similar. The product $\prod_{i, j}(1 - x_i y_j)$ is clearly symmetric. The function $\prod_{i < j} (x_i - x_j) $ is anti-symmetric in $(x_1,\dots, x_N)$. In the determinant term, the swap $x_i \leftrightarrow x_j$ swaps rows $i$ and $j$ in the matrix, so the determinant will change sign, and thus the determinant is also anti-symmetric. Thus the RHS is symmetric in the $x_i$.
\item[(b)] The function
$$\left( \prod_{i, j=1}^N (1 - x_i y_j) \right) \left(\prod_{i, j=1}^N (1 - t x_i y_j)  \right)  \det \left[ \frac{(1-t) x_i y_j}{(1-x_i y_j)(1- t x_i y_j)} \right]_{i,j=1}^N$$
is a polynomial which is anti-symmetric in $(x_1,\dots, x_N)$, and in $(y_1,\dots, y_N)$. Thus, it is divisible by~$\prod_{i < j} (x_i - x_j) (y_i-  y_j)$. It is easy to see that the degree of
$$\frac{\left( \prod_{i, j=1}^N (1 - x_i y_j) \right) \left(\prod_{i, j=1}^N (1 - t x_i y_j)  \right) }{ \prod_{i < j} (x_i - x_j) (y_i-  y_j)}  \det \left[ \frac{(1-t) x_i y_j}{(1-x_i y_j)(1- t x_i y_j)} \right]_{i,j=1}^N$$
in any variable is equal to $N$: Say, if we send~$x_1 \rightarrow \infty$, it grows proportionally to~$x_1^N$.

\item[(c)] Before sending~$x_N \rightarrow \frac{1}{y_N}$, we multiply the bottom row of the determinant in the RHS of~\eqref{eqn:IZK} by $(1-x_N y_N)$. Then when we set~$x_N = \frac{1}{y_N}$, the bottom row has a single nonzero entry, equal to~$1$, in the bottom right corner. Thus, the determinant reduces to the one corresponding to~$R_{N-1}(x_1,\dots, x_{N-1}; y_1,\dots, y_{N-1}; t)$. The extra pre-factors containing~$x_N$ and~$y_N$ in the numerator and denominator exactly cancel, so the entire expression evaluates to exactly~$R_{N-1}(x_1,\dots, x_{N-1}; y_1,\dots, y_{N-1}; t)$.

\item[(d)] This follows from the fact that a row of the determinant becomes $0$ if $x_i = 0$.

\end{enumerate}

\medskip

\textbf{Proof that the LHS of~\eqref{eqn:IZK} satisfies (a)-(d)}:
\begin{enumerate}
\item[(a)] Symmetry of the partition function $Z_N$ follows from the \emph{Yang--Baxter Equation}. See Theorem \ref{thm:YBE} and Lemma \ref{lem:symmetry} below.

\item[(b)] The function $Z_N(\boldsymbol{x}; \boldsymbol{y} ; t) \cdot \prod_{i, j} (1 - t x_i y_j)$ is a polynomial because the factor ${\prod_{i, j} (1 - t x_i y_j)}$ clears the denominators of vertex weights. The degree in $x_i$ is $N$ because ${(1 - t x_i y_j) \text{wt}(i, j; S)}$ has degree $1$ in $x_i$ for any $(i, j)$ and any configuration $S$.

\item[(c)] This follows from observing (see the vertex weights~\eqref{eqn:weights} and Figure~\ref{fig:vertices}) that if $x_N = 1/y_N$, then the vertex at $(N, N)$ must have configuration $c_1$. This implies that each vertex at~$(i, N)$ for $i < N$ has type~$a_1$, and each vertex $(N, j)$ for $j < N$ has type~$a_2$. Thus, each configuration with nonzero weight is determined by the configuration of paths in the square with top right corner $(N-1, N-1)$, and furthermore the weight of the configuration is equal to the weight of the vertices in this smaller square.

\item[(d)] By symmetry it suffices to prove that~$Z_N = 0$ if~$x_N= 0$. This can be deduced by observing that at least one vertex in the right-most column must have type~$c_1$, and this weight vanishes in the right-most column if $x_N = 0$. \qedhere

\end{enumerate}

\end{enumerate}

\end{proof}

\begin{remark}
The proof above is a relatively straightforward verification. But how would one \emph{derive} such a formula for the partition function? The symmetry relations and recursive formulas for the partition function at special values of~$(x, y)$ provide a set of conditions which must be satisfied by any expression for the partition function. However, it is not obvious how to arrive at the exact formulation of the solution in terms of a determinant. Historically, the purpose of Korepin's original paper~\cite{KorepinNorms} was to compute norms of Bethe Ansatz eigenfunctions, and a similar recurrence is introduced there. The recurrence was solved in terms of a determinant in~\cite{IKDet}. It seems that \cite{IKDet} was an educated guess inspired by frequent appearances of the determinants in theoretical physics literature of that era.
\end{remark}

Now we discuss
the~\emph{Yang--Baxter Equation}
 entering into the proof of the
 symmetry of ${Z_N(x_1,\dots, x_N; y_1, \dots, y_N; t)}$.
 The Yang--Baxter Equation is responsible
  for the \emph{quantum integrability} of
  the six-vertex model. The Yang--Baxter equation first appeared in works of
  McGuire \cite{mcguire1964study}, Brezin--Zinn-Justin \cite{Brezin_Zinn_Justin_1966}, and Yang
  \cite{yang1967some}  in their
  study of a certain quantum mechanical many body problem, and ultimately, Baxter utilized similar relations in his study of the eight vertex model in 1972~ \cite{baxter2007exactly}. A further study of the algebraic structure underlying the methods of the aforementioned works led to the introduction and study of quantum groups. See the surveys~\cite{MR1338602, reshetikhin2010lectures,Lamers_2015} and references therein for a more detailed account of quantum integrability and the Yang--Baxter equation.

The Yang--Baxter Equation is a collection of equalities of partition functions for domains containing three vertices, one for each boundary condition, which can be stated graphically as shown in Figure~\ref{fig:ybe}.

\begin{figure}[t]
\includegraphics[scale=1]{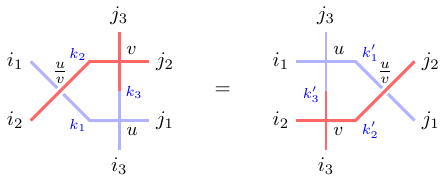}
\caption{Graphical representation of the Yang--Baxter equation.} \label{fig:ybe}
\end{figure}

In Figure~\ref{fig:ybe}, the lattice on each side represents a partition function, with weight functions~$w_u, w_v$ at the vertices labeled in the figure with~$u$,~$v$, respectively. The weights at the ``cross vertex'' are defined using~\eqref{eqn:weights} by
$$X_{u,v}
	\Biggl(
		\scalebox{.5}{\begin{tikzpicture}
		[scale=1.5, baseline=16.5pt]
		\draw[line width=3] (0,0)--++(1,1);
		\draw[line width=3] (0,1)--++(1,-1);
		\node[left] at (0,0) {\LARGE$i_1$};
		\node[left] at (0,1) {\LARGE$j_1$};
		\node[right] at (1,0) {\LARGE$j_2$};
		\node[right] at (1,1) {\LARGE$i_2$};
		\end{tikzpicture}}
	\Biggr) = w_{u/v} (i_1,j_1; i_2,j_2).$$
The labels~$i_l, j_l$, where $l \in \{1,2,3\}$, are an arbitrary fixed boundary condition, and $k_l, k_l'$, where $l \in \{1,2,3\}$, are the indicator variables of internal edges, which we sum over to get the partition function. Note that the two spectral parameters $u$ and $v$ are swapped when going from the left hand side to the right hand side. Using this notation, the Yang--Baxter Equation can be stated as follows.

\begin{figure}[t]
\centering
\includegraphics[width=0.8\linewidth]{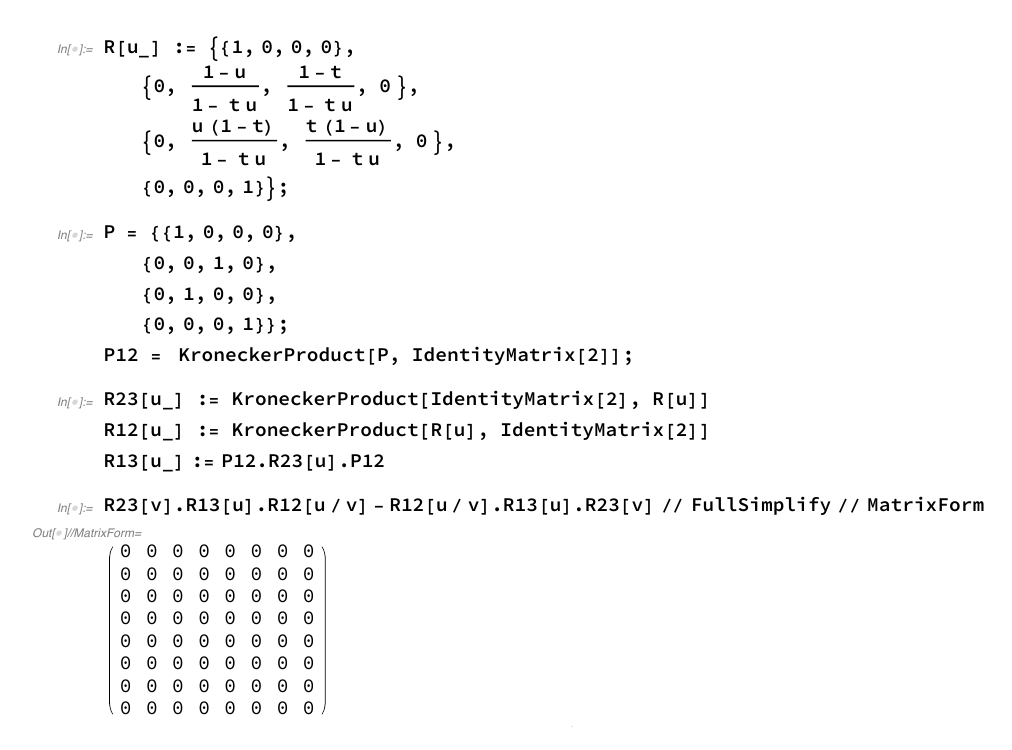}
\caption{Code to check the Yang--Baxter equation in Mathematica.}
\label{fig:YBEcode}
\end{figure}

\begin{theorem}[Yang--Baxter Equation]\label{thm:YBE}
For each $i_1,i_2,i_3, j_1,j_2,j_3 \in \{0,1\}$,
\begin{align}
\notag \sum_{k_1,k_2,k_3} & w_{u/v} (i_2,i_1; k_2,k_1) w_u(i_3, k_1; k_3, j_1) w_v(k_3, k_2; j_3, j_2) \\
&= \sum_{k_1',k_2',k_3'}  w_v(i_3, i_2; k_3', k'_2) w_u(k_3', i_1; j_3, k_1') w_{u/v} (k_2',k_1'; j_2,j_1) .\label{eqn:YBE}
\end{align}
\end{theorem}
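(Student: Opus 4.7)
The plan is to verify the Yang--Baxter equation by direct case analysis, exploiting the sparse combinatorial structure of the weights \eqref{eqn:weights}. The crucial observation is the arrow-conservation rule visible in Figure~\ref{fig:vertices}: each of the six allowed local configurations satisfies $i_1 + j_1 = i_2 + j_2$, so $w_u(i_1, j_1; i_2, j_2) = 0$ whenever this identity fails. Applying this rule at all three vertices on each side of \eqref{eqn:YBE} and adding the resulting identities, the internal variables $k_l$ (respectively $k_l'$) cancel, and both sides of \eqref{eqn:YBE} vanish unless $i_1 + i_2 + i_3 = j_1 + j_2 + j_3$. This reduces the verification to the $\binom{3}{0}^2 + \binom{3}{1}^2 + \binom{3}{2}^2 + \binom{3}{3}^2 = 20$ boundary tuples with matching total arrow count.

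Within each such tuple, the three vertex-level conservation laws (two of which are independent given the global one) determine two of the three internal variables in terms of the third, so on each side of \eqref{eqn:YBE} the non-vanishing summands correspond to at most two values of $k_1 \in \{0, 1\}$. Each of the 20 cases therefore reduces to an equality between two sums, each of at most two triple products of weights. Substituting the explicit rational functions \eqref{eqn:weights} and clearing the common denominator $(1 - t u/v)(1 - t u)(1 - t v)$ converts every such equality into a polynomial identity in $u$, $v$, $t$, which is checked by direct expansion.

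The main obstacle is the sheer volume of bookkeeping rather than any conceptual difficulty. Several cases are immediate: the all-$0$ and all-$1$ boundaries force both sides to equal $a_1^3 = 1$ and $a_2^3 = 1$ respectively, while others yield single-term equalities in which the two sides are manifestly the same triple product of weights (up to the order of factors). The genuinely non-trivial cases each collapse to a short two-term identity among the rational functions \eqref{eqn:weights}; these can be verified by hand but, as indicated by Figure~\ref{fig:YBEcode}, are most cleanly dispatched in a few lines of computer algebra, which is the expedient route I would take in a written exposition.
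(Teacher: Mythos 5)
Your proposal is correct and follows essentially the same route as the paper, which also proves the Yang--Baxter equation by a direct check of all $2^6$ boundary cases (working one nontrivial case by hand and delegating the rest to computer algebra, cf.\ Figure~\ref{fig:YBEcode}). Your use of the arrow-conservation rule $i_1+i_2+i_3=j_1+j_2+j_3$ to cut the $64$ cases down to $20$ with at most two-term sums is a tidy organizational refinement, but not a different argument.
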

\begin{proof}
This is a direct check for all $64 = 2^6$ cases. As an example, we may explicitly check the following equation for~$(i_1,i_2, i_3, j_1, j_2, j_3) = (0, 1,0, 1,0,0)$:

\begin{center}
\begin{tikzpicture}[baseline=(current bounding box.center)]
\draw (0.75,2)--(1.4,2);
\draw (-.4,2)--(0,1.5)--(0.35,2)--(.75,2);
\draw[very thick] (-.4,1)--(0,1.5)--(0.35,1)--(1.5,1);
\draw (1,0.6)--(1,1)--(1.4,1);
\draw (1,0.6)--(1,2.4);
\end{tikzpicture}
$=$
\begin{tikzpicture}[baseline=(current bounding box.center)]
\draw (0,2)--(1,2)--(1.4,1.5)--(1.8,2);
\draw[very thick] (0,1)--(1,1)--(1.4,1.5)--(1.8,1);
\draw (1,1)--(1.4,1.5)--(1.8,1);
\draw (.4,0.6)--(0.4,2.4);
\end{tikzpicture}
$+$
\begin{tikzpicture}[baseline=(current bounding box.center)]
\draw (0.4,0.6)--(0.4,2.4);
\draw (0,2)--(1,2)--(1.4,1.5)--(1.8,2);
\draw[very thick] (0,1)--(0.4,1)--(0.4,2)--(1,2)--(1.8,1);
\draw (0.4,1)--(1,1)--(1.4,1.5);
\end{tikzpicture}
\end{center}

Above, each diagram represents a path configuration whose weight can be computed using the vertex weights shown in Figure~\ref{fig:ybe}. In this case, the Yang--Baxter Equation reads
\begin{equation}
\frac{u/v (1-t)}{1- t u/v} \frac{t (1-u)}{1- t u} =
 \frac{t (1-v)}{1- t v} \frac{u/v (1-t)}{1- t u/v} + \frac{1-t}{1- t v} \frac{u(1-t)}{1- t u} \frac{t (1- u/v)}{1 - t u/v}
\end{equation}
and this identity of rational functions is indeed readily checked to hold.
\end{proof}

\begin{remark}
Another way to write the Yang--Baxter Equation is
\begin{align}\label{eqn:YBE_mat}
R_{2 3}(v) R_{1 3}(u) R_{1 2}(u/v) = R_{1 2}(u/v) R_{1 3}(u) R_{2 3}(v) .
\end{align}
The above is an equality of operators in $(\mathbb{C}^2)^{\otimes 3}$, and~\eqref{eqn:YBE} identifies their~$(2^3)^2$ matrix elements.

The operators are described as follows. Let $e_0, e_1$ be the standard basis of $\mathbb{C}^2$; these basis vectors correspond to ``no path" and ``path", respectively. First we define a family of operators $R(u) : (\mathbb{C}^2)^{\otimes 2} \rightarrow (\mathbb{C}^2)^{\otimes 2}$, which in the basis $\{e_0 \otimes e_0, e_0 \otimes e_1,  e_1 \otimes e_0, e_1 \otimes e_1\}$ act as
\begin{align*}
R(u) =
\begin{pmatrix}
1 & 0 & 0 & 0 \\
0 & \frac{1-u}{1- t u} & \frac{1-t}{1 - t u} & 0 \\
0 & \frac{u (1-t) }{1 - t u} & \frac{t (1-u)}{1 - t u} & 0 \\
0 & 0 & 0 & 1
\end{pmatrix} .
\end{align*}
Then for $l \neq p \in \{1,2,3\}$, define $R_{l p} : (\mathbb{C}^2)^{\otimes 3} \rightarrow (\mathbb{C}^2)^{\otimes 3}$ to act by $R$ in factors $l$ and $p$ of the tensor product and identically in the other factor. Figure~\ref{fig:YBEcode} gives code to check Theorem~\ref{thm:YBE} in R-notations.

\end{remark}

We can now deduce the symmetry claimed in the proof of Theorem~\ref{thm:IK}.
\begin{lemma}\label{lem:symmetry}
Let $Z(x_1,\dots, x_N; y_1, \dots, y_N; t)$ denote the partition function of the six-vertex model with DWBC. Then
$$Z(x_1,\dots, x_N; y_1, \dots, y_N; t) = Z(x_1,\dots, x_N; y_1, \dots, y_{i-1},y_{i+1}, y_i,\dots, y_N; t)$$
and
$$Z(x_1,\dots, x_N; y_1, \dots, y_N; t) = Z(x_1,\dots, x_{i-1}, x_{i+1}, x_i, \dots, x_N; y_1, \dots, y_N; t).$$
\end{lemma}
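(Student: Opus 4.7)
The plan is to deduce both symmetries from a single ``railway'' argument based on the Yang--Baxter equation (Theorem~\ref{thm:YBE}). Because the six-vertex model with DWBC is invariant under reflection across the NW--SE diagonal, which transposes rows with columns and swaps the roles of the $x$'s and $y$'s, it suffices to treat the case of adjacent $y$-variables; the argument for $x_i \leftrightarrow x_{i+1}$ is identical after rotating the picture by $90^\circ$, attaching the auxiliary cross vertex to the bottom or top of two neighboring columns rather than to the side of two neighboring rows.

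To swap $y_i$ and $y_{i+1}$, I would first attach to the immediate left of rows $i$ and $i+1$ an auxiliary ``cross vertex'' with spectral parameter $y_i/y_{i+1}$ in the sense of Theorem~\ref{thm:YBE}. The crucial feature of the multiplicative parameterization $u = x_k y_j$ is that the ratio of the row spectral parameters between rows $i$ and $i+1$ at any column $k$ equals $y_i/y_{i+1}$, independent of $k$, so the same cross vertex applies uniformly to every column. Under DWBC, both horizontal edges feeding into this cross from the far left are forced to carry label $0$; the only non-vanishing local configuration compatible with inputs $(0,0)$ is then $(0,0;0,0)$, which has weight $w_u(0,0;0,0)=a_1=1$. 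Consequently, the partition function of the augmented lattice equals the original $Z_N(y_1,\dots,y_N;x_1,\dots,x_N;t)$.

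Next, I would invoke identity~\eqref{eqn:YBE} to propagate the cross one column at a time from the leftmost to the rightmost position across the two rows. Each application replaces the local arrangement \emph{cross then pair of column vertices} on the left by \emph{pair of column vertices (with their row spectral parameters swapped) then cross} on the right, summed over internal edge labels; by Theorem~\ref{thm:YBE} this preserves the total partition function. After $N$ such moves, the cross sits just to the right of the lattice, while rows $i$ and $i+1$ now carry the row parameters $y_{i+1}$ and $y_i$, respectively. On the right boundary of DWBC, both horizontal edges exiting the cross carry label $1$, so only the configuration $(1,1;1,1)$ with weight $w_u(1,1;1,1)=a_2=1$ contributes, and detaching this trivial cross yields the DWBC partition function with $y_i$ and $y_{i+1}$ interchanged. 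The substance of the argument is entirely packaged in the Yang--Baxter equation; the only ``obstacle'' is the bookkeeping verification that the cross is genuinely trivial at both ends of its journey and that its spectral parameter is column-independent, both of which follow directly from the form of the weights in~\eqref{eqn:weights} together with the DWBC boundary data.
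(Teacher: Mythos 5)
Your proof is correct and follows essentially the same route as the paper: attach a cross vertex carrying spectral parameter $y_i/y_{i+1}$ at the left boundary (where it is forced into the empty configuration of weight $a_1=1$), drag it across the lattice column by column using the Yang--Baxter equation, and peel it off at the right boundary where it is forced into the fully occupied configuration of weight $a_2=1$; the $x$-symmetry is handled by the same dragging argument in the vertical direction. One small caution: your opening appeal to a reflection symmetry of DWBC across the NW--SE diagonal is not needed and is a bit delicate to justify directly (the weights $b_1,b_2$ and $c_1,c_2$ are not individually invariant under transposition), but since you also give the direct rotated argument, the proof goes through regardless.
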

\begin{proof}
We prove the first equation using a graphical argument. (The second equation is proven in the same way.) First we add a cross at the left boundary of the lattice, between rows $i$ and $i+1$, which does not change the partition function since the ``empty" cross has weight $1$ according to the formulas~\eqref{eqn:weights}. Then we drag the cross past each column sequentially, and at each step by the Yang--Baxter Equation the partition function is preserved. Finally, when the cross arrives at the right boundary we have swapped $y_i$ with $y_{i+1}$, and we have a ``fully occupied" cross which also has weight $1$ by~\eqref{eqn:weights}, so it can be removed. Figure~\ref{fig:graphical_arg} below illustrates this argument for~$N = 4$ and~$i=2$.

\begin{figure}[hbt!]
\centering
\includegraphics[scale=.7]{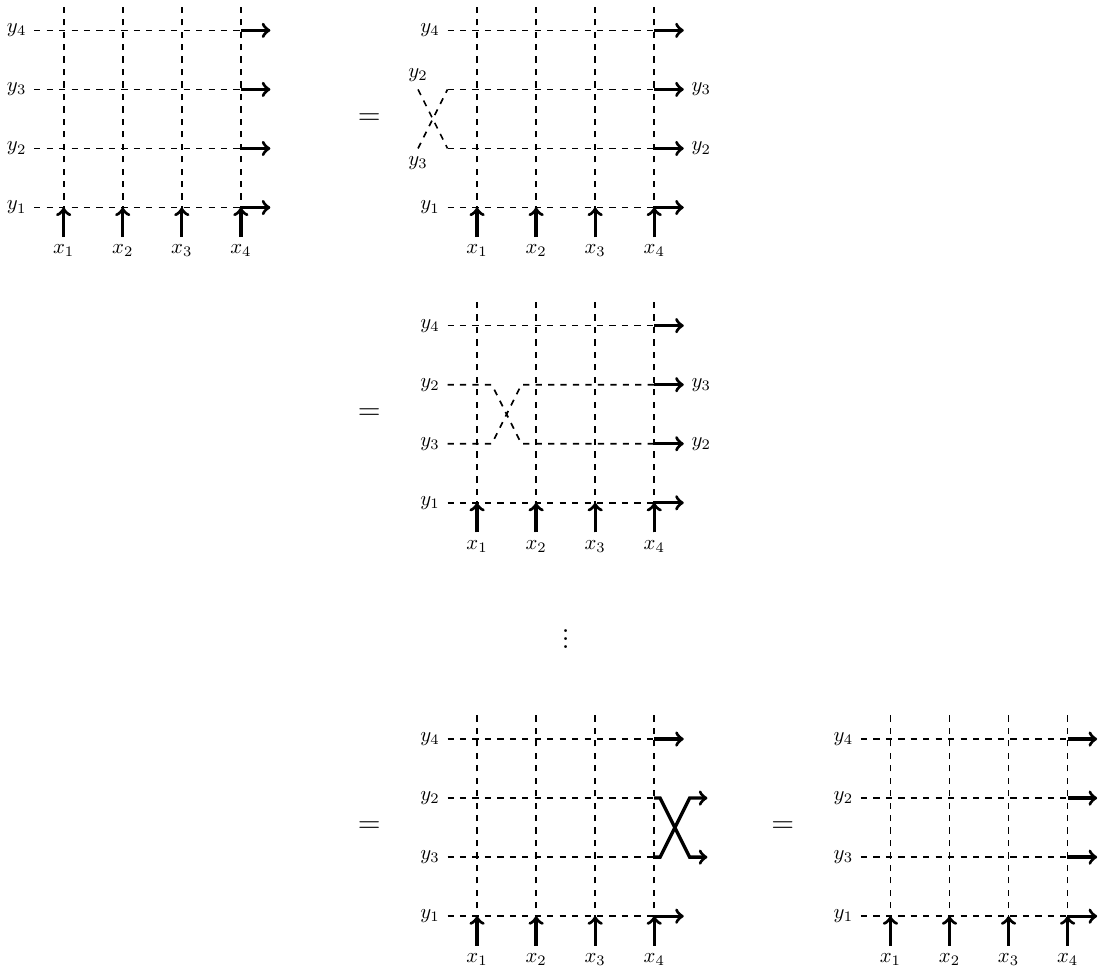}
\caption{Moving a cross from left to right in the proof of Lemma~\ref{lem:symmetry}.}
\label{fig:graphical_arg}
\end{figure}
In Figure~\ref{fig:graphical_arg}, each diagram represents a partition function with given boundary conditions. Furthermore, at each step the cross vertex is equipped with spectral parameter~$\frac{u}{v} = \frac{y_2}{y_3}$, which allows us to employ the Yang--Baxter Equation.

\end{proof}

\section{GUE-Corners Process in 6v model with DWBC at $\Delta = 0$}

\label{Section_GUE}

We use the weights \eqref{eqn:weights} (see also Figure \ref{fig:vertices}) with $t=-1$ throughout this section, implying that $\Delta= \frac{a_1 a_2 + b_1 b_2 - c_1 c_2}{2 \sqrt{a_1 a_2 b_1 b_2}}=0$.
Our goal is to prove that on the first several rows of a random six-vertex
configuration with DWBC on an~$N \times N$ square, the positions of~\ur vertices
are asymptotically described by the GUE corners process, which is the distribution of eigenvalues of ``top left''~$k \times k$ sub-matrices of a GUE random matrix.

The parameter $u\in\mathbb C$ in the weights \eqref{eqn:weights} is chosen to be any complex number satisfying $|u|=1$ with $u\neq -1$. We note that the ratios $\frac{a_1 a_2}{b_1 b_2}$ and $\frac{a_1 a_2}{c_1 c_2}$ are positive real numbers and, therefore, the weights \eqref{eqn:weights} define a probability measure with positive real probabilities (because by Lemma \ref{lem:Gauge} they can be transformed into positive weights). We denote by $\mathbb{P}_N$ the corresponding probability distribution on states for the six-vertex model with domain wall boundary conditions in the $N \times N$ square. Here is the main theorem of this section.

\begin{theorem}\label{thm:GUE_fluct}
Using the weights \eqref{eqn:weights}, suppose $t=-1$, $|u|=1$, $u\ne -1$ and consider a random configuration with DWBC in $N\times N$ square. Let $\xi_i^j$ be the position of the $i^{th}$ vertex of the form \ur (this is a $c_1$ vertex) in the $j^{th}$ row. Also, let~$\{\lambda_i^j\} $ denote the GUE corners process defined in Section~\ref{subsec:ex}. Then as $N \rightarrow \infty$, with $\gamma = \frac{(1-u)(1-u^{-1})}{4}$, we have:
\begin{enumerate}
\item $\displaystyle \mathbb{P}_N\bigl(\# (\text{\ur vertices in $j^{th}$ row }) = j\bigr) \rightarrow 1 $,
\item $\displaystyle  \left\{ \frac{\xi_i^j - \gamma N}{\sqrt{\gamma (1-\gamma)} \sqrt{N}} \right\}_{1 \leq i \leq j} \stackrel{d}{\rightarrow}  \{\lambda_i^j\}_{1 \leq i \leq j} $ in finite-dimensional distributions.
\end{enumerate}
\end{theorem}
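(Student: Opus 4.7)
The plan is to express the joint law of the positions $\{\xi_i^j\}_{1 \leq i \leq j \leq k}$ as a determinantal point process built from the Izergin-Korepin determinant at $t = -1$, and to match its correlation kernel with the extended Hermite kernel of the GUE-corners process via saddle point analysis. For the combinatorial encoding, let $\mathcal{S}_j \subset \{1,\dots,N\}$ denote the set of columns whose vertical edge between rows $j$ and $j+1$ is occupied; then $\mathcal{S}_0 = \{1,\dots,N\}$ and $|\mathcal{S}_j| = N-j$ by path counting (each row loses exactly one path to the eastern boundary). A direct case check of the six vertex types shows that the $c_1$ vertices in row $j$ sit exactly at the columns of $\mathcal{S}_{j-1} \setminus \mathcal{S}_j$; hence the observable of interest is the point configuration $\bigsqcup_{j=1}^k (\mathcal{S}_{j-1} \setminus \mathcal{S}_j) \times \{j\}$ in $\{1,\dots,N\} \times \{1,\dots,k\}$. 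By the Gibbs property, fixing the profile $(\mathcal{S}_1,\dots,\mathcal{S}_k)$ cuts the $N \times N$ domain into $k+1$ horizontal strips, and the profile probability becomes a ratio of the corresponding strip partition functions.

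By the inhomogeneous Izergin-Korepin formula (Theorem \ref{thm:IK}), each such partition function is a determinant. At $t = -1$ every IK matrix entry factors as $\frac{2xy}{(1-xy)(1+xy)} = \frac{1}{1-xy} - \frac{1}{1+xy}$, which is the algebraic signature of the free fermionic structure. Combining the resulting ratio with a Lindström-Gessel-Viennot-style minor-summation identity, I would collapse the sum over profiles $(\mathcal{S}_1,\dots,\mathcal{S}_k)$ into correlation functions that are minors of a single explicit kernel $K_N(m,j;m',j')$ admitting a double contour integral representation.

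With all $x_i y_j$ specialized to the common value $u$, the exponential part of the kernel integrand takes the form $(z/w)^m (h(z)/h(w))^N$ for an explicit rational function $h$ depending on $u$. Solving the saddle point equation locates a double critical point, from which the drift $\gamma = \frac{(1-u)(1-u^{-1})}{4}$ arises naturally; rescaling $m = \gamma N + \sqrt{\gamma(1-\gamma)N}\,\tilde m$ and deforming to steepest-descent contours produces convergence of $K_N$ to the extended Hermite kernel. Standard convergence-of-determinantal-processes arguments then yield part (2), while part (1) follows because the limit process has exactly $j$ points on level $j$ almost surely, combined with a uniform tail bound on $K_N$ to rule out extra prelimit particles escaping.

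The main obstacle is the second step --- the clean extraction of a determinantal kernel from the ratio of IK determinants. The Cauchy-like algebra at $t = -1$ should make the required minor-summation identity available, but one must carefully handle the non-DWBC boundary data on the horizontal cuts of the intermediate partition functions and organize the resulting algebra to produce a kernel whose contour-integral form is genuinely amenable to steepest descent.
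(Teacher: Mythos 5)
Your plan is a genuinely different route from the one in the paper, but as written it contains a gap that is not cosmetic. The paper never attempts to extract a correlation kernel for the array $\{\xi_i^j\}$. Instead it uses the Izergin--Korepin determinant in a much weaker way: by perturbing the spectral parameters $y_1,\dots,y_k$ in the first $k$ rows and dividing by the unperturbed partition function, one obtains the joint Laplace transform of the per-row $b$-vertex counts. A Cauchy-determinant simplification of the IK formula at $t=-1$ makes this explicit, and a Taylor expansion shows that the centered, rescaled row sums $\eta_1,\dots,\eta_k$ are asymptotically i.i.d. Gaussian. These are exactly the successive differences $\sum_i \Xi_i^j - \sum_i \Xi_i^{j-1}$ of the interlacing array. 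Combined with tightness and the asymptotic \emph{uniform Gibbs property} (conditional on the top row, the interior of the array becomes uniform among interlacing configurations), this forces any subsequential limit to be the law of corner eigenvalue processes of a unitarily invariant random Hermitian matrix with i.i.d.\ Gaussian diagonal entries, which a Fourier-transform computation identifies as GUE. In other words, the paper's proof deliberately bypasses any steepest-descent kernel analysis.

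The specific obstruction in your plan is the following: once you condition on the occupation profile $(\mathcal{S}_1,\dots,\mathcal{S}_k)$, the bottom strip, the middle strips, and the top strip no longer carry domain wall boundary conditions, so the Izergin--Korepin formula, which characterizes the DWBC partition function and nothing else, does not give you a determinant for any of these pieces. You flag this as a thing to be ``carefully handled,'' but it is actually where the argument lives: there is no IK-type formula for the strip partition functions, and the ``clean extraction'' you hope for from a ratio of IK determinants is not available. What \emph{is} available at $t=-1$ is the free-fermion structure of the configurations themselves (via the Aztec diamond / non-intersecting path correspondence), and the determinantal kernel you are after comes from LGV + Eynard--Mehta applied directly to the path picture, not from the IK determinant. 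That is precisely Johansson--Nordenstam's original argument, and once you adopt it the steepest-descent convergence to the extended Hermite kernel is routine. But then the role you assign to Theorem~\ref{thm:IK} evaporates. So either (a) adopt the path/LGV route, which works but is a known proof and drops the IK determinant, or (b) abandon the kernel extraction and use the IK determinant the way the paper does --- for Laplace transforms of row observables --- supplemented by the uniform Gibbs characterization of the GUE corners process. As stated, your proposal straddles these two and the middle step does not go through.
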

Setting~$u = \i$ gives Theorem \ref{thm:JNeig} of Section~\ref{subsec:ex}. The proof of Theorem~\ref{thm:GUE_fluct} is presented in a series of lemmas and occupies the rest of this section. Our analysis crucially uses a simplification of the determinant of Theorem \ref{thm:IK} in the $t=-1$ case. This is based on a computation known as the Cauchy determinant formula.

\begin{lemma}[Cauchy determinant] \label{Lem:Cauchy_det}
We have the following equality of rational functions in variables~$a_1, \dots, a_N, b_1,\dots, b_N$:
$$
\det\left( \frac{1}{a_j-b_i} \right)_{i,j  = 1}^N = \frac{ \prod_{i < j} (a_i - a_j) (b_j - b_i)}{\prod_{i, j} (a_i - b_j)}.
$$
\end{lemma}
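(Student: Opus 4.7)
The plan is to establish the Cauchy identity by the same polynomial-identity strategy used in the proof of the Izergin-Korepin determinant (Theorem~\ref{thm:IK}): observe that both sides are rational functions sharing a common denominator, reduce to a polynomial equation in the $2N$ variables, and then verify that equation via divisibility, a degree comparison, and a single specialization.

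My first step would be to multiply both sides by $\prod_{i,j=1}^{N}(a_i - b_j)$. The right-hand side becomes the polynomial $\prod_{i<j}(a_i - a_j)(b_j - b_i)$, and expanding the determinant via the Leibniz formula shows that each summand $\mathrm{sgn}(\sigma)\prod_i (a_{\sigma(i)} - b_i)^{-1}$ has denominator dividing $\prod_{i,j}(a_i - b_j)$, so the left-hand side also becomes a polynomial, call it $P(a,b)$.

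Next I would observe that $P$ vanishes whenever $a_i = a_j$ for $i \neq j$ (the corresponding columns of the matrix coincide) and whenever $b_i = b_j$ (rows coincide). Because these loci are disjoint from the pole set $\{a_i = b_j\}$ that we just cleared, $P$ itself is divisible by $\prod_{i<j}(a_i - a_j)(b_j - b_i)$. To bound degrees, note that $a_j$ appears only in column $j$ of the original matrix; after multiplying that column by $\prod_k (a_j - b_k)$, each of its entries becomes a polynomial of degree $N - 1$ in $a_j$, which forces $\deg_{a_j} P \leq N-1$. The symmetric row-side argument gives $\deg_{b_i} P \leq N-1$. These bounds are already saturated by the proposed divisor $\prod_{i<j}(a_i - a_j)(b_j - b_i)$, so $P = c \cdot \prod_{i<j}(a_i - a_j)(b_j - b_i)$ for some constant $c$.

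Finally, to pin down $c = 1$, I would substitute $a_i = b_i + \epsilon$ and examine the leading behavior as $\epsilon \to 0$. On the left, the matrix entries are $\epsilon^{-1}$ on the diagonal and finite off the diagonal, so $\epsilon^N \det\left(\frac{1}{a_j - b_i}\right) \to \det(I) = 1$. On the right, the identifications $a_i - a_j = b_i - b_j$ and $a_i - b_i = \epsilon$ let one compute $\epsilon^N \cdot \mathrm{RHS} \to 1$ as well, so $c = 1$. The only step requiring real care is tracking the $(-1)^{\binom{N}{2}}$ signs from the Vandermonde-type factors in this last limit; I expect this bookkeeping, rather than any conceptual issue, to be the main point to watch, and an alternative route via induction on $N$ using row/column operations is always available as a safety net.
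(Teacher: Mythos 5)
Your proof is correct, but it follows a different route than the paper's. The paper argues by induction on $N$: it compares $\prod_{i,j}(a_i-b_j)\det\left(\frac{1}{a_j-b_i}\right)$ with $\prod_{i<j}(a_i-a_j)(b_j-b_i)$ as polynomials of degree $N-1$ in the single variable $a_N$, notes that both vanish at $a_N=a_i$ for $i<N$, and matches their values at $a_N=b_N$ using the inductive hypothesis, so that agreement at $N$ points forces equality. You instead give a global (non-inductive) argument: clear denominators, use the vanishing on the hyperplanes $a_i=a_j$ and $b_i=b_j$ to get divisibility by the full Vandermonde-type product, bound the degree in each variable by $N-1$ to conclude the quotient is a constant, and evaluate the constant via the specialization $a_i=b_i+\epsilon$, where both sides rescaled by $\epsilon^N$ tend to $1$ (the signs from $\prod_{i\ne j}(b_i-b_j)=\prod_{i<j}(b_i-b_j)(b_j-b_i)$ cancel cleanly, so the bookkeeping you flagged is harmless). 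This is essentially the alternative the paper itself mentions in the remark following the lemma (``use skew-symmetry \dots to deduce that it is divisible''). One small imprecision: the hyperplane $a_i=a_j$ is not literally disjoint from the pole set $\{a_k=b_l\}$; what you need, and what is true, is that it is not \emph{contained} in it, so the vanishing of the determinant on a dense subset of that hyperplane forces $P$ to vanish on all of it, giving divisibility by $(a_i-a_j)$. Your route buys a self-contained one-shot argument with an explicit normalization check; the paper's induction trades the divisibility and constant-evaluation steps for a shorter pointwise comparison that dovetails with the recursive structure used later for the Izergin--Korepin determinant.
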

\begin{proof}
We proceed by induction. The base case $N=1$ is clear.

For the induction step, we compare two expressions
\begin{equation}\label{eqn:Cdet1}
\prod_{i, j} (a_i - b_j) \det\left( \frac{1}{a_j-b_i} \right)_{i,j  = 1}^N \qquad \text{ and }\qquad  \prod_{i < j} (a_i - a_j) (b_j - b_i) .\end{equation}
Viewing both as polynomials in $a_N$ and fixing generic values for all of the other variables, we note that both
sides are degree $N-1$ polynomials in variable $a_N$. For each side there are $N-1$ zeros at $a_N = a_i$ for $i = 1,\dots, N-1$ (for the left expression this is because of the vanishing of a determinant with two equal rows).
Further, using the induction assumption and multiplying the last row of the matrix under the determinant in the left expression by $(a_N-b_N)$, we conclude that both sides have equal value at $a_N = b_N$. This value is given by
$$\prod_{i < N} (a_i - b_N) (b_N - b_i) \cdot \prod_{i, j=1}^{N-1} (a_i - b_j) \det\left( \frac{1}{a_j-b_i} \right)_{i,j  = 1}^{N-1}.$$
Two degree $(N-1)$ polynomials in $a_N$ with $N$ equal values must coincide, hence, two expressions in \eqref{eqn:Cdet1} are the same.
\end{proof}

\begin{remark}
There are many other proofs of this formula. For example, one can use skew-symmetry of the first expression in~\eqref{eqn:Cdet1} to deduce that it is divisible by the second one.
\end{remark}

The next step is to define auxiliary random variables useful for proving the convergence to the GUE corners process.

From here on we call a vertex of type~$b_1$ or~$b_2$ a ``$b$ vertex", and similarly we call a vertex of type~$c_1$ or~$c_2$ a ``$c$ vertex''. For $j = 1,\dots, N$, define a random variable $\eta_j$ by the property that
\begin{equation}
\label{eq_b_vertices}
\# \text{($b$ vertices in row $j$) }= \gamma N + \sqrt{N}  \eta_j.
\end{equation}
\begin{lemma}\label{lem:eta_conv}
For any fixed $k$ we have the convergence in distribution
$$\lim_{N\to\infty}(\eta_1, \dots, \eta_k)\stackrel{d}{=} (X_1, \dots, X_k)$$
 where~$\{X_i\}$ are independent Gaussians with mean~$0$ and variance~$\gamma (1-\gamma)$.
\end{lemma}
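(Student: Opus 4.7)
The plan is to compute the joint characteristic function $\mathbb{E}[\exp(i \sum_{j=1}^k s_j \eta_j)]$ and show it converges to $\prod_j \exp(-\gamma(1-\gamma) s_j^2/2)$, whence L\'evy's continuity theorem gives the claim. The key analytic input is the collapse of the Izergin--Korepin determinant at $t=-1$ into a pure product formula via Lemma~\ref{Lem:Cauchy_det}: pulling the factors $2 x_i y_j$ out of the determinant and applying Cauchy to $\det[1/(1 - (x_i y_j)^2)]$, all ``difference'' factors in the Izergin--Korepin prefactor cancel, yielding
\begin{equation*}
Z_N(\vec y; \vec x; -1) \;=\; 2^N \prod_i x_i \prod_j y_j \cdot \frac{\prod_{i<j}(x_i + x_j)(y_i + y_j)}{\prod_{i,j}(1 + x_i y_j)}.
\end{equation*}
This product form is the only nontrivial input I need.

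Next I perturb the row spectral parameters: take $x_i = 1$, $y_j = \tilde y_j$ for $j \le k$, and $y_j = u$ for $j > k$. Direct comparison of weights in row $j$ gives the multiplicative ratios $r_j := \tfrac{(1-\tilde y_j)(1+u)}{(1+\tilde y_j)(1-u)}$ (common to $b_1$ and $b_2$, since they are negatives of each other at $t=-1$), $s_j := \tfrac{\tilde y_j(1+u)}{u(1+\tilde y_j)}$ for $c_1$, and $t_j := \tfrac{1+u}{1+\tilde y_j}$ for $c_2$, so
\begin{equation*}
\frac{Z_N(\tilde y_1, \ldots, \tilde y_k, u, \ldots, u; \vec 1; -1)}{Z_N(u, \ldots, u; \vec 1; -1)} \;=\; \mathbb{E}\Bigl[\prod_{j=1}^k r_j^{N_b(j)}\, s_j^{c_1(j)}\, t_j^{c_2(j)}\Bigr].
\end{equation*}
Per-row path conservation under DWBC forces $c_1(j) - c_2(j) = 1$ and $a_2(j) + b_2(j) + c_1(j) = j$, so $1 \le c_1(j) \le j$ deterministically. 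When $\tilde y_j = u + O(1/\sqrt N)$, the combined factor $s_j^{c_1(j)} t_j^{c_2(j)} = (s_j t_j)^{c_1(j)}/t_j$ equals $1 + O(1/\sqrt N)$ uniformly in the random $c_1(j)$ and can be absorbed into a $o(1)$ error.

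It remains to choose $\tilde y_j$ via the implicit equation $r_j = e^{i s_j/\sqrt N}$ and expand. Solving to second order gives $\delta_j := \tilde y_j - u = -\tfrac{(1-u^2) i s_j}{2\sqrt N} + \tfrac{u(1-u^2) s_j^2}{4 N} + O(N^{-3/2})$. Taking $\log$ of the product formula and Taylor expanding in $\delta_j$, the exponents on the common prefactors $2u$ and $1+u$ cancel identically between numerator and denominator (one checks $\binom{N-k}{2} - \binom{N}{2} + k(N-k) + \binom{k}{2} = 0$ and $N^2 - N(N-k) - Nk = 0$). What remains is a linear-in-$\delta$ piece of coefficient $\Theta(N)$ and a quadratic-in-$\delta$ piece of coefficient $\Theta(N)$ acting on symmetric sums; substituting the second-order $\delta_j$ into both and applying the identities $\gamma = -(1-u)^2/(4u)$, $1-\gamma = (1+u)^2/(4u)$, and $4u^2 + (3u+1)(1-u) = (u+1)^2$ collects everything to
\begin{equation*}
\log \frac{Z_N(\tilde y_1, \ldots, \tilde y_k, u, \ldots, u; \vec 1; -1)}{Z_N(u, \ldots, u; \vec 1; -1)} \;=\; i \sqrt N\, \gamma \sum_{j=1}^k s_j \;-\; \frac{\gamma(1-\gamma)}{2} \sum_{j=1}^k s_j^2 \;+\; o(1).
\end{equation*}
Dividing by $\prod_j e^{i s_j \sqrt N \gamma}$ extracts exactly the joint characteristic function of iid $\mathcal{N}(0, \gamma(1-\gamma))$ in the limit.

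The main obstacle I anticipate is bookkeeping two cancelling contributions at order $O(1)$: the linear-in-$\delta$ expansion receives a nontrivial piece from the second-order correction $\delta_{j,2}$, which in isolation produces an imaginary term proportional to $u\gamma \sum s_j^2$; the quadratic-in-$\delta$ expansion produces another complex term; only after summing the two do the imaginary parts cancel and the clean real variance $\gamma(1-\gamma)$ survive. Apart from this sub-leading cancellation, the entire argument is a direct Taylor expansion made possible by the product formula for $Z_N$ at $t = -1$.
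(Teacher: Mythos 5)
Your proposal follows essentially the same route as the paper's proof of Lemma~\ref{lem:eta_conv} (via Lemma~\ref{lem:ind_gauss}): collapse the Izergin--Korepin determinant at $t=-1$ into a pure product via the Cauchy determinant, perturb the spectral parameters in a window of size $O(1/\sqrt N)$, recognize the ratio of partition functions as an expectation of a multiplicative functional whose exponents are per-row counts of $b$-vertices (with the $c$-vertex contribution negligible by Lemma~\ref{Ex_1}), and Taylor expand. The only cosmetic differences are that the paper substitutes $e^{\i z_j/\sqrt N}$ directly and works with the two-sided Laplace transform, while you solve the implicit equation $r_j = e^{\i s_j/\sqrt N}$ for the perturbed parameter and compute the characteristic function; these are interchangeable parameterizations of the same computation.
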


We will first prove Lemma \ref{lem:eta_conv}, and then use this and the Gibbs property satisfied by the interlacing random variables~$\xi_i^j$ to prove Theorem~\ref{thm:GUE_fluct}.

The convergence in distribution in Lemma~\ref{lem:eta_conv} follows from the convergence of two-sided Laplace transforms (see e.g.\ \cite[Chapter 6]{kallenberg_3rd_ed} for general discussion), with the latter summarized in the following statement:

\begin{lemma}
For any~$z_1,\dots, z_k \in \mathbb{C}$, we have the convergence
\begin{equation}
\lim_{N\to\infty} \mathbb{E} \Bigg[ \exp\left( \frac{- 2 \i u}{1-u^2} \sum_{i=1}^k \eta_i z_i   \right)   \Bigg] = \exp\left(\frac{1}{8}\sum_{i=1}^k z_i^2\right) . \label{eqn:lap}
\end{equation}
\label{lem:ind_gauss}
\end{lemma}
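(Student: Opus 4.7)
The strategy is to realize the joint Laplace transform as a ratio of Izergin--Korepin partition functions at perturbed spectral parameters. Fix $x_i = u$ for all $i$ and set $y_j = 1 + \delta_j$ for $j \leq k$, with $y_j = 1$ for $j > k$. A direct vertex-by-vertex comparison of weights (noting $a_1 = a_2 = 1$ regardless of the spectral parameter) gives
\[
\frac{Z_N(y; u, \ldots, u; -1)}{Z_N^{\mathrm{hom}}} \;=\; \mathbb{E}\Bigl[\prod_{j=1}^k \rho_b(y_j)^{N_b^{(j)}}\, \rho_{c_1}(y_j)^{N_{c_1}^{(j)}}\, \rho_{c_2}(y_j)^{N_{c_2}^{(j)}}\Bigr],
\]
with $\rho_b(y) = \tfrac{(1-uy)(1+u)}{(1+uy)(1-u)}$, $\rho_{c_1}(y) = \tfrac{y(1+u)}{1+uy}$, $\rho_{c_2}(y) = \tfrac{1+u}{1+uy}$, and $N_v^{(j)}$ the number of type-$v$ vertices in row $j$. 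Choose $\delta_j$ so that $\log \rho_b(y_j) = -\tfrac{2iu z_j}{(1-u^2)\sqrt{N}}$; since $\rho_b'(1) = -\tfrac{2u}{1-u^2}$, this forces $\delta_j = iz_j/\sqrt{N} + O(1/N)$. By Lemma \ref{Ex_1}, $N_{c_1}^{(j)} \leq j$ and hence $N_{c_2}^{(j)} \leq j-1$ deterministically; combined with $\log \rho_{c_m}(y_j) = O(1/\sqrt{N})$, the $c$-vertex factor contributes only $1 + O(1/\sqrt{N})$ uniformly. Consequently
\[
\mathbb{E}\Bigl[\exp\Bigl(-\tfrac{2iu}{1-u^2}\sum_j \eta_j z_j\Bigr)\Bigr] \;=\; e^{-c\gamma\sqrt{N}\sum z_j} \cdot \frac{Z_N(y)}{Z_N^{\mathrm{hom}}} \cdot (1 + o(1)),
\]
with $c := -2iu/(1-u^2)$.

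To evaluate $Z_N^{\mathrm{hom}}(y) := Z_N(y; u, \ldots, u; -1)$, I apply Theorem \ref{thm:IK}. At $t = -1$ the partial fraction identity
\[
\frac{2xy}{(1-xy)(1+xy)} \;=\; \frac{1}{1-xy} - \frac{1}{1+xy}
\]
splits each entry of the IK determinant into two; row-wise multilinearity combined with the Cauchy determinant (Lemma \ref{Lem:Cauchy_det}) expresses the result as a sum over sign patterns $\epsilon \in \{\pm 1\}^N$ of closed-form Cauchy determinants. Taking the homogeneous limit $x_i \to u$ (naively $0/0$ since $\prod_{i<j}(x_i - x_j) \to 0$) is handled by a standard Wronskian regularization in which divergences among the $2^N$ summands cancel, yielding
\[
Z_N^{\mathrm{hom}}(y_1, \ldots, y_N) \;=\; \frac{\prod_j (1-uy_j)^N}{\prod_{i<j}(y_i - y_j)} \det\bigl[y_j^{i-1}\bigl((1-uy_j)^{-i} + (-1)^i (1+uy_j)^{-i}\bigr)\bigr]_{i,j=1}^N,
\]
a symmetric rational function of the $y_j$'s.

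The final step is to show $\log[Z_N(y)/Z_N^{\mathrm{hom}}] = c\gamma\sqrt{N}\sum z_j + \tfrac{1}{8}\sum z_j^2 + o(1)$ as $N \to \infty$ with $\delta_j = iz_j/\sqrt{N}$. Via the cumulant generating function viewpoint, this reduces to verifying four facts about the homogeneous measure: (i) $\mathbb{E}[N_b^{(j)}] = \gamma N + o(\sqrt{N})$, (ii) $\mathrm{Var}(N_b^{(j)}) = \gamma(1-\gamma)N + o(N)$, (iii) $\mathrm{Cov}(N_b^{(i)}, N_b^{(j)}) = o(N)$ for $i \neq j$, and (iv) higher cumulants $\kappa_r(N_b^{(j_1)}, \ldots, N_b^{(j_r)}) = o(N^{r/2})$. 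Each reduces to computing appropriate derivatives of $\log Z_N^{\mathrm{hom}}(y)$ at $y = \mathbf{1}$ from the explicit determinantal formula above. The direct computation $\gamma = \tfrac{(1-u)(1-u^{-1})}{4}$, $(1-u^2)^2 = -4\sin^2\theta\, u^2$ for $u = e^{i\theta}$, together gives $c^2\gamma(1-\gamma) = 1/4$, turning (ii) into the requisite $z_j^2/8$ contribution. The main obstacle is executing the homogeneous-$x$ limit cleanly and verifying that the mixed second derivatives $\partial_{y_i}\partial_{y_j}\log Z_N^{\mathrm{hom}}\big|_{y = \mathbf{1}}$ for $i \neq j$ are $o(N)$; this is where the asymptotic independence of the $\eta_j$'s arises, and it relies on a careful analysis of the symmetric Wronskian-type determinant above.
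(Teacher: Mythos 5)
Your overall strategy is the right one (and matches the paper's): express the joint Laplace transform of $(\eta_1,\dots,\eta_k)$ as a ratio of Izergin--Korepin partition functions with perturbed spectral parameters in the first $k$ rows, divide out the deterministic prefactor coming from $\gamma N$, and verify that the remaining asymptotics match the Gaussian. The identification of the weight ratio $\rho_b$, the reason the $c$-vertices can be discarded via Lemma~\ref{Ex_1}, and the prefactor normalization are all correctly set up.

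However, there is a genuine gap in the computation of $Z_N^{\mathrm{hom}}(y)$, and the route you chose makes it hard to close. You propose writing each determinant entry as $\frac{1}{1-xy}-\frac{1}{1+xy}$ and expanding by row-multilinearity into $2^N$ Cauchy determinants, followed by a ``Wronskian regularization'' of the $0/0$ homogeneous limit $x_i\to u$. This branches into $2^N$ terms whose cancellation and resummation in the homogeneous limit is nontrivial; neither the claimed $N\times N$ Wronskian-type determinant formula nor the subsequent cumulant estimates (i)--(iv) are verified, and you flag this explicitly as ``the main obstacle.'' The key observation you are missing is that at $t=-1$ the full Izergin--Korepin matrix is \emph{already} a single Cauchy matrix: since $(1-x_iy_j)(1+x_iy_j)=y_j^2(y_j^{-2}-x_i^2)$, the entry $\frac{2x_iy_j}{1-x_i^2y_j^2}$ is, up to row/column prefactors, $\frac{1}{y_j^{-2}-x_i^2}$, so Lemma~\ref{Lem:Cauchy_det} (applied once, not $2^N$ times) yields the closed product
\[
Z_N(x;y;-1)=\prod_{i=1}^N(2x_iy_i)\,\frac{\prod_{i<j}(x_i+x_j)(y_i+y_j)}{\prod_{i,j}(1+x_iy_j)}.
\]
This product has no $\prod(x_i-x_j)$ in the denominator, so the homogeneous limit $x_i\equiv u$ is obtained by simple substitution---no regularization needed---and the ratio of partition functions reduces to a $k$-term Taylor expansion in $y_j=e^{iz_j/\sqrt N}$, from which the $\sqrt N$-scale drift and the $\exp(\frac18\sum z_j^2)$ variance term fall out directly. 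Without this simplification, the determinantal formula you end up with is substantially harder to analyze, and as written, the proof of the asymptotic expansion of $\log[Z_N(y)/Z_N^{\mathrm{hom}}]$ is not complete.
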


\begin{remark}\label{rmk:complete}
The statement of \eqref{eqn:lap} matches the scaling of Lemma~\ref{lem:eta_conv}, since
$$\frac{- 2 \i u}{1-u^2} = \frac{1}{2\sqrt{\gamma (1-\gamma)}} $$
and $\exp\left(\frac{1}{8}\sum_{i=1}^k z_i^2\right)$ is the (two-sided) Laplace transform of the vector of $k$ i.i.d.\ mean $0$ Gaussian random variables of variance $\frac{1}{4}$.
\end{remark}

\begin{proof}[Proof of Lemma \ref{lem:ind_gauss}]
We proceed in steps.
\begin{enumerate}[1.]
\item First, we explicitly compute the Izergin-Korepin determinant when $t = -1$. We claim that
\begin{equation}
Z(x_1,\dots, x_N; y_1,\dots, y_N; -1) = \prod_{i=1}^N (2 x_i y_i) \frac{\prod\limits_{i < j} (x_i + x_j)(y_i + y_j)}{\prod\limits_{i,j=1}^N (1+x_i y_j)}.
\label{eqn:ffdet}
\end{equation}
To see this, note that $(1- x_i y_j)(1-t x_i y_j) = 1-x_i^2 y_j^2 = y_j^2 (y_j^{-2} - x_i^2)$. Hence, we can apply the Cauchy determinant formula of Lemma \ref{Lem:Cauchy_det} to the Izergin-Korepin determinant, to obtain
\begin{equation}
\label{eq_x1}
\prod_{i=1}^N (2 x_i y_i) \frac{\prod\limits_{i, j=1}^N(1-x_i y_j)}{\prod\limits_{i < j} (x_i - x_j)(y_i - y_j)}  \frac{\prod\limits_{i < j} (x_i^2 - x_j^2)(y_j^{-2}-  y_i^{-2} )}{\prod\limits_{j=1}^N y_j^2 \prod\limits_{i,j=1}^N (y_j^{-2} - x_i^2) }.
\end{equation}
\eqref{eqn:ffdet} is an equivalent form of \eqref{eq_x1}.

\item  Next, we compute the (two-sided) Laplace transform $\mathbb{E}[e^{z_1 \eta_1 + \cdots + z_k \eta_k}]$ using Step 1. In the following formula we denote $(\underbrace{u,\dots, u}_{N\text{ times}})$ as $u^N$ and similarly $(\underbrace{1,\dots, 1}_{N\text{ times}})$  as $1^N$. For any $z_1, \dots, z_k \in \mathbb{C}$, we claim that as $N\to\infty$
\begin{align}
&\frac{Z_N(u^N; e^{i z_1/\sqrt{N}}, \dots, e^{i z_k/\sqrt{N}}, 1^{N-k}; -1)}{Z_N(u^N; 1^N; -1)} \notag \\
& =
\exp\left(\sqrt{N} \i  \frac{1-u}{2 (1+u)}(z_1 + \cdots + z_k)\right)\cdot \exp\left( - \frac{(1-u)^2}{(1+u)^2} \left(\frac{z_1^2 +\cdots+ z_k^2}{8}\right)  + o(1)\right)  \label{eqn:PFformula}.
\end{align}
In order to see this, we calculate using the formula \eqref{eqn:ffdet}. More precisely, taking $\log$ of the ratio of partition functions gives
$$N\left( \sum_{i=1}^k \log\left( \frac{1 + u}{1 + u e^{\i z_i/\sqrt{N}}} \right) + \sum_{i=1}^k  \log\left( \frac{1 + e^{\i z_i/\sqrt{N}}}{2} \right) \right) + O\left(\frac{1}{\sqrt{N}}\right).$$
Taylor expanding
$$\log\left( \frac{1 + u}{1 + u e^{\i z /\sqrt{N}}} \right) +   \log\left( \frac{1 + e^{\i z/\sqrt{N}}}{2} \right) $$
gives
$$\frac{\i (1 - u) z }{2 (1 + u)} \frac{1}{\sqrt{N}} - \frac{(1-u)^2 z^2}{8 (1 + u)^2}\frac{1}{ N} + O\left(\frac{1}{(\sqrt{N})^3}\right),$$
and this implies the result \eqref{eqn:PFformula}.

\item Now we can prove our claimed convergence of the Laplace transform of $(\eta_1, \dots, \eta_k)$.  Using $weight$ and $weight'$ to denote the local vertex weights of six-vertex configurations with parameters~$(u^N; 1^{N};-1)$ and
~$(u^N; e^{i z_1/\sqrt{N}}, \dots, e^{i z_k/\sqrt{N}}, 1^{N-k};-1)$, respectively, we have that~\eqref{eqn:PFformula} equals
\begin{align*}
\frac{\sum \prod_{(i,j)} \text{weight}'(i,j; S) }{\sum \prod_{(i,j)} \text{weight}(i,j; S)}
&=  \mathbb{E}\left[\frac{\prod_{(i,j)} \text{weight}'(i,j; S)}{\prod_{(i,j)} \text{weight}(i,j; S)}\right]
\end{align*}
where $\mathbb{E}$ means expectation over random six-vertex configurations~$S$ sampled using $weight$, i.e. using homogenous parameters $(u^N; 1^{N};-1)$. Calculating the ratio
$$\frac{\prod_{(i,j)} \text{weight}'(i,j; S)}{\prod_{(i,j)}\text{weight}(i,j; S)}$$
for any configuration, we get that~\eqref{eqn:PFformula} equals
\begin{align}
 \mathbb{E} \Bigg[ \prod_{j=1}^k & \left( \frac{1- u e^{\i z_j /\sqrt{N}}}{1- u} \cdot \frac{1 + u}{1 + e^{\i z_j / \sqrt{N}} u} \right)^{\text{(\# $b$ vertices in row $j$)} }  \label{eqn:avg}
 \left(1 + O(N^{-1/2})\right)^{\text{(\# $c$ vertices in row $j$)} } \Bigg] . 
\end{align}
By Lemma \ref{Ex_1} of Section \ref{subsec:ex}, the number of $c_1$ and $c_2$ vertices in rows $1$ through $k$ is $O(1)$, so the factor involving~$c$ vertices is irrelevant. Using this, the expression under expectation in \eqref{eqn:avg} is transformed into the product of $k$ factors of the form
\begin{align}
\exp \Bigg( \left(\frac{2 \i u z_j}{(-1 + u) (1 + u)}  \frac{1}{\sqrt{N}} + \frac{
 u (1 + u^2) z_j^2}{(-1 + u)^2 (1 + u)^2 } \frac{1}{N} \right)\left( \gamma N + \sqrt{N}  \eta_j \right) +O\left(\frac{1}{\sqrt{N}}\right) \Bigg) \label{eqn:Lap_exp}.
\end{align}

 We chose $\gamma $ exactly so that the deterministic leading order factor in \eqref{eqn:avg} coming from the leading order term $O(\sqrt{N})$ under exponent in \eqref{eqn:Lap_exp} matches with the prefactor $\exp\left( \sqrt{N}  \i  \frac{1-u}{2 (1+u)}   (z_1 + \cdots + z_k) \right)$ in~\eqref{eqn:PFformula}. Expanding and summing up the sub-leading terms in~\eqref{eqn:Lap_exp}, at the next $O(1)$ order we get
$$\exp\left(\sum_{i=1}^k \frac{2 \i \eta_i u z_i}{-1 + u^2} + \frac{\gamma (u + u^3) z_i^2}{(-1 + u)^2 (1 + u)^2}\right).$$
Comparing with~\eqref{eqn:PFformula}, after moving the deterministic factor in the sub-leading term to the other side and simplifying, we get
\begin{align*}
\E\left[\exp\left(\sum_{i=1}^k \frac{2 \i \eta_i u z_i}{-1 + u^2} \right)\right] &=
\exp\left(\sum_{i=1}^k - \frac{\gamma (u + u^3) z_i^2}{(-1 + u)^2 (1 + u)^2} - \frac{(1-u)^2 z_i^2}{8 (1 + u)^2} + O\left(\frac{1}{\sqrt{N}}\right)\right) \\
&= \exp\left(\frac{1}{8} \sum_{i=1}^k z_i^2\right)\left(1 + O\left(\frac{1}{\sqrt{N}}\right)\right).
\end{align*}

From this, one deduces~\eqref{eqn:lap}, and this concludes the proof of Lemma~\ref{lem:ind_gauss}.\qedhere
\end{enumerate}
\end{proof}

In order to prove Theorem \ref{thm:GUE_fluct} we need to make a bridge between Lemma~\ref{lem:ind_gauss} and convergence of $\{\xi^j_i\}$ to GUE eigenvalues. This is our task for the rest of the section.

\begin{figure}[t]
\centering
\includegraphics[scale=.9]{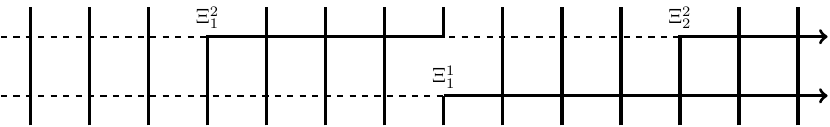}
\caption{\label{fig:dwbc_generic}The first two rows of a generic configuration with domain wall boundary conditions and random variables $\Xi_1^1$, $\Xi_1^2$, $\Xi_2^2 $.}
\end{figure}

Towards this end, we define~$\Xi_i^j$ as the position of the~$i^{th}$ \emph{empty} vertical edge between rows~$j$ and~$j+1$ in the lattice, as in Figure \ref{fig:dwbc_generic}. Due to combinatorial constraints, we always have the interlacing
$$\Xi_1^j \leq \Xi_{1}^{j-1} \leq \Xi_{2}^{j} \leq \cdots \leq  \Xi_{j-1}^{j} \leq \Xi_{j-1}^{j-1} \leq \Xi_{j}^{j}. $$
In what follows we  relate the~$\eta_j$'s to~$\{\Xi_i^j\}$, and show that our result for~$\eta_j$ implies that~$\{\Xi_i^j\}$ converges to the GUE corners process. In parallel, we relate the~$\{\Xi_i^j\}$ to~$\{\xi_i^j\}$.

Analyzing the vertices in the first row of a configuration, we have
$$\Xi_1^1 = (\# \text{b vertices in row $1$}) + 1.$$
Looking at the second row, if~$ \Xi_1^2 \leq \Xi_1^1 < \Xi_{2}^2$, then we have
$$(\Xi_1^2 + \Xi_2^2)-\Xi_1^1 = \text{(\# b vertices in row $2$) } + 2.$$
If, on the other hand,~$ \Xi_1^2 < \Xi_1^1 = \Xi_{2}^2$, then we have
$$(\Xi_1^2 + \Xi_2^2)-\Xi_1^1 =  \Xi_1^2 =    \text{(\# b vertices in row $2$) } .$$

As we will see later, the ``generic'' configurations from a randomly sampled six-vertex state are those where~$ \Xi_1^2 < \Xi_1^1 < \Xi_{2}^2$. See Figure~\ref{fig:dwbc_generic} for an example of such a configuration.

More generally, on the event that
\begin{equation}\label{eqn:event}
\Xi_i^j < \Xi_{i}^{j-1} < \Xi_{i+1}^{j}\qquad \text{ for all } i = 1,\dots, j-1,\quad j=2,\dots, k,
\end{equation}
we have
$$\# (\text{\ur vertices in $j^{th}$ row }) = j \qquad \text{ for all } j = 1,\dots, k$$
and also
$\xi_i^j = \Xi_i^j$ for all~$1 \leq i \leq j \leq k$. Furthermore, on this event we have for each~$j=1,\dots, k$
\begin{equation}\label{eqn:difs}
(\Xi_1^j +\cdots+ \Xi_j^j)-(\Xi_{1}^{j-1} +\cdots+ \Xi_{j-1}^{j-1}) = \text{(\# b vertices in row $j$) } + j .
\end{equation}
Away from the event~\eqref{eqn:event} there are~$O(1)$ corrections to the right hand side above, where the~$O(1)$ constant may depend on~$j$ but not on~$N$.

So recalling the definition of~$\eta_j$ from \eqref{eq_b_vertices}, the discussion above and Lemma~\ref{lem:eta_conv} give us the following.
\begin{lemma}\label{lem:diag_conv}
Denoting $\widetilde{\Xi}_i^j=\widetilde{\Xi}_i^j(N) \defeq \frac{\Xi_i^j - \gamma N}{\sqrt{\gamma (1-\gamma)} \sqrt{N}}$, as~$N \rightarrow \infty$, the random vector
$$\bigl(\widetilde{\Xi}_1^1,\,\, (\widetilde{\Xi}_1^2 + \widetilde{\Xi}_2^2)-\widetilde{\Xi}_1^1,\,\, \dots,\,\, (\widetilde{\Xi}_1^k +\cdots+ \widetilde{\Xi}_k^k)-(\widetilde{\Xi}_{1}^{k-1} +\cdots+\widetilde{\Xi}_{k-1}^{k-1})\bigr)$$
converges in distribution to the vector of $k$ i.i.d. mean~$0$, variance $1$ Gaussian random variables.
\end{lemma}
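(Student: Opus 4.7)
The plan is to combine the combinatorial identity \eqref{eqn:difs} relating the coordinates $\Xi_i^j$ to the $b$-vertex counts with the asymptotic Gaussian behavior of $(\eta_1,\ldots,\eta_k)$ already established in Lemma \ref{lem:eta_conv}. The key observation is that the quantities whose convergence we want to establish are essentially telescoping sums of the $\eta_j$'s up to a deterministically bounded error.

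First, I would show that for each fixed $j\in\{1,\ldots,k\}$ one has
\[
\sum_{i=1}^j \Xi_i^j \;-\; \sum_{i=1}^{j-1} \Xi_i^{j-1} \;=\; \#(b\text{ vertices in row }j) \,+\, j \,+\, \Delta_j(N),
\]
where $\Delta_j(N)$ is a random integer satisfying $|\Delta_j(N)|\le C_j$ for a deterministic constant $C_j$ that depends on $j$ (and hence on $k$) but \emph{not} on $N$. On the event \eqref{eqn:event} one has $\Delta_j(N)=0$ by \eqref{eqn:difs}, and off this event the bound follows by straightforward combinatorial bookkeeping: the identity \eqref{eqn:difs} fails only when some of the interlacing inequalities $\Xi_i^j \le \Xi_i^{j-1} \le \Xi_{i+1}^j$ degenerate to equalities, and each such coincidence modifies both sides by matching $O(1)$ amounts --- exactly as in the two cases worked out for row $2$ preceding \eqref{eqn:event} --- while there are at most $2(j-1)$ such coincidences to account for.

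Next, I would substitute $\#(b\text{ vertices in row }j) = \gamma N + \sqrt{N}\,\eta_j$ into this identity and divide through by $\sqrt{\gamma(1-\gamma)}\sqrt{N}$. Since $\sum_{i=1}^j\widetilde{\Xi}_i^j$ subtracts $j\gamma N$ in the numerator while $\sum_{i=1}^{j-1}\widetilde{\Xi}_i^{j-1}$ subtracts $(j-1)\gamma N$, the surplus $\gamma N$ cancels exactly against the $\gamma N$ coming from the $b$-vertex count, yielding
\[
\sum_{i=1}^j \widetilde{\Xi}_i^j \;-\; \sum_{i=1}^{j-1} \widetilde{\Xi}_i^{j-1} \;=\; \frac{\eta_j}{\sqrt{\gamma(1-\gamma)}} \;+\; \frac{j+\Delta_j(N)}{\sqrt{\gamma(1-\gamma)}\sqrt{N}}.
\]
The second term on the right-hand side is bounded deterministically by $(j+C_j)/(\sqrt{\gamma(1-\gamma)}\sqrt{N})=O(N^{-1/2})$, uniformly in $j\le k$.

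Finally, Lemma \ref{lem:eta_conv} gives $(\eta_1,\ldots,\eta_k)\stackrel{d}{\rightarrow}(X_1,\ldots,X_k)$ with the $X_j$ independent $\mathcal{N}(0,\gamma(1-\gamma))$, so after rescaling by $\sqrt{\gamma(1-\gamma)}$ the vector $(\eta_j/\sqrt{\gamma(1-\gamma)})_{j=1}^k$ converges in distribution to $k$ i.i.d.\ standard Gaussians. An application of Slutsky's theorem coordinatewise --- trivially valid here because the $O(N^{-1/2})$ error is even deterministic, not just convergent in probability --- then produces the claimed joint convergence. The only step requiring real care is the uniform-in-$N$ bound on $|\Delta_j(N)|$; I would establish it by a short induction on $j$ patterned after the row-$2$ computation in the paragraph preceding \eqref{eqn:event}, verifying case by case that each coincidence of interlacing indices shifts the two sides of \eqref{eqn:difs} by the same $O(1)$ amount.
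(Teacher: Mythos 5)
Your proposal is correct and follows essentially the same route the paper takes: the paper derives the exact identity \eqref{eqn:difs} on the event \eqref{eqn:event}, asserts that the corrections off this event are $O(1)$ uniformly in $N$, and then (without writing out a separate proof) presents Lemma~\ref{lem:diag_conv} as a direct consequence of Lemma~\ref{lem:eta_conv} together with this discussion. You have simply spelled out the substitution of \eqref{eq_b_vertices} into \eqref{eqn:difs}, the cancellation of the $j\gamma N$ versus $(j-1)\gamma N$ terms, and the Slutsky step, while adding a bit more justification (the bound by the number of interlacing coincidences, at most $2(j-1)$) for the $O(1)$ correction than the paper itself provides. No gaps.
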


\begin{remark}
Lemma~\ref{lem:diag_conv} is consistent with the result of Theorem \ref{thm:GUE_fluct} we are proving, as we expect that with overwhelming probability  we have $\xi_i^j = \Xi_i^j$, and on the GUE side the corresponding limiting quantities are $\lambda_1^1 = m_{1 1}$, and $(\lambda^2_1 + \lambda^2_2) - \lambda_1^1 = m_{2 2}$, and $(\lambda^3_1 + \lambda^3_2+\lambda^3_3) - (\lambda_1^2+\lambda_2^2) = m_{3 3}$, and so on, where $m_{i j}$ are the matrix entries of a GUE random matrix. This computation follows from the equivalence of two ways to compute the trace of a matrix: as the sum of diagonal matrix elements or as the sum of eigenvalues. By definition, the diagonal entries are independent Gaussians with unit variance.
\end{remark}

Our next task is to upgrade Lemma \ref{lem:diag_conv} to convergence of~$\{\widetilde{\Xi}_i^j\}$ towards the GUE corners process. This relies on using the uniform Gibbs property or conditional uniformity which we now introduce.

\begin{defn} \label{def:GT_pattern}
We call a set of $k(k+1)/2$ reals~$\{\zeta_i^j\}_{1 \leq i \leq j \leq k}$ a (continuous) \emph{Gelfand-Tsetlin pattern} if
$$ \zeta_1^j \leq \zeta_{1}^{j-1} \leq \zeta_{2}^{j} \leq \cdots \leq  \zeta_{j-1}^{j} \leq \zeta_{j-1}^{j-1} \leq \zeta_{j}^{j}, \quad \text{ for all } j = 1,\dots, k  .$$
The $k$-tuple $(\zeta_1^k,\zeta_2^k,\dots,\zeta_k^k)$ is referred to as the \emph{top row} of the Gelfand-Tsetlin pattern.
\end{defn}

\begin{defn}\label{def:gibbs}
We say a probability measure on Gelfand-Tsetlin patterns $\{\zeta_i^j\}_{1 \leq i \leq j \leq k}$ satisfies the \emph{uniform Gibbs property} if conditional on $\zeta^k=(\zeta_1^k,\zeta_2^k,\dots,\zeta_k^k)$, the distribution of $\{\zeta_i^j\}_{1 \leq i \leq j \leq k}$ is uniform among all Gelfand-Tsetlin patterns with top row $\zeta^{k}$.
\end{defn}

Now we state one more preparatory lemma which we require before we can prove Theorem~\ref{thm:GUE_fluct}.

\begin{lemma}\label{lem:tightness}
The sequence of random  Gelfand-Tsetlin patterns $\{\widetilde{\Xi}^j_i(N)\}_{1 \leq i \leq j \leq k}$ of Lemma \ref{lem:diag_conv} satisfies two properties for each fixed $k=1,2,\dots$:
\begin{itemize}
\item  The sequence~$\{\widetilde{\Xi}^j_i(N)\}_{1 \leq i \leq j \leq k}$ is a tight sequence of random vectors.
\item Any subsequential $N\to\infty$  limit in distribution of $\{\widetilde{\Xi}^j_i(N)\}_{1 \leq i \leq j \leq k}$ satisfies the uniform Gibbs property of Definition \ref{def:gibbs}.
\end{itemize}
\end{lemma}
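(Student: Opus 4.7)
My plan proceeds in two parts, matching the two claims of the lemma.

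\textbf{Tightness.} Lemma~\ref{lem:diag_conv} gives joint convergence in distribution of the rescaled row sums $\widetilde S_j := \widetilde\Xi_1^j + \cdots + \widetilde\Xi_j^j$ for $j = 1, \ldots, k$ to a Gaussian vector; in particular these sums are jointly tight. Combining this with the interlacing and the monotonicity $\widetilde\Xi_1^{j}$ decreasing in $j$ and $\widetilde\Xi_j^j$ increasing in $j$ (both consequences of the interlacing relations), I obtain the tight one-sided bounds $\widetilde\Xi_1^j \leq \widetilde\Xi_1^1 = \widetilde S_1$, $\widetilde\Xi_j^j \geq \widetilde S_1$, and $k\widetilde\Xi_1^k \leq \widetilde S_k \leq k\widetilde\Xi_k^k$. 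These still leave two missing bounds: a tight lower bound on $\widetilde\Xi_1^k$ and a tight upper bound on $\widetilde\Xi_k^k$. To extract them, I would extend the Laplace transform computation of Lemma~\ref{lem:ind_gauss} by also perturbing the column spectral parameters: setting $x_i = e^{\i w/\sqrt{N}}$ for $i \leq m$ and $x_i = 1$ for $i > m$, with $m \sim \gamma N + c\sqrt{N}$, the Izergin-Korepin formula~\eqref{eqn:ffdet} produces the joint Laplace transform of the number of $b$-vertices restricted to the first $m$ columns across rows $j = 1, \ldots, k$. Suitable choices of $(w, m)$ give tail bounds on the number of $c_1$-vertices lying to the left of column $m$ in each row, which is equivalent to tail bounds on each individual $\widetilde\Xi_i^j$.

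\textbf{Uniform Gibbs property of subsequential limits.} The measure $\mathbb{P}_N$ has the spatial Markov (Gibbs) property: conditional on the edge states along the horizontal line between rows $j$ and $j+1$ (equivalently on $\{\Xi_i^j\}$) together with the DWBC sides and bottom, the distribution of the configuration in the bottom $j \times N$ rectangle is Boltzmann with induced boundary. To upgrade this to the uniform Gibbs property on Gelfand-Tsetlin patterns in the scaling limit, I would apply the Izergin-Korepin formula to the sub-rectangle (now with top boundary prescribed by $\{\Xi_i^j\}$), obtaining an explicit expression for the conditional partition function $Z_{\mathrm{cond}}(\{\Xi_i^{j'}\}_{j' < j} \mid \{\Xi_i^j\})$ summed over six-vertex configurations consistent with a given choice of interior Gelfand-Tsetlin pattern. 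Showing that in the scaling limit this expression depends on the interior pattern only through a factor $1 + o(1)$ would yield uniformity on the interlacing polytope, as required by Definition~\ref{def:gibbs}.

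\textbf{Main obstacle.} The main obstacle is Part~2. At $t = -1$, $|u| = 1$ the vertex weights are complex with moduli that differ between types, and the conservation laws from the proof of Lemma~\ref{lem:Gauge} fix only four linear combinations of the counts $(N_1, \ldots, N_6)$, leaving two free parameters; consequently the conditional Boltzmann measure on Gelfand-Tsetlin patterns is genuinely non-uniform at finite $N$. Making rigorous the claim that uniformity emerges in the scaling limit requires delicate asymptotic analysis of the sub-rectangle partition function, which I expect to be tractable by exploiting the Cauchy-type determinantal structure of the Izergin-Korepin formula at $t = -1$ (as in~\eqref{eqn:ffdet}), where the fluctuations of $\Xi_i^{j'}$ around $\gamma N$ at scale $\sqrt{N}$ enter through Taylor expansions whose leading $\sqrt{N}$-order terms cancel out after conditioning on the top row.
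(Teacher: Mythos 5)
Your proposal misses the key structural observation that makes the paper's argument work, and both of your proposed steps have concrete gaps.

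\textbf{Uniform Gibbs property.} Your ``main obstacle'' is not actually an obstacle, and the Izergin-Korepin route you sketch is not applicable. First, the Izergin-Korepin formula applies specifically to domain wall boundary conditions; the sub-rectangle $[1,N]\times[1,j]$ with top boundary prescribed by $\{\Xi_i^j\}$ does \emph{not} have DWBC, so the formula does not compute its partition function. Second, the paper shows that no asymptotic analysis is needed for the dominant configurations: under the strict-interlacing event \eqref{eqn:event}, the configuration inside $[1,N]\times[1,k]$ is completely determined by $\{\Xi_i^j\}_{1\le i\le j\le k}$, has no $b_2$ vertices, has exactly $k(k+1)/2$ $c_1$-vertices and $k(k-1)/2$ $c_2$-vertices, and (since $a_1=a_2=1$) its weight is $b_1^{-k(k+1)/2+\sum_i \Xi_i^k}\, c_1^{k(k+1)/2}\, c_2^{k(k-1)/2}$, which depends only on the top row $(\Xi_1^k,\dots,\Xi_k^k)$. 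So the conditional law given $\Xi^k$ is \emph{exactly} uniform on the interlacing configurations satisfying \eqref{eqn:event}; the only asymptotic input needed is a counting argument showing that the $O\!\bigl(\sqrt{N}^{\,k(k-1)/2-1}\bigr)$ configurations violating \eqref{eqn:event} are outnumbered by the $\asymp \sqrt{N}^{\,k(k-1)/2}$ configurations satisfying it. The cancellations you anticipate ``after conditioning on the top row'' are not a delicate Taylor-expansion phenomenon; they are an exact combinatorial identity at $t=-1$.

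\textbf{Tightness.} Your plan to perturb column spectral parameters $x_i$ for $i\le m$ has a localization problem you do not address: changing $x_i$ for $i\le m$ alters the weight of \emph{every} vertex $(i,j)$ with $i\le m$, for all $N$ rows $j$, not only $j\le k$. The resulting ratio of partition functions is an exponential functional of a $b$-vertex count summed over the entire height of the square, which carries essentially no information about the finitely many $\Xi_i^j$ with $j\le k$. Combining row and column perturbations does not fix this, since the weight ratio still factors over all vertices in the perturbed columns. The paper instead bootstraps tightness from the uniform Gibbs property: it proceeds by induction on $k$, using the fact that if $\widetilde\Xi_1^k$ (or $\widetilde\Xi_k^k$) were of very large magnitude with probability bounded away from zero, then under the (approximately) uniform conditional law the pattern would spread, forcing some entry of level $k-1$ to also be of large magnitude, contradicting the induction hypothesis. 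The base case $k=1$ follows directly from Lemma~\ref{lem:diag_conv}. This is both simpler and actually uses the Gibbs statement you are asked to prove in the other bullet, rather than requiring independent Laplace-transform asymptotics.
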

We sketch a proof of this lemma and refer the reader to \cite{gorin2014alternating} for more details; see in particular the proof of Lemma 7 there.
\begin{proof}[Sketch of the proof of Lemma \ref{lem:tightness}]
First, we deal with the second statement on the asymptotic uniform Gibbs property, which says that conditional on some fixed top row~$\widetilde{\Xi}^k(N)$, as~$N \rightarrow \infty$, the Gelfand-Tsetlin pattern $\{\widetilde{\Xi}^j_i(N)\}_{1 \leq i \leq j \leq k-1}$ becomes uniformly distributed. This follows from the Gibbs property for the six-vertex model. Indeed, suppose we fix and condition on the values of $(\Xi_1^k,\Xi_2^k,\dots,\Xi_k^k)$ equal to some integers $\lambda_1 < \lambda_2 < \cdots < \lambda_k$. Fixing $(\Xi_1^k,\Xi_2^k,\dots,\Xi_k^k)$ is equivalent to fixing boundary conditions for the six-vertex model on the first~$k$ rows, and the values  of ${\Xi}_i^j$, $1\le i\le j \le k$ completely determine the configuration on these $k$ rows; thus, we must consider the Boltzmann measure in the rectangle $[1, N] \times [1, k]$ with the corresponding boundary conditions (which look like DWBC along the left, bottom, and right boundaries, and are determined by $(\Xi_1^k,\Xi_2^k,\dots,\Xi_k^k)$ along the top boundary). The weight of a configuration in this rectangle, corresponding to $\{\Xi_i^j\}_{1\le i\le j \le k}$ satisfying the property \eqref{eqn:event} can be written (taking into account that $a_1=a_2=1$ for weights \eqref{eqn:weights} we use and that there are no $b_2$ vertices under \eqref{eqn:event}) as
\begin{equation}
\label{eq_6v_Gibbs}
b_1^{-k(k+1)/2 + \sum_{i=1}^k\Xi_i^k} c_1^{k(k+1)/2}  c_2^{k(k-1)/2}
\end{equation}
and if \eqref{eqn:event} is not satisfied, the weight will be the above multiplied by a correcting prefactor, which stays bounded away from $0$ and $\infty$ as $N\to\infty$. Note that \eqref{eq_6v_Gibbs} depends only on $(\Xi_1^k,\Xi_2^k,\dots,\Xi_k^k)$, but not on $\Xi_i^j$ with $j<k$. The configurations not satisfying~\eqref{eqn:event} have~$O(1)$ weight.

We claim that the number of the configurations in the rectangle satisfying \eqref{eqn:event} is an order of magnitude larger than the number not satisfying \eqref{eqn:event}, which, from the discussion in the previous paragraph, implies that asymptotically only the configurations satisfying \eqref{eqn:event} matter, which in turn implies that the desired uniform Gibbs property is satisfied asymptotically.

In order, to prove the claim, note that in our asymptotic regime $\Xi_i^k - \Xi_{i-1}^k \propto \sqrt{N}$ and the number of configurations in \eqref{eqn:event} can be readily seen to be growing as~$\sqrt{N}^{k (k-1)/2}$. On the other hand, the number of configurations not satisfying \eqref{eqn:event} is $O\left(\sqrt{N}^{k (k-1)/2-1}\right)$, because there is one less degree of freedom.

\bigskip

We now turn to the first statement on the tightness as $N\to\infty$. The essential idea is that if the magnitude of an element~$\widetilde{\Xi}_i^{j_0}$ of the random vector is very large with a probability bounded away from $0$, then by the Gibbs property (which approximately induces the uniform measure, by the argument above) some point~$\widetilde{\Xi}_{i'}^{j_0-1}$ on the previous is also very large with a probability bounded away from $0$. Thus, one may proceed by induction on the maximum level~$k$. Note that the base case $k=1$ holds because~$ \widetilde{\Xi}_1^1 = \eta_1/\sqrt{\gamma (1-\gamma)} $, and we have already shown in Lemma \ref{lem:diag_conv} that~$\eta_1$ converges in distribution to a Gaussian random variable, implying the tightness for the random variables~$\eta_1 = \eta_1(N)$).
\end{proof}

Now we are in a position to prove Theorem~\ref{thm:GUE_fluct}.

\begin{proof}[Proof of Theorem~\ref{thm:GUE_fluct}]

Let us consider an arbitrary subsequential limit in distribution $\{\zeta_i^j\}_{1 \leq i \leq j \leq k}$ of the sequence of arrays~$ \{\widetilde{\Xi}_i^j(N)\}_{1 \leq i \leq j \leq k}$ of Lemma \ref{lem:diag_conv}. It suffices for us to show that~$\{\zeta_i^j\}_{1 \leq i \leq j \leq k}$ coincides with the GUE corners process. Indeed, with the tightness from Lemma~\ref{lem:tightness}, this would show that the sequence of random arrays~$\{\widetilde{\Xi}_i^j\}$ converges to the GUE corners process, which in turn implies the two statements in the theorem by the discussion before Lemma \ref{lem:diag_conv}.

Lemma~\ref{lem:tightness} implies that~$\{\zeta_i^j\}_{1 \leq i \leq j \leq k}$ satisfies the uniform Gibbs property. Lemma~\ref{lem:diag_conv} implies that if we define
 \begin{equation}\label{eqn:Mdiag}
 m_{j j} \defeq \sum_{i=1}^j \zeta_i^j  - \sum_{i=1}^{j-1} \zeta_i^{j-1}
 \end{equation}
 then~$(m_{j j})_{j=1}^k$ is a tuple of i.i.d.\ standard normal random variables. (The notation~$m_{j j}$ is motivated by the definition of the matrix~$M$ below.)

Next, we argue that if we have a random Gelfand--Tsetlin pattern $\{\zeta_i^j\}_{1\le i \le j \le k}$ which satisfies the uniform Gibbs property and the property that the random variables $m_{j j}$ as defined in \eqref{eqn:Mdiag} are i.i.d.\ mean $0$, variance $1$ Gaussians, then~$\{\zeta_i^j\}$ is distributed as the GUE corners process.  To prove this claim, we follow the strategy of \cite{gorin2014alternating}. We construct a unitarily invariant Hermitian random matrix as follows: first, sample its eigenvalues~$\nu_1,\dots, \nu_k$ according to the distribution of~$(\zeta_1^k,\dots, \zeta_k^k)$, and put the eigenvalues into a diagonal matrix~$\Lambda$; then independently sample a matrix~$U$ from the Haar measure on all $k\times k$ unitary matrices and construct the matrix
$$M = U \Lambda U^{-1}.$$

This gives a random Hermitian matrix~$M$. Define~$\{\nu_i^j\}_{1 \leq i \leq j \leq k}$ as the eigenvalues of its top-left $j\times j$ corners. We claim that
\begin{enumerate}[i)]
\item The distribution of~$\{\nu_i^j\}$ equals that of~$\{\zeta_i^j\}$.

\item The distribution of~$M$ is the same as the GUE; in particular,~$\{\nu_i^j\}$ is distributed as the GUE corners process.
\end{enumerate}

\begin{proof}[Proof of i):] This follows from the facts (a)~$(\nu_1^k, \dots,\nu_k^k) \stackrel{d}{=} (\zeta_1^k,\dots, \zeta_k^k)$ by definition, and (b) both Gelfand--Tsetlin patterns satisfy the uniform Gibbs property. The fact that given~$(\nu_1^k,\dots, \nu_k^k)$, the Gelfand--Tselin pattern $\{\nu_i^j\}_{1 \leq i \leq j \leq k-1}$ is uniformly distributed among all Gelfand-Tsetlin patterns with such top-row is a well-known property of the pushforward of the Haar measure under the map (with~$\Lambda$ fixed)
$$U \mapsto U \Lambda U^{-1},$$
 see \cite[Proposition 4.7]{Baryshnikov_GUE2001} or \cite{neretin2003rayleigh} for modern proofs or \cite[Section 9.3]{gel1950unitary} for earlier discussions.
\end{proof}

\begin{proof}[Proof of ii):] This follows by computing the Fourier transform of~$M$: Let~$Z$ be an arbitrary Hermitian matrix. We claim that
$$\mathbb{E}[\exp(\i \text{Tr}(M Z))] = \exp\left(\frac{1}{2} \sum_{i=1}^k z_i^2 \right),$$
where~$z_i$ are the eigenvalues of~$Z$. This follows from the unitary invariance of the distribution of~$M$. Indeed, let~$V$ be a (deterministic) unitary matrix such that
$$V Z V^{-1} = D \defeq
 \begin{pmatrix}
z_1 &  & 0 \\
 &   \ddots &  \\
 0&             & z_k
\end{pmatrix}.
$$
By the cyclic property of trace and since~$ V^{-1} M V \stackrel{d}{=} M$, we have
$$\mathbb{E}[\exp(\i \text{Tr}(M Z))] = \mathbb{E}[\exp(\i \text{Tr}(M D)] =  \exp\left(\frac{1}{2} \sum_{i=1}^k z_i^2 \right).$$
The last equality is a consequence of the fact that the diagonal entries of~$M$ are independent standard Gaussians by \eqref{eqn:Mdiag}. On the other hand, the exact same expression is the characteristic function of a Hermitian matrix sampled from the GUE, as the argument above can be used verbatim to compute on the GUE side. Two probability distributions on the space of $k\times k$ Hermitian matrices with the same Fourier transform must coincide. Thus, we have completed the proof of ii).
\end{proof}

Thus $M$ is indeed a GUE random matrix, and the eigenvalues of its leading principal submatrices, which are distributed as~$\{\zeta_i^j\}$, are the GUE corners process. This completes the proof of uniqueness of subsequential limits, which proves that~$\{\widetilde{\Xi}_i^j\}$ converges to the GUE corners process. As a consequence of this we obtain the two statements in Theorem \ref{thm:GUE_fluct}.
\end{proof}

\begin{remark}
The work of Olshanski and Vershik \cite{OlVer1996} contains a classification of probability measures on triangular arrays~$\{\zeta_{i}^j\}_{1 \leq i \leq j < \infty}$ whose finite dimensional marginals~$\{\zeta_i^j\}_{1 \leq i \leq j \leq k}$ satisfy the uniform Gibbs property. The classification theorem there is phrased in terms of infinite Hermitian matrices invariant under conjugation by unitary matrices. As we have just seen, the eigenvalues of corners of a random Hermitian matrix whose distribution is invariant under conjugations by the unitary group form a Gelfand-Tsetlin pattern satisfying the uniform Gibbs property, providing a link to our setting.

The classification of \cite{OlVer1996} says that the nontrivial ergodic measures on Hermitian matrices are distributions of  linear combinations of identity matrices, GUE random matrices, and several (perhaps, infinitely-many) independent rank $1$ matrices of the form $v v^*$, where $v$ is a column-vector with i.i.d.\ standard Gaussian components (the sums of the latter matrices are often called Wishart random matrices). In view of this classification and Lemma \ref{lem:tightness},  Theorem~\ref{thm:GUE_fluct} essentially says that in the limit all the Wishart terms do not appear, and we are only left with the GUE part.
\end{remark}

\section{Free boundary stochastic six-vertex model}

\label{Section_Stochastic_6v}

\subsection{A generalization of the Izergin--Korepin determinant}
\label{subsec:IK_generalization}

\begin{figure}
\centering
\includegraphics[scale=.97]{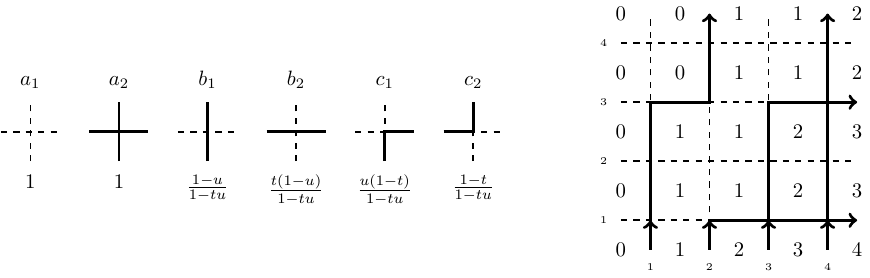}
\caption{A configuration of the stochastic six-vertex model in the~$4 \times 4$ square with the step initial condition and free exit data. Faces are labeled with their heights, for the height function of this configuration. $H_4$, the value of the height function at~$(4, 4)$, is equal to~$2$.}
\label{fig:step_H}
\end{figure}

In the previous sections we dealt with domain wall boundary conditions, which meant that the paths were entering through the bottom boundary (at each possible position) and exiting through the right boundary.

In this section we modify the boundary conditions and consider the six-vertex model with step initial condition and free exit data in $N\times N$ square. This means that paths enter along edges~$(i, 0) - (i, 1)$ for~$i =1 ,\dots, N$, and can exit the square anywhere; no paths enter from the left, as in Figure~\ref{fig:step_H}. We consider the stochastic weights of \eqref{eqn:weights} copied in Figure~\ref{fig:step_H} and require all of them to be positive, which can be achieved via either~$0 < u, t < 1$ or~$u, t > 1$.

Let us demonstrate how to sample the configuration in the~$N \times N$ square via a Markovian procedure, such that each configuration has probability equal to its weight. At the bottom left vertex~$(1,1)$ there is a single path entering from below. We sample the local configuration at~$(1,1)$ as follows: this path can either go up, with probability~$b_1$, or turn right, with probability~$1- b_1 = c_1$ and we make this choice by flipping an independent biased coin. In general, suppose we have sampled the configuration in vertices~$(i, j)$ in the square with~$i + j \leq k$. Then given the partial configuration, the allowed configurations at each vertex with~$i + j = k+1$ are independent of each other, so we sample them in parallel. We use the configuration we have already sampled to determine the local configuration at~$(i, j)$ as follows: if there is a single incoming vertical path (from below), then we flip a coin with probabilities~$b_1, c_1$ to determine if this path goes straight or turns right; if there is a single incoming horizontal path (from the left), then we flip a coin with probabilities~$b_2, c_2$ to determine if the path goes straight or turns up; if there are no paths or two paths, then deterministically the two outgoing edges of~$(i, j)$ will be empty or occupied, respectively. It is simple to check that sampling the configuration in the~$N \times N$ square with this procedure produces a configuration with probability exactly equal to its weight.

Let us emphasize that the possibility of the just described local sampling procedure for the configurations of the six-vertex model crucially depends on the boundary conditions, which are being kept free along the top and right boundaries. There is no way to make such a local construction for domain wall boundary conditions, no matter what vertex weights we choose. The observation about the existence of such a local stochastic algorithm goes back to \cite{GwaSpohn1992} for the model on the torus. More recently, on the plane such a construction was introduced in \cite{BCG6V} under the name the \emph{stochastic six-vertex model}.

An immediate corollary to the existence of the local sampling procedure is:
\begin{proposition}
For the stochastic six-vertex model in the~$N \times N$ square with step initial condition and free exit data, the partition function~$Z$ is equal to~$1$.
\end{proposition}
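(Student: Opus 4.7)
The plan is to exploit the local sampling procedure just described. That procedure produces a random configuration $S$ one antidiagonal at a time by flipping, at each vertex, an independent biased coin whose outcome depends only on the (already determined) incoming arrows and whose bias is chosen so that the probability assigned to the resulting local configuration equals its vertex weight. Let $\mathbb{Q}$ denote the resulting probability distribution on configurations with step initial data on the bottom and free exit data on the top and right.

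Concretely, I would verify that $\mathbb{Q}(S) = \prod_{(i,j)}\text{weight}(i,j;S)$ by induction on the antidiagonal index. Order the vertices by $i+j$, breaking ties arbitrarily. When the procedure reaches a vertex $(i,j)$, the incoming arrows along its south and west edges have already been sampled, and the probability assigned to the chosen outgoing configuration is exactly the corresponding entry among $a_1=1$, $a_2=1$, $b_1$, $c_1=1-b_1$, $b_2$, $c_2=1-b_2$. Multiplying these conditional probabilities over all $N^2$ vertices telescopes to the product of vertex weights, giving the claimed identity. Since $\mathbb{Q}$ is a bona fide probability measure on the (finite) set of admissible configurations, summing the identity over $S$ yields
\[
1 = \sum_S \mathbb{Q}(S) = \sum_S \prod_{(i,j)} \text{weight}(i,j;S) = Z.
\]

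Equivalently — and this is the version I would actually write out if one prefers to avoid invoking the sampling procedure — one may sum the vertex weights in antidiagonal order and observe that at each vertex the local sum over admissible outgoing configurations, given fixed incoming arrows, equals $1$: for incoming state $(0,0)$ the only outgoing state has weight $a_1=1$; for incoming $(1,0)$ the outgoing states contribute $b_1+c_1=1$; for incoming $(0,1)$ they contribute $b_2+c_2=1$; and for incoming $(1,1)$ the weight is $a_2=1$. Summing out vertices one antidiagonal at a time, each such partial sum is $1$, and no contribution is lost because the free exit data imposes no constraint on the last antidiagonal of outgoing edges.

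The only subtlety — which I do not expect to be a real obstacle, but which must be mentioned — is that this argument depends essentially on the absence of any constraint on the north and east boundary arrows. If one instead fixed the exit data (as in DWBC), the inner sums at the final antidiagonals would no longer be unconstrained and the stochasticity identities $b_1+c_1 = b_2+c_2 = 1$ could not be used to telescope. It is precisely the free boundary that makes $Z=1$ an immediate consequence of stochasticity.
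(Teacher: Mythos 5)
Your proposal is correct and takes essentially the same approach as the paper, which presents the proposition as an ``immediate corollary'' of the local Markovian sampling procedure: each configuration is produced with probability equal to its weight, so the weights sum to~$1$. Your second formulation (summing out vertices antidiagonal by antidiagonal using the stochasticity identities $b_1+c_1=b_2+c_2=1$ and the lack of constraints on outgoing boundary edges) is just the same telescoping stripped of probabilistic language, and your closing remark about why the free boundary is essential matches the paper's discussion as well.
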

\begin{remark}
 The last statement remains true no matter whether $u$ and $t$ are fixed, or if they vary along the square in an arbitrary way and we use $t(i,j)$, $u(i,j)$ to sample the vertex at $(i,j)$.
\end{remark}

The~\emph{height function}, which is a function defined on the faces of the lattice, is the object commonly used to describe the asymptotic behavior of the six-vertex model. To define it, first one must fix its value at a single face. Then, it is defined by the local property that crossing a path in the down or right direction changes the height by~$+ 1$. See Figure~\ref{fig:step_H} for an example illustrating its definition. Theorems about its asymptotic behavior (for the stochastic six-vertex model) can be found in~\cite{BCG6V, Reshetikhin2018LimitSO, aggarwal2020limit, dimitrov2020two}. One of the main goals of the remainder of this note is to prove a special case of the main theorem of~\cite{BCG6V}: we will study the asymptotics of the height function at the point~$(N, N)$ as $N\to\infty$.

\begin{defn}
Let $H_N$ denote the value of the height function of the stochastic six-vertex model (with step initial condition and free exit data) at $(N, N)$. This is the number of paths which exit the $N \times N$ square at the top, rather than the right boundary.
\end{defn}

In our exposition, the asymptotic results will be based on the following curious identity, which can be found in~\cite[Proposition C.3]{aggarwal2021deformed}. Other closely related identities can be found in the earlier works~\cite{borodin2016stochastic_MM, BCG6V}.

\begin{theorem}
Consider the stochastic six-vertex model in the~$N \times N$ square with step initial condition and free exit data. Assume that the weights are as in Figure~\ref{fig:step_H} with fixed~$t$ and~$u$ depending on the vertex~$(i , j)$ via~$u = x_i y_j$.

Then for each $w \in \mathbb{C}$, we have
\begin{align}
\mathbb{E}\bigg( (1-w ) (1- w t) \cdots (1 - w t^{H_N-1}) \bigg)
= \frac{\prod_{i, j} (1 - x_i y_j)}{\prod_{i < j} (x_i - x_j)(y_i - y_j)}
\det\left[ \frac{1 - w - (t - w) x_i y_j}{(1-x_iy_j ) (1 - t x_i y_j)} \right]_{i, j=1}^N .\label{eqn:free_IK_det}
\end{align}
\label{thm:IKfree}
\end{theorem}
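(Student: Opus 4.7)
The plan is to follow the same template as the proof of the Izergin-Korepin Determinant (Theorem \ref{thm:IK}): view both sides of \eqref{eqn:free_IK_det} as rational functions in $x_1, \dots, x_N, y_1, \dots, y_N, w, t$, and characterize each by a common set of properties that uniquely determines it. Specifically, after multiplying by $\prod_{i,j}(1 - t x_i y_j)$, I would show that both sides are polynomials in the $x_i$ and $y_j$ of degree at most $N$ in each variable, symmetric in $(x_1, \dots, x_N)$ and separately in $(y_1, \dots, y_N)$, and agree at enough special values of $w$ (or of one of the $x_N$) to pin them down.

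For the symmetry, the RHS is again directly symmetric by the Cauchy-like prefactor and the antisymmetry of the determinant, exactly as in property (a) of the proof of Theorem \ref{thm:IK}. On the LHS, the observable $(1-w)(1-wt)\cdots(1-wt^{H_N-1})$ depends only on the random variable $H_N$, i.e.\ on how many paths exit through the top boundary. The graphical Yang-Baxter argument of Lemma \ref{lem:symmetry} --- inserting an empty cross on a boundary, dragging it across the lattice by repeated application of Theorem \ref{thm:YBE}, and removing the fully-occupied cross on the other side --- preserves the joint distribution of the exit locations of the paths, so swapping $y_j \leftrightarrow y_{j+1}$ or $x_i \leftrightarrow x_{i+1}$ leaves $H_N$'s distribution, and hence the LHS, unchanged. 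The polynomial structure then follows as in the proof of Theorem \ref{thm:IK}: each vertex weight multiplied by $(1 - t x_i y_j)$ is a polynomial in the variables, and each $x_i$ (resp.\ $y_j$) appears in exactly $N$ such factors.

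Both sides are additionally polynomials of degree at most $N$ in $w$, so it suffices to verify equality at $N+1$ distinct values of $w$. Two of these come essentially for free. At $w=1$, the LHS collapses to $\mathbb{P}(H_N = 0)$, the total stochastic weight of configurations in which no path exits through the top --- that is, configurations with domain wall boundary conditions --- while the RHS matrix entry becomes $(1-t)x_iy_j / [(1-x_iy_j)(1-tx_iy_j)]$, making the RHS identical to the expression of Theorem \ref{thm:IK} for $Z_N^{\mathrm{DWBC}}$. Hence the identity at $w=1$ is precisely Theorem \ref{thm:IK}. At $w=0$, the LHS equals $\mathbb{E}[1] = 1$, and the RHS becomes $\frac{\prod_{i,j}(1-x_iy_j)}{\prod_{i<j}(x_i-x_j)(y_i-y_j)} \det[(1-x_iy_j)^{-1}]$, which evaluates to $1$ by a direct application of the Cauchy determinant formula (Lemma \ref{Lem:Cauchy_det}) with $a_j = 1/y_j$ and $b_i = x_i$. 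The remaining $N-1$ values would then be supplied by induction on $N$ with a recursion at $x_N = 1/y_N$ analogous to property (c) in the proof of Theorem \ref{thm:IK}: on the RHS, the $(N,N)$-entry has a simple pole there, exactly cancelled by the vanishing factor $(1-x_Ny_N)$ in the prefactor, and expanding along the last row isolates the $(N-1)\times(N-1)$ cofactor, which is the RHS of \eqref{eqn:free_IK_det} for $N-1$ in $x_1, \dots, x_{N-1}, y_1, \dots, y_{N-1}$.

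The main obstacle will be carrying out the corresponding reduction on the LHS. In the DWBC setting, $x_N = 1/y_N$ forces the entire right column and top row into the frozen $a_1$/$a_2$ pattern with a unique $c_1$ at the corner, reducing the partition function cleanly to the $(N-1)$-case. For the free-exit stochastic model, this full freezing does not occur: the vanishing $b_1 = b_2 = 0$ at vertex $(N,N)$ only excludes certain local configurations at the corner, while the remainder of row $N$ and column $N$ can still carry paths. One therefore needs a more delicate Yang-Baxter argument --- inserting a rapidity cross at $x_N/y_N^{-1} = 1$, for which the exchange weight has many vanishing entries, and dragging it off the lattice --- to effectively detach the last row and column and recognize the remaining expectation as the $N-1$ dimensional version of \eqref{eqn:free_IK_det}. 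Once this reduction is in place, induction closes and the identity follows.
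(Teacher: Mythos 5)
Your overall template (characterize both sides by symmetry, polynomiality, and special evaluations) is the right one, but the step that is supposed to close the argument does not work as you have set it up. You propose to interpolate in $w$: both sides are indeed polynomials of degree at most $N$ in $w$, so you would need $N+1$ values of $w$ at which the identity holds \emph{identically in $x_1,\dots,x_N,y_1,\dots,y_N$}. You supply only two ($w=1$ via Theorem~\ref{thm:IK} and $w=0$ via Lemma~\ref{Lem:Cauchy_det}), and then claim the ``remaining $N-1$ values'' come from the recursion at $x_N=1/y_N$. That is a category error: the recursion is a specialization in the variable $x_N$, not an evaluation at further values of $w$, so it cannot feed a $w$-interpolation. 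The paper's proof instead interpolates in the $x$-variables, and there the key difficulty is exactly the one your plan never addresses: unlike the DWBC case, the single specialization $x_N=0$ no longer kills the left-hand side, so $N$ points $x_N=1/y_j$ plus symmetry only give a factorization $\prod_{i,j}(1-tx_iy_j)\,(\mathrm{LHS}-\mathrm{RHS})=(\cdots)\prod_{i,j}(x_iy_j-1)$ with $(\cdots)$ independent of the $x$'s; to conclude $(\cdots)=0$ one needs the additional global evaluation at $x_1=\dots=x_N=0$, where the LHS is $(1-w)(1-wt)\cdots(1-wt^{N-1})$ because all paths are forced straight up, and the RHS (which is of $0/0$ form there) is evaluated via the Cauchy--Binet/Schur-measure rewriting of Proposition~\ref{prop:schur_form} and Corollary~\ref{cor:zero}. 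Your proposal contains no substitute for this step, so the uniqueness argument is not closed; adding the $w=0$ evaluation does not repair it.

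Two further points. The reduction at $x_N=1/y_N$ on the LHS, which you single out as the main obstacle and propose to attack with an unexecuted ``rapidity cross at ratio $1$'' argument, is in fact routine once symmetry is available: permute the rapidities so that $x_N$ sits in the first column and $y_N$ in the first row; at that corner $b_1=b_2=0$ forces the entering path to turn right, the step initial condition then freezes the entire first row into $a_2$ vertices of weight $1$ and empties the rest of the first column, and the remaining $(N-1)\times(N-1)$ square is exactly the model with $x_1,\dots,x_{N-1};y_1,\dots,y_{N-1}$ with the same value of $H$. Finally, in your symmetry argument the cross that is dragged off the lattice is \emph{not} fully occupied in the free-exit setting; its removal requires the stochasticity of the weights (conditional on the two inputs, the outgoing configurations have weights summing to $1$), together with the observation that the observable, being a function of the top exits only, is unchanged while the cross is dragged. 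This is a fixable slip, but as written your justification borrows the DWBC picture where it does not apply.
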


\begin{remark}
  By analytic continuation the equality above holds for any choice of $t$, $\{x_i\}_{i=1}^N$, $\{y_j\}_{j=1}^N$, where for general parameters we view the left hand side as an average with complex weights. Furthermore, Theorem~\ref{thm:IK} is a particular case of Theorem~\ref{thm:IKfree}. To see that we plug~$w = 1$ into~\eqref{eqn:free_IK_det}. The left-hand side turns into~$\text{Prob}(H_N = 0)$, which is the same as the partition function with DWBC. The right-hand side turns into the Izergin-Korepin determinant~\eqref{eqn:IZK}.
\end{remark}

\begin{proof}[Proof of Theorem \ref{thm:IKfree}]
We follow the same strategy of proof as for Theorem \ref{thm:IK}. The following four properties uniquely characterize the common value $R_N(\boldsymbol{x}; \boldsymbol{y};\, t, w)$ of both sides of \eqref{eqn:free_IK_det}:

\begin{enumerate}[(a)]
\item Symmetry in the variables~$\boldsymbol{x}=(x_1, \dots, x_N)$, and in the variables~$\boldsymbol{y}=(y_1,\dots, y_N)$.
\item The function $R_N(\boldsymbol{x}; \boldsymbol{y};\, t, w) \cdot \prod_{i, j} (1 - t x_i y_j)$ is a polynomial in~$x_1,\dots x_N$ and $y_1,\dots, y_N$, such that the degree of each variable is at most $N$.
\item If $x_N = \frac{1}{y_N}$, then
\begin{align*}
R_N(\boldsymbol{x}; \boldsymbol{y};\, t, w) &= R_{N-1}(x_1,\dots, x_{N-1}; y_1, \dots, y_{N-1};\, t, w)  .
\end{align*}

\item If $x_i = 0$ for all $i$, then $R_N(0,\dots,0; \boldsymbol{y};\, t,w) =  (1 - w)(1- w t) \cdots (1 - w t^{N-1})$.
\end{enumerate}
Properties (a)-(c) for the LHS and RHS of~\eqref{eqn:free_IK_det} are proven in a similar way to the corresponding properties in the proof of Theorem~\ref{thm:IK}. We only point out additional steps in the proof of part (a) for the LHS.
Suppose we are dragging the cross from left to right to show symmetry in $y_i$'s by repeated application of the Yang--Baxter equation, as in Figure~\ref{fig:graphical_arg}. We note that the observable under expectation in the LHS of \eqref{thm:IKfree} does not change as we drag the cross. Once the cross's first coordinate $i$ is larger than $N$, we can remove the cross: due to stochasticity of the model, the expectation does not depend on the configuration of paths in the cross and the possible weights of these paths (conditional on the two left inputs to the cross vertex) sum up to one.


The LHS satisfies property (d) because all of the paths are forced to go straight up if $x_i \equiv 0$, so that $H_N = N$ deterministically, from which the claim follows. Property (d) for the RHS will follow from Corollary \ref{cor:zero} and Proposition~\ref{prop:schur_form} later.

Now we argue by induction that \eqref{eqn:free_IK_det} follows from the fact that each side satisfies properties (a)-(d).

\begin{enumerate}[1.]
\item \textbf{Base Case: } In the base case $N = 1$ we just check the equality by direct calculation using the vertex weights and the definition of the observable. Both sides are equal to
$$\frac{1-w - (t-w) x_1 y_1}{1 - t x_1 y_1} = 1 - w \frac{1 - x_1 y_1}{1 - t x_1 y_1}.$$

\item \textbf{Induction Step: } By property (c) and symmetry, the value of~$R_N(\boldsymbol{x}; \boldsymbol{y} ;\, t, w) \cdot \prod_{i, j} (1 - t x_i y_j)$, viewed as a polynomial in $x_N$, is uniquely determined at $N$ points $x_N = 1/y_i$. Therefore, if $Z_N(\boldsymbol{x}; \boldsymbol{y};\, t, w)$ is another rational function satisfying the same properties, then~$Z_N - R_N$ vanishes at~$\frac{1}{y_1}, \cdots, \frac{1}{y_N}$, so that
$$\prod_{i, j} (1 - t x_i y_j) \left(Z_N(\boldsymbol{x}; \boldsymbol{y};\, t, w) - R_N(\boldsymbol{x}; \boldsymbol{y};\, t, w)\right) =\left( \cdots \right) \prod_{i=1}^N (x_N- \frac{1}{y_i})$$
where $\left( \cdots \right)$ is a polynomial independent of $x_N$, which we see by considering property (b) and the degree. However, by symmetry the same is true with $x_N$ replaced by $x_i$ for any $i = 1,\dots, N-1$. Therefore, we have
$$\prod_{i, j} (1 - t x_i y_j) \left(Z_N(\mathbf{x}; \mathbf{y};\, t, w) - R_N(\mathbf{x};\mathbf{y};\, t, w)\right) =\left( \cdots \right)' \prod_{i, j=1}^N (x_i- \frac{1}{y_j})$$
where $\left( \cdots \right)'$ must be independent of $(x_1,\dots, x_N)$ by degree considerations. Setting $x_i \equiv 0$, we see that this polynomial must be $0$ by property (d). \qedhere
\end{enumerate}
\end{proof}

\subsection{Asymptotics of the height function} \label{Section_height_function_assy}

Now we state a theorem about the fluctuations of the value~$H_N$ of the height function at~$(N, N)$ for the stochastic six-vertex model in the~$N \times N$ square, with step initial conditions, and free exit data, see also
Figure~\ref{fig:stochastic_sim} for a simulation.
\begin{theorem}[Borodin-Corwin-Gorin \cite{BCG6V}]\label{thm:S6Vfluct}
Assume that all $x_i \equiv 1$ and all $y_j \equiv u$ are equal, so that we use homogeneous stochastic weights with $0 < t < 1$, $0 < u < 1$. Then :
\begin{equation} \label{eq:Height_to_Tracy}
\lim_{N\to\infty}\frac{H_N - N \frac{1 - \sqrt{u}}{1 + \sqrt{u}}}{N^{1/3} \frac{(1-\sqrt{u})^{4/3}}{u^{1/3} (u^{-1/2}-u^{1/2})}} \stackrel{d}{=} -\xi_{GUE},
\end{equation}
where~$\xi_{\text{GUE}}$ is distributed according to~$F_2$, the Tracy-Widom GUE distribution.
\end{theorem}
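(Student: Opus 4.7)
The plan is to extract the law of $H_N$ from the identity in Theorem~\ref{thm:IKfree} and then perform steepest-descent asymptotics. Write $(w;t)_n \defeq \prod_{i=0}^{n-1}(1 - w t^i)$; since $H_N$ is integer-valued and bounded by $N$, knowing $\E[(w;t)_{H_N}]$ for all $w$ determines its law (for instance, setting $w = t^{-m}$ makes the observable vanish precisely on $\{H_N > m\}$, so $\mathbb{P}(H_N \le m)$ can be recovered after normalization).

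First I would pass to the homogeneous specialization $x_i \equiv 1$, $y_j \equiv u$. The right-hand side of~\eqref{eqn:free_IK_det} is a $\tfrac{0}{0}$ indeterminate form because the Vandermonde vanishes. Resolving it by L'Hopital, or equivalently by recognizing the antisymmetric determinant divided by the Vandermonde as a Schur-like object, should yield an $N \times N$ determinant whose entries are Taylor coefficients of the underlying kernel $\frac{1-w-(t-w)xy}{(1-xy)(1-txy)}$ at $(x,y)=(1,u)$.

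The next step is to match this determinant with an expectation under a \emph{Schur measure} of Okounkov, $\mathbb{P}(\lambda) \propto s_\lambda(A)\,s_\lambda(B)$ on partitions, with $A, B$ explicit specializations depending on $u$ and $N$; under this identification $H_N$ should become (up to a deterministic shift) the largest part $\lambda_1$. One then computes $\E[(w;t)_{H_N}]$ using a Macdonald-type difference operator $\mathcal{D}$ diagonal in the Schur basis with eigenvalues of the form $\prod_i(1 - w t^{\lambda_i + \mathrm{shift}_i})$: applying $\mathcal{D}$ in the $A$-variables to the Cauchy kernel $\prod (1-a_ib_j)^{-1} = \sum_\lambda s_\lambda(A)s_\lambda(B)$ produces $Z\cdot\E[(w;t)_{H_N}]$. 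Rewriting the action of $\mathcal{D}$ as a contour integral via residues and using Cauchy--Binet should then yield a Fredholm determinant
$$\mathbb{P}(H_N \le m) = \det(I - K_N)_{\ell^2(\mathbb{Z}_{>m})}, \qquad K_N(a,b) = \frac{1}{(2\pi\i)^2}\oint\oint \frac{e^{N[G(z)-G(w)]}}{z-w}\, z^{-a-1}w^{b}\,dz\,dw,$$
for an explicit function $G$ depending only on $u$ and $t$.

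The final, and principal, step is the steepest-descent analysis of $K_N$ at the scale prescribed by~\eqref{eq:Height_to_Tracy}; I expect this to be the main obstacle. One must identify the critical point $z_\ast$ of $G$, show that at $m = N\gamma_1$ with $\gamma_1 = (1-\sqrt u)/(1+\sqrt u)$ it becomes a coalescing (double) critical point, and read off from the cubic Taylor coefficient of $G$ at $z_\ast$ the scale $N^{1/3}\gamma_2$ with $\gamma_2 = (1-\sqrt u)^{4/3}/(u^{1/3}(u^{-1/2}-u^{1/2}))$. Deforming both contours to locally Airy-shaped paths meeting at $z_\ast$ and combining local convergence of $K_N$ to the Airy kernel with exponential tail bounds away from $z_\ast$ should upgrade pointwise kernel convergence to convergence of Fredholm determinants, yielding $\mathbb{P}(H_N \le m) \to F_2(\text{rescaled }m)$ and hence $H_N \to -\xi_{GUE}$; the minus sign reflects that the event $\{H_N \le m\}$ translates to an upper-tail event for the negatively-rescaled random variable in the limit.
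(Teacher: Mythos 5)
The backbone of your plan is sound, but it contains one genuine gap and it diverges from the paper's route in a way worth noting.

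The gap is in the step ``under this identification $H_N$ should become (up to a deterministic shift) the largest part $\lambda_1$.'' This is not what happens. The identity \eqref{eqn:free_IK_det}, combined with Proposition~\ref{prop:schur_form}, gives only an equality of \emph{expectations}
$$\mathbb{E}\bigl[(1-w)\cdots(1-wt^{H_N-1})\bigr] \;=\; \mathbb{E}_{\text{Schur}}\Bigl[\textstyle\prod_{i=1}^N\bigl(1-wt^{\lambda_i+N-i}\bigr)\Bigr],$$
not a coupling of $H_N$ to a function of $\lambda$; the two random variables live on unrelated probability spaces. Moreover the relevant Schur-measure statistic is $N-\ell(\lambda)$ (the smallest element of $\mathbb{Z}_{\ge 0}\setminus\{\lambda_i+N-i\}$), not $\lambda_1$. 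Consequently you cannot write $\mathbb{P}(H_N\le m)=\det(I-K_N)_{\ell^2(\mathbb{Z}_{>m})}$: the Fredholm determinant computes gap probabilities for the point process $\{\lambda_i+N-i\}$, not the law of $H_N$. You also observe, correctly, that at $w=t^{-m}$ the observable vanishes exactly on $\{H_N>m\}$; but it is not constant on $\{H_N\le m\}$, so no single ``normalization'' recovers $\mathbb{P}(H_N\le m)$. The missing ingredient is the transfer lemma (Theorem~\ref{thm:sufficient_cond} here, from~\cite{borodin2016stochastic_MM}): set $w=-t^{-z_N}$ with $z_N=\alpha_N+\beta_N s$, divide by $\prod_{i\ge 0}(1+t^{-z_N+i})$ so both sides become expectations of Fermi-type factors that asymptotically indicate the events $\{H_N>z_N\}$ and $\{N-\ell(\lambda)>z_N\}$, and conclude that if $N-\ell(\lambda)$ has a continuous scaling limit then $H_N$ has the same one. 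Without some version of this argument your last step ``yielding $\mathbb{P}(H_N\le m)\to F_2$'' is unjustified.

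Once that bridge is in place, the rest of your plan is a legitimate alternative to the paper's. You propose the classical Okounkov/Borodin--Okounkov route: represent $\{\lambda_i+N-i\}$ as a determinantal point process, write the correlation kernel as a double contour integral, prove convergence to the Airy kernel by steepest descent, and upgrade to Fredholm-determinant convergence. The paper explicitly declines this path and instead uses the Macdonald difference operator $D_q$ to write \emph{moments of random Laplace transforms} $\mathbb{E}\bigl[\prod_m\sum_{j}q_m^{j}\bigr]$ as contour integrals (Proposition~\ref{prop:integral_1}, Corollary~\ref{cor:obs2}), performs steepest descent on those, and identifies the limit with the Airy point process through the characterization~\eqref{eq_Airy_Laplace} (Proposition~\ref{prop:sufficient_conv}). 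Both routes reach the same critical point $z_\ast=-1/\sqrt u$ and the same scaling $N^{1/3}$; the Fredholm route requires trace-class and tail estimates to pass from kernel convergence to determinant convergence, while the Laplace-moment route instead requires the appendix machinery identifying the Airy process by its Laplace moments. If you fill the transfer gap and carry out the kernel asymptotics, your proof would go through, but be careful that the kernel you need is the one governing the point $N-\ell(\lambda)$ (the leftmost hole), not $\lambda_1+N-1$; the minus sign in the final answer comes precisely from this left-edge vs.\ right-edge distinction.
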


\begin{remark} Theorem \ref{thm:S6Vfluct} can be extended to a similar asymptotic result for $H(N, M)$ (the height function at $(N, M)$, where $N$ does not have to be equal to $M$); only constants change, see \cite{BCG6V}.  Joint distributional convergence for two values of $M$ is also known, see \cite{dimitrov2020two}. However, general multipoint convergence over all possible pairs of $(N,M)$ has not yet been established at the time of writing this review.
\end{remark}

\begin{figure}
\centering
\includegraphics[width=0.67\linewidth, height=0.67\linewidth]{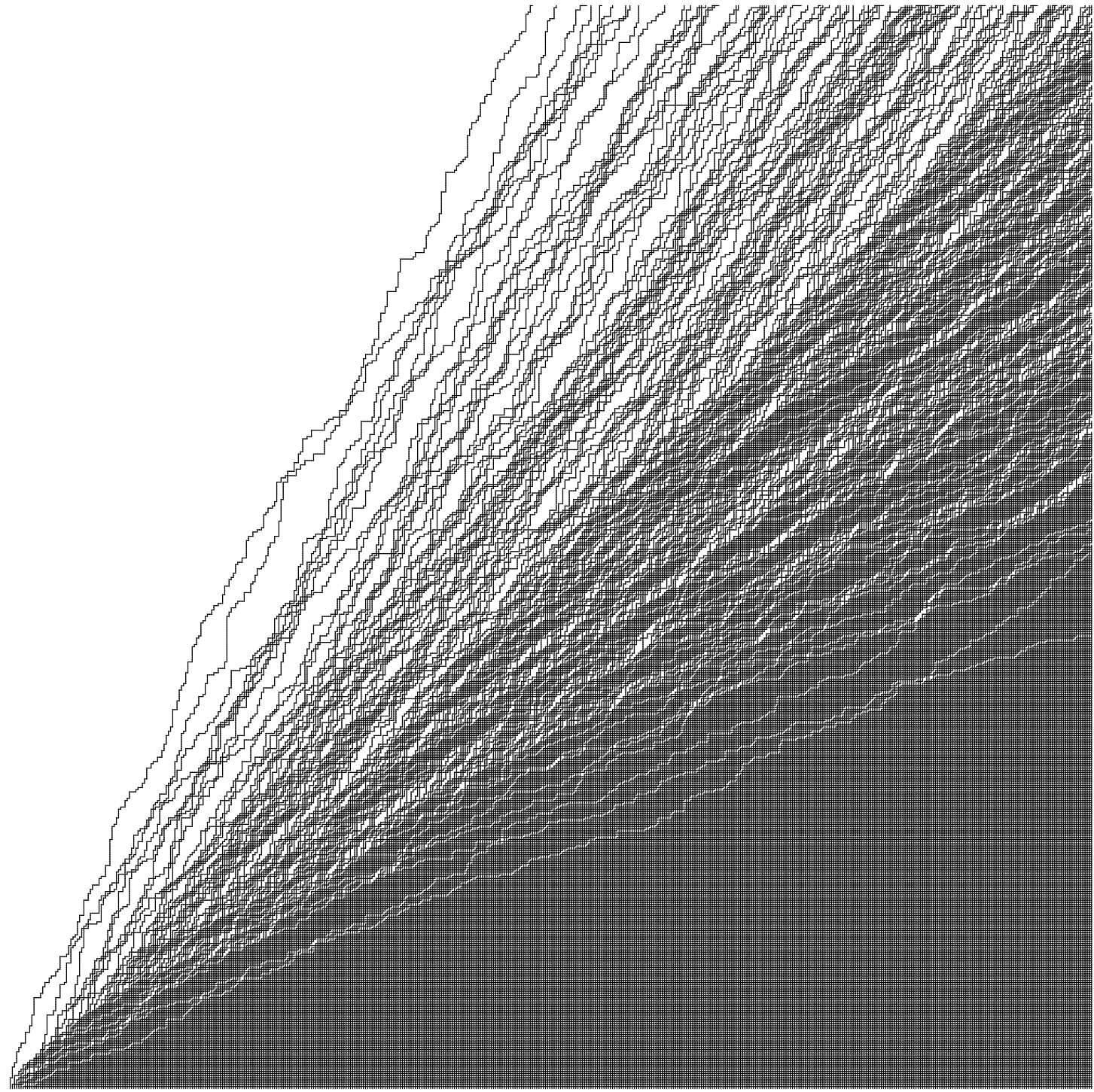}\\ \includegraphics[width=0.67\linewidth, height=0.67\linewidth]{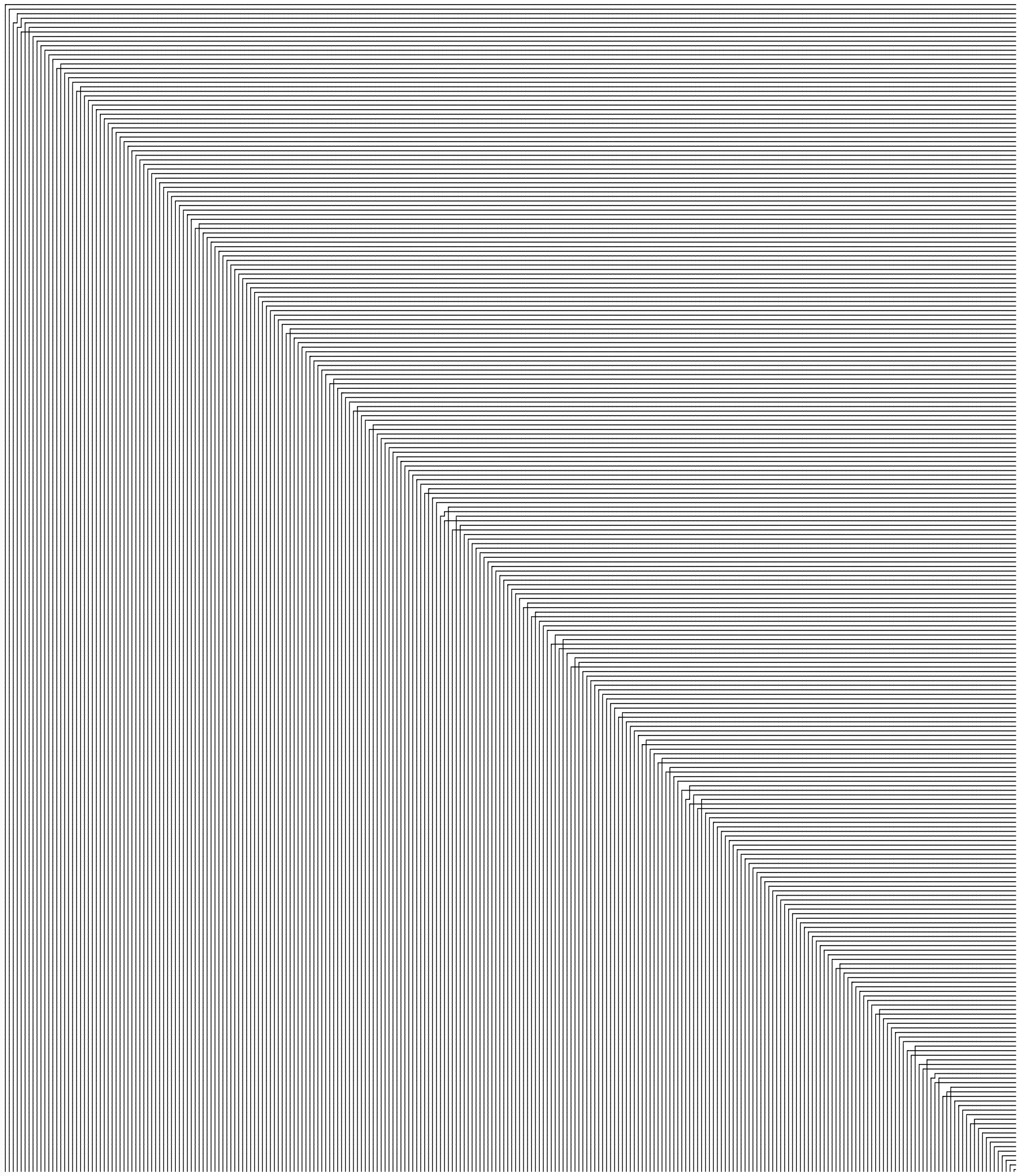}
\caption{Top: A six-vertex configuration, sampled using stochastic weights with~$b_1 = \frac{2}{3}, b_2 = \frac{1}{3}$ (obtained via $t = u = \frac{1}{2}$), with step initial condition and free exit data in the square for~$N = 500$. Bottom: $\Delta = 3/2$ ($a_1=a_2=1$, $b_1=b_2=3$, $c_1=c_2=1$), $N=256$ and domain wall boundary conditions. We thank Leonid Petrov and David Keating for these pictures.}
\label{fig:stochastic_sim}
\end{figure}

\begin{figure}
\centering
\includegraphics[width=0.63\linewidth, height=0.63\linewidth]{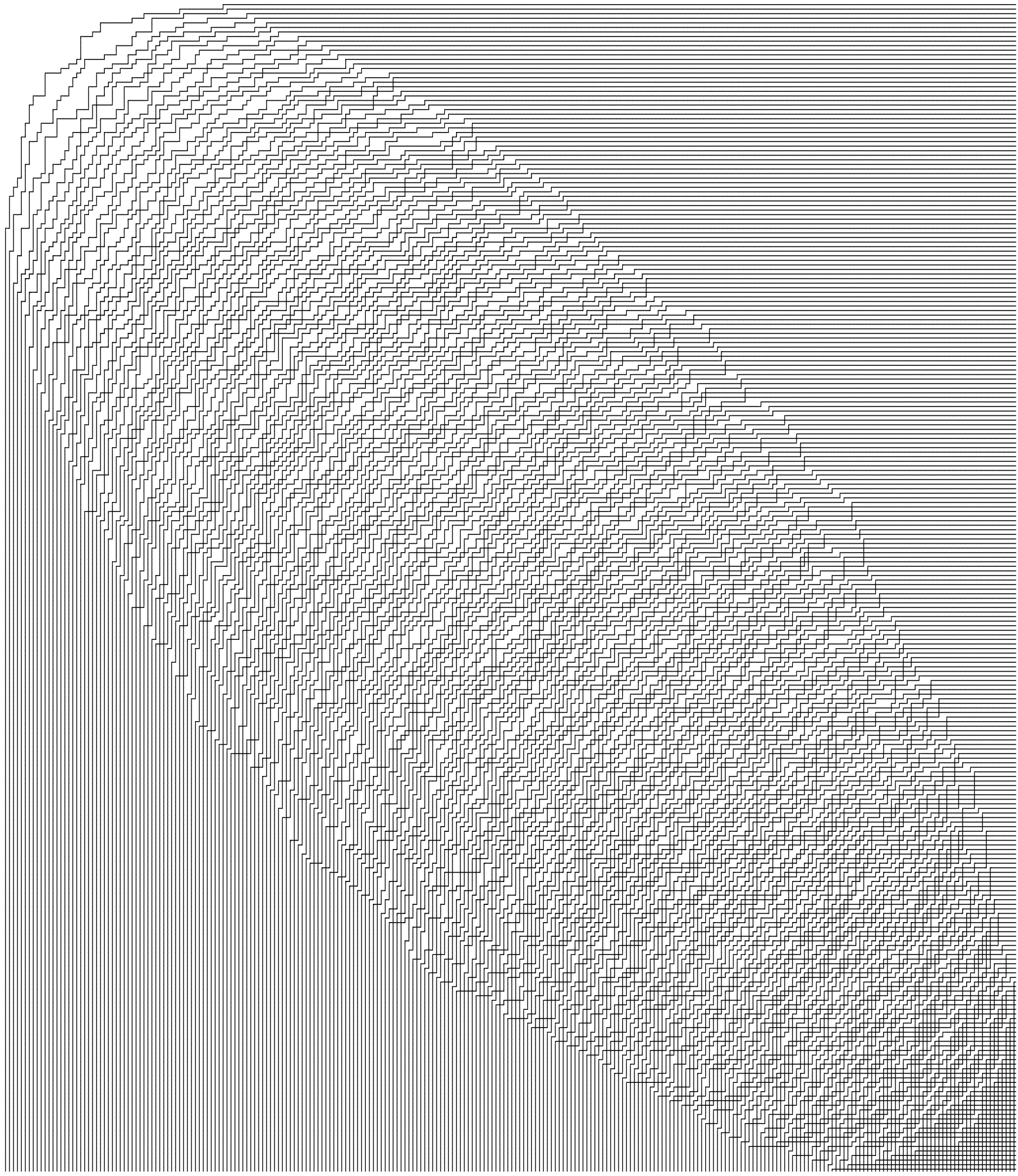} \\
\includegraphics[width=0.63\linewidth, height=0.63\linewidth]{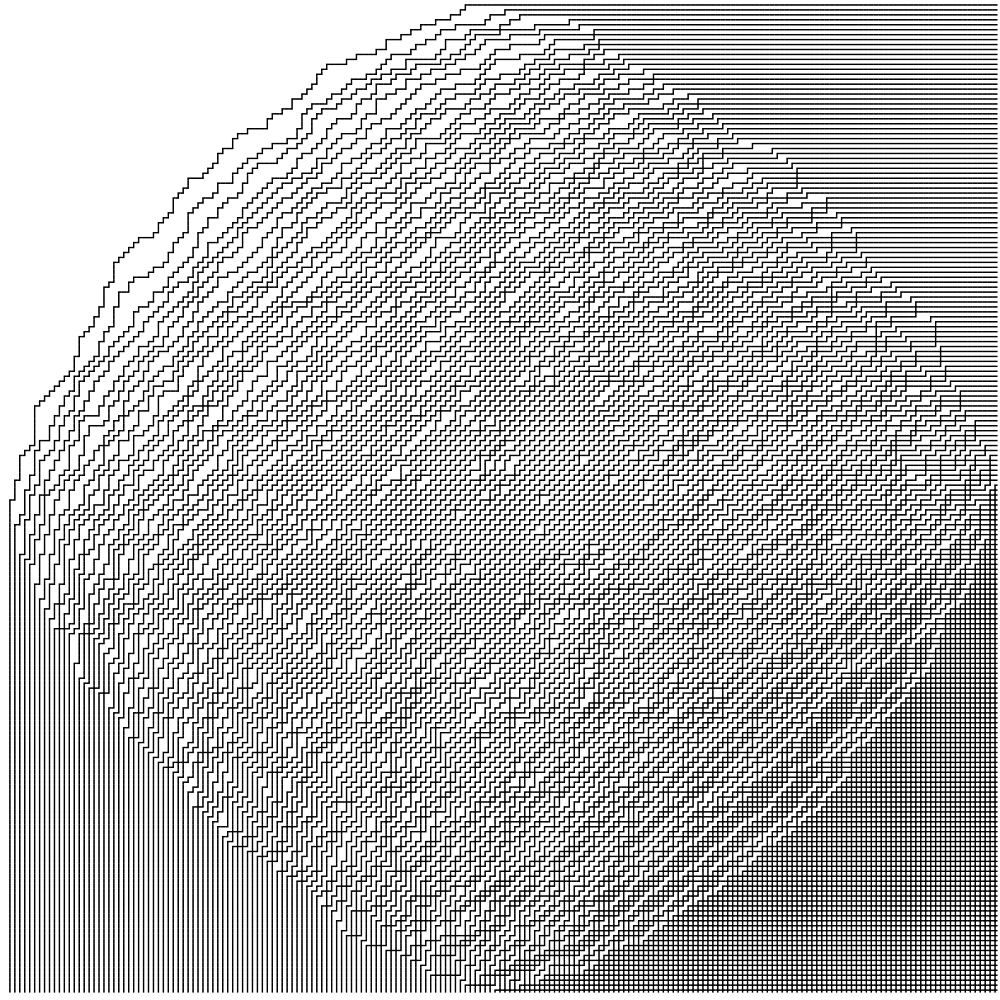}
\caption{
Top: A six-vertex configuration with DWBC, with~$\Delta = \frac{1}{4}$ ($a_1=a_2=1$, $b_1=b_2=c_1=c_2=2$) and~$N = 256$.
Bottom: $\Delta = -3$ ($a_1=a_1=b_1=b_2=1$, $c_1=c_2=\sqrt{8}$). Note that in a small region near the center of the square, the paths form a regular zig-zag pattern, with only occasional defects. This region is precisely the smooth or gaseous region. Note the difference between the gaseous region in the center and the \emph{frozen regions} near the corners, where the path configuration is completely regular and there are no defects. We thank Ananth Sridhar and David Keating for these pictures.}
\label{fig:non_stochastic_sim}
\end{figure}

We present a proof of Theorem \ref{thm:S6Vfluct} based on Theorem \ref{thm:IKfree} in the next section\footnote{The original proof in \cite{BCG6V} utilized a different set of ideas.}. Before doing so, let us put it into a wider context.

In words, \eqref{eq:Height_to_Tracy} says that the fluctuations of $H_N$ are of order $N^{1/3}$ size and have the Tracy-Widom scaling limit as $N\to\infty$. Both of these properties, namely the growth exponent of~$\frac{1}{3}$ and one point convergence to the Tracy-Widom distribution, are characteristic properties of the \emph{KPZ universality class}. For an introduction to KPZ universality, see the original reference~\cite{KPZ1986} about interface growth models in the KPZ class, and for example the survey~\cite{CorwinKPZ}. The six-vertex model with stochastic weights and free exit data can be viewed, via the Markovian sampling procedure, as a particle system on a line evolving in time or as an interface growth model (the height function along a line grows as we vary this line). From this viewpoint the model belongs to the KPZ class, as was first predicted in \cite{GwaSpohn1992}.
\smallskip

In the context of the lattice models of statistical mechanics (rather than interacting particle systems or interface growth models), the precise details of the setup of Theorem \ref{thm:S6Vfluct} turn out to be important for achieving \eqref{eq:Height_to_Tracy}. Namely, let us emphasize two features:
\begin{enumerate}
 \item Positive stochastic weights of Figure \ref{fig:step_H} imply that $\Delta=\frac{a_1 a_2 + b_1 b_2 - c_1 c_2}{2\sqrt{a_1 a_2 b_1 b_2}} > 1$.
 \item We use free exit data on top and right boundaries.
\end{enumerate}

\begin{figure}
  \centering
  \includegraphics[width=0.9\linewidth]{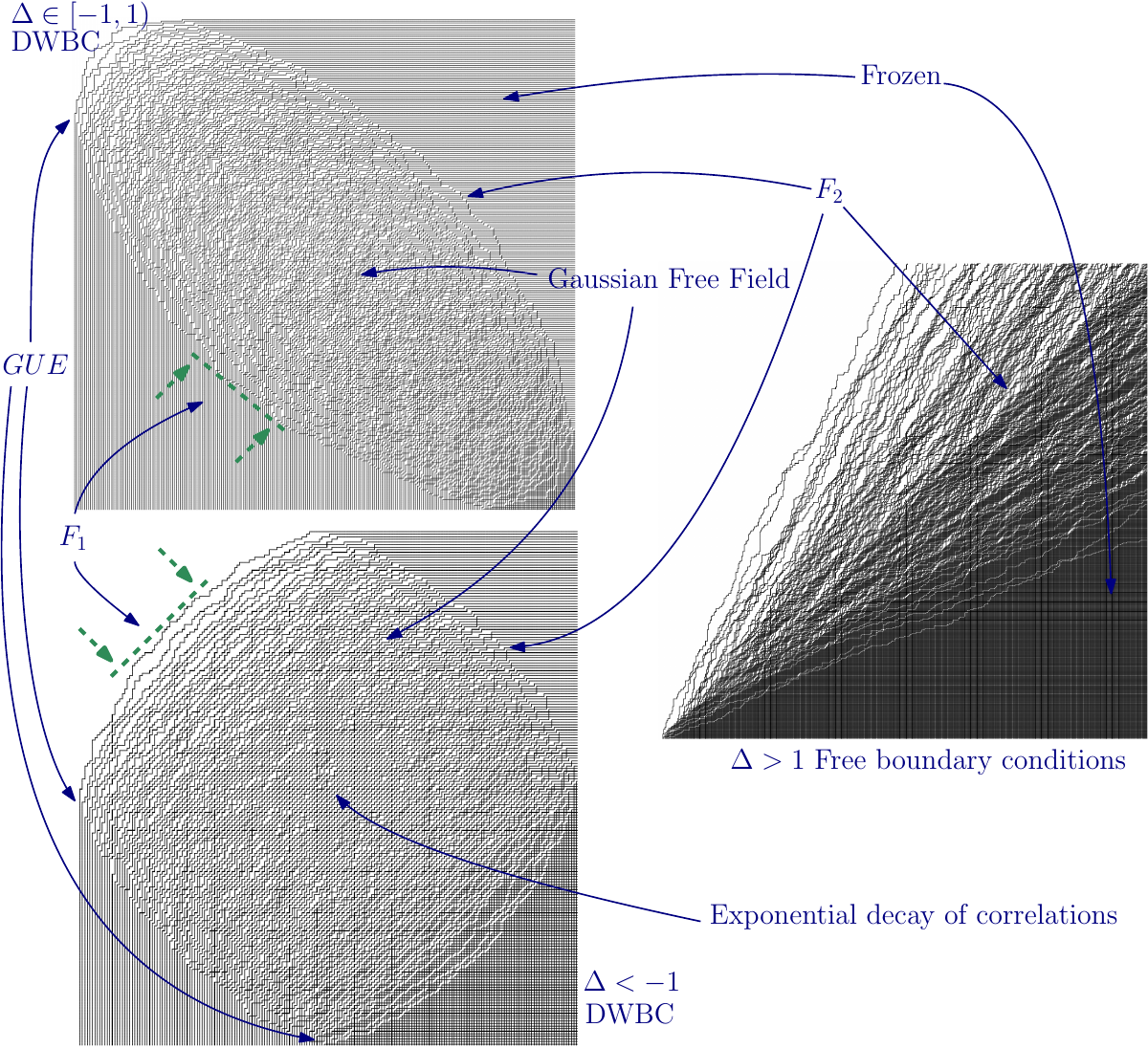}
  \caption{Various (mostly conjectural) asymptotic behaviors which can be observed
  in the six-vertex model. We have simulations for domain wall boundary conditions
  (left) with
   various choices of parameters
  and for free exit boundary conditions (right)
  with ``stochastic weights''
  (see Section \ref{subsec:IK_generalization}).
  We have attempted to illustrate the following:
  The GUE corners process is expected to
    generically describe local statistics at tangency points of the ``rough region'' to the boundary of the domain, in a way similar to Theorem \ref{thm:GUE_fluct}, and Theorems \ref{thm:JNeig} and \ref{thm:GP}.
   At the border of the rough region and frozen region away from tangency points, one should observe the Tracy--Widom distribution $F_2$, as in Theorem \ref{thm:circle}. As indicated on the right, with positive stochastic weights ($\Delta > 1$) and free exit data, $F_2$ is also observed in the bulk of the region as the limiting distribution function of the recentered and rescaled height function at a point; c.f. Theorem \ref{thm:S6Vfluct}. The green dotted line in the bottom left simulation indicates the extremal (in the northwest direction) fluctuation of the ``top path'', and the green dotted line in the top left describes a similar observable; both of these converge to $F_1$ after appropriate rescaling (c.f. Theorem \ref{thm:Ltriangle}). For $\Delta < 1$, in the rough region the correlations of height function fluctuations at the macroscopic scale should converge to those of a Gaussian free field. If $\Delta < -1$, a gaseous region may appear in the center of the domain, and the edge-edge covariances there decay exponentially in $N$.
  \label{Figure_teaser}}
  \end{figure}

It is unclear whether the height function can ever have $N^{1/3}$ fluctuations or a Tracy-Widom limit when $\Delta<1$ --- most probably, not:
\begin{itemize}
 \item For $\Delta=0$ the six-vertex model is equivalent to a random tilings model (see, e.g., \cite{ferrari2006domino}), for which the field of the fluctuations of the height function is expected to have very different asymptotic behavior leading to the Gaussian Free Field without any rescaling: see \cite[Lectures 11-12]{gorin2021lectures} for general discussion and \cite[Section 6]{chhita2015asymptotic}, \cite[Section 3.6]{bufetov2016fluctuations} for the Aztec diamond results, which are equivalent to the analysis of the six-vertex model with Domain Wall Boundary Conditions at $\Delta=0$. It is usually believed that the general $|\Delta|<1$ situation should be viewed as a small perturbation of $\Delta=0$ case with similar phenomenology. We also refer to Figure \ref{fig:non_stochastic_sim} for a simulation.
 \item For $\Delta<-1$, in addition to a region where fluctuations should be somewhat similar to $\Delta=0$ situation, a special new \emph{smooth} or~\emph{gaseous} region is expected (but not yet proven) to appear. The fluctuations there are expected to be even smaller and not related to the random matrix theory.  See Figure~\ref{fig:non_stochastic_sim} for simulations illustrating the appearance of a gaseous region for~$\Delta < -1$ and Domain Wall Boundary Conditions, and for more simulations see, e.g.,~\cite{keating2018random}.
 \item For the regimes~$\Delta < -1$ and~$-1 \leq \Delta < 1$, there are also several recent results about the height fluctuations in large domains with so-called~\emph{flat} boundary conditions; see for example \cite{duminil2020delocalization}, \cite{glazman2019transition}, \cite{glazman2023delocalisation}. The boundary conditions are called flat because with a slightly different convention for defining the height function, the boundary values have average slope~$0$. With our convention for the height function, these boundary conditions correspond to an average slope of~$\frac{1}{2}$ along each boundary segment. The results of these articles indicate that in this setting, the height fluctuations at a point are on average much smaller than the~$N^{1/3}$ of Theorem~\ref{thm:S6Vfluct}.
 In particular, in a domain of size~$N$
 they are on the order of~$\sqrt{\log N}$ if $\Delta \in [-1,1)$,
 and are $O(1)$ if $\Delta < -1$. We note, however,
 that if
 $\Delta < -1$ and we fix a boundary condition with
 constant \emph{nonzero} (or not equal to $1/2$ in our notations) average height slopes
 then it is expected that
 height fluctuations again have size~$\sqrt{\log N}$.
\end{itemize}

The free exit data along the top and left
boundary is equally important for \eqref{eq:Height_to_Tracy}.
We refer to the bottom panel of Figure \ref{fig:stochastic_sim} for
the stochastic six-vertex model with Domain Wall Boundary Conditions and $\Delta>1$: it is
clear that the fluctuations are finite and, therefore, there is
no hope for the appearance of the Tracy-Widom distribution. It might be possible
to recover $N^{1/3}$ fluctuations
by choosing more complicated
fixed deterministic boundary
conditions --- one needs to
approximate deterministically
the random exit data of the free
exit situation; yet, it is unclear
whether one can also preserve the
Tracy-Widom limit in this way. Generally
speaking, the dependence of the asymptotic
 behaviors on the boundary
 conditions seem to be richer
 in the~$\Delta>1$ situation, but
 rigorously known results are very limited.

Finally, we refer the reader to Figure \ref{Figure_teaser} for a schematic overview of the various asymptotic behaviors which can be observed in the six-vertex model.

\section{Proof of Theorem \ref{thm:S6Vfluct}}

\label{Section_Schur}

\subsection{Reduction to analysis of Schur measures}
The proof of Theorem~\ref{thm:S6Vfluct} is based on identifying the limiting fluctuations of $H_N$ with those of an observable of a random point process arising from a \emph{Schur measure}, and then using tools of integrable probability to analyze this point process. In order to develop the required machinery, we need some definitions and notation.

A \emph{partition}~$\lambda = (\lambda_1 \geq \lambda_2 \geq \lambda_3 \geq \cdots \geq 0)$ is a nonincreasing sequence of nonnegative integers (parts) such that~$\lambda_i \neq 0$ for finitely many $i$. We define the \emph{length} of a partition, denoted~$\ell(\lambda)$, as the number of nonzero parts~$\lambda_i$.

The \emph{Schur polynomials} $s_\lambda(x)$  are indexed by partitions.
\begin{defn}\label{def:schur_poly}
For a positive integer~$N$ and a partition~$\lambda$ with~$\ell(\lambda) \leq N$, the Schur polynomial~$s_\lambda(x_1,\dots, x_N)$ is defined by
$$s_\lambda(x_1,\dots, x_N) \defeq \frac{\det \left[ x_i^{\lambda_j + N - j} \right]}{\prod_{i < j} (x_i - x_j)}.$$
If $\lambda = (\lambda_1 \geq \cdots \geq \lambda_{N}  \geq \lambda_{N+1} \geq \cdots \geq \lambda_{\ell(\lambda)} > 0)$ has more than $N$ nonzero parts, then~$s_{\lambda}(x_1,\dots, x_N) = 0$.
\end{defn}
See~\cite{Macdonald1995} for a comprehensive review of symmetric function theory, including many facts about Schur polynomials.

The \emph{Schur measures} are a class of probability measures on partitions defined in terms of Schur functions.
\begin{defn}\label{def:SM}
Given~$2 N$ real numbers~$x = (x_1,\dots, x_N)$ and~$y= (y_1, \dots, y_N)$, with~$0 \leq x_i y_j < 1$ for all~$i, j = 1,\dots, N$, the corresponding Schur measure on partitions is defined by
\begin{equation}
P(\lambda) = \frac{1}{Z} s_{\lambda}(x) s_{\lambda}(y) , \qquad \ell(\lambda) \leq N  \label{eqn:Schur}
\end{equation}
where~$Z$ is a normalization constant.
\end{defn}
See the survey~\cite{BorodinGorinSPB12} and references therein for more background on Schur polynomials, Schur measures, and other related objects.

We denote expectation with respect to the Schur measure by $\mathbb{E}_{\text{Schur}}$. Before stating the proposition, we note that it follows from the \emph{Cauchy identity}, see, e.g.~\cite[Chapter I, Section 4]{Macdonald1995}, that the normalization constant for the Schur measure is given by
$$Z=\sum_{\lambda} s_{\lambda}(x) s_{\lambda}(y) = \prod_{i, j=1}^N (1- x_i y_j)^{-1}.$$
Now we state the proposition connecting Schur measures to the six-vertex model.

\begin{prop}\label{prop:schur_form}
Suppose~$0 \leq x_i y_j < 1$ for~$i, j = 1,\dots, N$. The value of~\eqref{eqn:free_IK_det} in Theorem~\ref{thm:IKfree} is also given by
\begin{multline}
\sum_{\lambda_1 \geq \cdots \geq \lambda_N \geq 0}
\prod_{i=1}^N\left(1 - w t^{\lambda_i + N - i}\right) \frac{s_{\lambda}(x_1,\dots, x_N) s_{\lambda}(y_1,\dots, y_N)}{\prod (1 - x_i y_j)^{-1}}
= \mathbb{E}\prod_{i=1}^N\left(1 - w t^{\lambda_i + N - i}\right)   \label{eqn:schur_expectation}
\end{multline}
where the expectation on the right is taken with respect to the Schur measure~\eqref{eqn:Schur}.
\end{prop}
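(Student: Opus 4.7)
The plan is to reduce both sides to a common determinantal expression via the bialternant formula for Schur polynomials followed by Cauchy--Binet. Writing
$$s_{\lambda}(x_1,\dots,x_N)\,s_{\lambda}(y_1,\dots,y_N)=\frac{\det[x_i^{\lambda_j+N-j}]\,\det[y_i^{\lambda_j+N-j}]}{\prod_{i<j}(x_i-x_j)(y_i-y_j)},$$
I substitute $\mu_j\defeq \lambda_j+N-j$ and note that as $\lambda$ ranges over partitions with $\ell(\lambda)\le N$, the tuple $(\mu_1,\dots,\mu_N)$ ranges over strictly decreasing sequences of nonnegative integers. Since $\prod_{i=1}^N(1-wt^{\lambda_i+N-i})=\prod_{j=1}^N(1-wt^{\mu_j})$ is symmetric in $\mu$, and since each of the two determinants changes sign when two $\mu_j$'s are swapped (these signs cancel in the product), I may either keep the sum over strictly decreasing $\mu$ or (after dividing by $N!$) extend it to all of $\mathbb{Z}_{\ge 0}^N$.

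Next, after absorbing the factors $(1-wt^{\mu_j})$ into the columns of the first determinant, I apply the Cauchy--Binet identity to the infinite matrices with $(i,k)$-entries $(1-wt^k)x_i^k$ and $y_i^k$ for $k\ge 0$. This gives
$$\sum_{\mu_1>\mu_2>\cdots>\mu_N\ge 0}\det[(1-wt^{\mu_j})x_i^{\mu_j}]\,\det[y_i^{\mu_j}]=\det\!\left[\sum_{k=0}^{\infty}(1-wt^{k})(x_iy_j)^{k}\right]_{i,j=1}^{N}.$$
Summing the two geometric series, each entry becomes
$$\frac{1}{1-x_iy_j}-\frac{w}{1-tx_iy_j}=\frac{1-w-(t-w)x_iy_j}{(1-x_iy_j)(1-tx_iy_j)},$$
which is exactly the matrix entry appearing in Theorem~\ref{thm:IKfree}.

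Combining, the LHS of \eqref{eqn:schur_expectation} equals
$$\prod_{i,j}(1-x_iy_j)\cdot\frac{1}{\prod_{i<j}(x_i-x_j)(y_i-y_j)}\det\!\left[\frac{1-w-(t-w)x_iy_j}{(1-x_iy_j)(1-tx_iy_j)}\right]_{i,j=1}^{N},$$
where the prefactor $\prod_{i,j}(1-x_iy_j)$ arises from the normalization $Z^{-1}=\prod(1-x_iy_j)$ of the Schur measure. This is exactly the RHS of \eqref{eqn:free_IK_det}, completing the identification.

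The only real technical point is justifying the termwise manipulation and the geometric series summation. The hypothesis $0\le x_iy_j<1$ guarantees absolute convergence of the Cauchy identity $\sum_\lambda s_\lambda(x)s_\lambda(y)=\prod(1-x_iy_j)^{-1}$, so the Schur measure is well defined and all sums are absolutely convergent provided $|tx_iy_j|<1$ as well. This is automatic when $0<t<1$, which is the regime relevant to Theorem~\ref{thm:S6Vfluct}; more generally the identity holds as an equality of rational functions in $(t,w,x_1,\dots,x_N,y_1,\dots,y_N)$ by analytic continuation, since both sides are rational in these parameters. I expect no substantive obstacle beyond this bookkeeping; the proof is essentially the standard Cauchy--Binet derivation of determinantal Schur measure formulas (cf.\ \cite{BorodinGorinSPB12}) with the single twist of inserting the multiplicative functional $\prod_j(1-wt^{\mu_j})$ into one of the determinants before summing.
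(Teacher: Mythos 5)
Your proof is correct and is essentially the paper's own argument run in the opposite direction: the paper expands the determinant entries of \eqref{eqn:free_IK_det} as geometric series and applies Cauchy--Binet to obtain the Schur sum, while you start from the Schur sum, use the bialternant formula and Cauchy--Binet to collapse it into the determinant, and then resum the series. The ingredients (geometric series, Cauchy--Binet, the identification $a_i=\lambda_i+N-i$) are identical, and your remarks on convergence for $0\le x_iy_j<1$, $|t x_i y_j|<1$ are a harmless refinement of what the paper does implicitly.
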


\begin{proof}
Transforming the determinant in \eqref{eqn:free_IK_det}, we have
\begin{align*}
\det\left[ \frac{1 - w - (t-w) x_i y_j}{(1- x_i y_j)(1- t x_i y_j)} \right]_{i, j = 1}^N &=
\det \left[ \frac{1}{1- x_i y_j} - \frac{w}{1 - t x_i y_j} \right] \\
&= \det \left[\sum_{a=0}^\infty (1- w t^a) (x_i y_j)^a \right] \\
&= \sum_{0 \leq a_N < \dots < a_1} \left( \prod_{i=1}^N (1- w t^{a_i}) \right) \det[x_i^{a_j}] \det[ y_i^{a_j}] .
\end{align*}
The last step follows from the Cauchy-Binet formula for determinants of products of rectangular matrices. Now multiplying both sides by~$\frac{\prod_{i, j}(1- x_i y_j)}{\prod_{i < j} (x_i - x_j) (y_i - y_j)}$, identifying  strictly decreasing sequences~$a_1 > \cdots > a_N \geq 0$ with partitions $\lambda_1 \geq \lambda_2 \dots \geq \lambda_N \geq 0$ via $a_i = \lambda_i + N - i$, and then observing that $\det[x_i^{a_j}] $ is the numerator in the definition of the Schur polynomial, we convert the right hand side of~\eqref{eqn:free_IK_det} into~\eqref{eqn:schur_expectation}.
\end{proof}

\begin{remark}
A similar identity follows from~\cite[Lemma 3.3]{warnaar2008bisymmetric}. See also~\cite[Section 3, Chapter VI]{Macdonald1995}. Indeed,~\cite[Lemma 3.3]{warnaar2008bisymmetric} contains determinant formulas for the result of applying a certain generating function of \emph{Macdonald difference operators} at~$q =t $ (which we introduce in Section~\ref{subsec:contour_int}, see Equation~\eqref{eqn:Dq}) to the right hand side of the Cauchy identity. On the other hand, applying this generating function of operators to the left hand side of the Cauchy identity at~$q = t$ gives exactly~\eqref{eqn:schur_expectation}.
\end{remark}

Now we state a corollary which, together with the proposition above, completes the proof of Theorem \ref{thm:IKfree}.

\begin{cor}
If we plug in $x_1= \cdots = x_N =0$, only one term in~\eqref{eqn:schur_expectation} survives:
$$\lambda_1 = \cdots = \lambda_N = 0$$
and we get $\prod_{i=1}^N (1- w t^{N-i})$
\label{cor:zero}
\end{cor}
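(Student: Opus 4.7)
The plan is to exploit the fact that the Schur polynomial $s_\lambda(x_1,\ldots,x_N)$ is a homogeneous polynomial of degree $|\lambda|=\sum_i\lambda_i$ in the variables $x_1,\ldots,x_N$. As a consequence, the evaluation $s_\lambda(0,\ldots,0)$ equals $1$ when $\lambda$ is the empty partition $(0,\ldots,0)$ and equals $0$ for every other partition. Once this vanishing property is in hand, substituting $x_i=0$ into~\eqref{eqn:schur_expectation} will collapse the entire sum onto the single term $\lambda=(0,\ldots,0)$, leaving precisely the claimed product.

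To establish the homogeneity I would read it off directly from Definition~\ref{def:schur_poly}: the numerator $\det\bigl[x_i^{\lambda_j+N-j}\bigr]$ is an alternating sum of monomials each of total degree $\sum_{j=1}^N(\lambda_j+N-j)=|\lambda|+\binom{N}{2}$, while the Vandermonde denominator $\prod_{i<j}(x_i-x_j)$ is homogeneous of degree $\binom{N}{2}$. Their ratio is a polynomial that is therefore homogeneous of total degree $|\lambda|$, so it vanishes at the origin whenever $|\lambda|>0$. For the empty partition, a one-line check (the numerator is the Vandermonde itself) gives $s_\emptyset(x_1,\ldots,x_N)\equiv 1$.

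With this in hand, I would plug $x_1=\cdots=x_N=0$ into~\eqref{eqn:schur_expectation}. The prefactor $\prod_{i,j}(1-x_iy_j)^{-1}$ reduces to $1$; the factor $s_\lambda(x_1,\ldots,x_N)$ kills every term with $\lambda\neq(0,\ldots,0)$; and the surviving term contributes $s_\emptyset(y_1,\ldots,y_N)=1$ together with
\[
\prod_{i=1}^N\bigl(1-w t^{\lambda_i+N-i}\bigr)\Big|_{\lambda=\emptyset}=\prod_{i=1}^N\bigl(1-w t^{N-i}\bigr),
\]
which is the stated identity. I do not anticipate any genuine obstacle here: the corollary is essentially a consistency check that closes the induction in the proof of Theorem~\ref{thm:IKfree}, and the only substantive input is the degree count in Definition~\ref{def:schur_poly}.
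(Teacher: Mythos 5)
Your argument is correct and is essentially the paper's proof: the corollary there is justified by exactly the same observation that $s_\lambda(0,\dots,0)=\mathbf{1}_{\lambda=(0,\dots,0)}$, which follows from $s_\lambda$ being homogeneous of degree $\lambda_1+\cdots+\lambda_N$. Your explicit degree count of the determinant versus the Vandermonde just spells out the homogeneity claim the paper states without detail.
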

\begin{proof}
This follows from the fact that $s_{\lambda}(0)= \mathbf{1}_{\lambda = (0,\dots,0)}$, which, in turn, is implied by the observation that~$s_{\lambda}$ is a homogeneous polynomial of degree~$\lambda_1 + \cdots + \lambda_N$.
\end{proof}

The following theorem is an essential step in the proof of Theorem~\ref{thm:S6Vfluct}; it relates the asymptotic behavior of~$H_N $ to that of~$N - \ell(\lambda)$, for~$\lambda$ sampled from a particular sequence of Schur measures.

\begin{theorem}\label{thm:sufficient_cond}
Let $\lambda(N) = (\lambda_1(N) \geq \lambda_2(N) \geq \cdots \geq \lambda_N(N) \geq 0)$,~$N = 1,2,\dots$, be a sequence of random partitions and let~$\ell_N \defeq \ell(\lambda(N))$ be the length of~$\lambda(N)$. Let~$H_N$,~$N=1,2,\dots$, be a sequence of random variables taking values in non-negative integers. Fix~$0 < t < 1$ and assume that for all~$w \in \mathbb{C}$ and all~$N$ we have
\begin{align}
\mathbb{E}\bigl[ (1-w) (1- w t) \cdots (1- w t^{H_N-1}) \bigr]
=
\mathbb{E}\left[ \prod_{i=1}^N \left(1 - w t^{\lambda_i(N) + N - i}\right) \right] \label{eqn:equal_obs}.
\end{align}
If for some constants~$\alpha_N, \beta_N$ with~$\lim_{N\rightarrow \infty} \beta_N = \infty$, and for a random variable~$\xi$ with a continuous distribution function~$F_\xi(s)$, we have
\begin{equation}
\frac{N - \ell_N - \alpha_N}{\beta_N} \stackrel{d}{\rightarrow} \xi, \label{eqn:sc}
\end{equation}
then we also have
\begin{equation}
\frac{H_N - \alpha_N}{\beta_N} \stackrel{d}{\rightarrow} \xi. \label{eqn:sc2}
\end{equation}
\end{theorem}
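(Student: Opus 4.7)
The plan is to specialize the identity \eqref{eqn:equal_obs} at $w = -t^{-M}$ with $M = M_N(r) = \lfloor \alpha_N + \beta_N r\rfloor$, for fixed $r \in \mathbb{R}$, and divide both sides by the common constant
\[
(-t^{-M};t)_\infty = t^{-M(M+1)/2}(-t;t)_M(-1;t)_\infty,
\]
where $(a;t)_k = \prod_{j=0}^{k-1}(1 - a t^j)$ denotes the $t$-Pochhammer symbol. A direct computation shows that the normalized LHS equals $\mathbb{E}[\phi^*(H_N - M)]$, where
\[
\phi^*(\delta) = \begin{cases} (-1;t)_\delta/(-1;t)_\infty, & \delta \ge 0, \\ t^{(-\delta)(-\delta+1)/2}/[(-t;t)_{-\delta}(-1;t)_\infty], & \delta < 0,\end{cases}
\]
is a fixed, strictly increasing function $\mathbb{Z}\to(0,1]$ with $\lim_{\delta\to-\infty}\phi^*(\delta)=0$ and $\lim_{\delta\to+\infty}\phi^*(\delta)=1$. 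Setting $S(\lambda) = \{\lambda_i + N - i\}_{i=1}^N$ and using $(-t^{-M};t)_\infty = \prod_{s\ge 0}(1+t^{s-M})$, the normalized RHS equals $\mathbb{E}[\Psi_N(\lambda(N))]$ with
\[
\Psi_N(\lambda) = \prod_{s \in \mathbb{Z}_{\ge 0}\setminus S(\lambda)}(1 + t^{s-M})^{-1} \in (0,1].
\]

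The central structural observation is that the hole set $\mathbb{Z}_{\ge 0}\setminus S(\lambda)$ has minimum element exactly $N - \ell(\lambda)$, with all other holes strictly greater; this follows from $\{\lambda_i+N-i : \ell < i \le N\} = \{0,1,\ldots,N-\ell-1\}$ and $\lambda_i+N-i\ge N-\ell+1$ for $i\le \ell$. Writing $\delta' = N-\ell-M$, this yields the sandwich
\[
(-t^{\delta'};t)_\infty^{-1} \le \Psi_N(\lambda) \le (1 + t^{\delta'})^{-1},
\]
in which both bounds depend on $\lambda$ only through $\ell(\lambda)$ and tend to $1$ as $\delta'\to+\infty$ and to $0$ as $\delta'\to-\infty$. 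Since $\delta'_N = \beta_N\bigl((N-\ell_N-\alpha_N)/\beta_N - r\bigr) + O(1)$ and $\beta_N\to\infty$, hypothesis \eqref{eqn:sc} and the continuity of $F_\xi$ at $r$ force $\delta'_N$ to diverge in probability to $\pm\infty$ according to the sign of $\xi-r$; bounded convergence then gives $\mathbb{E}[\Psi_N(\lambda(N))]\to \mathbb{P}(\xi > r) = 1-F_\xi(r)$.

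Combining with the identity yields $\mathbb{E}[\phi^*(H_N - M_N(r))]\to 1-F_\xi(r)$ for every $r\in\mathbb{R}$. To deduce convergence in distribution of $(H_N-\alpha_N)/\beta_N$, I use that for every $\epsilon > 0$ there exists $C_\epsilon$ with
\[
(1-\epsilon)\mathbf{1}(\delta\ge C_\epsilon) \le \phi^*(\delta) \le \mathbf{1}(\delta\ge -C_\epsilon) + \epsilon \quad\text{on } \mathbb{Z}.
\]
Taking expectations at $\delta = H_N - M_N(r)$ gives $(1-\epsilon)\mathbb{P}(H_N \ge M_N(r) + C_\epsilon) \le \mathbb{E}[\phi^*(H_N - M_N(r))] \le \mathbb{P}(H_N \ge M_N(r) - C_\epsilon) + \epsilon$; since $\beta_N\to\infty$ the additive shifts $\pm C_\epsilon$ applied to $M_N(r)$ are absorbed by an arbitrarily small perturbation of $r$, so comparing at continuity points $\tilde r < r < \tilde r'$ of $F_\xi$ flanking $r$ and letting first $N\to\infty$, then $\epsilon\to 0$ and $\tilde r, \tilde r' \to r$ yields $F_N(r) := \mathbb{P}((H_N-\alpha_N)/\beta_N \le r) \to F_\xi(r)$ at every continuity point of $F_\xi$, which is \eqref{eqn:sc2}. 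The main technical point is the sandwich bound on $\Psi_N$: the upper bound is immediate from isolating the minimum hole $N-\ell$, whereas the lower bound uses that all other holes lie above $N-\ell$, so the tail product over those holes is dominated by the Pochhammer $(-t^{\delta'};t)_\infty^{-1}$.
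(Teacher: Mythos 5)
Your proof is correct, and its first half is the same argument as in the paper: you substitute $w=-t^{-z}$ with $z\approx \alpha_N+\beta_N r$ into \eqref{eqn:equal_obs}, normalize by $\prod_{j\ge 0}(1+t^{j-z})$, and on the partition side identify the minimal hole of $\mathbb{Z}_{\ge 0}\setminus\{\lambda_i+N-i\}$ with $N-\ell_N$ to sandwich the resulting product between $(-t^{\delta'};t)_\infty^{-1}$ and $(1+t^{\delta'})^{-1}$, $\delta'=N-\ell_N-M$; combined with the analogue of \eqref{eqn:spread} this gives exactly the paper's limit \eqref{eq_x2}. Where you genuinely diverge is the transfer to $H_N$, which is precisely the point the paper flags as delicate: since no tightness or limit for $H_N$ is known a priori, the paper cannot reuse \eqref{eqn:spread} to prove \eqref{eq_x3} and instead invokes a monotonicity-and-continuity-in-$t$ argument, deferring details to \cite{borodin2016stochastic_MM}. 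You avoid this entirely by never claiming the normalized left side of \eqref{eqn:6V_Schur} is close to a probability for $H_N$; instead you sandwich the fixed increasing function $\phi^*$ between shifted indicators, $(1-\epsilon)\mathbf{1}(\delta\ge C_\epsilon)\le\phi^*(\delta)\le\mathbf{1}(\delta\ge -C_\epsilon)+\epsilon$, and absorb the $O(1)$ shifts $\pm C_\epsilon$ by a small perturbation of $r$, using $\beta_N\to\infty$, monotonicity of $r\mapsto M_N(r)$, and continuity of $F_\xi$. This yields a fully self-contained deduction of \eqref{eqn:sc2} and is arguably cleaner than outsourcing the step; the paper's route, on the other hand, is the one that generalizes verbatim in \cite{borodin2016stochastic_MM} to settings where one compares two unknown sequences. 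Two cosmetic remarks: the factorization $(-t^{-M};t)_\infty=t^{-M(M+1)/2}(-t;t)_M(-1;t)_\infty$ holds only for $M\ge 0$, but you never use it beyond the fact that $\prod_{j\ge0}(1+t^{j-M})$ is a finite positive constant, which is true for every integer $M$; and the phrase that $\delta'_N$ ``diverges in probability according to the sign of $\xi-r$'' should be read as the precise statement that for each fixed $K$, $\mathbb{P}(\delta'_N>K)\to 1-F_\xi(r)$, $\mathbb{P}(\delta'_N<-K)\to F_\xi(r)$ and $\mathbb{P}(|\delta'_N|\le K)\to 0$, which is what your sandwich actually needs and what \eqref{eqn:spread} provides.
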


\smallskip

This statement is from \cite{borodin2016stochastic_MM}; see Proposition 5.3, Example 5.5, and Corollary 5.11 there.

\begin{proof}[Sketch of the proof of Theorem \ref{thm:sufficient_cond}]
Consider any sequence of deterministic constants~$z_N$. For each fixed $N$, set~$w = -t^{-z_N}$ and divide both sides of~\eqref{eqn:equal_obs} by
$$\prod_{i \geq 0} \left(1 + t^{-z_N + i} \right).$$

This gives
 \begin{equation}\label{eqn:6V_Schur}
 \mathbb{E} \prod_{j \geq 0} \frac{1}{1 + t^{-z_N + H_N + j}} = \mathbb{E}
 \prod_{j \in \mathbb{Z}_{\geq 0} \setminus \{\lambda_i + N - i\}} \frac{1}{1+t^{-z_N + j}}
 \end{equation}
 where~$\lambda = \lambda(N)$.

Let us take $z_N \defeq \alpha_N + \beta_N s$ for an arbitrary $s\in\mathbb R$. The statement of the theorem would follow from \eqref{eqn:6V_Schur}, if we prove two claims:
\begin{align}
\label{eq_x2}
  \mathbb{E}
\prod_{j \in \mathbb{Z}_{\geq 0} \setminus \{\lambda_i + N - i\}} \frac{1}{1+t^{-z_N + j}} &=  \mathbb{P}\left(\frac{N - \ell_N - \alpha_N}{\beta_N} > s \right) + o(1),\qquad \text{ and}
\\
\label{eq_x3}
 \mathbb{E} \prod_{j \geq 0} \frac{1}{1 + t^{-z_N + H_N + j}} &= \mathbb P\left(\frac{H_N - \alpha_N}{\beta_N} > s \right) + o(1),
\end{align}
where $o(1)$ terms tend to $0$ as $N\to\infty$.

\smallskip

In order to prove \eqref{eq_x2} we take a large $M > 0$, to be specified later and notice that as a consequence of the distributional convergence of~$\frac{N - \ell_N - \alpha_N}{\beta_N}$ to a random variable with continuous distribution (and using $\lim_{N\to\infty}\beta_N=+\infty$), we have
\begin{equation}\label{eqn:spread}
\lim_{N\to\infty} \sup_{x \in \mathbb{R}} \mathbb{P}\bigl(- M \le  N - \ell_N - x \le  M \bigr)=0.
\end{equation}
Let $\mathcal A_>$ denote the event
$$\mathcal A_>=\bigl\{\lambda\mid N - \ell_N - \alpha_N-\beta_N s>M\bigr\}.
$$
 Noting that the smallest  point of the set $\mathbb{Z}_{\geq 0} \setminus \{\lambda_i + N - i \}$ is  $N - \ell_N$, we see that on $\mathcal A_>$ we have a two-sided bound
$$
 1\ge \prod_{j \in \mathbb{Z}_{\geq 0} \setminus \{\lambda_i + N - i\}} \frac{1}{1+t^{-z_N + j}}=\prod_{j \in \mathbb{Z}_{\geq 0} \setminus \{\lambda_i + N - i\}} \frac{1}{1+t^{-\alpha_N - \beta_N s+j }} \ge \prod_{i=1}^{\infty} \frac{1}{1+t^{i+M}},
$$
and the expression in the right-hand side is close to $1$ if $M$ is large. On the other hand, on the event
$$
\mathcal A_<=\bigl\{ \lambda\mid N - \ell_N - \alpha_N-\beta_N s<-M\bigr\},
$$
 we have another bound
$$
 0\le \prod_{j \in \mathbb{Z}_{\geq 0} \setminus \{\lambda_i + N - i\}} \frac{1}{1+t^{-z_N + j}}=\prod_{j \in \mathbb{Z}_{\geq 0} \setminus \{\lambda_i + N - i\}} \frac{1}{1+t^{-\alpha_N - \beta_N s + j}} \le \frac{1}{1+t^{-M}},
$$
and the expression in the right-hand side is close to $0$ is $M$ is large. Therefore, denoting also
$$
 \mathcal A_{\approx}=\bigl\{\lambda\mid -M\le  N - \ell_N - \alpha_N-\beta_N s\le M\bigr\},
$$
we conclude that
$$
 \left|  \mathbb{E}
\prod_{j \in \mathbb{Z}_{\geq 0} \setminus \{\lambda_i + N - i\}} \frac{1}{1+t^{-z_N + j}}- \mathbb P(\mathcal A_>)\right|\le \eps(M)+ \mathbb P(\mathcal A_{\approx}),
$$
where $\eps(M)$ tends to $0$ as $M\to\infty$ (uniformly in $N$). If $N$ is large, then $ \mathbb P(\mathcal A_>)$ approximates the right-hand side of \eqref{eq_x2} and $\mathbb P(\mathcal A_{\approx})$ is close $0$ because of \eqref{eqn:spread}. Hence, choosing first $M$ to be large, and then $N$ to be even larger, we obtain \eqref{eq_x2}.

\smallskip

The proof of \eqref{eq_x3} is very similar to the proof of \eqref{eq_x2}. The only new feature is that for proving an analogue of $\mathbb P(\mathcal A_{\approx})\to 0$ for $H_N$, we can no longer use an analogue of \eqref{eqn:spread} for $H_N$, because we have not yet proven that $H_N$ has any scaling limit. The remedy is to notice that the fact \eqref{eqn:spread} only relies on the monotonicity and continuity of the limiting distribution function which can be replaced by similar monotonicity and continuity in $t$ of both sides of \eqref{eqn:6V_Schur}, see \cite[Section 5]{borodin2016stochastic_MM} for some further details.
\end{proof}

In conclusion, we find that to prove Theorem \ref{thm:S6Vfluct} it suffices to prove the distributional convergence of~$N - \ell(\lambda)$ for~$\lambda$ sampled from the Schur measure of Definition~\ref{def:SM}. A popular approach for studying Schur measures going back to~\cite{okounkov2001infinite} produces a random point process (random subset of~$\mathbb{Z}$) from~$\lambda$, and then analyzes this point process by utilizing determinantal formulas for correlation functions, written in terms of contour integrals. In this way the distribution of~$N - \ell(\lambda)$ is expressed as a Fredholm determinant. Taking the limit, we arrive at the Fredholm determinant formula for the Tracy-Widom distribution~$F_2$. This approach is well-documented, see e.g.~\cite{BorodinGorinSPB12}, and we will not follow this path. Instead, we use the technology of \emph{Macdonald difference operators}.

\subsection{Contour integral formulas}
\label{subsec:contour_int}

The use of Macdonald difference operators to study asymptotics of random Young diagrams is a method which was first developed in~\cite{BorodinCorwin2011Macdonald}. These operators were utilized in~\cite{aggarwal2015correlation_schur} to study Schur measures. The machinery was then further developed and used to study edge asymptotics of eigenvalue distributions (and their discrete analogues) in~\cite{ahn2020airy}.

Define the operator acting on multivariate symmetric polynomials in~$x_1,x_2,\dots,x_N$:
\begin{equation}\label{eqn:Dq}
D_q \defeq \prod_{i < j} (x_i - x_j)^{-1} \left(\sum_{i=1}^N T_{q, i}\right) \prod_{i < j} (x_i - x_j) ,
\end{equation}
where~$T_{q, i}f(x_1,\dots, x_N) = f(x_1,\dots, x_{i-1}, q x_i, x_{i+1}, \dots, x_N)$, for a polynomial~$f$. This operator is a particular case of a \emph{Macdonald~$q$-difference operator} (see \cite[Chapter VI]{Macdonald1995}), with parameters~$q, t$ of Macdonald set as~$q = t$.

It follows from Definition~\ref{def:schur_poly} that
\begin{align}\label{eqn:schur_rel}
D_q s_{\lambda}(x_1,\dots, x_N) = \left(\sum_{i=1}^N q^{\lambda_i + N - i} \right) s_{\lambda}(x_1,\dots, x_N).
\end{align}
It is also helpful to use an equivalent form of~\eqref{eqn:Dq}:
\begin{equation}\label{eqn:Dq2}
D_q = \sum_{i=1}^N  \prod_{j\neq i} \frac{q x_i - x_j}{x_i - x_j} T_{q, i} .
\end{equation}
The following proposition uses this operator to extract contour integral formulas for observables of the Schur measure.

\begin{prop}\label{prop:integral_1}
Let $\lambda_1 \geq \cdots \geq \lambda_N \geq 0$ be distributed as the Schur measure
\begin{equation}
P(\lambda) = \left( \prod_{i, j =1}^N (1- x_i y_j) \right) s_{\lambda}(x_1, \dots, x_N) s_\lambda (y_1, \dots, y_N), \qquad |x_i y_j| < 1. \label{eqn:SM}
\end{equation}
Then for any distinct~$ 0 < q_1,\dots,q_k < 1$ we have
\begin{align*}
\mathbb{E}\left[ \prod_{m=1}^k \sum_{i=1}^N q_m^{\lambda_i + N - i } \right]
= & \frac{1}{(2 \pi \i)^k} \oint_{\Gamma_1} \cdots \oint_{\Gamma_k}   \prod_{1 \leq i  < j \leq k}  \frac{q_i z_i - q_j  z_j}{z_i - q_j z_j} \frac{z_i - z_j}{q_i z_i - z_j} \\
&\times  \left( \prod_{m=1}^k \prod_{j=1}^N \frac{q_m z_m - x_j}{z_m - x_j}\right) \left( \prod_{m=1}^k \prod_{j=1}^N\frac{1 - y_j z_m}{1 - q_m  y_j z_m} \right) \prod_{m=1}^k \frac{d z_m}{(q_m-1)z_m},
\end{align*}
where the contours of integration are as follows: Each $z_j$ contour~$\Gamma_j$ is a loop which contains $x_1,\dots, x_N$, does not contain $0$, and does not contain the pole at $z_i / q_j$ or $q_i z_i$ for any $i < j$ nor the pole at~$\frac{1}{ q_j  y_i}$ for any~$i=1,\dots,N$.
\end{prop}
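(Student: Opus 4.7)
The plan is to first convert the Schur-measure expectation into the action of Macdonald difference operators on the Cauchy kernel, and then expand that action into the claimed contour integral by induction on $k$. Writing $F(x,y) := \prod_{i,j}(1-x_iy_j)^{-1}$ and using the Cauchy identity $F(x,y)=\sum_\lambda s_\lambda(x)s_\lambda(y)$ together with the eigenrelation \eqref{eqn:schur_rel} applied in the $x$-variables, one obtains
\[
\mathbb{E}\Big[\prod_{m=1}^k \sum_{i=1}^N q_m^{\lambda_i+N-i}\Big] \;=\; \frac{(D_{q_1}\cdots D_{q_k} F)(x,y)}{F(x,y)}.
\]
It therefore suffices to prove that this ratio equals the $k$-fold contour integral in the statement.

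I would proceed by induction on $k$. Set $A_l(q)=\prod_{j\neq l}\frac{qx_l-x_j}{x_l-x_j}$ and $B_l(q)=\prod_j\frac{1-x_ly_j}{1-qx_ly_j}$. For the base case $k=1$, the formula \eqref{eqn:Dq2} gives $(D_{q_1}F)/F=\sum_l A_l(q_1)B_l(q_1)$, and a direct residue computation shows that this coincides with the sum of residues at $z_1=x_l$ of the proposed integrand (the residue of $\frac{1}{(q_1-1)z_1}\prod_j\frac{q_1z_1-x_j}{z_1-x_j}$ at $z_1=x_l$ is $A_l(q_1)$, and the $y$-factor evaluates to $B_l(q_1)$). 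For the inductive step, assume $D_{q_1}\cdots D_{q_{k-1}}F=F\cdot G_{k-1}$ with $G_{k-1}$ the $(k{-}1)$-fold integral. The key identity $T_{q_k,l}(F\,G_{k-1})=(T_{q_k,l}F)(T_{q_k,l}G_{k-1})$ gives $(T_{q_k,l}F)/F=B_l(q_k)$, while $T_{q_k,l}G_{k-1}$ differs from $G_{k-1}$ by multiplication by $\prod_{m<k}\frac{(q_mz_m-q_kx_l)(z_m-x_l)}{(z_m-q_kx_l)(q_mz_m-x_l)}$; this is exactly the value at $z_k=x_l$ of the cross-factor $\frac{q_mz_m-q_kz_k}{z_m-q_kz_k}\cdot\frac{z_m-z_k}{q_mz_m-z_k}$ that appears in the proposition. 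Promoting the resulting sum $\sum_l A_l(q_k)B_l(q_k)(T_{q_k,l}G_{k-1})$ to a contour integral in $z_k$ enclosing only the $x_l$'s thus reproduces the claimed $k$-fold integral.

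The main obstacle is the contour geometry: the new integrand in $z_k$ has poles at $z_k=0$, at $z_k=1/(q_k y_i)$, and, critically, at $z_k=q_m z_m$ and $z_k=z_m/q_k$ for $m<k$ generated by the cross-factors, and one must verify that $\Gamma_k$ can be chosen to enclose $x_1,\ldots,x_N$ while avoiding all of these. These are exactly the excluded poles listed in the proposition, and under the hypotheses $0<q_m<1$ and $|x_iy_j|<1$ a valid nested placement can be arranged: take $\Gamma_1$ to be a small loop just outside $\{x_1,\ldots,x_N\}$, and at each inductive step inflate $\Gamma_k$ only very slightly so that it still encloses $x_1,\ldots,x_N$ but stays strictly outside the poles $q_mz_m$ and $z_m/q_k$ produced by the inner contours already placed. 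With the contours arranged this way, only the residues at $z_k=x_l$ contribute at each step, and the inductive computation above closes.
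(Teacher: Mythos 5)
Your overall route is the same as the paper's: reduce the expectation to $(D_{q_1}\cdots D_{q_k}F)/F$ with $F=\prod_{i,j}(1-x_iy_j)^{-1}$ via the Cauchy identity and the eigenrelation \eqref{eqn:schur_rel}, do $k=1$ by summing residues at $z=x_l$, and in the inductive step extract $B_l(q_k)=(T_{q_k,l}F)/F$ while the cross-factor evaluated at $z_k=x_l$ records the action of $T_{q_k,l}$ on the integrand of $G_{k-1}$. That algebra is correct and matches Section~\ref{subsec:contour_int}. The genuine gap is in the step you yourself call the main obstacle, and your resolution of it fails. The exclusion conditions on the new contour $\Gamma_k$ (avoiding $0$, $1/(q_ky_i)$, $q_mz_m$, $z_m/q_k$) are not enough: for the promotion step to close you also need an \emph{inclusion} condition on the contours already placed. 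Taking the residue of the $k$-fold integrand at $z_k=x_l$ replaces, in each inner variable, the factor $\frac{q_mz_m-x_l}{z_m-x_l}$ by $\frac{q_mz_m-q_kx_l}{z_m-q_kx_l}$; the remaining $(k-1)$-fold integral equals $T_{q_k,l}G_{k-1}$ only if each $\Gamma_m$, $m<k$, encloses the new pole $q_kx_l$ (and, iterating, the points $q_jq_{j'}\cdots x_l$ produced at later steps). Your prescription --- loops hugging $\{x_1,\dots,x_N\}$, inflated ``only very slightly'' --- deliberately keeps the contours tight around the $x_j$'s and therefore misses exactly these points whenever the $q_m$ are not close to $1$, and then the identity you are propagating is false for those contours. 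Concretely, take $N=1$, $k=2$, $x=1$, $y=0$, $q_1=\tfrac12$, $q_2=\tfrac13$, and both contours small circles around $1$ (this satisfies all the exclusions): the $z_2$-integral leaves $\frac{1}{2\pi\i}\oint_{\Gamma_1}\frac{q_1z_1-q_2}{z_1-q_2}\,\frac{dz_1}{(q_1-1)z_1}$, whose only poles, at $z_1=q_2$ and $z_1=0$, lie outside $\Gamma_1$, so the right-hand side is $0$, while the left-hand side is $\mathbb{E}[(q_1q_2)^{\lambda_1}]=1$.

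The fix is a nesting condition rather than a slight inflation: one needs, for instance, $\Gamma_m$ to contain $q_j\Gamma_j$ (hence the points $q_jx_l$) for $m<j$, or equivalently one takes the $q_j$ close enough to $1$ so that such nested loops around the $x_j$'s exist --- which is exactly how the contours are arranged where the formula is used in the paper (cf.\ the caption of Figure~\ref{fig:contours} and the regime $q_m=1-s_mN^{-1/3}$). With that condition added, your identification of the inner integral with $T_{q_k,l}G_{k-1}$ is justified and the rest of your induction goes through, coinciding with the paper's proof; to be fair, the paper's own statement of the contour conditions is terse on this same point (moving $D_{q_k}$ under the integral sign needs the identical inclusion), but your explicit choice of contours is precisely the choice that breaks it.
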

\begin{remark}
If all~$x_i$ are equal, then we can take~$\Gamma_j$ to be small loops around the point~$x_1=\cdots=x_N$.
\end{remark}

\begin{proof}[Proof of Proposition \ref{prop:integral_1}]
The Cauchy identity, see \cite[Chapter I, Section 4]{Macdonald1995}, reads
$$\sum_{\lambda} s_{\lambda}(x_1,\dots, x_N) s_{\lambda}(y_1,\dots, y_N) = \prod_{i, j=1}^N (1 - x_i y_j)^{-1} .$$
We apply the operator~$D_q$ in the form~\eqref{eqn:Dq2} to both sides.
Using~\eqref{eqn:schur_rel}, the LHS becomes $\sum_{\lambda} \left( \sum_{i=1}^N q^{\lambda_i + N - i} \right) s_{\lambda}(x) s_{\lambda}(y)$. Transforming the RHS, we get
\begin{align}
\label{eq_x4}\sum_{\lambda} \left( \sum_{i=1}^N q^{\lambda_i + N - i} \right) s_{\lambda}(x) s_{\lambda}(y)&=\prod_{i, j=1}^N (1 - x_i y_j)^{-1} \sum_{i=1}^N \left[ \prod_{j\neq i} \frac{q x_i - x_j}{x_i - x_j} \right] \prod_{j=1}^N \frac{1 - x_i y_j}{1 - q x_i y_j} \\
&= \prod_{i, j=1}^N (1 - x_i y_j)^{-1} \frac{1}{2 \pi \i} \oint \prod_{i=1}^N \left[ \frac{q z - x_i}{z - x_i} \frac{ 1 - y_j z}{1 - q y_j z} \right] \frac{d z}{(q-1)z} \notag.
\end{align}
For the last equality we choose the contour of integration above to contain $x_1,\dots, x_N$ and exclude $0$ and $\frac{1}{q y_1}, \dots, \frac{1}{q y_N}$. Then we can see the equality above by summing up the residues at $z  = x_i$. Multiplying \eqref{eq_x4} by $\prod_{i,j}(1-x_i y_j)$, we get the result for $k = 1$.

For the~$k = 2$ result, we set $q = q_1$ in \eqref{eq_x4} and apply $D_{q_2}$ to both sides of the equation.
Then the LHS again transforms according to the eigenrelation resulting in
\begin{equation}
\label{eq_x5}
\sum_{\lambda}  \left( \sum_{i=1}^N q_1^{\lambda_i + N - i} \right)  \left( \sum_{i=1}^N q_2^{\lambda_i + N - i} \right) s_{\lambda}(x) s_{\lambda}(y),
\end{equation}
and the right hand side becomes
\begin{multline}\label{eqn:2var}
 \frac{1}{2 \pi \i} \oint \left( \sum_{r=1}^N \prod_{l \neq r}  \frac{q_2 x_r - x_l}{x_r - x_l} T_{q_2, r} \left[ \prod_{i=1}^N \frac{q_1 z - x_i}{z - x_i} \prod_{j=1}^N (1 - x_i y_j)^{-1}\right] \right)
  \prod_{i=1}^N \frac{ 1 - y_i z}{1 - q_1 y_i z} \frac{d z}{(q_1-1)z}.
 \end{multline}
Note that we can write
$$\sum_{r=1}^N \prod_{l \neq r}  \frac{q_2 x_r - x_l}{x_r - x_l} T_{q_2, r} \left[ \prod_{i=1}^N \frac{q_1 z - x_i}{z - x_i} \prod_{j=1}^N (1 - x_i y_j)^{-1}\right] $$
as
\begin{multline}\label{eqn:integrand}
\left[ \prod_{i=1}^N \frac{q_1 z - x_i}{z - x_i} \prod_{j=1}^N (1 - x_i y_j)^{-1}\right] \frac{1}{2 \pi \i} \oint  \frac{q_1 z - q_2  w}{z - q_2 w} \frac{z - w}{q_1 z - w} \prod_{j=1}^N \frac{1 - w y_j}{1 - q_2 w y_j}   \\
  \times \left(\prod_{l =1}^N  \frac{q_2 w - x_l}{w - x_l} \right) \; \frac{d w}{(q_2-1)w},
\end{multline}
where the integration contour contains~$x_1,\dots,x_N$, and does not contain $0$, $\frac{1}{q_2 y_j}$ for any $j$, or $q_1 z$. We again see this by summing the residues. Plugging this back into the expression \eqref{eqn:2var} for the RHS, and then multiplying the result and the equal \eqref{eq_x5} by $\prod_{i, j} (1 - x_i y_j)$, we obtain the statement of the proposition for $k = 2$. The result for general $k$ is obtained by induction, and the general induction step proceeds in a similar fashion. \qedhere

\end{proof}

\begin{remark}
The mechanism we utilized to introduce contour integrals can be described as follows: If $f(z)$ is a analytic function, then to write $D_{q} \prod_{i=1}^N f(x_i)$ as a contour integral, one can integrate the function $\prod_{i=1}^N f(x_i) \cdot \frac{f(q z)}{f(z)}$ against $\prod_{i=1}^N \frac{q_1 z - x_i}{z - x_i} \frac{dz}{(q-1)z}$, if the contour of integration is chosen to contain $\{x_i\}$ and no other poles. More generally, if $S(x_1,\dots, x_N)$ is symmetric and analytic in the $x_i$, then in the case that $S$ admits a so-called \emph{supersymmetric lift}, one can construct contour integral formulas for $D_{q} S(x_1,\dots, x_N)$ in a similar way, see~\cite{ahn2020airy} for details.
\end{remark}

\begin{figure}
\centering
\includegraphics[width=0.8\linewidth]{./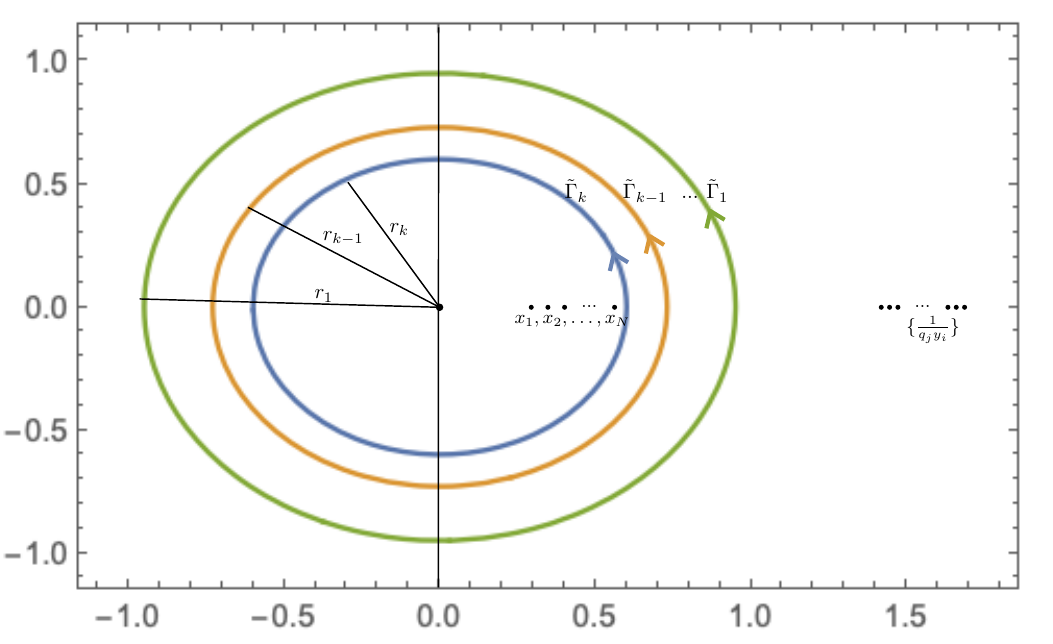}
\caption{A possible choice of contours in Corollary~\ref{cor:obs2}. Here~$\tilde{\Gamma}_j$ are circles with radii~$r_j$, which we must choose so that~$\frac{r_j}{r_i} < q_i $ for all~$i < j$ (recall in particular that each~$q_j < 1$). This can always be achieved provided that each~$q_j$ is close enough to~$1$.}
\label{fig:contours}
\end{figure}

\begin{cor}\label{cor:obs2}
Let $\lambda_1 \geq \cdots \geq \lambda_N \geq 0$ be distributed as a Schur measure

$$P(\lambda) = \prod_{i, j =1}^N (1- x_i y_j) s_{\lambda}(x_1, \dots, x_N) s_\lambda (y_1, \dots, y_N),
\qquad |x_i y_j| < 1.$$
Then for any $ 0 < q_1 , \dots,  q_k < 1$ we have

\begin{align*}
\mathbb{E}\left[\prod_{m=1}^k \sum_{j \in \mathbb{Z}_{\geq 0} \setminus \{\lambda_i + N - i \}} q_m^j \right]
= & \frac{1}{(2 \pi \i)^k} \int_{\tilde{\Gamma}_1} \cdots \int_{\tilde{\Gamma}_k}   \prod_{1 \leq i  < j \leq k}  \frac{q_i z_i - q_j  z_j}{z_i - q_j z_j} \frac{z_i - z_j}{q_i z_i - z_j} \\
&\times  \left( \prod_{m=1}^k \prod_{j=1}^N \frac{q_m z_m - x_j}{z_m - x_j}\right) \left( \prod_{m=1}^k \prod_{j=1}^N\frac{1 - y_j z_m}{1 - q_m  y_j z_m} \right) \prod_{m=1}^k \frac{d z_m}{(1-q_m)z_m}
\end{align*}
where each contour~$\tilde{\Gamma}_j$ contains~$\{0, x_1,\dots, x_N\}$ in its interior, and~$\tilde{\Gamma}_j$ does not contain any of the poles in~$z_j$ occurring at~$\frac{1}{q_j} z_i$ or at~$q_i z_i$ for~$i < j$, or at~$\frac{1}{q_j y_i}$,~$i=1,\dots, N$. See Figure~\ref{fig:contours} for an illustration of allowed contours in the case that each~$q_j$ is close enough to~$1$.
\end{cor}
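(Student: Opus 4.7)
I start from the elementary identity
\begin{equation*}
B_m := \sum_{j \in \mathbb{Z}_{\ge 0} \setminus \{\lambda_i + N - i\}_{i=1}^N} q_m^j = \frac{1}{1-q_m} - A_m, \qquad A_m := \sum_{i=1}^N q_m^{\lambda_i + N - i},
\end{equation*}
valid since $\{\lambda_i + N - i\}_{i=1}^N$ is a set of $N$ distinct nonnegative integers and $|q_m|<1$. Distributing the product and taking expectations gives
\begin{equation*}
\mathbb{E}\prod_{m=1}^k B_m \;=\; \sum_{T \subseteq \{1,\ldots,k\}} (-1)^{|T|} \prod_{m \notin T} \frac{1}{1-q_m} \cdot \mathbb{E}\prod_{m \in T} A_m,
\end{equation*}
and Proposition~\ref{prop:integral_1} writes each $\mathbb{E}\prod_{m \in T} A_m$ as a $|T|$-fold contour integral over contours $\Gamma_m$ enclosing $\{x_1,\ldots,x_N\}$ but not $0$.

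\textbf{Key computation.} The $2^k$ terms on the right-hand side will reassemble into the single multi-contour integral of the corollary once each $\Gamma_m$ is enlarged to $\tilde{\Gamma}_m$, picking up the additional pole at $z_m=0$. The crucial check is that the residue at $z_m=0$ of the Corollary's integrand equals $\frac{1}{1-q_m}$ times the same integrand with the variable $z_m$ removed. Setting $z_m=0$: the factors $\prod_j \frac{q_m z_m - x_j}{z_m - x_j}$ and $\prod_j \frac{1 - y_j z_m}{1 - q_m y_j z_m}$ both specialize to $1$; each cross factor involving $z_m$ also specializes to $1$ (for $i<m$, $\frac{q_i z_i}{z_i}\cdot\frac{z_i}{q_i z_i} = 1$; for $i = m < j$, $\frac{-q_j z_j}{-q_j z_j}\cdot\frac{-z_j}{-z_j}=1$); only $\frac{1}{(1-q_m)z_m}$ survives and contributes the claimed residue.

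\textbf{Assembly.} Writing $\oint_{\tilde{\Gamma}_m} = \oint_{\Gamma_m} + 2\pi \i \cdot \mathrm{Res}_{z_m=0}$ and expanding the product over $m$ produces a sum indexed by subsets $S\subseteq\{1,\ldots,k\}$: on $S$ one collects residues, contributing $\prod_{m \in S}\frac{1}{1-q_m}$, and on $S^c$ one keeps the original contour integral, which (up to the sign flip $(q_m-1)\leftrightarrow(1-q_m)$ in the denominator, accounting for a factor $(-1)^{|S^c|}$) equals $\mathbb{E}\prod_{m \in S^c} A_m$ by Proposition~\ref{prop:integral_1}. Setting $T = S^c$ matches the elementary expansion displayed above term by term, completing the identification.

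\textbf{Anticipated obstacle.} The one genuine subtlety is arranging admissible contours: each $\tilde{\Gamma}_m$ must enclose $\{0, x_1,\ldots,x_N\}$ simultaneously, while avoiding the nested poles at $z_i/q_j$, $q_i z_i$ (for $i<j$), and $1/(q_j y_i)$. This is immediate for $q_m$ close to $1$ via the concentric circles of Figure~\ref{fig:contours}; for general $q_m\in(0,1)$ I would first choose small admissible $\Gamma_m$'s for Proposition~\ref{prop:integral_1}, then deform each outward across only the pole at $z_m=0$, using that the poles $1/(q_j y_i)$ lie away from $0$ and the interactions between different $z_i, z_j$ can be controlled by suitable nesting of the enlarged contours.
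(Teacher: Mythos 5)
Your proposal is correct and follows essentially the same route as the paper: the paper proves the $k=1$ case by deforming $\Gamma_1$ through the origin to $\tilde\Gamma_1$, picking up the residue $\tfrac{1}{1-q}$ which accounts for the deterministic shift of the observable, and declares the $k>1$ case similar. Your subset expansion of $\prod_m\bigl(\tfrac{1}{1-q_m}-A_m\bigr)$ together with the check that the residue at $z_m=0$ reproduces $\tfrac{1}{1-q_m}$ times the reduced integrand is exactly that "similar argument" made explicit, with the same contour-nesting caveat the paper also leaves brief.
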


\begin{proof}
We only prove the~$k = 1$ version, which says that
\begin{align*}
\mathbb{E}\left[ \sum_{j \in \mathbb{Z}_{\geq 0} \setminus \{\lambda_i + N - i \}} q^j \right]
= \frac{1}{2 \pi \i (1-q)} \oint_{\text{around } \{0, x_1,\dots, x_N\}} \prod_{j=1}^N \left( \frac{q z - x_j}{z - x_j} \frac{1 - y_j z}{1 - q  y_j z} \right) \frac{d z}{z},
\end{align*}
where the contour contains~$\{0, x_1,\dots, x_N\}$ and no other poles of the integrand. (For~$k > 1$, the argument is similar.)

We prove this identity by using the $k = 1$ version of Proposition~\ref{prop:integral_1} with both sides multiplied by $(-1)$, and deforming the $\Gamma_1$ contour through $0$ to get $\tilde{\Gamma}_1$, picking up the residue $\frac{1}{1-q} = 1 + q + q^2 + \cdots$ in the process. This residue is responsible for the (deterministic) change of the expression under the expectation.
\end{proof}

\subsection{Asymptotic analysis}

Note that the smallest point in the point configuration~$\mathbb{Z}_{\geq 0} \setminus  \{\lambda_i + N - i \}_{i=1}^N$ is~$N-\ell(\lambda)$, the observable which appeared in Theorem~\ref{thm:sufficient_cond}. In particular, we have
$$\sum_{j \in \mathbb{Z}_{\geq 0} \setminus  \{\lambda_i + N - i \}} q^j = q^{N-\ell(\lambda)} + \text{higher powers of } q.$$
Therefore, $q^{N-\ell(\lambda)}$ is the largest term in the sum, and if we scale $q \rightarrow 1$ with $N$, the asymptotic behavior in the observable of Corollary \ref{cor:obs2} encodes information about the distribution of ${N-\ell(\lambda)}$. In more detail, suppose the points in $\mathbb{Z}_{\geq 0} \setminus \{\lambda_i + N - i \}$ are $p_1 < p_2 < p_3 <\cdots$, and define random variables $a_i$ through
\begin{equation}
\label{eq_x6}
p_i = N \alpha - N^{1/3} a_i,
\end{equation}
where $\alpha$ is a deterministic constant. If we set $q = 1 - s N^{-1/3}$, then we get as $N\to\infty$
\begin{align}
\sum_{i \geq 1} q^{p_i} &= \sum_{i\geq 1} (1 - s N^{-1/3})^{N \alpha - N^{1/3} a_i }  \notag \\
&= (1 - s N^{-1/3})^{N \alpha } \sum_{i\geq 1}  e^{s a_i}  \left(1 + o(1)\right),  \label{eqn:edgescaling}
\end{align}
so that as  $\sum_{i \geq 1} q^{p_i}$ is converging $N\to\infty$ to a type of (random) Laplace transform for the limiting point process defined by $\{a_i\}_{i=1}^\infty$. Motivated by this discussion, we have the following theorem.

\begin{theorem}\label{thm:oneq_conv}
Fix a parameter~$ 0 < u < 1$ and let~$\lambda = (\lambda_1, \dots, \lambda_N)$ be a random partition distributed according to the Schur measure of Definition~\ref{def:SM} with~$x_1 = \cdots = x_N = 1$,~$y_1 = \cdots = y_N = u$. Let~$p_1 < p_2 < \cdots$ be the points of~$\mathbb{Z}_{\geq 0} \setminus \{\lambda_i + N - i \}_{i=1}^N$. Then for each~$ s  > 0$, setting~$q = 1 - s N^{-1/3}$, we have as~$N \rightarrow \infty$
$$(1 - s N^{-1/3})^{-\alpha N}  \mathbb{E} \sum_{\{p_i\}} q^{p_i} \rightarrow \frac{1}{ 2\sqrt{ \pi}} (s \sigma)^{-3/2} \exp\left( \frac{(s \sigma)^3}{12} \right),$$
where~$\displaystyle \alpha = \frac{1- \sqrt{u}}{1 + \sqrt{u}}$, $\displaystyle \sigma = u^{1/6} \frac{(1 - \sqrt{u})^{1/3}}{1 + \sqrt{u}}$.
\end{theorem}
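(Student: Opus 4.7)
The plan is to start from the $k=1$ case of Corollary~\ref{cor:obs2} specialized at $x_i\equiv 1$, $y_j\equiv u$, which gives
$$\mathbb{E}\sum_i q^{p_i} \;=\; \frac{1}{2\pi \i(1-q)}\oint_{\tilde\Gamma}\!\left(\frac{qz-1}{z-1}\right)^{\!N}\!\left(\frac{1-uz}{1-quz}\right)^{\!N}\!\frac{dz}{z},$$
where $\tilde\Gamma$ encloses $\{0,1\}$ but avoids $1/(qu)$, and then to apply steepest descent with $\epsilon := 1-q = sN^{-1/3}$. Absorbing the prefactor $(1-sN^{-1/3})^{-\alpha N}$ into the integrand turns it into $e^{N\tilde\Phi(z;\epsilon)}$, where
$$\tilde\Phi(z;\epsilon) \;=\; \log\frac{qz-1}{z-1} + \log\frac{1-uz}{1-quz} - \alpha\log(1-\epsilon).$$

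First I would expand in $\epsilon$, obtaining
$$\tilde\Phi(z;\epsilon) = \epsilon\bigl(\alpha - g_1(z)\bigr) + \epsilon^2\bigl(\tfrac{\alpha}{2} - g_2(z)\bigr) + \epsilon^3\bigl(\tfrac{\alpha}{3} - g_3(z)\bigr) + O(\epsilon^4),$$
with $g_1(z) = \frac{z}{z-1} + \frac{uz}{1-uz}$ and $g_k$ the higher Taylor coefficients in $\epsilon$. The critical-point equation $g_1'(z)=0$ has solutions $z = \pm 1/\sqrt{u}$, and only $z_0 := -1/\sqrt{u}$ satisfies the ``matching'' identity $g_1(z_0) = \alpha = \tfrac{1-\sqrt{u}}{1+\sqrt{u}}$. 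A direct check also yields the ``double cancellation'' $g_2(z_0) = \alpha/2$, and one further computation gives $\alpha/3 - g_3(z_0) = \sigma^3/3$. These three identities are the algebraic heart of the proof: together they imply $N\tilde\Phi(z_0;\epsilon) \to s^3\sigma^3/3$ as $N\to\infty$.

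Next I would deform $\tilde\Gamma$ so that it passes through $z_0$ along the imaginary direction. This is permissible since the only pole of the integrand outside $\tilde\Gamma$ is $z=1/(qu)$, while the intervening singularities at $z=1/q$ and $z=1/u$ are order-$N$ zeros that can be freely crossed. Setting $z = z_0 - \i\epsilon w$ near the saddle gives
$$N\tilde\Phi(z_0 - \i\epsilon w;\epsilon) \to \frac{s^3\sigma^3}{3} + \i s^3 g_2'(z_0)\,w + \frac{s^3 g_1''(z_0)}{2}\,w^2,$$
a quadratic in $w$ with $\mathrm{Re}\,(s^3 g_1''(z_0)) < 0$ (since $g_1''(z_0) < 0$ by direct computation). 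The Gaussian integral in $w$ completes the square, and invoking the further algebraic identity $g_2'(z_0)^2 = -\sigma^3 g_1''(z_0)/2$ shifts the coefficient of $s^3\sigma^3$ in the exponent from $1/3$ to $1/3 - 1/4 = 1/12$. Collecting the Jacobian $-\i\epsilon/z_0$, the prefactor $1/(2\pi\i(1-q))$, the cancellation $\epsilon = 1-q$, and the Gaussian normalization $\sqrt{-2\pi/(s^3 g_1''(z_0))}$ produces exactly the prefactor $\frac{1}{2\sqrt{\pi}}(s\sigma)^{-3/2}$ in the statement.

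The main obstacle will be the global steepest-descent justification: one must exhibit a deformed contour on which $\mathrm{Re}\,\tilde\Phi(\cdot;\epsilon)$ attains its maximum only in a shrinking neighborhood of $z_0$, with decay fast enough to make the off-saddle contribution $o(1)$ after multiplication by $(1-\epsilon)^{-\alpha N}/\epsilon$. Since the leading exponent is governed by $\alpha - g_1(z)$, I would choose the globally deformed contour to track the level curves of $\mathrm{Re}\, g_1$ through $z_0$; the two summands $\tfrac{1}{z-1}$ and $\tfrac{1}{1-uz}$ of $g_1$ have explicit circular level curves, which makes this analysis tractable. Once the tail estimate is established, the remainder of the proof consists of the direct algebraic verifications of the four identities $g_1(z_0) = \alpha$, $g_2(z_0) = \alpha/2$, $\alpha/3 - g_3(z_0) = \sigma^3/3$, and $g_2'(z_0)^2 = -\sigma^3 g_1''(z_0)/2$, together with the Gaussian integration.
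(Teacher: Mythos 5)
Your proposal follows essentially the same route as the paper's proof: the $k=1$ case of Corollary~\ref{cor:obs2} at $x_i\equiv 1$, $y_j\equiv u$, the scaling $q=1-sN^{-1/3}$, a three-term expansion of the log-integrand (your $g_1,g_2,g_3$ are, up to signs and factors, the paper's $F_0,F_1,F_2$ of \eqref{eqn:F0}--\eqref{eqn:F2}), the saddle $z_0=-1/\sqrt{u}$, and a Gaussian integral at scale $N^{-1/3}$. I checked your four algebraic identities ($g_1(z_0)=\alpha$, $g_2(z_0)=\alpha/2$, $\alpha/3-g_3(z_0)=\sigma^3/3$, $g_2'(z_0)^2=-\sigma^3 g_1''(z_0)/2$) and the bookkeeping of Jacobian, $1/(1-q)$, and Gaussian normalization (using $-g_1''(z_0)=2u\sigma^3$): they are correct and reproduce $\frac{1}{2\sqrt{\pi}}(s\sigma)^{-3/2}e^{(s\sigma)^3/12}$, matching the paper's evaluations $F_0(z_c)=F_1(z_c)=-\alpha$ and its completion-of-the-square step. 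The organizational difference (absorbing $(1-sN^{-1/3})^{-\alpha N}$ into the phase and completing the square via the identity on $g_2'(z_0)$) is cosmetic. Your remark that the contour deformation only crosses the order-$N$ zeros at $1/q$ and $1/u$, never the pole at $1/(qu)$, is also right.

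The one point that would fail as written is your plan for the global tail estimate: you propose deforming to ``the level curves of $\mathrm{Re}\, g_1$ through $z_0$.'' Along the curve $\mathrm{Re}\, g_1=\mathrm{Re}\, g_1(z_0)=\alpha$ the leading term $N\epsilon(\alpha-g_1(z))$ has identically vanishing real part, so the integrand has constant modulus at order $e^{N^{2/3}}$ and there is no localization near $z_0$; the desired bound ``$\mathrm{Re}\,\tilde\Phi$ is maximized only near $z_0$'' is false on that family. What you need (and what the paper uses) is the level curve of the \emph{imaginary} part of the leading phase: $\mathrm{Im}\, g_1(z)=\mathrm{Im}\, F_0(z)=0$ has two branches, the real axis and, explicitly, the circle $|z|=1/\sqrt{u}$, and along that circle $\mathrm{Re}\, F_0$ strictly decreases from $z_-=-1/\sqrt{u}$ to $z_+=1/\sqrt{u}$ (no further critical points or singularities in between), which gives the $\exp(-cN^{2/3-2\delta})$ tail bound outside an $N^{-\delta}$ neighborhood of $z_0$. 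With that replacement of contour your argument closes, and it is then the paper's proof.
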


\begin{remark}\label{rmk:AiryHint}
In the notation \eqref{eq_x6}, Theorem~\ref{thm:oneq_conv} leads to a prediction
$$\lim_{N \rightarrow \infty} \left\{\frac{a_i}{\sigma}\right\}_{i=1}^\infty \stackrel{d}{=} \{\mathfrak{a}_i\}_{i=1}^\infty,$$
where~$\mathfrak{a}_i$ satisfy
$$\mathbb{E} \sum_{i=1}^{\infty} \exp(s \mathfrak{a}_i) =  \frac{1}{ 2\sqrt{ \pi}} s^{-3/2} \exp \left( \frac{s^3}{12} \right) . $$
Theorem~\ref{thm:oneq_conv} fixes the scaling but is not yet sufficient to determine the distribution of the~$\{\mathfrak{a}_i\}_{i=1}^\infty$, from which we are mostly interested in~$\mathfrak{a}_1$ giving the desired limit of~$N - \ell(\lambda)$.
\end{remark}

\begin{proof}[Proof of Theorem \ref{thm:oneq_conv}]

 Our strategy is to follow the method of steepest decent (see, e.g., \cite{Erdelyi56, Copson65} for general discussions) in order to analyze the asymptotic behavior of the contour integral formula of Corollary~\ref{cor:obs2}, which at a high level consists of the following steps. We first write the integrand as~$\exp(N F_N(z))$, where~$F_N(z) = F_0(z)  + F_1(z) \delta + F_2(z) \delta^2 + \cdots$ for some~$\delta= \delta(N) \rightarrow 0$. Then we look for the critical point~$z_c$ of~$F_0$, and aim to deform the integration contour so that it passes through this point and goes along a level line of~$\text{Im}(F_0)$, so that the integrand does not oscillate and~$\text{Re}(F_0)$ strictly decreases as we move along the contour away from $z_c$. One can often argue that this can be done using only a few generic properties of $F_0$, instead of using its exact form; however, in our case we use the explicit formulas available to make the argument concrete. Once we have done this, we may restrict attention to a neighborhood of the critical point, as the rest of the integral is negligible. In order to obtain the exact coefficient of the leading order term, we will compute a Gaussian integral after an appropriate change of variables.

\begin{enumerate}[1.]

\item We start with the formula
\begin{align*}
\mathbb{E} \sum_{i\geq1} q^{p_i} = \frac{1}{2 \pi \i (1-q)} \int_{\gamma_{0,1}}  \left( \frac{q z - 1}{z - 1} \frac{1 - u z}{1 - q  u z} \right)^N \frac{d z}{z},
\end{align*}
where~$\gamma_{0, 1}$ above is a contour which contains the points $0, 1$ and not the point $\frac{1}{u}$; this comes from specializing $x_j \equiv 1, y_j \equiv u$ in Corollary \ref{cor:obs2}. Setting $q = 1 - s N^{-1/3}$ and Taylor expanding we see that
\begin{align*}
\log \left( \frac{q z - 1}{z - 1} \frac{1 - u z}{1 - q  u z} \right) &=
 \frac{s (1 - u) z }{(-1 + z) (-1 + u z)} N^{-1/3}  \\
&+\left(-\frac{s^2 z^2}{2 (-1 + z)^2} +
    \frac{s^2 u^2 z^2}{2 (-1 + u z)^2}\right) N^{-2/3} \\
     &+ \left(-\frac{s^3 z^3}{
  3 (-1 + z)^3} + \frac{s^3 u^3 z^3}{3 (-1 + u z)^3}\right) N^{-1} + O(N^{-4/3}),
\end{align*}
where the $O(N^{-4/3})$ error is uniform over $z$ in compact subsets of $\mathbb{C} \setminus \{1, \frac{1}{u}\}$.

We can therefore write the integrand as
\begin{align}\label{eqn:actions}
\exp\left(s N^{2/3} F_0(z) + \frac{s^2}{2} N^{1/3} F_1(z) + \frac{s^3}{3} F_2(z)+ O(N^{-1/3})\right) \frac{d z}{z},
\end{align}
where the meromorphic functions $F_0,F_1,F_2$ also implicitly depend on $u$ and are defined by
\begin{align}
F_0(z) & \defeq \frac{(1 - u) z }{(1 - z) (1 - u z)} , \label{eqn:F0}\\
F_1(z) & \defeq -\frac{z^2 }{(1 - z)^2 } + \frac{(u z)^2 }{(1 - u z)^2 } , \label{eqn:F1}\\
F_2(z) & \defeq \frac{z^3 }{(1 - z)^3 } - \frac{(u z)^3 }{(1 - u z)^3 } . \label{eqn:F2}
\end{align}

\item Now we find the critical points of~$F \defeq F_0$ and describe the deformation to the new contour~$\mathcal{C}$ of steepest descent.

We see that
$$F'(z) = \frac{ (1 - u)}{(1-z)^2 (1- u z)^2} (1 - u z^2)$$
so the solutions to
$$F'(z) = 0$$
are $z_{\pm} = \pm \frac{1}{\sqrt{u}}$. We choose to deform the contour to pass through the critical point~$z_c \defeq z_- = - \frac{1}{\sqrt{u}}$. The level lines of the imaginary part of $F$ consist of two branches passing through $z_c$; one is the real axis (as~$F$ is real analytic), and the other branch is a closed contour in $\mathbb{C}$ which passes through the point $z_c = z_- <0$ at an angle of $\pi/2$ with the real axis. As we traverse this branch clockwise, it will move through the upper half plane and intersect the real axis again at $z_+ >1$, and the piece of this branch in the lower half plane is the reflection of the piece in the upper half plane. Indeed, this description of the level curve~$\text{Im}(F(z)) =\text{Im}(F(z_c)) = 0$ becomes clear when we realize that the equation for this curve is
$$y (-1 - u^2 (x^2 + y^2) + u (1 + x^2 + y^2)) = 0 , \qquad z = x + \i y$$
which clearly has two irreducible components: $y = 0$ and a circle
$$\mathcal{C} \defeq \left\{|z| = \frac{1}{\sqrt{u}} \right\},$$
see Figure~\ref{fig:sc} for an illustration.

\begin{figure}
\centering
\includegraphics[width=0.6\linewidth]{./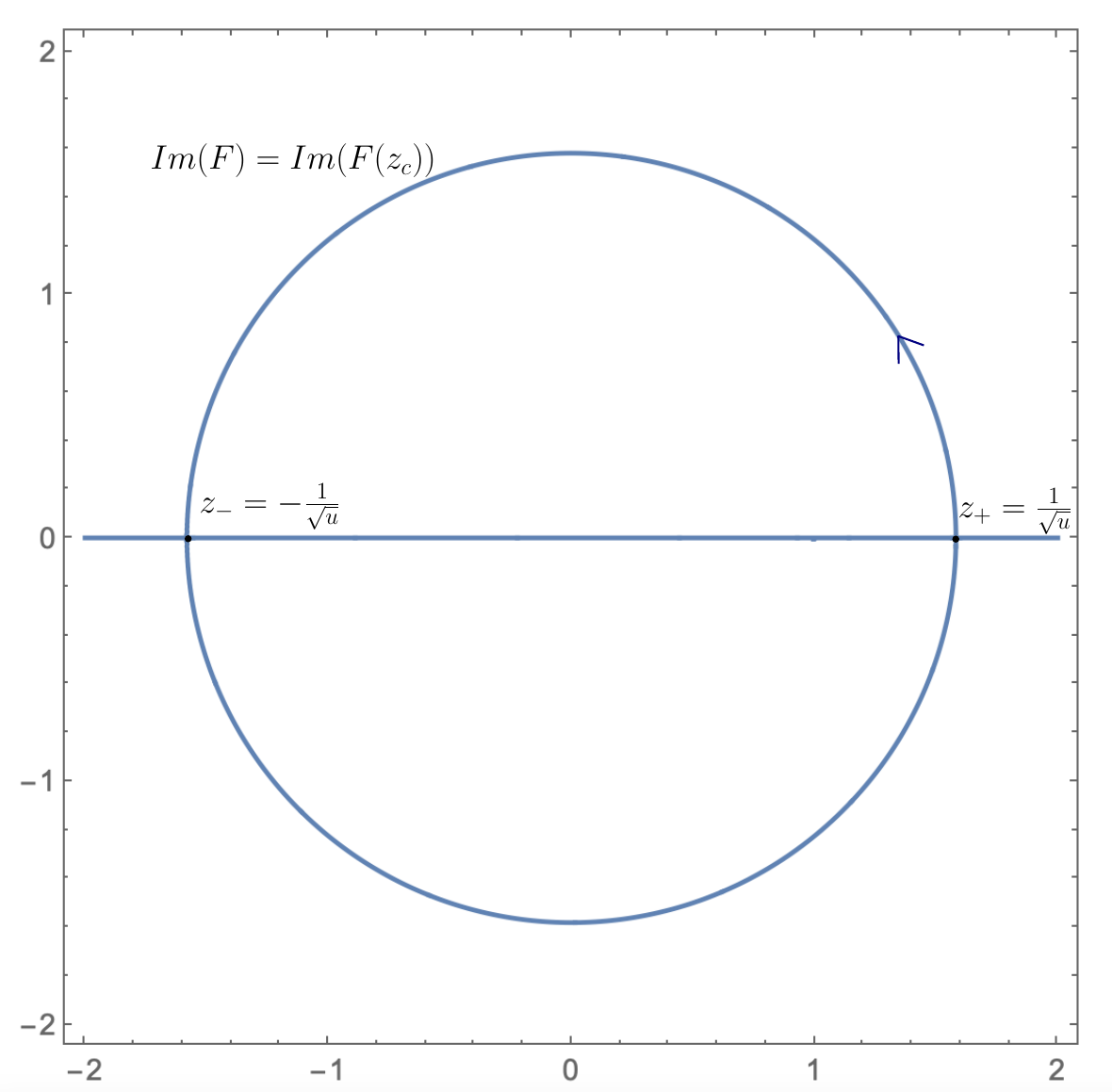}
\caption{Our steepest descent contour is the circle shown above, which is one branch of the level curve~$\text{Im}(F(z)) = \text{Im}(F(z_c))$. The other branch of the level curve is~$\{y=0\}$, which is also shown in the figure.}
\label{fig:sc}
\end{figure}

Since we have~$F''(z_c) > 0$, we know that as we move away from~$z_c$ along the circle (in either direction),~$\text{Re}(F(z))$ decreases. Indeed, near~$z_c$,~$F(z) \approx \frac{1}{2} F''(z_c) (z - z_c)^2 $, so~$F$ decreases as we move a small distance away from~$z_c$ in either direction along the circle, as~$z - z_c \approx \i \epsilon$ for small~$ \epsilon > 0$. To see that~$F$ continues to decrease until we reach~$z_+$, first recall that for an analytic function~$F$,~$\nabla \text{Re} F$ is orthogonal to~$\nabla \text{Im} F$. So~$\nabla \text{Re} F$ is always tangential to the circle, and for~$z - z_c$ small, its direction points towards~$z_c$. The only way~$\nabla \text{Re} F$ can change direction as we move along the contour is if (a) the contour passes through a point of non-analycity, or (b) at some point~$z'$,~$\nabla \text{Re} F (z') = 0$, which means~$F'(z') = 0$, i.e. we have reached another critical point. In our case, we see that the latter occurs; thus,~$z_c = z_-$ is the maximizing point for~$F$ along~$\mathcal{C}$, and~$z_+$ is a minimum.

We can clearly deform $\gamma_{0, 1}$ to $\mathcal{C}$ without passing through any poles of the integrand (note that $\frac{1}{u} > \frac{1}{\sqrt{u}}$, so we avoid the pole near $\frac{1}{u}$).

\item The contour integral will now be dominated by a small neighborhood of the critical point. Indeed, let~$\mathcal{C}_{\epsilon}$ denote the part of the contour~$\mathcal{C}$ which is contained in an~$\epsilon$ neighborhood of~$z_c= -\frac{1}{\sqrt{u}}$. If~$\epsilon$ is small enough, then for any~$z \in \mathcal{C} \setminus \mathcal{C}_{\epsilon}$ we have
\begin{align*}
\text{Re}(F(z) ) - \text{Re}(F(z_c)) < - \frac{1}{4} F''(z_c) \epsilon^2 .
\end{align*}
Indeed, this follows from the fact that $\text{Re} F$ decreases as we move from~$z_-$ to~$z_+$ (see the discussion in part 2. above), together with the fact that at the endpoints of~$\mathcal{C}_{\epsilon}$,
$$\text{Re}(F(z)  -F(z_c)) \approx \frac{1}{2} F''(z_c) (\epsilon \i)^2  = -\frac{1}{2} F''(z_c) \epsilon^2$$
where~$\approx$ denotes equality up to an order~$\epsilon^3$ error.

Taking~$\epsilon = N^{-\delta}$ for very small fixed~$\delta>0$, this implies that after we pull out a factor of $\exp(s N^{2/3} F(z_c))$ from the integrand, the integrand is dominated by
$$\exp\left(-\frac{1}{4} s F''(z_c) N^{2/3 -2 \delta} + O(N^{1/3})\right)$$
for all $z \in \mathcal C \setminus \mathcal C_{\epsilon}$, which decays fast as $N \rightarrow \infty$ (recall $s > 0$).

Now we make the change of integration variable~$z = -\frac{1}{\sqrt{u}} + \zeta N^{-1/3}$. By throwing away the part of the integral outside of~$\mathcal C_{\epsilon}$ and Taylor expanding each term in \eqref{eqn:actions} and noting that error terms are uniformly small, we then have
 \begin{align}\label{eqn:int}
\E \sum_{i=1}^\infty q^{p_i} &= \exp\left(s N^{2/3} F(z_c) +\frac{s^2}{2} N^{1/3} F_1(z_c) + \frac{s^3}{3} F_2(z_c) \right) \cdot \exp \left(- \frac{1}{4} s^3 \sigma^3  \right) \\
& \times \frac{-N^{-1/3}}{2 \pi \i ( s N^{-1/3})} \int_{- \i \epsilon N^{1/3}}^{\i \epsilon N^{1/3}} \exp\left( s u \sigma^3 \left(\zeta + s \frac{1}{
    2 \sqrt{u}}\right)^2 \right)   \frac{d \zeta}{-\frac{1}{\sqrt{u}} + \zeta N^{-1/3}}  \cdot \bigg(1 + o(1)\bigg). \notag
    \end{align}
    We claim that as $N \rightarrow \infty$, the expression above converges to
    \begin{align*}
   (1 - s N^{-1/3})^{\alpha N} \frac{1}{ 2\sqrt{ \pi}}(s \sigma)^{-3/2} \exp \left( \frac{(s \sigma)^3}{12}\right)
 \end{align*}
 with~$\alpha = \frac{1-\sqrt{u}}{1+\sqrt{u}}$ and~$\sigma = u^{1/6} \frac{(1 - \sqrt{u})^{1/3}}{1 + \sqrt{u}}$ as in the statement on the theorem.

First, we consider the prefactors in the first line of \eqref{eqn:int} and show that they are asymptotically given by
 \begin{align*}
&\exp\left(\alpha N \log(1 - s N^{-1/3}) + o(1)\right) \exp\left( \frac{(s \sigma)^3}{12}\right) \\
&= \exp\left(-\alpha s  N^{2/3} - \alpha \frac{s^2}{2} N^{1/3} - \alpha \frac{s^3}{3} + \frac{(s \sigma)^3}{12} + o(1)\right).
\end{align*}
Indeed, one can directly compute that $F_0(z_c) = F_1(z_c) = -\alpha$. This gives agreement of the logarithm of the prefactor in \eqref{eqn:int} with the $N^{2/3}$ and $N^{1/3}$ term in the exponential above. Then, we can also check directly that
$$\frac{s^3}{3} F_2(z_c) -\frac{1}{4} s^3 \sigma^3 = -\alpha \frac{s^3}{3} + \frac{(s \sigma)^3}{12}.$$

  Next we consider the second line of \eqref{eqn:int} and note that it simplifies to
$$\frac{\sqrt{u}}{2 \pi \i s} \int_{- \i \epsilon N^{1/3}}^{\i \epsilon N^{1/3}} \exp\left( s u \sigma^3 \left(\zeta + s \frac{1}{
    2 \sqrt{u}}\right)^2 \right)
  d \zeta  \cdot \bigg(1 + o(1)\bigg). $$
  Now we can extend the integration contour to~$\pm \i \infty$ at the cost of a small error due to the exponential decay of the integrand.  Computation of the Gaussian integral gives the remaining factor
$$\frac{1}{2 \sqrt{\pi}} (s \sigma)^{-3/2}. \qedhere $$
\end{enumerate}
\end{proof}

Theorem \ref{thm:oneq_conv} has a multi-parameter generalization, which (c.f.\ Remark~\ref{rmk:AiryHint}), leads to the distributional convergence of the entire sequence~$\{a_i\}_{i=1}^\infty$ as $N\to\infty$. This convergence will allow us to extract the distributional limit of the first point~$a_1$, which describes the asymptotic behavior of~$N - \ell(\lambda)$, and thus also (due to Theorem~\ref{thm:sufficient_cond}) that of the height function value~$H_N$.

The limiting object is the \emph{Airy point process}  $\{\mathfrak{a}_i\}_{i=1}^\infty$, which also arises in random matrix theory. The Airy point process is a~\emph{random point process} which has a number of possible definitions. Postponing more detailed discussions till Appendix~\ref{app:pt_proc}, we record three equivalent definitions for $\{\mathfrak{a}_i\}_{i=1}^\infty$:
\begin{enumerate}[1.]

\item This is the limiting point process of the eigenvalues near the edge of the spectrum of a large GUE random matrix. In particular, the law of $\mathfrak a_1$ is the Tracy-Widom GUE distribution.
\item This is a determinantal point process on~$\mathbb{R}$ with correlation kernel given by
     $$K_{\Airy}(x, y) = \int_0^\infty \Ai(x + a) \Ai(y +a ) da,$$
where $\Ai(x)$ is the \emph{Airy function}.
\item This is a point process satisfying for all $k=1,2,\dots$ and $s_1,s_2,\dots,s_k>0$:
\begin{multline} \label{eq_Airy_Laplace}
 \mathbb{E}\left[\prod_{m=1}^k \sum_i \exp(s_m \mathfrak{a}_i)\right]   = \frac{e^{\sum_{i=1}^k s_i^3/12}}{(2 \pi \i)^k} \int_{\i \; \mathbb{R} + v_1} \frac{d z_1}{s_1}\cdots \int_{\i \; \mathbb{R} + v_k} \frac{d z_k}{s_k}  \\
 \times e^{\sum_{i=1}^k s_i z_i^2}  \prod_{1 \leq i < j \leq k} \frac{z_j - \frac{s_j}{2} - z_i + \frac{s_i}{2}}{z_j + \frac{s_j}{2} - z_i + \frac{s_i}{2}} \frac{z_j + \frac{s_j}{2} - z_i - \frac{s_i}{2}}{z_j - \frac{s_j}{2} - z_i - \frac{s_i}{2}},
 \end{multline}
 where and~$v_1 < \cdots < v_k$ are chosen so that
\begin{equation}\label{eqn:v_conds}
v_j - \frac{s_j}{2} > v_i + \frac{s_i}{2}, \qquad 1 \leq i < j \leq k,
\end{equation}
and integration contours are oriented from~$-\infty$ to~$\infty$.
\end{enumerate}

\begin{theorem}\label{thm:corr_convergence}
Define the random point configuration~$\{p_i\} = \mathbb{Z}_{\geq 0} \setminus \{\lambda_i + N -i\}$, where~$\lambda$ is sampled from the Schur measure of Definition~\eqref{def:SM} with~$x_i \equiv 1$ and~$y_i \equiv u$. Then, defining ~$\{a_i\}$ by~$p_i = N \alpha - N^{1/3} a_i$, we have convergence in finite-dimensional distributions:
\begin{equation}
\label{eq_x8}
 \lim_{N\to\infty} \left\{\frac{a_i}{\sigma}\right\} \stackrel{d}{=}  \{\mathfrak{a}_i\},
\end{equation}
where~$\{\mathfrak{a}_i\}$ is the Airy point process, and $\alpha = \frac{1- \sqrt{u}}{1 + \sqrt{u}}$, $\sigma = u^{1/6} \frac{(1 - \sqrt{u})^{1/3}}{1 + \sqrt{u}}$.
\end{theorem}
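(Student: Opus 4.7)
The plan is to extend the steepest descent analysis of Theorem~\ref{thm:oneq_conv} from one to $k$ parameters and to match the resulting integral against the Laplace-type characterization \eqref{eq_Airy_Laplace} of the Airy point process. By Remark~\ref{rmk:AiryHint}, it will suffice to prove, for every $k\ge 1$ and every $s_1,\dots,s_k>0$,
\begin{equation*}
\lim_{N\to\infty}\mathbb{E}\!\left[\prod_{m=1}^{k}\sum_{i}e^{s_m a_i(N)/\sigma}\right] \;=\; \mathbb{E}\!\left[\prod_{m=1}^{k}\sum_{i}e^{s_m \mathfrak{a}_i}\right],
\end{equation*}
and then to upgrade this joint-observable convergence to convergence in finite-dimensional distributions of the point configurations.

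For the first step, I would specialize Corollary~\ref{cor:obs2} to $x_j\equiv 1$, $y_j\equiv u$ and set $q_m=1-s_m N^{-1/3}$. The integrand factors into $k$ copies of the single-variable ``action'' from the proof of Theorem~\ref{thm:oneq_conv}, each with the same critical point $z_c=-1/\sqrt{u}$ and the same steepest descent circle $\mathcal{C}=\{|z|=1/\sqrt{u}\}$, together with a cross-factor $\prod_{i<j}\frac{q_i z_i-q_j z_j}{z_i-q_j z_j}\cdot\frac{z_i-z_j}{q_i z_i-z_j}$. The nesting constraints of Corollary~\ref{cor:obs2} require $\tilde\Gamma_j$ to enclose the moving poles $q_iz_i$ and to exclude the poles $z_i/q_j$ for $i<j$, so I would deform each $\tilde\Gamma_m$ to a circle of radius $\tfrac{1}{\sqrt{u}}(1+v_m N^{-1/3})$ with $v_1<v_2<\cdots<v_k$ chosen to satisfy the Airy inequality $v_j-s_j/2>v_i+s_i/2$, which is exactly \eqref{eqn:v_conds}. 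With this choice $q_iz_i$ has radius $\tfrac{1}{\sqrt{u}}(1+(v_i-s_i)N^{-1/3})$, which lies strictly inside $\tilde\Gamma_j$, while $z_i/q_j$ has radius $\tfrac{1}{\sqrt{u}}(1+(v_i+s_j)N^{-1/3})$, which lies strictly outside, so the deformation is admissible.

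Next, I would change variables on each contour as $z_m=-\tfrac{1}{\sqrt{u}}\bigl(1+\tfrac{\zeta_m}{\sqrt{u}}N^{-1/3}\bigr)$ (the precise constant chosen so that the quadratic fluctuation of $F_0$ about $z_c$ combines with the $F_1$ correction to produce the clean Gaussian $\exp(s_m\zeta_m^2)$, and the cubic term in $F_0$ combined with the prefactor $(1-s_m N^{-1/3})^{-\alpha N}$ yields the factor $\exp(s_m^3/12)$, exactly as in Theorem~\ref{thm:oneq_conv}). The tail estimate $\operatorname{Re}(F_0(z)-F_0(z_c))\le -\tfrac{1}{4}F_0''(z_c)\epsilon^2$ on $\mathcal{C}\setminus\mathcal{C}_\epsilon$ transplants verbatim to each of the perturbed contours and lets me truncate to an $N^{-\delta}$ neighborhood of $z_c$ at negligible cost, after which the contours become vertical lines through $v_m$ in the rescaled variables. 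A direct elementary computation then shows that under this rescaling
\begin{equation*}
\frac{q_i z_i-q_j z_j}{z_i-q_j z_j}\cdot\frac{z_i-z_j}{q_i z_i-z_j} \;\longrightarrow\; \frac{\zeta_j-s_j/2-\zeta_i+s_i/2}{\zeta_j+s_j/2-\zeta_i+s_i/2}\cdot\frac{\zeta_j+s_j/2-\zeta_i-s_i/2}{\zeta_j-s_j/2-\zeta_i-s_i/2},
\end{equation*}
which is exactly the cross-factor on the right-hand side of \eqref{eq_Airy_Laplace} after a renaming of variables; together with $dz_m/((q_m-1)z_m)\to d\zeta_m/s_m$ this identifies the limit integral with the Airy integral.

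The main obstacle will be the final step: upgrading convergence of the joint exponential observables $\mathbb{E}\prod_m\sum_i e^{s_m a_i/\sigma}$ to finite-dimensional distributional convergence of the point configurations, since these observables are not obviously continuous in the vague topology on upper-bounded configurations on $\mathbb{R}$. I would follow the template of \cite{ahn2020airy}: first establish tightness of each tuple $(a_1(N)/\sigma,\dots,a_j(N)/\sigma)$ (the one-parameter estimate of Theorem~\ref{thm:oneq_conv} controls how far these points can escape to $-\infty$, and a $q$-Markov bound using that $\sum_i q^{p_i}\ge q^{p_1}$ controls the upper tail of $a_1/\sigma$; lower points are then trapped by the ordering $a_1>a_2>\cdots$), and then invoke the uniqueness result of \cite{ahn2020airy} stating that on the class of locally finite, upper-bounded point configurations on $\mathbb{R}$, the joint values of $\mathbb{E}\prod_m\sum_i e^{s_m x_i}$ for all $k$ and positive $s_m$ determine the law. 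Combined with the integral matching above, this gives \eqref{eq_x8}.
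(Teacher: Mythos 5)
Your overall strategy matches the paper's (reduce to convergence of the joint Laplace-type observables via Corollary~\ref{cor:obs2}, do multivariate steepest descent, match the limit against \eqref{eq_Airy_Laplace}, then upgrade to finite-dimensional convergence using a uniqueness argument from the observables). The cross-factor limit and the use of Proposition~\ref{prop:sufficient_conv}/\cite{ahn2020airy} are also in line with the paper. However, there is a concrete error in the contour-deformation step that, as written, breaks the argument.

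You claim that Corollary~\ref{cor:obs2} ``requires $\tilde\Gamma_j$ to enclose the moving poles $q_iz_i$ and to exclude the poles $z_i/q_j$.'' The Corollary requires the opposite for $q_iz_i$: it states that $\tilde\Gamma_j$ \emph{does not} contain $q_iz_i$ (nor $z_i/q_j$) for $i<j$. Consequently the radii must be \emph{nested decreasing}, $r_j<q_ir_i<r_i$ for $i<j$ (see the caption of Figure~\ref{fig:contours} and the paper's proof, where radii are $r_m = \tfrac{1}{\sqrt{u}}+t_m N^{-1/3}$ with $t_j<t_i-s_i/\sqrt u$, hence $t_1>\cdots>t_k$). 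Your choice of circles of radius $\tfrac{1}{\sqrt u}(1+v_mN^{-1/3})$ with $v_1<\cdots<v_k$ increasing puts $q_iz_i$ \emph{inside} $\tilde\Gamma_j$; with those contours the formula of Corollary~\ref{cor:obs2} simply does not hold, and performing the $z_j$-integral picks up spurious residues at $z_j=q_iz_i$ that you are not accounting for. Your own admissibility check is also internally inconsistent: you want $v_j - s_j/2 > v_i + s_i/2$ and simultaneously $v_j < v_i + s_j$ (so that $z_i/q_j$ lies outside), and these two are compatible only when $s_j>s_i$, which need not hold.

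The fix is to start from contours with decreasing radii (as in the paper), so that both poles $q_iz_i$ and $z_i/q_j$ lie outside $\tilde\Gamma_j$. In the paper's parameterization $z_m = -\tfrac{1}{\sqrt u} + \zeta_m N^{-1/3}$ this gives $\Re\zeta_m = -t_m$ \emph{increasing} in $m$; the further affine substitution $\zeta_m\mapsto \tfrac{1}{\sqrt u}\zeta_m - \tfrac{s_m}{2\sqrt u}$ then produces the correctly nested vertical lines $\Re(\zeta_m)=v_m$ with $v_j-\tfrac{s_j}{2}>v_i+\tfrac{s_i}{2}$, precisely the condition~\eqref{eqn:v_conds} for~\eqref{eq_Airy_Laplace}. (Because your parameterization $z_m=-\tfrac{1}{\sqrt u}(1+\tfrac{\zeta_m}{\sqrt u}N^{-1/3})$ has an opposite sign convention, larger $\Re(\zeta_m)$ corresponds to larger radius; the Airy inequality therefore forces the original $z$-radii to be decreasing, not increasing.) Once the contours are corrected, the remainder of your sketch---Taylor expansion of $F_0,F_1,F_2$, truncation to an $N^{-\delta}$-neighborhood of $z_c$, the limit of the cross-factor, the prefactors, and the upgrade to finite-dimensional convergence---follows the paper's argument.
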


\begin{proof}
Using \eqref{eq_Airy_Laplace} and postponing details on the topology of convergence till Proposition \ref{prop:sufficient_conv} in Appendix~\ref{app:pt_proc}, it suffices to prove that for any $k \geq 1$ and $s_1, \dots, s_k > 0$, we have
\begin{multline}
  \lim_{N\to\infty} \mathbb{E}\left( \prod_{m=1}^k (1 - s_m N^{-1/3})^{-\alpha N} \sum_{\{p_i\}} q_m^{p_i} \right) =
\frac{e^{\sum_{i=1}^k s_i^3/12}}{(2 \pi \i)^k} \int_{\i \; \mathbb{R} + v_1} \frac{d z_1}{s_1}\cdots \int_{\i \; \mathbb{R} + v_k} \frac{d z_k}{s_k}  \\
 \times e^{\sum_{i=1}^k s_i z_i^2}  \prod_{1 \leq i < j \leq k} \frac{z_j - \frac{s_j}{2} - z_i + \frac{s_i}{2}}{z_j + \frac{s_j}{2} - z_i + \frac{s_i}{2}} \frac{z_j + \frac{s_j}{2} - z_i - \frac{s_i}{2}}{z_j - \frac{s_j}{2} - z_i - \frac{s_i}{2}}, \label{eqn:res}
\end{multline}
where the contours are as in \eqref{eq_Airy_Laplace}.
Analogously to the proof of Theorem~\ref{thm:oneq_conv}, we start with the formula
\begin{align}
\mathbb{E}\left[ \prod_{m=1}^k \sum_{\{p_i\}} q_m^{p_i} \right] &= \frac{1}{(2 \pi \i)^k} \int_{C_1} \frac{d z_1}{(1-q_1)z_1} \cdots \int_{C_k} \frac{d z_k}{(1-q_k)z_k}  \notag  \\
&\times   \left( \prod_{m=1}^k \left(\frac{q_m z_m - 1}{z_m - 1} \right)^N\right) \left( \prod_{m=1}^k \left(\frac{1 - u z_m}{1 - q_m u z_m} \right)^N \right) \prod_{1 \leq i  < j \leq k}  \frac{q_i z_i - q_j  z_j}{z_i - q_j z_j} \frac{z_i - z_j}{q_i z_i - z_j} \label{eqn:int_secondline}
\end{align}
from specializing Corollary~\ref{cor:obs2}, and we perform a steepest descent analysis to take asymptotics. Note that this expression is valid for large enough~$N$ as long as we take~$C_1 ,\dots, C_k$ to be concentric circles centered at~$0$ with strictly decreasing radii~$\frac{1}{u} > r_1 > \cdots > r_k > 1$. The arguments involving deformation of contours and concentration of the integral around a neighborhood of the critical point are similar but with more details, so we will be brief, and mostly focus on the computations leading to the result. See also~\cite[Section 2.3]{ahn2020airy} for a very similar argument, with all error terms carefully accounted for.

Exactly as in step 1 of the proof of Theorem~\ref{thm:oneq_conv}, we write
$$\left(\frac{q_m z_m - 1}{z_m - 1} \frac{1 - u z_m}{1 - q_m u z_m}  \right)^N$$
as
\begin{align}\label{eqn:actions_multi}
\exp\left(s_m N^{2/3} F_0(z_m) + \frac{s_m^2}{2} N^{1/3} F_1(z_m) + \frac{s_m^3}{3} F_2(z_m)+ O(N^{-1/3})\right)
\end{align}
where~$F_0, F_1, F_2$ are defined in~\eqref{eqn:F0}-\eqref{eqn:F2}. Recall that the critical points of~$F_0$ occur at~$z_{\pm} = \pm \frac{1}{\sqrt{u}}$.

Now we deform the contours (without them crossing each other) to~$\tilde{\Gamma}_1 ,\dots, \tilde{\Gamma}_k$ so that our contours are circles of radii all very close to~$\frac{1}{\sqrt{u}}$. More precisely, we choose circles of radius~$r_m = \frac{1}{\sqrt{u}} + t_m N^{-1/3}$, where we demand that~$t_1 > \cdots > t_m$ satisfy
$$t_j < t_i - \frac{s_i}{\sqrt{u}}, \qquad 1 \leq i < j \leq k .$$
Indeed, with this condition one can easily verify that~$q_i z_i$ stays outside of the~$z_j$ contour for~$i < j$.

Along the contours~$\tilde{\Gamma}_m$, we can see by a similar argument to the one we gave before that for each integral, the only contribution comes from the region when each variable is in a small $N^{-\delta}$-neighborhood of the critical point~$z_c = -\frac{1}{\sqrt{u}}$ of~$F_0$.

Furthermore, the contours are well-approximated by vertical lines near~$z_c$, and we make the change of integration variable
$$z_m = z_c + \zeta_m N^{-1/3}.$$
Then, we replace~$F_i(z_m)$ by its second order Taylor expansion about the critical point~$z_c$ for each~$i=0,1,2$ and~$m=1,\dots, k$, and extend the contours to~$\pm \infty$.  Then the~$\zeta$ contours become
$$\Re(\zeta_m) = -t_m$$
and one can check that the expression~\eqref{eqn:int_secondline} in the integrand becomes
\begin{align}
\left( \prod_{m=1}^k(1-s_m N^{-1/3})^{\alpha N} \right) &
\exp\left( \sum_{m=1}^k s_m u \sigma^3 \left(\zeta_m + s_m \frac{1}{ 2 \sqrt{u}}\right)^2 \right) \label{eqn:int1} \\
&\times \prod_{1 \leq i < j \leq k} \frac{\zeta_i - \zeta_j}{s_i/\sqrt{u} + \zeta_i - \zeta_j} \frac{s_i/\sqrt{u} - s_j/\sqrt{u} + \zeta_i -
   \zeta_j}{-s_j/\sqrt{u} + \zeta_i - \zeta_j} \bigg(1 + o(1)\bigg). \notag
\end{align}

Further substituting~$\zeta_m \rightarrow \frac{1}{\sqrt{u}} \zeta_m - s_m \frac{1}{ 2 \sqrt{u}}$, our contours become~$\Re(\zeta_m) = v_m$ (from~$+\infty$ to~$-\infty$, which is opposite to the orientation of the contours in~\eqref{eqn:res}) where~$v_m$ satisfy
$$v_j - \frac{s_j}{2} > v_i + \frac{s_i}{2}, \qquad 1 \leq i < j \leq k,$$
and the display~\eqref{eqn:int1} becomes
\begin{align}
\left( \prod_{m=1}^k(1-s_m N^{-1/3})^{\alpha N} \right) &
\exp\left(  \sum_{m=1}^k s_m  \sigma^3 \zeta_m^2 \right) \label{eqn:int2} \\
&\times \prod_{1 \leq i < j \leq k} \frac{\zeta_i - \zeta_j - s_i/2 +s_j/2}{s_i/2 + s_j/2 + \zeta_i - \zeta_j} \cdot  \frac{s_i/2 - s_j/2 + \zeta_i -
   \zeta_j}{-s_j/2 - s_i/2 + \zeta_i - \zeta_j}. \notag
\end{align}
Now it remains to keep track of factors from the change of integration variable, minus signs from the fact that our vertical integration contours have the wrong orientation, and incorporate the pre-factors coming from the evaluation of~$F_0,F_1,F_2$ at~$z_c$ (c.f. the expression~\ref{eqn:int} and the discussion that follows in the proof of Theorem~\ref{thm:oneq_conv}). Once we do this, we obtain the result stated in~\eqref{eqn:res}.  \qedhere

\end{proof}

At this point we have collected all the ingredients for Theorem \ref{thm:S6Vfluct}.

\begin{proof}[Proof of Theorem~\ref{thm:S6Vfluct}]
By Theorem \ref{thm:corr_convergence} we have distributional convergence
\begin{equation}
\label{eq_x7}
 \lim_{N\to\infty} \frac{N - \ell(\lambda) - \alpha N}{\sigma N^{1/3}} \stackrel{d}= - \mathfrak{a}_1,
\end{equation}
where $\mathfrak a_1$ is the first point of the Airy point process. The distribution of $\mathfrak a_1$ is precisely the Tracy-Widom GUE distribution and matches the distribution of $\xi_{GUE}$ in \eqref{eq:Height_to_Tracy}, see Definition~\ref{def:TWd} in Appendix \ref{app:pt_proc}.

Applying Proposition \ref{prop:schur_form} and Theorem \ref{thm:sufficient_cond} with $\alpha_N=\alpha N$, $\beta_N=\sigma N^{1/3}$, \eqref{eq_x7} implies that
$$
 \lim_{N\to\infty} \frac{H_N - \alpha N}{\sigma N^{1/3}}
= \lim_{N\to\infty} \frac{H_N - N \frac{1 - \sqrt{u}}{1 + \sqrt{u}}}{N^{1/3} \frac{(1-\sqrt{u})^{4/3}}{u^{1/3} (u^{-1/2}-u^{1/2})}}
\stackrel{d}= - \mathfrak{a}_1.\qedhere
$$
\end{proof}

\appendix

\section{Airy point process and Tracy-Widom distribution}\label{app:pt_proc}

In this appendix we review the definition and some properties of the Airy point process (at~${\beta=2}$) and the Tracy-Widom distribution. Our exposition is based on the theory of determinantal point processes and we refer to~\cite{BorodinGorinSPB12} and references therein for more detailed discussions. Here we first introduce the basics of determinantal point processes, then specialize to the Airy case, and finally discuss the contour integral formulas for the Laplace transforms, which already appeared in our proofs in the previous section.

Consider a configuration space~$\mathcal{X}$, such as~$\mathbb{Z}$ or~$\R$. A point configuration~$X$ in~$\mathcal{X}$ is a subset of~$\mathcal{X}$ without accumulation points.

We equip the set of point configurations with a $\sigma$-algebra generated by functions~$N_A = N_A(X)$, which give the number of points of~$X$ in a compact subset~$A \subset \mathcal{X}$.
\begin{defn}
A \emph{random point process} is a probability measure on the set of point configurations equipped with the~$\sigma$-algebra described above.
\end{defn}

\emph{Correlation functions} can be used to fully describe a random point process (under mild technical growth conditions, which hold in our case, see \cite{Soshnikov2000}). An $n$-point correlation function $\rho_n(x_1,\dots, x_n)$ is a density with respect to a reference measure~$\mu$ on~$\mathcal{X}$, which is defined by the property that for all compactly supported bounded measurable functions $f$ on~$\mathcal{X}^n$, we have
\begin{align}\label{eqn:cor_def}
\int_{\mathcal{X}^n} f \rho_n \mu^{(\otimes n)}(dx)
= \mathbb{E}\bigg[ \sum_{x_{i_1},\dots, x_{i_n} \in \mathcal{X}} f(x_{i_1}, \dots, i_{n}) \bigg]
\end{align}
where the sum is $n$--tuples of distinct points $(x_{i_1}, \dots, x_{i_n})$ from $X$.

\begin{remark} Taking $f(x_1,\dots, x_n) = \prod_{i=1}^n \mathbf{1}_{A}(x_i)$ to be the product of indicator functions for a compact subset~$A \subset \mathcal{X}$, one may deduce from \eqref{eqn:cor_def} that
\begin{align*}
\mathbb{E}\bigl[ N_A (N_A - 1) \cdots (N_A - n + 1) \bigr] = \int_{A^n}   \rho_n(x_1,\dots, x_n) \mu^{(\otimes n)}(dx).
\end{align*}
We leave the details as an exercise for the interested reader.
\end{remark}

For~$\mu$ absolutely continuous with respect to Lebesgue measure, if we have distinct~$x_1,\dots, x_n$ then one can find (by using appropriate~$f$ in~\eqref{eqn:cor_def}) that
$$\rho(x_1,\dots, x_n) \mu([x_1, x_1+ dx_1]) \cdots \mu([x_n, x_n+ dx_n]) $$
is approximately the probability of finding a particle in each small interval
${[x_1,x_1+dx_1]}, \dots, {[x_n,x_n+dx_n]}$.

Suppose we have a random point process on~$\mathbb{R}$ with correlation functions~$\rho_k$, with~$\mu$ as the reference measure. Suppose in addition that with probability~$1$, in the random point configuration~$X$ there are only finitely many points in any interval~$(s, \infty)$. Then, enumerating the random set of points in~$X$ as~$x_1 > x_2 > x_3 > \cdots $, by the inclusion-exclusion principle and~\eqref{eqn:cor_def}, we have a formula for the distribution function of the rightmost point~$x_1$:
\begin{align}
\mathbb{P}(x_1 \leq s)=\mathbb{P}\left( N_{(s, \infty)}(X) = 0 \right) &=
1 - \sum_{k=1}^\infty (-1)^{k+1} \mathbb{E}\left[\sum_{i_1 < \cdots < i_k} \prod_{j=1}^k \mathbf{1}\{x_{i_j} \in (s, \infty)\} \right]  \notag\\
&= \sum_{k=0}^\infty \frac{(-1)^k}{k!} \int_{(s, \infty)} \cdots \int_{(s, \infty)} \rho_k(x_1,\dots, x_k) \prod_{i=1}^k \mu(d x_i) .
\label{eqn:last_cdf}
\end{align}

 Next, we describe an important type of random point process, whose correlation functions have a special form.

\begin{defn}
A \emph{determinantal point process} is a random point process whose correlation functions~$\rho_n$ are determinants of a fixed kernel: There exists a kernel~$K : \mathcal{X} \times \mathcal{X} \rightarrow \mathbb{R}$ such that
$$\rho_n(x_1,\dots, x_n) =\det\left( K(x_i, x_j \right)_{i, j=1}^n.$$
\end{defn}

The \emph{Airy point process}, which we define next, is an example of a determinantal point process.

\begin{defn}
The \emph{Airy function}~$\Ai(x)$ is defined by
$$\Ai(x) \defeq \frac{1}{\pi} \int_0^\infty \cos\left(\frac{t^3}{3} + x t \right) dt.$$
\end{defn}

The Airy function first was introduced by George Biddell Airy in the study of light near caustics~\cite{AiryONTI}. Up to a normalization, the Airy function is the solution of the \emph{Airy equation}
$$y'' - x y = 0,$$
subject to the condition that $y \rightarrow 0$ as $x \rightarrow \infty$.

\begin{defn}
The \emph{Airy kernel} is defined by
$$K_{\Airy}(x, y) = \int_0^\infty \Ai(x + a) \Ai(y +a ) da.$$
\end{defn}
\begin{remark} It is a nice exercise to show that the Airy kernel can also be written as
$$K_{\Airy}(x, y)=\frac{\Ai(x) \Ai'(y ) - \Ai(y) \Ai'(x )}{x - y}.  $$
\end{remark}
\begin{defn}
\label{def:Airy_PP}
The \emph{Airy point process} is the determinantal point process on~$\mathbb{R}$ whose $k$-point correlation function~$\rho_k^{\Airy}(x_1,\dots, x_k)$, with Lebesgue measure as the reference measure, is given by
$$\rho_k^{\Airy}(x_1,\dots, x_k) = \det\left(K_{\Airy}(x_i, x_j) \right)_{i, j=1}^k .$$
\end{defn}

From the standard estimates on the decay of the Airy function~$\Ai(x)$ as~$x \rightarrow + \infty$, see e.g.~\cite[Subsection 3.7.3]{AndersonGuionnetZeitouniBook}, one may deduce that with probability~$1$ there are only finitely many points in any interval~$(s, \infty)$. Therefore, it makes sense to speak of the rightmost, or largest, particle in the Airy point process. The next definition gives a name to the distribution of the largest particle.

\begin{defn}\label{def:TWd}
The \emph{Tracy-Widom GUE} distribution function is defined by
$$F_2(s) \defeq \sum_{k=0}^\infty \frac{(-1)^k}{k!} \int_{(s, \infty)} \cdots \int_{(s, \infty)} \rho_k^{\Airy}(x_1,\dots, x_k) \prod_{i=1}^k d x_i .$$
Comparing with~\eqref{eqn:last_cdf},~$F_2$ is the distribution function of the rightmost particle in the Airy point process.
\end{defn}
\begin{remark}
One can prove that for any~$s \in \mathbb{R}$, this series converges absolutely. Furthermore, plugging in the appropriate determinants for the correlation functions, the expression defining~$F_2(s)$ above becomes the Fredholm determinant expansion for
$$\det(\mathbf{1} - K_{\Airy})_{L_2(s, \infty)}.$$
\end{remark}

Definitions \ref{def:Airy_PP} and \ref{def:TWd} are standard ways to construct the ($\beta=2$) Airy point process and Tracy-Widom distribution, adopted by many random-matrix textbooks, see e.g., \cite[Section 24.2]{MehtaRMT}, \cite[Section 3.1.1]{AndersonGuionnetZeitouniBook},  \cite[Section 7.1.3]{Forrester-LogGas}. The motivation for this definition comes from the appearance of the Airy point process in the asymptotics of the largest eigenvalues of complex Hermitian random matrices. The same limiting object appears for many random matrix ensembles (this is a manifestation of the \emph{universality} phenomenon), but let us present the result for GUE, which extends \eqref{eqn:TWF2}.

\begin{theorem} \label{thm:GUE_to_Airy}
  Consider $N\times N$ Hermitian matrix $A = \frac{1}{2}\left(X + X^*\right)$, where~$X$ is a random matrix whose entries are i.i.d.\ complex Gaussian random variables~$\mathcal{N}(0,1) + \i \mathcal{N}(0, 1)$, with independent real and imaginary parts. Let $\lambda_1\le \lambda_2\le \dots\le \lambda_N$ be eigenvalues of $A$. Then, in the sense of convergence in finite-dimensional distributions:
\begin{equation}\label{eqn:Airy_as_limit}
 \lim_{N\to\infty} \left\{\frac{\lambda_{N+1-i} - 2 \sqrt{N}}{N^{-\frac{1}{6}}}\right\}_{i=1}^{\infty} \stackrel{d}{=} \{\mathfrak{a}_i\}_{i=1}^{\infty}.
\end{equation}
\end{theorem}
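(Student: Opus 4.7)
The plan is to exploit the determinantal structure of the GUE spectrum. The joint eigenvalue density has the form $\frac{1}{Z_N}\prod_{i<j}(\lambda_i-\lambda_j)^2\prod_i e^{-c\lambda_i^2}$ (with $c$ determined by the specific normalization chosen for the entries of $X$), and the standard biorthogonal-ensemble theory then implies that the eigenvalues form a determinantal point process on $\mathbb{R}$ with respect to Lebesgue measure, whose $k$-point correlation functions are $\det\bigl[K_N(x_i,x_j)\bigr]_{i,j=1}^k$, where $K_N$ is the Christoffel--Darboux kernel
$$K_N(x,y) \;=\; \sum_{\ell=0}^{N-1}\psi_\ell(x)\psi_\ell(y) \;=\; \sqrt{N}\,\frac{\psi_N(x)\psi_{N-1}(y)-\psi_N(y)\psi_{N-1}(x)}{x-y}$$
built from the appropriately rescaled $L^2$-orthonormal Hermite functions $\psi_\ell$ (cf.\ \cite[Chapter 3]{AndersonGuionnetZeitouniBook}).

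The analytic heart is the edge asymptotic of $K_N$. Using either the classical Plancherel--Rotach asymptotics of Hermite polynomials, or, in the spirit of Section~\ref{Section_Schur}, a steepest-descent analysis of the contour-integral representation of $\psi_N$ at the turning point $x=2\sqrt{N}$, I would show that under the rescaling $x = 2\sqrt{N}+N^{-1/6}\xi$, $y = 2\sqrt{N}+N^{-1/6}\eta$ one has
$$N^{-1/6}\,K_N\bigl(2\sqrt{N}+N^{-1/6}\xi,\;2\sqrt{N}+N^{-1/6}\eta\bigr) \;\longrightarrow\; K_{\mathrm{Airy}}(\xi,\eta)$$
uniformly on compact subsets of $\mathbb{R}^2$. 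The Airy kernel appears naturally: near $x=2\sqrt{N}$ the leading saddle-point equation for $\psi_N$ degenerates (the quadratic part of the phase vanishes), and expanding one order further leaves a cubic integral of the form $\int e^{\i(t^3/3+xt)}\,dt$, i.e., an Airy function. In parallel I would establish an integrable upper bound $\bigl|N^{-1/6}K_N^{\mathrm{resc}}(\xi,\eta)\bigr|\leq C\,e^{-c(\xi+\eta)}$, valid uniformly for $\xi,\eta$ bounded below and all large $N$, which is what legitimizes the dominated-convergence step below.

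Given the kernel convergence and the upper bound, the passage to convergence of the top eigenvalues is essentially formal. For $s_1<s_2<\cdots<s_k$ set $\widetilde{\lambda}_i\defeq(\lambda_i-2\sqrt{N})/N^{-1/6}$; then, applying inclusion--exclusion exactly as in~\eqref{eqn:last_cdf} to the intervals $(s_k,\infty)$, $(s_{k-1},s_k]$, \dots, $(s_1,s_2]$, one expresses $\mathbb{P}(\widetilde\lambda_N\leq s_k,\dots,\widetilde\lambda_{N-k+1}\leq s_1)$ as an absolutely convergent series of integrals of determinants of the rescaled kernel. Pointwise convergence of $N^{-1/6}K_N$ to $K_{\mathrm{Airy}}$ together with the exponential upper bound permits dominated convergence term by term, while the same bound controls the tail of the series uniformly in $N$. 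The limit matches the corresponding inclusion--exclusion series built from $K_{\mathrm{Airy}}$, which is the joint distribution of the top $k$ points of the Airy point process of Definition~\ref{def:Airy_PP}. Since $k$ was arbitrary, this gives convergence in finite-dimensional distributions, exactly~\eqref{eqn:Airy_as_limit}.

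The main obstacle is the uniform edge asymptotic of $K_N$ in the strong sense required above: not merely pointwise convergence on compact sets, but also the exponential upper bound for $\xi \geq -M$ uniformly in $N$, without which the Fredholm-type series could leak mass to $+\infty$. The pointwise limit is a classical saddle-point calculation, but organizing the steepest-descent contour and controlling the non-stationary part of the Hermite integral uniformly in the rescaled variable is delicate. A technically lighter alternative, more aligned with the proof of Theorem~\ref{thm:S6Vfluct}, is to compute the multivariate Laplace functional $\mathbb{E}\bigl[\prod_{m=1}^k\sum_i \exp(s_m\widetilde\lambda_{N+1-i})\bigr]$ directly via the contour-integral representation of the GUE kernel, match it with the right-hand side of~\eqref{eq_Airy_Laplace} by steepest descent (the saddle calculation is essentially the one already carried out in the proof of Theorem~\ref{thm:corr_convergence}), and then deduce~\eqref{eqn:Airy_as_limit}; separate tightness at the edge, ruling out escape of the $\widetilde\lambda_{N+1-i}$ to $+\infty$, follows from standard upper-tail large-deviation estimates for the largest GUE eigenvalue.
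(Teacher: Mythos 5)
Your outline is correct and is essentially the argument the paper itself points to: the paper does not prove Theorem~\ref{thm:GUE_to_Airy}, but defers to the classical determinantal-point-process proof (Hermite/Christoffel--Darboux kernel, Plancherel--Rotach edge asymptotics giving the Airy kernel, then Fredholm-type series with dominated convergence), which is exactly your main route, with the hard step---uniform edge convergence of the rescaled kernel plus the exponential tail bound---correctly identified and deferred to those classical estimates. Your alternative via multivariate Laplace functionals and steepest descent likewise coincides with the paper's remark that~\eqref{eqn:Airy_as_limit} can be obtained by repeating the contour-integral analysis of Theorem~\ref{thm:corr_convergence} and matching with~\eqref{eq_Airy_Laplace}.
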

The original proof of \eqref{eqn:Airy_as_limit} going back to \cite{tracy1993level} uses the machinery of the determinantal point process, see the aforementioned textbooks for the details. We do not present a proof of Theorem \ref{thm:GUE_to_Airy} here. Let us, however, remark that the asymptotics \eqref{eqn:Airy_as_limit} can be alternatively obtained in exactly the same way as in our proof of Theorem \ref{thm:corr_convergence}: by writing moments of random Laplace transforms of the GUE as contour integrals and then making the steepest descent analysis in the result. In this way, if one wished, it could have been possible to completely avoid any mention of determinantal point processes and only establish the fact that asymptotics of the six-vertex model are governed by the same object as asymptotics of the GUE. In such an approach, Definition \ref{def:Airy_PP} of the Airy point process is replaced by \eqref{eq_Airy_Laplace}.

Our next task is to show that \eqref{eq_Airy_Laplace} is indeed equivalent to the standard definition used in random matrix textbooks. To show this, we first need to compute the Laplace transforms of the correlation functions.

\begin{proposition}[\cite{BG2016_Airy_moments}]\label{prop:airy_moments}
The Laplace transform of the k-point correlation function of the Airy process~$\rho_k^{\Airy}(x_1,\dots, x_k)$ is given by
\begin{multline}\label{eqn:airy_laplace}
 \int_{\mathbb{R}} \cdots \int_{\mathbb{R}} e^{s_1 x_1 + \cdots + s_k x_k} \rho_k^{\Airy}(x_1,\dots, x_k) \prod_{i=1}^k dx_i   = \frac{e^{\sum_{i=1}^k s_i^3/12}}{(2 \pi \i)^k} \int_{\i \; \mathbb{R}} \frac{d z_1}{s_1}\cdots \int_{\i \; \mathbb{R}} \frac{d z_k}{s_k}  \\
 \times e^{\sum_{i=1}^k s_i z_i^2}  \prod_{1 \leq i < j \leq k} \frac{z_j - \frac{s_j}{2} - z_i + \frac{s_i}{2}}{z_j + \frac{s_j}{2} - z_i + \frac{s_i}{2}} \frac{z_j + \frac{s_j}{2} - z_i - \frac{s_i}{2}}{z_j - \frac{s_j}{2} - z_i - \frac{s_i}{2}}.
 \end{multline}
  The integration contours above are oriented from $-\i \infty$ to $+\i \infty$.
\end{proposition}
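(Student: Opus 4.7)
The strategy is to expand the determinantal density $\rho_k^{\Airy}$ via the Leibniz formula, represent each $K_{\Airy}$ factor by a double contour integral, perform the $x$-integrals (which collapse half of the contour variables through delta functions), and recognize the resulting signed sum over permutations as a Cauchy determinant matching the four-factor product on the right-hand side.

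I would first derive the representation
\begin{equation*}
K_{\Airy}(x,y) \;=\; \frac{1}{(2\pi i)^2}\int_{v_w+i\mathbb{R}}\int_{v_z+i\mathbb{R}} \frac{e^{w^3/3+z^3/3-xw-yz}}{w+z}\,dw\,dz, \qquad v_w,\,v_z>0,
\end{equation*}
by substituting the classical formula $\Ai(x) = \frac{1}{2\pi i}\int_{v+i\mathbb{R}} e^{w^3/3-xw}\,dw$ into $K_{\Airy}(x,y) = \int_0^\infty \Ai(x+a)\Ai(y+a)\,da$ and evaluating the elementary integral $\int_0^\infty e^{-(w+z)a}\,da = 1/(w+z)$. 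Substituting this representation into $\rho_k^{\Airy}(x_1,\dots,x_k) = \sum_\sigma \mathrm{sgn}(\sigma)\prod_i K_{\Airy}(x_i,x_{\sigma(i)})$ and interchanging orders of integration, the exponent linear in $x_j$ reads $(s_j - w_j - z_{\sigma^{-1}(j)})x_j$; the $x_j$-integral then produces a Dirac delta pinning $z_{\sigma^{-1}(j)} = s_j - w_j$, collapsing all $z$-variables and leaving a single $k$-dimensional contour integral in $w_1,\dots,w_k$.

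The shift $w_i = z_i + s_i/2$ (so that $z_{\sigma^{-1}(j)} = s_j/2 - z_j$) simplifies the cubic exponent via $(a+b)^3+(a-b)^3 = 2a^3+6ab^2$ to
\begin{equation*}
\tfrac{w_i^3}{3} + \tfrac{(s_i - w_i)^3}{3} \;=\; \tfrac{s_i^3}{12} + s_i z_i^2,
\end{equation*}
which reproduces the overall prefactor $e^{\sum s_i^3/12}$ and the Gaussian integrand $e^{\sum s_i z_i^2}$ on the right-hand side. After the shift, each pole $\frac{1}{w_i + z_i}$ becomes $\frac{1}{a_i - b_{\sigma(i)}}$ with $a_i \defeq z_i + s_i/2$ and $b_i \defeq z_i - s_i/2$, so the signed sum over permutations equals the Cauchy determinant $\det\bigl(1/(a_i - b_j)\bigr)_{i,j=1}^k$, evaluated by Lemma \ref{Lem:Cauchy_det}. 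The diagonal values $a_i - b_i = s_i$ produce the prefactor $\prod_i s_i^{-1}$, and the ratio of off-diagonal differences reorganizes precisely into the product $\prod_{i<j}\frac{(z_j - s_j/2 - z_i + s_i/2)(z_j + s_j/2 - z_i - s_i/2)}{(z_j + s_j/2 - z_i + s_i/2)(z_j - s_j/2 - z_i - s_i/2)}$.

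I expect the main obstacle to be the rigorous handling of the $x$-integrals: they define convergent Fourier integrals only when $\mathrm{Re}(w_j + z_{\sigma^{-1}(j)}) = s_j$, a constraint naively depending on $\sigma$. The remedy is to first deform all contours to the common positions $\mathrm{Re}(w_i) = s_i/2 + v_i$ and $\mathrm{Re}(z_i) = s_i/2 - v_i$ by analytic continuation, verifying that no pole of $1/(w+z)$ is crossed along the deformation. Once every contour lies in this form, the $\sigma$-dependent constraint is simultaneously satisfied for all permutations at once, and the resulting Cauchy determinant has no singularities along the contours precisely when $v_j - s_j/2 > v_i + s_i/2$ for $i<j$, which is the interlacing condition stated in the proposition.
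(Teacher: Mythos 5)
Your overall strategy --- write $K_{\Airy}$ as a double contour integral via $\Ai(x) = \frac{1}{2\pi\i}\int e^{w^3/3 - xw}\,dw$, expand the determinant by Leibniz, integrate out the $x_j$'s to produce delta functions, shift $w_i = z_i + s_i/2$, and recognize the Cauchy determinant --- is a valid and well-known alternative to the paper's proof, which instead goes cycle-by-cycle using the Okounkov identity $\int e^{xz}\Ai(z+a)\Ai(z+b)\,dz$ and integrals over positive auxiliary variables $d_i$. Your route is arguably cleaner since it never splits the determinant into cycles. The algebraic bookkeeping (cubic exponent $\to s_i^3/12 + s_i z_i^2$; pole factor $1/(w_i + z_i) \to 1/(a_i - b_{\sigma(i)})$ with $a_i = z_i + s_i/2$, $b_i = z_i - s_i/2$; Cauchy determinant reproducing the four-factor product and the $\prod s_i^{-1}$) is all correct.

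However, your final paragraph on contours contains a genuine error. You propose deforming to $\Re(w_i) = s_i/2 + v_i$, $\Re(z_i) = s_i/2 - v_i$ and claim the delta constraint $\Re(w_j + z_{\sigma^{-1}(j)}) = s_j$ is then satisfied for all $\sigma$. With your contours, $\Re(w_j + z_{\sigma^{-1}(j)}) = s_j/2 + v_j + s_{\sigma^{-1}(j)}/2 - v_{\sigma^{-1}(j)}$, which equals $s_j$ for all $\sigma$ only if $v_i - s_i/2$ is the same for every $i$. But under that constraint the interlacing $v_j - s_j/2 > v_i + s_i/2$ reduces to $0 > s_i$, contradicting $s_i > 0$. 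Moreover, the interlacing condition does not appear in Proposition~\ref{prop:airy_moments} at all (its contours are all on $\i\mathbb{R}$); you are confusing it with Proposition~\ref{prop:airy_observable}, where shifted contours and interlacing enter. The fix is simpler than what you wrote: for each $\sigma$ separately, place the $w$-contours at $\Re(w_j) = s_j/2$ and the $z$-contours at the $\sigma$-dependent positions $\Re(z_{\sigma^{-1}(j)}) = s_j/2$ forced by the delta constraint (this stays in the half-planes $\Re > 0$, so the pole $w_i + z_i = 0$ is never crossed); after collapsing, the poles $w_i - w_{\sigma(i)} = -s_{\sigma(i)}$ are automatically avoided on $\Re(w_j) = s_j/2$ since $s_i/2 + s_{\sigma(i)}/2 > 0$, and after the shift $\tilde w_i = w_i - s_i/2$ all contours lie on $\i\mathbb{R}$, exactly as in the proposition. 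One can then sum the determinant since the contours are the same for all $\sigma$.
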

\begin{proof}
The essential identity is
$$\int_{-\infty}^{+\infty} e^{x z} \Ai(z+a) \Ai(z+b) \; dz =
\frac{1}{2 \sqrt{\pi x}}\exp\left(\frac{x^3}{12} - \frac{a+b}{2} x- \frac{(a-b)^2}{4 x} \right)$$
which can be found in~\cite[Lemma 2.6]{okounkov2002generating}.

From here, the proof in~\cite{BG2016_Airy_moments} proceeds as follows. Using the definition of the Airy kernel one can compute directly that we have
\begin{multline}\label{eqn:cycle}
\mathcal{E}(s_1,\dots, s_n) \defeq \int_{\R^n} e^{c \cdot z} \prod_{i=1}^n K_{\Airy}(z_i, z_{i+1}) \; dz \\
= \frac{1}{2^n \pi^{n/2}} \frac{e^{\sum s_i^3/12}}{\prod_{i=1}^n \sqrt{s_i}}
\int_{d_1 \geq 0} \cdots \int_{d_n \geq 0} \exp \left(-\sum_{i=1}^n \frac{(d_i-d_{i+1})^2}{4 s_i} - \sum_{i=1}^n \frac{d_i + d_{i+1}}{2} s_i \right) \prod_{i=1}^n d d_i
\end{multline}
where we use the convention that~$z_{n+1} = z_1$ and~$d_{n+1} = d_1$. Using Gaussian integrals in new variables~$z_1,\dots, z_n$,
\begin{multline}\label{eqn:cycle2}
\mathcal{E}(s_1,\dots, s_n) \
= \frac{1}{2^n \pi^{n}} e^{\sum s_i^3/12}
\int_{d_1 \geq 0} \cdots \int_{d_n \geq 0} \\ \int_{z_1 \in \mathbb{R}} \cdots \int_{z_n \in \mathbb{R}}
 \exp \left(\sum_{i=1}^n(
-s_i z_i^2 + \i (z_i - z_{i+1}) d_i - (s_i + s_{i+1}) d_i/2)
 \right) \prod_{i=1}^n d d_i \prod_{i=1}^n d z_i.
\end{multline}
Note that to get to~\eqref{eqn:cycle2} above we also must first make the variable swap~$d_i \rightarrow d_{n-i+1}$ in the expression~\eqref{eqn:cycle}.

Now, one can evaluate~\eqref{eqn:cycle} as
\begin{equation}\label{eqn:cycle3}
\frac{e^{\sum x_i^3/12}}{(2 \pi)^n} \int_{z_1 \in \R} \cdots \int_{z_n \in \R} \exp(- \sum s_i z_i^2) \prod_{i=1}^n \frac{d z_i}{-\i (z_i - z_{i+1}) + \frac{s_i+s_{i+1}}{2}}.
\end{equation}

 Using the definition of~$\rho_k^{\Airy}(x_1,\dots, x_k)$ as a determinant, one can expand out the LHS of~\eqref{eqn:airy_laplace} as a sum over permutations. Then for each cycle in each permutation, one can use Equation~\eqref{eqn:cycle} to obtain a formula for the integrals over variables in the product~$\prod_{i \in \text{ cycle}}K_{\Airy}(x_i, x_{\sigma(i)})$, and then one can group the resulting terms back together to obtain
 $$
 \frac{e^{\sum x_i^3/12}}{(2 \pi)^n} \int_{z_1 \in \R} dz_1 \cdots \int_{z_n \in \R} dz_n \exp(- \sum s_i z_i^2) \det \left[ \frac{1}{-\i z_i  + \frac{s_i}{2} +(\i z_{j} + \frac{s_j}{2}) }\right]_{i,j=1}^n .
 $$
The determinant in the integrand can be simplified using the Cauchy determinant formula of Lemma \ref{Lem:Cauchy_det}, and the result of this can be identified with the RHS of~\eqref{eqn:airy_laplace}.
\end{proof}

Now we can prove \eqref{eq_Airy_Laplace}.

\begin{proposition}[\cite{ahn2020airy}]\label{prop:airy_observable}
Let~$\{\mathfrak{a}_i\}$ be the Airy point process of Definition \ref{def:Airy_PP}, and let~$s_1 ,\dots, s_k > 0$. For any~$v_1 < \cdots < v_k$ satisfying \footnote{Note that in \cite[Proposition B.1]{ahn2020airy}, the corresponding conditions (B.2) on $v_1,\dots, v_k$ have $i$ and $j$ swapped compared to \eqref{eqn:vconds}: we believe this to be a minor typo.}
\begin{equation}\label{eqn:vconds}
v_j - \frac{s_j}{2} > v_i + \frac{s_i}{2}, \qquad 1 \leq i < j \leq k,
\end{equation}
 we have
\begin{multline}\label{eqn:Laplace_observable}
 \mathbb{E}\left[\prod_{m=1}^k \sum_i \exp(s_m \mathfrak{a}_i)\right]   = \frac{e^{\sum_{i=1}^k s_i^3/12}}{(2 \pi \i)^k} \int_{\i \; \mathbb{R} + v_1} \frac{d z_1}{s_1}\cdots \int_{\i \; \mathbb{R} + v_k} \frac{d z_k}{s_k}  \\
 \times e^{\sum_{i=1}^k s_i z_i^2}  \prod_{1 \leq i < j \leq k} \frac{z_j - \frac{s_j}{2} - z_i + \frac{s_i}{2}}{z_j + \frac{s_j}{2} - z_i + \frac{s_i}{2}} \frac{z_j + \frac{s_j}{2} - z_i - \frac{s_i}{2}}{z_j - \frac{s_j}{2} - z_i - \frac{s_i}{2}}.
 \end{multline}
 The integration contours above are oriented from $-\i \infty$ to $+\i \infty$.
\end{proposition}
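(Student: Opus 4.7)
The plan is to deduce \eqref{eqn:Laplace_observable} from Proposition~\ref{prop:airy_moments} by expanding the left-hand side combinatorially and the right-hand side via residue calculus, and matching them term by term.

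First, I would expand the product of sums by grouping repeated indices:
\begin{equation*}
\prod_{m=1}^{k} \sum_i e^{s_m \mathfrak{a}_i} = \sum_{\pi}\;\sum^{\ne}_{(j_B)_{B\in\pi}} \prod_{B \in \pi} \exp\!\left(S_B \,\mathfrak{a}_{j_B}\right),
\end{equation*}
where $\pi$ ranges over set partitions of $\{1,\ldots,k\}$, $S_B := \sum_{m \in B} s_m$, and $\sum^{\ne}$ runs over tuples of pairwise distinct indices $(j_B)_{B \in \pi}$. Taking expectation and using the definition of correlation functions, then applying Proposition~\ref{prop:airy_moments} with Laplace parameters $(S_B)_{B \in \pi}$ to each term, I obtain
\begin{equation*}
\mathbb{E}\!\left[\prod_{m=1}^k \sum_i e^{s_m \mathfrak{a}_i}\right] = \sum_{\pi} \mathcal{I}_\pi,
\end{equation*}
where $\mathcal{I}_\pi$ is the $|\pi|$-fold contour integral of \eqref{eqn:airy_laplace}, with each contour on $\i\mathbb{R}$ and parameters $(S_B)_{B \in \pi}$.

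Second, I would compute the right-hand side of \eqref{eqn:Laplace_observable} by deforming the staircase contours $\i\mathbb{R} + v_j$ down to the common line $\i\mathbb{R}$, one variable at a time in the order $j = k, k-1, \ldots, 2$. Under \eqref{eqn:vconds}, the only poles swept out are simple poles at $z_j = z_i + \tfrac{s_i + s_j}{2}$ for some $i < j$, arising from the denominator factor $z_j - \tfrac{s_j}{2} - z_i - \tfrac{s_i}{2}$. A direct calculation (shifting $z_i \mapsto z_i - \tfrac{s_j}{2}$) shows that extracting such a residue has the effect of merging variables $i$ and $j$ with combined parameter $S := s_i + s_j$: the Gaussian $e^{s_i z_i^2 + s_j z_j^2}$ becomes $e^{S z_i^2}$, the combined prefactor $\tfrac{1}{s_i s_j}$ multiplied by the residue factor $s_i s_j/S$ yields $\tfrac{1}{S}$, and the shift contributes exactly the discrepancy $\tfrac{S^3 - s_i^3 - s_j^3}{12}$ needed to promote $\tfrac{s_i^3 + s_j^3}{12}$ to $\tfrac{S^3}{12}$. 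Iterating, the contour integral becomes a sum indexed by nested sequences of residue extractions, which I would show are in bijection with set partitions $\pi$ of $\{1,\ldots,k\}$, with each partition contributing exactly $\mathcal{I}_\pi$. Summing over $\pi$ then matches the first step and establishes \eqref{eqn:Laplace_observable}.

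The main obstacle is the combinatorial bookkeeping of this residue expansion. When a block $B$ with $|B| \ge 3$ is assembled through a cascade of pairwise residues, one must verify that the surviving cross-factors between $B$ and any other block $B'$ collapse to the single Cauchy-like factor
\begin{equation*}
\frac{z_{B'} - z_B - \tfrac{S_{B'}-S_B}{2}}{z_{B'} - z_B + \tfrac{S_{B'}+S_B}{2}} \cdot \frac{z_{B'} - z_B + \tfrac{S_{B'}-S_B}{2}}{z_{B'} - z_B - \tfrac{S_{B'}+S_B}{2}}
\end{equation*}
with parameters $(S_B, S_{B'})$, and that no overcounting or sign issues arise from the order of residue extractions. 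This telescoping identity in rational functions is the technical heart of the argument and should yield to induction on $|B|$.
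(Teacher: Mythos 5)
Your overall strategy---reduce both sides to Proposition~\ref{prop:airy_moments}, with the left-hand side expanded over set partitions and the right-hand side peeled apart by contour deformation and residues---is in the same spirit as the paper's proof, which does the same thing recursively: the paper's recursion $M(s_1,\dots,s_\ell;s_{\ell+1},\dots,s_k)=M(s_1,\dots,s_{\ell+1};\dots)+\sum_m M(s_1,\dots,s_m+s_{\ell+1},\dots;\dots)$ is precisely the step-by-step version of your set-partition expansion, and the residue at $z_j = z_i + \tfrac{s_i+s_j}{2}$ effecting a merge with parameter $S = s_i+s_j$ is the engine in both. Your Step~1 (set-partition expansion of $\prod_m\sum_i e^{s_m\mathfrak a_i}$, then Proposition~\ref{prop:airy_moments}) is correct.

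The gap is in the claim that, deforming in the order $j=k,k-1,\dots,2$, ``the only poles swept out are simple poles at $z_j = z_i + \tfrac{s_i+s_j}{2}$.'' This is false in that order. When you move $z_k$ down from $\i\mathbb R + v_k$ to $\i\mathbb R$, the other variables $z_i$, $i<k$, still sit on $\i\mathbb R+v_i$, and the pole from the factor $z_k+\tfrac{s_k}{2}-z_i+\tfrac{s_i}{2}$ at $z_k = z_i - \tfrac{s_i+s_k}{2}$ lies at real part $v_i - \tfrac{s_i+s_k}{2}$. Under \eqref{eqn:vconds} one is forced to have $v_2 > v_1 + \tfrac{s_1+s_2}{2} \geq \tfrac{s_1+s_2}{2}$, so if, say, $s_1 > s_k$ then $v_2 - \tfrac{s_2+s_k}{2} > 0$ and this spurious pole is crossed. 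Its residue (which carries an opposite sign to the intended one) lands on a shifted contour, so it does not trivially cancel, and your set-partition bijection breaks down. The fix is exactly the order the paper uses: deform $z_2, z_3, \dots, z_k$ (increasing). Then, when moving $z_j$, every $z_i$ with $i<j$ is already on $\i\mathbb R$, so $z_j = z_i - \tfrac{s_i+s_j}{2}$ has negative real part and is not crossed, and every pole involving $z_m$, $m>j$, stays strictly to the right of $v_j$ by \eqref{eqn:vconds}. With that correction the cascading residues really do line up with set partitions (equivalently, with the paper's recursion), and the ``telescoping identity'' for blocks of size $\ge 3$ that you flag as the technical heart is handled automatically, since each inductive step merges exactly two labels and only the pairwise identity is ever needed.
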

\begin{proof}
This appears as Proposition B.1 in~\cite{ahn2020airy}, and we outline the proof given there. We will proceed by induction on~$k$, with the~$k=1$ case matching~$k=1$ in Proposition~\ref{prop:airy_moments}.

The idea of the inductive step, which proves the statement for~$k>1$, is as follows. Note that the expression on the left hand side can be written as
\begin{equation}\label{eqn:k_statement}
\E \left[ \sum_{j_1,\dots, j_k} e^{s_1 \mathfrak{a}_{j_1} + \cdots + s_k\mathfrak{a}_{j_k}} \right].
\end{equation}
Note that the above equals
$$
\E \left[ \sum_{j_1 \neq j_2, j_3 \dots, j_k} e^{s_1 \mathfrak{a}_{j_1} + \cdots + s_k\mathfrak{a}_{j_k}} \right]
 + \E \left[ \sum_{j_1 = j_2, j_3 \dots, j_k} e^{(s_1+s_2) \mathfrak{a}_{j_1} + s_3 \mathfrak{a}_{j_3} \cdots + s_k\mathfrak{a}_{j_k}} \right].
$$
The second term above is of the form~\eqref{eqn:k_statement}, but with smaller~$k$. The first term can be further decomposed into terms where~$j_1, j_2, j_3$ are distinct, plus terms where~$j_3 = j_1$ or~$j_3 = j_2$. Thus, we begin to see that to compute~\eqref{eqn:k_statement}, we can use the inductive hypothesis, together with backward induction on the number of indices which are distinct, using Proposition~\ref{prop:airy_moments} as the base case for each~$k$.

Now we will implement this in detail. Throughout the rest of the proof we fix~$0 < v_1 < \cdots < v_k$ satisfying~\eqref{eqn:vconds}; there is no loss of generality in assuming positivity of the~$v_i$.

To prove the statement for~$k>1$ given the statement for all smaller~$k$, we proceed by induction on the index~$n = k - \ell+1$, where~$\ell$ is such that~$j_1, \dots, j_\ell$ are distinct, in order to prove that
\begin{multline}
\E \left[ \sum_{j_1, \dots j_\ell \text{ distinct}, j_{\ell+1} , \dots, j_k} e^{s_1 \mathfrak{a}_{j_1} + \cdots + s_k\mathfrak{a}_{j_k}} \right] = \\I(s_1,\dots, s_{\ell}; s_{\ell+1}, \dots, s_{k})
\defeq \frac{e^{\sum_{i=1}^k s_i^3/12}}{(2 \pi \i)^k} \int_{\i \; \mathbb{R} + v_k } \frac{d z_k}{s_k} \cdots \int_{\i \; \mathbb{R} + v_{\ell+1}} \frac{d z_{\ell+1}}{s_{\ell+1}} \int_{\i \; \mathbb{R} } \frac{d z_{\ell}}{s_{\ell}}   \cdots \int_{\i \; \mathbb{R} } \frac{d z_1}{s_1}  \\
 \times e^{\sum_{i=1}^k s_i z_i^2}  \prod_{1 \leq i < j \leq k} \frac{z_j - \frac{s_j}{2} - z_i + \frac{s_i}{2}}{z_j + \frac{s_j}{2} - z_i + \frac{s_i}{2}} \frac{z_j + \frac{s_j}{2} - z_i - \frac{s_i}{2}}{z_j - \frac{s_j}{2} - z_i - \frac{s_i}{2}}.
\end{multline}
For each fixed~$k$, as the base case~$n=1$, which means~$\ell = k$, we have
$$\E \left[ \sum_{j_1, \dots j_k \text{ distinct}} e^{s_1 \mathfrak{a}_{j_1} + \cdots + s_k\mathfrak{a}_{j_k}} \right]  = I( s_1,\dots s_k; )$$
by Proposition~\ref{prop:airy_moments}. For the induction step (for the index~$n$) it suffices to show that the values of~$I$ satisfy the recurrence
\begin{multline}\label{eqn:Irec}
I(s_1,\dots, s_\ell; s_{\ell+1}, \dots, s_k) = I(s_1,\dots, s_{\ell+1}; s_{\ell+2}, \dots, s_k) \\
+ \sum_{m = 1}^\ell I(s_1,\dots, s_m+s_{\ell+1} , \dots, s_{\ell}; s_{\ell+2}, \dots, s_k)
\end{multline}
since the expressions
$$M(s_1,\dots, s_{\ell}; s_{\ell+1},\dots, s_k) = \E \left[ \sum_{j_1, \dots j_\ell \text{ distinct}, j_{\ell+1} , \dots, j_k} e^{s_1 \mathfrak{a}_{j_1} + \cdots + s_k\mathfrak{a}_{j_k}} \right] $$
satisfy the exact same recurrence.

To show~\eqref{eqn:Irec}, we drag the~$z_{\ell+1}$ contour from~$\i \R + v_{\ell+1}$ to~$\i \R$, and keep track of residues we cross as we drag it. This gives us
\begin{align*}
I(s_1,\dots, s_{\ell}; s_{\ell+1}, \dots, s_{k})  &=  \frac{e^{\sum_{i=1}^k s_i^3/12}}{(2 \pi \i)^k} \int_{\i \; \mathbb{R} + v_k } \frac{d z_k}{s_k} \cdots \int_{\i \; \mathbb{R}+ v_{\ell+2}  } \frac{d z_{\ell+2}}{s_{\ell+2}} \int_{\i \; \mathbb{R} } \frac{d z_{\ell+1}}{s_{\ell+1}}   \cdots \int_{\i \; \mathbb{R} } \frac{d z_1}{s_1}  \\
& \times e^{\sum_{i=1}^k s_i z_i^2}  \prod_{1 \leq i < j \leq k} \frac{z_j - \frac{s_j}{2} - z_i + \frac{s_i}{2}}{z_j + \frac{s_j}{2} - z_i + \frac{s_i}{2}} \frac{z_j + \frac{s_j}{2} - z_i - \frac{s_i}{2}}{z_j - \frac{s_j}{2} - z_i - \frac{s_i}{2}} \\
&+\frac{1}{s_{\ell+1}} \sum_{m =1}^\ell \frac{e^{\sum_{i=1}^k s_i^3/12}}{(2 \pi \i)^{k-1}}\int_{\i \; \mathbb{R} + v_k } \frac{d z_k}{s_k} \cdots \int_{\i \; \mathbb{R} + v_{\ell+2}} \frac{d z_{\ell+2}}{s_{\ell+2}} \\
& \int_{\i \; \mathbb{R} } \frac{d z_{\ell}}{s_{\ell}}   \cdots \int_{\i \; \mathbb{R} } \frac{d z_1}{s_1}   \\
& \times \text{Res}_{z_{\ell+1} - \frac{s_{\ell+1}}{2} = z_m + \frac{s_m}{2}} e^{\sum_{i=1}^k s_i z_i^2}  \prod_{1 \leq i < j \leq k} \frac{z_j - \frac{s_j}{2} - z_i + \frac{s_i}{2}}{z_j + \frac{s_j}{2} - z_i + \frac{s_i}{2}} \frac{z_j + \frac{s_j}{2} - z_i - \frac{s_i}{2}}{z_j - \frac{s_j}{2} - z_i - \frac{s_i}{2}} .
 \end{align*}
 The residue at~$z_{\ell+1}  = z_m +  \frac{s_{\ell+1}}{2} + \frac{s_m}{2}$ can be seen to be, after some simplification,
 \begin{multline}
 e^{ s_{\ell+1} ( \frac{s_{\ell+1}}{2} + z_m + \frac{s_m}{2})^2 + \sum_{i\neq \ell+1} s_i z_i^2} \frac{s_{\ell+1} s_m}{s_{\ell+1}+s_m}  \prod_{(i, j) \in S_1} \frac{z_j - \frac{s_j}{2} - z_i + \frac{s_i}{2}}{z_j + \frac{s_j}{2} - z_i + \frac{s_i}{2}} \frac{z_j + \frac{s_j}{2} - z_i - \frac{s_i}{2}}{z_j - \frac{s_j}{2} - z_i - \frac{s_i}{2}}\\
\prod_{j \in S_2} \frac{z_j - \frac{s_j}{2} - z_m - \frac{s_m}{2}}{z_j + \frac{s_j}{2} - z_m - \frac{s_m}{2}} \frac{ z_j + \frac{s_j}{2}  - z_m - s_{\ell+1} - \frac{s_m}{2} }{z_j - \frac{s_j}{2} - z_m - s_{\ell+1} - \frac{s_m}{2}}.
 \end{multline}
 In the first product above,~$S_1$ is the set of pairs~$i < j$ such that neither~$i$ nor~$j$ equals~$\ell+1$, and in the second product~$S_2 = \{1,\dots, k\} \setminus \{\ell+1, m\}$.

Next, cancelling some terms shared between the two products and simplifying yields
 \begin{multline}
 e^{ s_{\ell+1} ( \frac{s_{\ell+1}}{2} + z_m + \frac{s_m}{2})^2 + \sum_{i\neq \ell+1} s_i z_i^2} \frac{s_{\ell+1} s_m}{s_{\ell+1}+s_m}  \prod_{(i, j) \in \tilde{S}_1} \frac{z_j - \frac{s_j}{2} - z_i + \frac{s_i}{2}}{z_j + \frac{s_j}{2} - z_i + \frac{s_i}{2}} \frac{z_j + \frac{s_j}{2} - z_i - \frac{s_i}{2}}{z_j - \frac{s_j}{2} - z_i - \frac{s_i}{2}}\\
\prod_{j \in S_2} \frac{z_j - \frac{s_j}{2} - z_m + \frac{s_m}{2}}{z_j + \frac{s_j}{2} - z_m + \frac{s_m}{2}} \frac{ z_j + \frac{s_j}{2}  - z_m - s_{\ell+1} - \frac{s_m}{2} }{z_j - \frac{s_j}{2} - z_m - s_{\ell+1} - \frac{s_m}{2}}.
 \end{multline}
 where $\tilde{S}_1$ is the set of $(i, j)$ such that neither $i$ nor $j$ is in the set $\{\ell+1, m\}$.

 After making the variable change~$z_m \rightarrow z_m-\frac{s_{\ell+1}}{2}$ and doing some simplification and re-arranging of prefactors, we obtain exactly the integrand of the~$m^{th}$ term of the summation in \eqref{eqn:Irec} above. The only difference is that our integration contour is now~$\Re(z_m) = \frac{s_{\ell+1}}{2}$ instead of~$\Re(z_m) = 0$. One can check that if we choose our original $v_{\ell+1} < \cdots <v_{k}$ far enough apart, then we can drag the~$z_m$ contour back to~$\Re(z_m) = 0$ without crossing any additional poles.

 Thus we have shown the recursion~\eqref{eqn:Irec}, and we are done. \qedhere

\end{proof}

The last ingredient which was implicitly used in the proof of Theorem \ref{thm:corr_convergence} is the topological question of equivalence of different modes of convergence. In that proof, we are establishing the convergence of moments of Laplace transforms \eqref{eqn:res}. Why does this imply more standard convergence of the finite-dimensional distributions \eqref{eq_x8}? For a justification we need a separate argument.\footnote{One needs certain care in producing this argument. An unpleasant feature is that the random variable $ \sum_{i \geq 1} \exp(s \mathfrak{a}_i )$ has fast-growing moments, so that the associated moments problem is not uniquely determined for any fixed $s>0$. However, our access to all $s>0$ simultaneously still allows to reconstruct the point process $\{\mathfrak{a}_i\}_{i=1}^\infty$.}   The arguments of this type were developed previously in the random-matrix literature: they were first employed by Soshnikov to study eigenvalues of random matrices near the edge of the spectrum in \cite{soshnikov1999universality}, see also more recent work~\cite{Sodin_2014}.

%
%
%
%
%

The following proposition is what we need to complete the proof of Theorem~\ref{thm:corr_convergence}.

\begin{proposition}\label{prop:sufficient_conv}
Let~$\{a_1 > a_2 > \cdots\}$,~$a_i = a_i(N) \in \mathbb{R}$, be a sequence of point processes, and suppose that for any~$k \geq 1$ and~$s_1,\dots, s_k > 0$ we have the convergence
\begin{equation}\label{eqn:lap_conv}
\mathbb{E}\left[\prod_{m=1}^k  \sum_{i \geq 1} ( 1-s_m N^{-1/3})^{- a_i N^{1/3}} \right] \rightarrow  \mathbb{E}\left[\prod_{m=1}^k \sum_{i \geq 1} \exp(s_m \mathfrak{a}_i )\right]
\end{equation}
for some random point process~$(\mathfrak{a}_1 > \mathfrak{a}_2 > \cdots)$. Suppose in addition that the limiting point process~$\{\mathfrak{a}_i\}_{i=1}^\infty$ is uniquely determined by its correlation functions.\footnote{This is true for the Airy Point process, see \cite{Soshnikov2000}.} Then for each~$k \geq 1$,
$$(a_1, \dots, a_k) \stackrel{d}{\rightarrow } (\mathfrak{a}_1, \dots, \mathfrak{a}_k ) .$$
\end{proposition}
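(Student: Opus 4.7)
The plan is to upgrade the convergence of Laplace-type observables in \eqref{eqn:lap_conv} first to convergence of correlation functions $\rho_k^{(N)}$ of the point process $\{a_i^{(N)}\}$, and then to finite-dimensional convergence of the top $k$ points. The starting observation is the standard set-partition expansion
\begin{equation*}
\mathbb{E}\!\left[\prod_{m=1}^k \sum_{i\ge 1} f_m(a_i)\right] \;=\; \sum_{\pi \in \Pi(k)} \int_{\mathbb{R}^{|\pi|}} \prod_{B \in \pi}\Bigl(\prod_{m \in B} f_m(x_B)\Bigr)\, \rho_{|\pi|}^{(N)}\bigl(\{x_B\}_{B\in\pi}\bigr) \prod_{B\in\pi} dx_B,
\end{equation*}
valid for any test functions $f_m$ for which the right side makes sense; the same identity holds for the limiting process with $\rho_{|\pi|}$ in place of $\rho_{|\pi|}^{(N)}$. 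Applied with $f_m(x) = (1-s_m N^{-1/3})^{-x N^{1/3}}$ on the pre-limit side and $f_m(x) = e^{s_m x}$ on the limit side, the hypothesis \eqref{eqn:lap_conv} becomes a convergence statement for a weighted sum of Laplace transforms of correlation functions.

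Next, I would disentangle the partition sum to isolate individual correlation functions. For each fixed $k$, M\"obius inversion on the partition lattice $\Pi(k)$ (or equivalently, inclusion-exclusion passage from mixed moments to factorial moments of the random variables $X^{(N)}(s) = \sum_i (1-sN^{-1/3})^{-a_i N^{1/3}}$) expresses the ``distinct-index'' piece of the expansion as a finite linear combination of quantities appearing in \eqref{eqn:lap_conv} for values $k' \le k$. Doing so yields
\begin{equation*}
\int_{\mathbb{R}^k} \prod_{m=1}^k (1-s_m N^{-1/3})^{-x_m N^{1/3}} \, \rho_k^{(N)}(x_1,\dots,x_k)\, dx \;\longrightarrow\; \int_{\mathbb{R}^k} \prod_{m=1}^k e^{s_m x_m}\, \rho_k(x_1,\dots,x_k)\, dx,
\end{equation*}
for all $s_1,\dots,s_k > 0$. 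Because $(1-sN^{-1/3})^{-xN^{1/3}} \to e^{sx}$ locally uniformly and the limiting multivariate Laplace transform of $\rho_k$ exists on the positive orthant (with integrals finite thanks to the super-exponential tails of the Airy process), injectivity of the multivariate Laplace transform on measures with such tails gives $\rho_k^{(N)} \to \rho_k$ vaguely on $\mathbb{R}^k$ for every $k \ge 1$.

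To convert correlation-function convergence into convergence of the top $k$ points, I would use that \eqref{eqn:lap_conv} at $k=1$ with any fixed $s>0$ yields, by Markov's inequality, the uniform tail bound
\begin{equation*}
\mathbb{P}(a_i^{(N)} \ge \lambda) \;\le\; e^{-s\lambda}\, \mathbb{E}\!\left[\sum_j (1-sN^{-1/3})^{-a_j^{(N)} N^{1/3}}\right] \;\le\; C(s) e^{-s\lambda},
\end{equation*}
with $C(s)$ independent of $N$, and similarly for factorial moments, hence for $a_k^{(N)}$ via Chebyshev. Plugging vague convergence of $\rho_k^{(N)}$ together with this uniform tail control into the inclusion–exclusion formula \eqref{eqn:last_cdf} for the joint distribution of the top $k$ points, applied to each $\{a_i^{(N)}\}$ and to $\{\mathfrak{a}_i\}$, yields the claimed finite-dimensional convergence. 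The hypothesis that $\{\mathfrak{a}_i\}$ is determined by its correlation functions ensures that the series \eqref{eqn:last_cdf} genuinely recovers the law of $(\mathfrak{a}_1,\dots,\mathfrak{a}_k)$ in the limit.

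The main obstacle is the second step: the partition expansion mixes correlation functions of all orders up to $k$, and a naive marginalization (letting some $s_j\to 0$) does not kill a term but merely reduces the integration dimension. The clean resolution is the cumulant-type/M\"obius inversion on $\Pi(k)$, which requires bookkeeping but no new ideas. A more delicate technical point --- flagged by the footnote in the paper --- is that the individual random variables $X^{(N)}(s)$ have moments growing like $e^{cn^3}$, so the scalar method of moments fails; it is essential that we use the Laplace transforms for \emph{all} $s>0$ jointly, since this is what makes the correlation functions (and therefore the limit process) uniquely determined.
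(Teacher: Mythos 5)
Your first two steps are sound in substance and are essentially the paper's identification argument in disguise: the set-partition expansion and its M\"obius inversion are exactly the mechanism (a ``reversal'' of Propositions \ref{prop:airy_moments} and \ref{prop:airy_observable}) by which the expectations in \eqref{eqn:lap_conv} encode the Laplace transforms of the correlation functions, and from convergence of those transforms on the positive orthant one can indeed extract convergence of $\rho_k^{(N)}$ (one small wrinkle: after merging a block $B$ the prelimit weight is $\prod_{m\in B}(1-s_mN^{-1/3})^{-xN^{1/3}}=(1-\tilde s_B N^{-1/3})^{-xN^{1/3}}$ with an $N$-dependent $\tilde s_B\to\sum_{m\in B}s_m$, which is not literally an instance of \eqref{eqn:lap_conv} for fixed parameters and needs a short extra continuity/uniformity argument).

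The genuine gap is in your final step. Passing from convergence of correlation functions to convergence of the law of the top $k$ points by taking the limit termwise in the inclusion--exclusion series \eqref{eqn:last_cdf} requires bounds on $\frac{1}{k!}\int_{(\lambda,\infty)^k}\rho_k^{(N)}\,dx$ that are summable in $k$ \emph{uniformly in $N$}; your Markov-type tail bound does not supply them. The only control the hypothesis gives is through moments of the linear statistics, namely $\int_{(\lambda,\infty)^k}\rho_k^{(N)}\,dx\le e^{-ks\lambda}\,\mathbb{E}\bigl[\bigl(\sum_i e^{\sigma_N(s)a_i}\bigr)^k\bigr]$, and the right-hand side grows like $e^{ck^3}$ in $k$ (this is precisely the fast-moment-growth issue flagged in the paper's footnote), so $e^{ck^3}/k!$ is not summable and no choice of $s=s_k$ repairs it. The Hadamard-type bound $\int_{(\lambda,\infty)^k}\rho_k\le\bigl(\int_{(\lambda,\infty)}\rho_1\bigr)^k$ that makes the series tractable for the Airy limit uses determinantal structure, which is not assumed for the prelimit processes, so the interchange of $N\to\infty$ with the infinite sum is unjustified and the step fails as written. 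The repair --- and what the paper does, following Soshnikov/Sodin --- is to avoid the series altogether: use the $k=1$ bound for tightness of the counting variables $N_{[a,b]}$, identify the correlation functions of any subsequential limit by your steps 1--2, and then invoke the hypothesis that the limiting process is uniquely determined by its correlation functions (together with the fact that correlation functions determine particle counts) to pin down every subsequential limit as $\{\mathfrak{a}_i\}$; convergence of $(a_1,\dots,a_k)$ then follows from convergence of the counts, with no termwise limit in an infinite alternating series.
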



\begin{proof}[Sketch of the proof]
We outline the main steps of the argument, and we refer the reader to~\cite[Section 5]{soshnikov1999universality} and~\cite[Sections 5, 6]{Sodin_2014} for further details. The first step is to observe that the law of the random point process, and thus, in particular, each finite dimensional marginal~$(a_1,\dots, a_k)$, is determined by the joint laws of random variables
$$N_{[a,b]} \defeq \# \{a_i \in [a, b]\} . $$
Hence, we must prove convergence in distribution of these random variables.

A standard proof of distributional convergence proceeds in two steps: proving tightness, and identification of the limit. Tightness follows immediately from~\eqref{eqn:lap_conv}: it is implied by the boundedness of the left hand side for large enough~$N$. For the identification, we reconstruct the correlation functions of~$\{\mathfrak{a}_i\}_{i=1}^\infty$ from the expectations $\mathbb{E}\left[\prod_{m=1}^k \sum_{i \geq 1} \exp(s_m \mathfrak{a}_i )\right]$ on the right hand side of~\eqref{eqn:lap_conv} (through a reversal of the arguments in Propositions \ref{prop:airy_moments} and \ref{prop:airy_observable} interconnecting the correlation functions and expected Laplace transforms). Then, we use a general theorem which says that correlation functions uniquely determine particle counts, see, e.g., \cite{Soshnikov2000}.
\end{proof}

\bibliographystyle{alpha}
\bibliography{bib}

\newcommand{\etalchar}[1]{$^{#1}$}
\begin{thebibliography}{DCKMO20}

\bibitem[ABW21]{aggarwal2021deformed}
A.~Aggarwal, A.~Borodin, and M.~Wheeler.
\newblock Deformed polynuclear growth in $(1+ 1) $ dimensions.
\newblock {\em arXiv preprint arXiv:2108.06018}, 2021.

\bibitem[ACJ21]{ayyer2021goe}
A.~Ayyer, S.~Chhita, and K.~Johansson.
\newblock {GOE} fluctuations for the maximum of the top path in alternating
  sign matrices.
\newblock {\em arXiv preprint arXiv:2109.02422}, 2021.

\bibitem[Agg15]{aggarwal2015correlation_schur}
A.~Aggarwal.
\newblock {Correlation functions of the Schur process through Macdonald
  difference operators}.
\newblock {\em Journal of Combinatorial Theory, Series A}, 131:88--118, 2015.
\newblock arXiv:1401.6979.

\bibitem[Agg20a]{aggarwal2020arctic}
A.~Aggarwal.
\newblock Arctic boundaries of the ice model on three-bundle domains.
\newblock {\em Inventiones mathematicae}, 220(2):611--671, 2020.

\bibitem[Agg20b]{aggarwal2020limit}
A.~Aggarwal.
\newblock Limit shapes and local statistics for the stochastic six-vertex
  model.
\newblock {\em Commun. Math. Phys.}, 376(1):681--746, 2020.
\newblock arXiv:1902.10867.

\bibitem[AGZ10]{AndersonGuionnetZeitouniBook}
G.~W. Anderson, A.~Guionnet, and O.~Zeitouni.
\newblock {\em {An introduction to random matrices}}.
\newblock Cambridge University Press, 2010.

\bibitem[Ahn22]{ahn2020airy}
A.~Ahn.
\newblock Airy point process via supersymmetric lifts.
\newblock {\em Probability and Mathematical Physics}, 3(4):869--938, 2022.
\newblock arXiv:2009.06839.

\bibitem[Air]{AiryONTI}
G.~B. Airy.
\newblock On the intensity of light in the neighbourhood of a caustic.

\bibitem[ASLW{\etalchar{+}}15]{algara2015square}
G.~Algara-Siller, O.~Lehtinen, F.~C. Wang, R.~R. Nair, U.~Kaiser, H.~A. Wu,
  A.~Geim, and I.~V. Grigorieva.
\newblock Square ice in graphene nanocapillaries.
\newblock {\em Nature}, 519(7544):443--445, 2015.

\bibitem[Bar01]{Baryshnikov_GUE2001}
Y.~Baryshnikov.
\newblock {GUEs and queues}.
\newblock {\em Probab. Theory Relat. Fields}, 119:256--274, 2001.

\bibitem[Bax07]{baxter2007exactly}
R.~Baxter.
\newblock {\em {Exactly solved models in statistical mechanics}}.
\newblock Courier Dover Publications, 2007.

\bibitem[BC14]{BorodinCorwin2011Macdonald}
A.~Borodin and I.~Corwin.
\newblock Macdonald processes.
\newblock {\em Probab. Theory Relat. Fields}, 158:225--400, 2014.
\newblock arXiv:1111.4408.

\bibitem[BCG16]{BCG6V}
A.~Borodin, I.~Corwin, and V.~Gorin.
\newblock Stochastic six-vertex model.
\newblock {\em Duke Math. J.}, 165(3):563--624, 2016.
\newblock arXiv:1407.6729.

\bibitem[BG16a]{BorodinGorinSPB12}
A.~Borodin and V.~Gorin.
\newblock Lectures on integrable probability.
\newblock In {\em Probability and Statistical Physics in St.\ Petersburg},
  volume~91 of {\em Proceedings of Symposia in Pure Mathematics}, pages
  155--214. AMS, 2016.
\newblock arXiv:1212.3351.

\bibitem[BG16b]{BG2016_Airy_moments}
A.~Borodin and V.~Gorin.
\newblock Moments match between the {KPZ} equation and the {A}iry point
  process.
\newblock {\em SIGMA. Symmetry, Integrability and Geometry: Methods and
  Applications}, 12:102, 2016.
\newblock arXiv:1608.01557.

\bibitem[BG18]{bufetov2016fluctuations}
A.~Bufetov and V.~Gorin.
\newblock Fluctuations of particle systems determined by {S}chur generating
  functions.
\newblock {\em Advances in Mathematics}, 338:702--781, 2018.
\newblock arXiv:1604.01110.

\bibitem[Bor18]{borodin2016stochastic_MM}
A.~Borodin.
\newblock Stochastic higher spin six vertex model and {M}acdonald measures.
\newblock {\em Jour. Math. Phys.}, 59(2):023301, 2018.
\newblock arXiv:1608.01553.

\bibitem[BS10]{ZSRMTbook}
Z.~Bai and J.~W. Silverstein.
\newblock {\em Spectral analysis of large dimensional random matrices}.
\newblock Springer Series in Statistics. Springer, New York, second edition,
  2010.

\bibitem[BZJ66]{Brezin_Zinn_Justin_1966}
E.~Br{\'e}zin and J.~Zinn-Justin.
\newblock A solvable n-body problem.
\newblock {\em C. R. Acad. Sci.}, 263:670, 1966.

\bibitem[CJY15]{chhita2015asymptotic}
S.~Chhita, K.~Johansson, and B.~Young.
\newblock Asymptotic domino statistics in the aztec diamond.
\newblock {\em The Annals of Applied Probability}, 25(3):1232--1278, 2015.

\bibitem[Cop65]{Copson65}
E.~T. Copson.
\newblock {\em Asymptotic Expansions}.
\newblock Cambridge University Press, 1965.

\bibitem[Cor12]{CorwinKPZ}
I.~Corwin.
\newblock {The Kardar-Parisi-Zhang equation and universality class}.
\newblock {\em Random Matrices Theory Appl.}, 1:1130001, 2012.
\newblock arXiv:1106.1596.

\bibitem[CP10]{colomo2010arctic}
F.~Colomo and A.~Pronko.
\newblock The arctic curve of the domain-wall six-vertex model.
\newblock {\em J. Stat. Phys}, 138(4-5):662--700, 2010.
\newblock arXiv:0907.1264.

\bibitem[CPZJ10]{colomo2010arctic2}
F.~Colomo, A.~G. Pronko, and P.~Zinn-Justin.
\newblock The arctic curve of the domain wall six-vertex model in its
  antiferroelectric regime.
\newblock {\em Journal of Statistical Mechanics: Theory and Experiment},
  2010(03):L03002, 2010.

\bibitem[CS16]{Colomo_S_2016}
F.~Colomo and A.~Sportiello.
\newblock Arctic curves of the six-vertex model on generic domains: The tangent
  method.
\newblock {\em Journal of Statistical Physics}, 164(6):1488--1523, jul 2016.

\bibitem[DCKMO20]{duminil2020delocalization}
H.~Duminil-Copin, A.~Karrila, I.~Manolescu, and M.~Oulamara.
\newblock Delocalization of the height function of the six-vertex model.
\newblock {\em arXiv preprint arXiv:2012.13750}, 2020.

\bibitem[Dim23]{dimitrov2020two}
E.~Dimitrov.
\newblock Two-point convergence of the stochastic six-vertex model to the airy
  process.
\newblock {\em Communications in Mathematical Physics}, 398:925--1027, 2023.
\newblock arXiv:2006.15934.

\bibitem[Erd56]{Erdelyi56}
A.~Erd{\'e}lyi.
\newblock {\em Asymptotic Expansions}.
\newblock Dover, New York, 1956.

\bibitem[For10]{Forrester-LogGas}
P.~J. Forrester.
\newblock {\em {Log-gases and random matrices}}.
\newblock Princeton University Press, 2010.

\bibitem[FS06]{ferrari2006domino}
P.L. Ferrari and H.~Spohn.
\newblock {Domino tilings and the six-vertex model at its free-fermion point}.
\newblock {\em Jour. Phys. A}, 39(33):10297, 2006.
\newblock arXiv:cond-mat/0605406.

\bibitem[GL23a]{glazman2023delocalisation}
A.~Glazman and P.~Lammers.
\newblock Delocalisation and continuity in 2d: loop {O}(2), six-vertex, and
  random-cluster models.
\newblock {\em arXiv preprint arXiv:2306.01527}, 2023.

\bibitem[GL23b]{Gorin_Liechty_2023}
V.~Gorin and K.~Liechty.
\newblock Boundary statistics for the six-vertex model with {DWBC}.
\newblock {\em arXiv preprint arXiv:2310.12735}, 2023.

\bibitem[GN50]{gel1950unitary}
I.~M. Gelfand and M.~A. Naimark.
\newblock Unitary representations of the classical groups.
\newblock {\em Trudy Matematicheskogo Instituta imeni VA Steklova}, 36:3--288,
  1950.

\bibitem[Gor14]{gorin2014alternating}
V.~Gorin.
\newblock From alternating sign matrices to the {G}aussian unitary ensemble.
\newblock {\em Commun. Math. Phys.}, 332(1):437--447, 2014.
\newblock arXiv:1306.6347.

\bibitem[Gor21]{gorin2021lectures}
V.~Gorin.
\newblock Lectures on random lozenge tilings.
\newblock {\em Cambridge Studies in Advanced Mathematics. Cambridge University
  Press}, 2021.

\bibitem[GP15]{GorinPanova2012_full}
V.~Gorin and G.~Panova.
\newblock {Asymptotics of symmetric polynomials with applications to
  statistical mechanics and representation theory}.
\newblock {\em {Ann. Probab.}}, 43(6):3052--3132, 2015.
\newblock arXiv:1301.0634.

\bibitem[GP19]{glazman2019transition}
A.~Glazman and R.~Peled.
\newblock On the transition between the disordered and antiferroelectric phases
  of the 6-vertex model.
\newblock {\em arXiv preprint arXiv:1909.03436}, 2019.

\bibitem[GS92]{GwaSpohn1992}
L.-H. Gwa and H.~Spohn.
\newblock Six-vertex model, roughened surfaces, and an asymmetric spin
  {H}amiltonian.
\newblock {\em Phys. Rev. Lett.}, 68(6):725--728, 1992.

\bibitem[{Ize}87]{IKDet}
A.~G. {Izergin}.
\newblock {Partition function of the six-vertex model in a finite volume}.
\newblock {\em Soviet Physics Doklady}, 32:878, November 1987.

\bibitem[Jim94]{MR1338602}
M.~Jimbo.
\newblock Introduction to the {Y}ang-{B}axter equation.
\newblock In {\em Braid group, knot theory and statistical mechanics, {II}},
  volume~17 of {\em Adv. Ser. Math. Phys.}, pages 153--176. World Sci. Publ.,
  River Edge, NJ, 1994.

\bibitem[JN06]{johansson2006eigenvalues}
K.~Johansson and E.~Nordenstam.
\newblock {Eigenvalues of GUE minors}.
\newblock {\em Electron. J. Probab.}, 11(50):1342--1371, 2006.
\newblock arXiv:math/0606760.

\bibitem[Joh05]{Johansson2005arctic}
K.~Johansson.
\newblock The arctic circle boundary and the {A}iry process.
\newblock {\em Ann. Probab.}, 33(1):1--30, 2005.
\newblock arXiv:math/0306216.

\bibitem[Kal21]{kallenberg_3rd_ed}
O.~Kallenberg.
\newblock {\em Foundations of modern probability, 3rd edition}.
\newblock Springer, 2021.

\bibitem[Kor82]{KorepinNorms}
V.~E. Korepin.
\newblock {Calculation of norms of Bethe wave functions}.
\newblock {\em Communications in Mathematical Physics}, 86(3):391 -- 418, 1982.

\bibitem[KPZ86]{KPZ1986}
M.~Kardar, G.~Parisi, and Y.~Zhang.
\newblock Dynamic scaling of growing interfaces.
\newblock {\em Physical Review Letters}, 56(9):889, 1986.

\bibitem[KS18]{keating2018random}
D.~Keating and A.~Sridhar.
\newblock Random tilings with the {GPU}.
\newblock {\em Journal of Mathematical Physics}, 59(9):091420, 2018.
\newblock arXiv:1804.07250.

\bibitem[Lam15]{Lamers_2015}
J.~Lamers.
\newblock Introduction to quantum integrability.
\newblock In {\em Proceedings of 10th Modave Summer School in Mathematical
  Physics {\textemdash} {PoS}(Modave2014)}. Sissa Medialab, feb 2015.

\bibitem[Lie67]{Lieb1967SixVertex}
E.H. Lieb.
\newblock Residual entropy of square ice.
\newblock {\em Physical Review}, 162(1):162--172, 1967.

\bibitem[LW72]{ferroelectric_models}
E.~H. Lieb and F.~Y. Wu.
\newblock Two dimensional ferroelectric models.
\newblock In C.~Domb and M.~Green, editors, {\em Phase Transitions and Critical
  Phenomena}, volume~1, pages 331--490. Academic Press, 1972.

\bibitem[Mac95]{Macdonald1995}
I.G. Macdonald.
\newblock {\em Symmetric functions and {H}all polynomials}.
\newblock Oxford University Press, 2nd edition, 1995.

\bibitem[McG64]{mcguire1964study}
J.~B. McGuire.
\newblock Study of exactly soluble one-dimensional n-body problems.
\newblock {\em Journal of Mathematical Physics}, 5(5):622--636, 1964.

\bibitem[Meh04]{MehtaRMT}
M.~L. Mehta.
\newblock {\em Random matrices}, volume 142 of {\em Pure and Applied
  Mathematics (Amsterdam)}.
\newblock Elsevier/Academic Press, Amsterdam, third edition, 2004.

\bibitem[Ner03]{neretin2003rayleigh}
Y.~Neretin.
\newblock Rayleigh triangles and non-matrix interpolation of matrix beta
  integrals.
\newblock {\em Sbornik: Mathematics}, 194(4):515, 2003.

\bibitem[Oko01]{okounkov2001infinite}
A.~Okounkov.
\newblock {Infinite wedge and random partitions}.
\newblock {\em Selecta Math.}, 7(1):57--81, 2001.
\newblock arXiv:math/9907127.

\bibitem[Oko02]{okounkov2002generating}
A.~Okounkov.
\newblock Generating functions for intersection numbers on moduli spaces of
  curves.
\newblock {\em International Mathematics Research Notices}, 2002(18):933--957,
  2002.

\bibitem[OV96]{OlVer1996}
G.~Olshanski and A.~Vershik.
\newblock {Ergodic unitarily invariant measures on the space of infinite
  Hermitian matrices}.
\newblock In {\em Contemporary Mathematical Physics. F.A..Berezi's memorial
  volume. American Mathematical Society Translations, (Advances in the
  Mathematical Sciences --- 31)}, volume 175 of {\em 2}, pages 137--175. AMS,
  1996.
\newblock arXiv:math/9601215v1 [math.RT].

\bibitem[Pau35]{pauling1935structure}
L.~Pauling.
\newblock The structure and entropy of ice and of other crystals with some
  randomness of atomic arrangement.
\newblock {\em Journal of the American Chemical Society}, 57(12):2680--2684,
  1935.

\bibitem[PS11]{PSRMTbook}
L.~Pastur and M.~Shcherbina.
\newblock {\em Eigenvalue distribution of large random matrices}, volume 171 of
  {\em Mathematical Surveys and Monographs}.
\newblock American Mathematical Society, Providence, RI, 2011.

\bibitem[PS23]{praehofer2023domain}
M.~Praehofer and H.~Spohn.
\newblock Domain wall fluctuations of the six-vertex model at the ice point.
\newblock {\em arXiv preprint arXiv:2305.09502}, 2023.

\bibitem[Res10]{reshetikhin2010lectures}
N.~Reshetikhin.
\newblock Lectures on the integrability of the 6-vertex model.
\newblock In {\em {Exact Methods in Low-dimensional Statistical Physics and
  Quantum Computing}}, pages 197--266. Oxford Univ. Press, 2010.
\newblock arXiv:1010.5031.

\bibitem[RS18]{Reshetikhin2018LimitSO}
N.~Reshetikhin and A.~Sridhar.
\newblock {Limit Shapes of the Stochastic Six Vertex Model}.
\newblock {\em Commun. Math. Phys.}, 363:741--765, 2018.
\newblock arXiv:1609.01756 [math-ph].

\bibitem[Sod14]{Sodin_2014}
S.~Sodin.
\newblock A limit theorem at the spectral edge for corners of time-dependent
  wigner matrices.
\newblock {\em International Mathematics Research Notices},
  2015(17):7575--7607, 2014.

\bibitem[Sos99]{soshnikov1999universality}
A.~Soshnikov.
\newblock Universality at the edge of the spectrum in wigner random matrices.
\newblock {\em Communications in mathematical physics}, 207:697--733, 1999.

\bibitem[Sos00]{Soshnikov2000}
A.~Soshnikov.
\newblock Determinantal random point fields.
\newblock {\em Russian Mathematical Surveys}, 55(5):923--975, 2000.
\newblock arXiv:math/0002099.

\bibitem[TW93]{tracy1993level}
C.~Tracy and H.~Widom.
\newblock {Level-spacing distributions and the Airy kernel}.
\newblock {\em Physics Letters B}, 305(1):115--118, 1993.
\newblock arXiv:hep-th/9210074.

\bibitem[TW94]{tracy1994level}
C.~Tracy and H.~Widom.
\newblock {Level spacing distributions and the Bessel kernel}.
\newblock {\em Commun. Math. Phys.}, 161(2):289--309, 1994.
\newblock arXiv:hep-th/9304063.

\bibitem[TW96]{tracy1996orthogonal}
C.~Tracy and H.~Widom.
\newblock {On orthogonal and symplectic matrix ensembles}.
\newblock {\em Commun. Math. Phys.}, 177(3):727--754, 1996.
\newblock arXiv:solv-int/9509007.

\bibitem[War08]{warnaar2008bisymmetric}
S.O. Warnaar.
\newblock Bisymmetric functions, macdonald polynomials and basic hypergeometric
  series.
\newblock {\em Compos. Math.}, 144(2):271--303, 2008.
\newblock arXiv:math/0511333.

\bibitem[Yan67]{yang1967some}
C.-N. Yang.
\newblock Some exact results for the many-body problem in one dimension with
  repulsive delta-function interaction.
\newblock {\em Physical Review Letters}, 19(23):1312, 1967.

\bibitem[ZYW{\etalchar{+}}15]{zhou2015observation}
W.~Zhou, K.~Yin, C.~Wang, Y.~Zhang, T.~Xu, A.~Borisevich, L.~Sun, J.~C. Idrobo,
  M.~F. Chisholm, S.~T. Pantelides, et~al.
\newblock The observation of square ice in graphene questioned.
\newblock {\em Nature}, 528(7583):E1--E2, 2015.

\end{thebibliography}

\end{document}